%
%

\documentclass[aip,jmp,preprint,a4paper]{revtex4-1}

\setcitestyle{numbers,square}

\usepackage{cancel}
\usepackage{amsmath}
\usepackage{amssymb}
\usepackage{amsthm}
\usepackage{graphicx}
\usepackage{dcolumn}
\usepackage{bm}
\usepackage{upgreek}
\usepackage{esint}
\usepackage{mathtools}

\theoremstyle{definition}
\newtheorem{definition}{Definition}

\theoremstyle{remark}
\newtheorem{remark}{Remark}

\theoremstyle{theorem}
\newtheorem{theorem}{Theorem}

\theoremstyle{lemma}
\newtheorem{lemma}{Lemma}

\theoremstyle{corollary}
\newtheorem{corollary}{Corollary}

\theoremstyle{proposition}
\newtheorem{proposition}{Proposition}

\theoremstyle{remark}
\newtheorem{example}{Example}

\usepackage{hyperref}
\hypersetup{
  colorlinks   	= true, 	
  urlcolor     	= blue, 	
  linkcolor    	= blue, 	
  citecolor   	= blue 	
}
\urlstyle{same}




\newcommand{\calD}{\mathcal{D}}
\newcommand{\bB}{\bm{B}}
\newcommand{\bD}{\bm{D}}
\newcommand{\bE}{\bm{E}}
\newcommand{\bF}{\bm{F}}
\newcommand{\bS}{\bm{S}}
\newcommand{\bu}{\bm{u}}
\newcommand{\bU}{\bm{U}}
\newcommand{\bv}{\bm{v}}
\newcommand{\bw}{\bm{w}}
\newcommand{\bz}{\bm{z}}
\newcommand{\bZ}{\bm{Z}}
\newcommand{\bbP}{\mathbb{P}}
\newcommand{\ds}{\displaystyle}

\draft 

\begin{document}

\tableofcontents


\title{Slow manifold reduction for plasma science} 



\author{J. W. Burby}
\affiliation{Los Alamos National Laboratory, Los Alamos, New Mexico 87545, USA}
\author{T. J. Klotz}
\affiliation{University of Colorado Boulder, Boulder, Colorado 80309, USA}


\date{\today}

\begin{abstract}
The classical Chapman-Enskog procedure admits a substantial geometrical generalization known as slow manifold reduction. This generalization provides a paradigm for deriving and understanding most reduced models in plasma physics that are based on controlled approximations applied to problems with multiple timescales. In this Review we develop the theory of slow manifold reduction with a plasma physics audience in mind. In particular we illustrate (a) how the slow manifold concept may be used to understand \emph{breakdown} of a reduced model over sufficiently-long time intervals, and (b) how a discrete-time analogue of slow manifold theory provides a useful framework for developing implicit integrators for temporally-stiff plasma models. For readers with more advanced mathematical training we also use slow manifold reduction to explain the phenomenon of inheritance of Hamiltonian structure in dissipation-free reduced plasma models. Various facets of the theory are illustrated in the context of the Abraham-Lorentz model of a single charged particle experiencing its own radiation drag. As a culminating example we derive the slow manifold underlying kinetic quasineutral plasma dynamics up to first-order in perturbation theory. This first-order result incorporates several physical effects associated with small deviations from exact charge neutrality that lead to slow drift away from predictions based on the leading-order approximation $n_e = Z_i \,n_i$.
\end{abstract}

\pacs{}

\maketitle 

\section{Introduction}
While plasma is often referred to as the fourth state of matter, naturally occurring plasmas frequently behave like fluids. Plasma physicists learn that this counterintuitive behavior arises whenever the shortest timescale characterizing a plasma's dynamics is dominated by collisions. From the physical point of view the explanation for this fact is simple: sufficiently rapid collisions should cause a plasma's distribution function to relax to local thermal equilibrium, which may be characterized completely by fluid moments. From the mathematical perspective the famous Chapman-Enskog procedure not only formalizes the local-thermal-equilibrium picture, but provides a systematic method for deriving the hydrodynamic equations that govern a plasma in the collisional regime.
%
%

While the Chapman-Enskog procedure applied to kinetic theory is familiar to many plasma physicists, an appreciation for the power of the method's underlying principles is less widespread. In fact the Chapman-Enskog procedure generalizes significantly to explain a wide variety of plasma model reductions besides the famous kinetic-to-fluid limit. These reductions include, but are by no means limited to, the passage from two-fluid theory to magnetohydrodynamics (MHD); magnetohydrodynamics to reduced MHD; and Vlasov-Maxwell dynamics to Vlasov-Poisson, Vlasov-Darwin, kinetic MHD, and gyrokinetics.

A far-reaching generalization of the classical Chapman-Enskog method, along with its role in plasma theory, serves as the central topic for this Review. In focusing on this model reduction tool, the Review aims to familiarize a greater fraction of the plasma physics community with its application, limitations, interpretation, and extensions. We contend that a broader and more sophisticated appreciation for this method within plasma physics would serve to increase the quality and pace of a large fraction of both theoretical and computational plasma physics research, particularly in the areas of simulation algorithm design and analysis.

Any presentation of this generalized Chapman-Enskog method must choose between one of two presentation styles, one analytic  and one geometric. Each of these choices carries with it corresponding historical baggage and betrays a bias on part of the presenter toward a particular view of the subject. The present 
Review's bias leans in favor of the geometric approach, but not out of any disrespect for the analytic side of the subject. Instead we pursue the geometric angle on utilitarian grounds; we believe the geometric viewpoint will be easier to digest, remember, and apply by plasma physicists.

The analytic variant of the theory aligns most closely with Chapman's and Enskog's original work,\,\citep{Chapman_Cowling_1970} and garners favor among mathematicians with proclivities toward real analysis. Such mathematicians often refer to the generalization of the Chapman-Enskog procedure as the method of Hilbert expansions --- terminology that explicitly acknowledges the connection between kinetic-to-fluid reduction and Hilbert's famous $6^{\text{th}}$ problem, ``Mathematical Treatment of the Axioms of Physics." \citep{Gorban_2018} Within plasma physics, a subset of researchers who are intimately familiar with the original Chapman-Enskog method either wittingly or unwittingly have applied the method of Hilbert expansions to various problems in plasma theory. (See for instance the derivations of kinetic MHD in \citep{Kulsrud_book_1983,Kulsrud_1962,Grad_1956,CGL_1956}; slab gyrokinetics and many of its limiting forms in \citep{Schekochihin_2009}; the fluid-kinetic hybrid modeling in \citep{Wang_1992}; and the radio-frequency wave-fluid modeling in \citep{Hegna_2009}.) On the mathematical side, some good recent examples of the Hilbert expansion method applied to plasma-relevant problems appear in the works of Degond, Filbet, Bostan, Frénod, Sonnendrucker, Golse, and Saint-Raymond. \citep{Degond_2010,Degond_2017,Degond_2016,Filbet_2010,Bostan_2010}

The geometric variant of the generalized Chapman-Enskog method originates in the works of Tikhonov \citep{Tikhonov_1952}  and Fenichel \citep{Fenichel_1979}. It frequently appears under the moniker of ``slow manifold reduction" in connection with such weighty terms as ``inertial manifold,'' \citep{Temam_1990} ``slow invariant manifold,'' or ``almost invariant set.'' \citep{Kristiansen_2016} While the geometric perspective on generalized Chapman-Enskog has gone mostly unnoticed in the plasma physics literature until quite recently, (see \citep{Burby_two_fluid_2017,Burby_Sengupta_2017_pop,Burby_loops_2019,Burby_Ruiz_2019}) prominent examples of slow manifold reduction do appear in other parts of physics. See the lengthy review by Gorban, Karlin, and Zinovyev \citep{Gorban_2004} for a development of the slow manifold picture of kinetic-to-fluid reduction, and the sequence of papers by Lorenz \citep{Lorenz_1986,Lorenz_1987,Lorenz_1992} that popularized slow manifold reduction as a means for understanding the concept of quasi-geostrophic balance from geophysical fluid dynamics.

The geometric generalization of Chapman-Enskog theory embodied by slow manifold reduction leads to a vividly rich understanding of the role played by reduced models in the physics of plasmas. It replaces the purely analytic notion of \textbf{closure} with the more visual, geometric notion of an \textbf{invariant set} in a system's phase space. Slow manifold reduction theory therefore fits naturally within the differential-topological approach to plasma theory championed by Kaufman \citep{Tracy_2009,Kaufman_1987,Kaufman_pra_1987} and his Berkeley School of the 1980's. \citep{Tracy_1993,Littlejohn_1981} However, where the Berkeley school relied heavily on methods suitable for non-dissipative systems, slow manifold reduction works just as nicely in the dissipative setting as it does in the presence of Hamiltonian structure. (That said, there are special interactions between slow manifold theory and Hamiltonian systems theory that disappear in the presence of dissipation. C.f. Section \ref{hamiltonian_SM}.)

When dissipation plays a dominant dynamical role slow manifold theory provides both an asymptotic closure and an explanation for the closure's establishment; the invariant manifold embodying the closure attracts nearby trajectories at a rate set by the strength of dissipation. Referring once again to the kinetic-to-fluid reduction, the invariant manifold corresponds approximately to local Maxwellian distribution functions, while the attraction phenomenon physically corresponds to the Second Law of Thermodynamics. More generally slow manifolds arise in the presence of competition between dissipative and non-dissipative dynamical processes, similar in spirit to the competition between collisions and cyclotron dynamics at the heart of the well-known Braginskii closure.\,\citep{Braginskii_1965} In these cases the theory instead reveals that closure establishes itself by way of damped oscillations.

When dissipation is either weak or irrelevant, slow manifold theory exposes a closure's structural properties as well as its inherent delicacy. For instance, because slow manifold theory identifies a closure with a submanifold in phase space the task of demonstrating that the closure inherits Hamiltonian structure from its parent model becomes strikingly simple; the closure's Hamiltonian structure arises by merely ``pulling-back" the Hamiltonian structure of the parent model to the slow manifold. A much more subtle question about the reduced model therefore jumps to the fore --- in the absence of dissipation how long will the reduced model remain an accurate picture of the true system dynamics? Rapid oscillations about the slow manifold, even if initialized with small amplitudes, generally pass in and out of resonance with one another, leading to a possibly-diffusive breakdown of the reduction. \citep{Neishtadt_1996} Such resonant interaction brings to mind the ``kicks'' experienced by a charged particle in a magnetic bottle subject to radio-frequency waves upon crossing the resonant surfaces where the cyclotron frequency divides the wave frequency.\,\citep{Jaeger_1972} If oscillations about the slow manifold avoid resonance, adiabatic invariance serves as a mechanism by which a system can stick to a slow manifold over much larger time intervals. \citep{Nekhoroshev_1971}

In order to ease readers into the theory of slow manifolds we have written this Review using a pedagogical style that assumes only a modest level of mathematical sophistication. Linear algebra is a must, as is multivariable calculus, especially the chain rule for smooth functions between vector spaces of arbitrary dimension. Differential topology is only required for a careful reading of the advanced topics in Section \ref{hamiltonian_SM}, where we assume familiarity with differential forms on manifolds.  In order to illustrate various facets of the theory we repeatedly refer to a small set of finite-dimensional examples that facilitate manual calculations. We hope that this approach will rapidly build concrete intuition for otherwise abstract concepts that pervade the theory of fast-slow dynamical systems.

While much of the material we will cover may also be found elsewhere in the literature, especially in the Review articles \citep{MacKay_2004, Gorban_2004}, a modest amount of material appears here for the first time. Our proof of the zero derivative principle in Section \ref{zdp_sec} appears to be new, as does our discussion of discrete-time fast-slow systems, slow manifold integrators, and their relationship with implicit numerical integrators in Section \ref{SMI_sec}. We also extend MacKay's treatment \citep{MacKay_2004} of slow manifolds in symplectic and Poisson Hamiltonian systems to include presymplectic Hamiltonian systems with possibly nontrivial Lie symmetry groups in Section \ref{hamiltonian_SM}. Finally many of our examples, especially the lengthy discussion of the slow manifold contained in the collisional Vlasov-Maxwell system with quasineutral scaling in Section \ref{QN_application}, contain new observations.

\section{Training Wheels: Abraham-Lorentz Dynamics\label{AL_section}}
Due in part to the potentially disastrous consequences of runaway electrons that may be produced in the long-awaited ITER magnetic fusion experiment, an old classical physics problem recently reemerged as an important component of contemporary plasma physics deliberation --- the problem of computing the drag on an electron produced by the electron's own emitted radiation. This problem deviled physicists prior to the quantum revolution due to the vexing infinities associated with a point-charge's self field. During the development of quantum mechanics, however, the problem took a back seat to seemingly more grandiose concerns. Nowadays while quantum field theory sheds light on some aspects of radiation drag, open questions remain. Especially relevant to the plasma community is the question of how to properly and efficiently account for radiation drag in many-body systems; one must sometimes contend with the interference of the individual radiation fields produced by each particle. \citep{Howe_2014,Kimel_1995} (The role of this coherent radiation phenomenon in runaway electron physics is unclear.)

Early on in the $20^{\text{th}}$ century Abraham proposed the following system of equations to model radiation reaction in the case of a single electron with (negative) charge $e$ moving through a static external magnetic field $\bm{B}(\bm{x})$. (The restriction to static magnetic fields is merely convenient for our discussion; Abraham's results allow for time-dependent external electric and magnetic fields.)
\begin{align}
 \frac{2}{3}\frac{e^2}{c^3}\dot{\bm{a}} &=m\bm{a}- \frac{e}{c} \bm{v}\times\bm{B}(\bm{x})\label{AL_one} \\
\dot{\bm{v}}& = \bm{a}\\
\dot{\bm{x}}& = \bm{v}.\label{AL_three}
\end{align}
This form of the radiation drag force, $\bm{F}_R = \frac{2}{3}\frac{e^2}{c^3}\dot{\bm{a}} $, emerges from a renormalization argument applied to the dynamics of a finite-sized non-relativistic charged particle with vanishingly-small radius. The problematic infinities produced by point-like charges still appear as the particle size tends to zero, but Abraham tamed them by tuning the ``bare" particle mass. The parameter $m$ that appears in Eq.\,\eqref{AL_one} may be interpreted as the (finite) sum of the (infinite) electromagnetic mass and the (infinite) bare particle mass. Mass renormalization ``cancels" the infinity associated with the point-particle self-energy.

While Abraham's proposal only applies to a single electron, and it falls short of supplying an accurate evolution equation in the relativistic regime appropriate for runaway electrons, we will use the Abraham-Lorentz model repeatedly in this Review to illustrate many features of slow manifold reduction theory. There are several good reasons for doing so. (a) From the perspective of multi-scale dynamical systems theory the Abraham-Lorentz equation is qualitatively similar to the more-accurate and manifestly Lorentz-covariant Lorentz-Abraham-Dirac equation. (b) Manual computations associated with the Abraham-Lorentz model tend to be algebraically simpler than corresponding calculations involving the Lorentz-Abraham-Dirac model. (c) Different parameter regimes for the Abraham-Lorentz equation serve admirably to illustrate qualitatively distinct features of slow manifold reduction theory.
%

When using the Abraham-Lorentz model to illustrate elements of slow manifold reduction theory, we will refer to different parameter regimes in order to highlight different aspects of the theory. These different regimes are described most efficiently through the introduction of rescaled dependent and independent variables for Eqs.\,\eqref{AL_one}-\eqref{AL_three} that explicitly reveal a pair of fundamental dimensionless parameters. Scale time by the observer timescale $T$ as $t = T\overline{t}$, space by the observer length scale $L$ as $\bm{x} = L\overline{\bm{x}}$, velocity as $\bm{v} = (L/T)\overline{\bm{v}}$, magnetic field as $\bm{B} = B_0 \overline{\bm{B}}$, and acceleration as $\bm{a} = (L/T) (|e| B_0) (m c)^{-1} \overline{\bm{a}}$. (This scaling for the acceleration corresponds to a particle executing cyclotron motion.) In terms of the dimensionless barred-variables the Abraham-Lorentz equations may be written
\begin{align}
\left(\frac{r_0}{cT}\right)\frac{2}{3} \frac{d\overline{\bm{a}}}{d\overline{t}} &=\overline{\bm{a}} - \zeta\, \overline{\bm{v}}\times\overline{\bm{B}}(\overline{\bm{x}}) \\
 \frac{d\overline{\bm{v}}}{d\overline{t}} &= (|\omega_c| T)\overline{\bm{a}}\\
\frac{d \overline{\bm{x}}}{d\overline{t}} &= \overline{\bm{v}},
\end{align}
where we have introduced the cyclotron frequency $\omega_c = e B_0 /(m c)$, the classical electron radius $r_0 = e^2/(mc^2)$, and the sign of the charge $\zeta = \pm 1$. ($\zeta = -1$ corresponds to electrons.) The dimensionless parameters
\begin{align}
\epsilon_R &= \frac{r_0}{cT}\\
\epsilon_B & =\frac{1}{|\omega_c| T}
\end{align}
represent the ratio of the electron size to the distance light travels during a time interval $T$, and the ratio of the cyclotron period to $T$, respectively. 

Assuming the background magnetic field exceeds about $10^4$ Gauss (in particular that $|\bm{B}|$ is nowhere vanishing), and the observation timescale we care about significantly exceeds the cyclotron period,  $\epsilon_R$ and $\epsilon_B$ tend to be quite small individually. The ratio of $\epsilon_R$ to $\epsilon_B$, 
\begin{align}
\frac{\epsilon_R}{\epsilon_B} = \frac{|\omega_c|r_0}{c} 
\end{align}
also tends to be small, although numerically it approaches unity if the strength of the magnetic field $B_0$ is comparable to that of a magnetar. (The assumptions underlying the derivation of the Abraham-Lorentz force surely break down in such extreme conditions.) Therefore physically-interesting parameter regimes in the magnetized setting correspond to the relative ordering
\begin{align}
\epsilon_R &= \epsilon^\gamma\\
\epsilon_B &= \epsilon,
\end{align}
where $\gamma\geq 0$ is a constant and  $1\gg \epsilon > 0$. 

In this Review we will be concerned with three values of the parameter $\gamma$: $\gamma = \infty$, $\gamma = 2$, and $\gamma = 0$. These $\gamma$ values correspond to the following three physically-distinct regimes of Abraham-Lorentz dynamics.

\subsection{Zero-Drag Regime: $\gamma = \infty$\label{zero_drag_sec}}

When $\gamma = \infty$ the radiation drag vanishes altogether. This zero-drag regime therefore corresponds to the usual Lorentz force equation
\begin{align}
\dot{\bm{v}}& = \frac{1}{\epsilon}\,\zeta\,\bm{v}\times \bm{B}\label{zero_drag_one}\\
\dot{\bm{x}}& = \bm{v}.\label{zero_drag_two}
\end{align}
Because these equations arise from the variational principle
\begin{align}
\delta \int_{t_1}^{t_2} \bigg( \bm{v}\cdot \dot{\bm{x}} + \frac{1}{\epsilon}\zeta \bm{A}(\bm{x})\cdot\dot{\bm{x}} - \frac{1}{2} |\bm{v}|^2\bigg)\,dt = 0,
\end{align}
where $\nabla\times\bm{A} = \bm{B}$, this form of Abraham-Lorentz dynamics will be useful when discussing slow manifolds that arise in non-dissipative systems in Section \ref{interpretation_SM}. We will also refer to the zero-drag regime when discussing the inheritance of Hamiltonian structure by slow manifolds in Section \ref{hamiltonian_SM}.

\subsection{Maximal Drag Regime: $\gamma = 0$\label{maximal_drag_sec}}

At the other extreme where $\gamma = 0$ the Abraham-Lorentz equation becomes
\begin{align}
\frac{2}{3}\dot{\bm{a}} &= \bm{a} -\zeta\,\bm{v}\times\bm{B}(\bm{x})\label{maximally_damped_one}\\
\epsilon \dot{\bm{v}} & = \bm{a}\\
\dot{\bm{x}} & = \bm{v}.\label{maximally_damped_three}
\end{align}
We will refer to this value of $\gamma$ as the maximal drag regime because $\gamma < 0$ corresponds to the physically uninteresting scenario of an observation timescale much shorter than the time for light to traverse a classical electron radius. The especially pronounced influence of radiation drag in the maximal drag regime illuminates a peculiar corner of slow manifold theory that often goes unnoticed in the literature. We will revisit this point, along with the maximal drag limit, when discussing degenerate fast-slow systems in Section \ref{when_split}.

\subsection{Weak Drag Regime: $\gamma = 2$\label{weak_drag_sec}}
The case $\gamma = 2$ corresponds to a generic middle ground between $\gamma = \infty$ and $\gamma = 0$ that we will refer to as the weak drag regime. In this regime the Abraham-Lorentz equations take the form
\begin{align}
\epsilon^2\frac{2}{3}\dot{\bm{a}} &= \bm{a} -\zeta\,\bm{v}\times\bm{B}(\bm{x})\label{weakly_damped_one}\\
\epsilon \dot{\bm{v}} & = \bm{a}\\
\dot{\bm{x}} & = \bm{v}.\label{weakly_damped_three}
\end{align}
Of the three regimes we will discuss in this Review, the weak drag regime makes closest contact with the seminal work of Spohn in \citep{Spohn_2000}.  We will use this regime to illustrate features of non-degenerate slow manifolds in dissipative systems in Section \ref{interpretation_SM}.

\section{Key Concept: Invariant Manifolds\label{IM_section}}
Invariant manifolds play a major role in modern dynamical systems theory, and also comprise the primordial concept underlying the notion of a slow manifold. Before embarking on a proper discussion of slow manifolds, we are therefore compelled  to define and discuss invariant manifolds in a broader context. This Section will introduce the notion of an invariant manifold associated with a dynamical system of the form 
\begin{align}\label{gen_ode}
\dot{z} = U(z),
\end{align}
where $z\in\mathbb{R}^d\equiv Z$ is a $d$-dimensional vector and $U:\mathbb{R}^d\rightarrow\mathbb{R}^d$ is a $d$-component vector field on $z$-space. For simplicity's sake we will not allow $U$ to depend on time, and we will not allow $d\rightarrow\infty$. This will keep the discussion within the realm of autonomous ordinary differential equations on finite-dimensional state spaces. (Note however that the example of quasineutral plasma dynamics discussed in Section \ref{QN_application} requires $d=\infty$.)

Broadly speaking, invariant manifolds may be thought of as ``effortless constraint sets," where ``effortless" refers to the fact that no external forcing is required to maintain the constraint. Effortless constraints stand in stark contrast to the usual holonomic constraints encountered in introductory mechanics texts such as \citep{Landau_1976}. For example, a pendulum may be realized as a particle in a gravitational field subject to the constraint that its distance to some fixed point remains constant in time. This constraint is not effortless however because persistence of the constant-length condition requires work to be done by some external agent. In contrast, restriction to a level set of the energy in a conservative system comprises a constraint that requires zero external effort to maintain; all that is required of an external agent to maintain the constant-energy constraint is to prepare the system in question to lie on a particular energy surface at a single instant of time.

From the mathematical perspective an invariant manifold is defined as a subset $S\subset Z$ of $z$-space with the following properties.
\begin{definition}[invariant manifold]\label{invariant_manifold_def}
A subset $S\subset Z$ is an invariant manifold associated with the ODE \eqref{gen_ode} if the following conditions are satisfied by $S$.
\begin{itemize}
\item[(a)] $S$ is a smooth submanifold of $Z$.
\item[(b)] $S$ is invariant under the flow of the ODE \eqref{gen_ode}.
\end{itemize}
\end{definition} 
\noindent Because each of the conditions (a) and (b) is somewhat technical it is worth dwelling on their meaning  here.  

Property (a) means that there is some smooth invertible coordinate transformation $z\mapsto (x,y)$ with $x\in\mathbb{R}^k$ and $y\in\mathbb{R}^{d-k}$ such that each point $s\in S$ has the form
\begin{align}
s = (x,y^*(x)),
\end{align}
with $y^*(x)\in\mathbb{R}^{d-k}$ a smooth vector-valued function of $x$. In other words $S$ is the \emph{graph} of a smooth function $y^*$. The integer $0\leq k \leq d $ is called the dimension of $S$. Strictly speaking, submanifolds in general are only required to be graphs locally in order to accommodate folds and points where $y^*(x)$ cannot be defined. However, except for the following example, all submanifolds that appear in this Review will arise as graphs of well-defined single-valued functions $y^*$.
\begin{example}
The upper half of the circle in $\mathbb{R}^2$ defined by the equation $x^2+y^2 = 1$, $y>0$, is a $1$-dimensional submanifold of $\mathbb{R}^2$ because it may be written as the graph of $y^*(x) = \sqrt{1-x^2}$ with $x$ restricted to the interval $[-1,1]$. The lower half of the circle is also a submanifold with $y^*(x) = - \sqrt{1 - x^2}$ and $x$ restricted in the same way. In fact the whole circle $x^2+y^2 = 1$ is technically a submanifold that may be thought of as the graph of the multi-valued function $y^*_{\pm}(x) = \pm\sqrt{1-x^2}$ whose domain is restricted to the interval $x\in[-1,1]$. The left and right ends of $[-1,1]$ correspond to \emph{folds}, the presence of which always implies a multi-valued $y^*$. Although the square root is well-defined as a complex-valued function for $x$ outside the interval $[-1,1]$, the definition of a submanifold requires $y^*(x)$ to be real-valued. This explains why the domain of $x$ must be restricted. In contrast the graph of $y^*_1(x) = \sqrt{1+x^2}$ is a $1$-dimensional submanifold that requires neither a multi-valued $y^*$ nor a restriction on $x$.
\end{example}

Property (b) means that if a solution of Eq.\,\eqref{gen_ode} is contained in $S$ initially then it remains in $S$ for all time. In symbols we may write
\begin{align}
z(0)\in S\Rightarrow z(t)\in S\,\,\forall t\in\mathbb{R}.\label{invariance_condition_abstract}
\end{align}
As a useful shorthand we also say that $S$ is \emph{invariant}.

\begin{example}
Any point $z_0$ where $U(z_0) = 0$ is invariant under the flow of \eqref{gen_ode} because the unique solution of \eqref{gen_ode} with $z(0) = z_0$ is $z(t) = z_0$. Likewise, if \eqref{gen_ode} obeys the conservation law $\frac{d}{dt} Q(z(t)) = 0$ then any level set of the conserved quantity $Q$ is invariant. Note however that not all invariant objects arise as either fixed points or level sets of conserved quantities. Consider, for example, the ODE
\begin{align}
\dot{y} =& A(y)\label{linear_IM}\\
\dot{x} =& g(x,y),\label{nonlinear_IM}
\end{align}
where $A(y)$ is a smooth nonlinear function with $A(0) = 0$ and $g(x,y)$ is an arbitrary function of $x$ and $y$. In this case the set $y=0$ is invariant because $y=0$ is a fixed point for $\dot{y} = A(y)$ even though $(x,y) = (x,0)$ need not be a fixed point for the entire ODE for any $x$. In particular the invariance of the set $y=0$ does not require the existence of a conservation law for Eqs.\,\eqref{linear_IM}-\eqref{nonlinear_IM}.
\end{example}

Upon combining properties (a) and (b) we may construct the following analytic picture of invariant manifolds. By property (a) there are coordinates $(x,y)$ on $z$-space such that $S$ is the graph of a smooth function $y^*(x)$. In these coordinates the ODE \eqref{gen_ode} takes the general form
\begin{align}
\dot{y} &= f(x,y)\label{gen_ode_split_one}\\
\dot{x} &= g(x,y),\label{gen_ode_split_two}
\end{align}
for some pair of functions $f(x,y)$, $g(x,y)$. By property (b) if $(x(t),y(t))$ is a solution of Eqs.\,\eqref{gen_ode_split_one}-\eqref{gen_ode_split_two} that begins in $S$ then $(x(t),y(t))$ must also be in $S$ for all $t\in \mathbb{R}$. Since all points $s\in S$ have the form $s=(x,y^*(x))$ a solution $(x(t),y(t))$ that begins in $S$ must therefore satisfy
\begin{align}
\begin{pmatrix}
x(t)\\
y(t)
\end{pmatrix}
=
\begin{pmatrix}
x(t)\\
y^*(x(t))
\end{pmatrix},\label{basic_slaving}
\end{align}
for each $t\in\mathbb{R}$.
In other words the variable $y$ is \emph{slaved} to the variable $x$. Moreover, because $(x(t),y(t))$ is a solution of Eqs.\,\eqref{gen_ode_split_one}-\eqref{gen_ode_split_two}, substitution of the slaving relation \eqref{basic_slaving} into the ODE \eqref{gen_ode_split_one}-\eqref{gen_ode_split_two} leads to the pair of equations
\begin{gather}
g^i(x,y^*)\partial_{i}(y^*)^j(x) = f^j(x,y^*)\label{invariance_equation_gen_index}\\ 
\dot{x}^i = g^i(x,y^*),\label{reduced_dynamics_gen_index}
\end{gather}
where the index $i$ ranges from $i=1$ to $i = k$ and $j$ ranges from $j=1$ to $j = d-k$. Introducing the Fr\'echet derivative notation, \citep{Abraham_2008,Marsden_Ratiu_1999}
\begin{align}
Dy^*(x)[\delta x] \equiv \frac{d}{d\epsilon}\bigg|_0 y^*(x+\epsilon \delta x),
\end{align}
these equations may also be written without indices as
\begin{gather}
Dy^*(x)[g(x,y^*(x))] = f(x,y^*(x))\label{invariance_equation_gen}\\
\dot{x} = g(x,y^*(x)).\label{reduced_dynamics_gen}
\end{gather}
Equations \eqref{invariance_equation_gen}-\eqref{reduced_dynamics_gen} provide a powerful way to construct invariant manifolds and understand the reduced dynamics that occur on such manifolds.
%

Equation \eqref{reduced_dynamics_gen} shows that solutions of Eqs.\,\eqref{gen_ode_split_one}-\eqref{gen_ode_split_two} that begin on the invariant manifold $S$ have the remarkable property that the evolution equation for $x$ closes on itself. Equivalently the function $y^*$ whose graph is equal to the invariant manifold provides a \emph{closure} for the $x$-dynamics. It follows that invariant manifolds comprise a paradigm for dimension reduction. 
Within the framework of Haken's theory of synergetics \cite{Haken_1977}, the variable $x$ comprises an \emph{order parameter}, which determines the evolution of the remaining modes in the system $y$ in accordance with Haken's enslaving principle.

Equation \eqref{invariance_equation_gen}, which arose from substituting the slaving relation \eqref{basic_slaving} into the evolution equation for $y$, plays a role complementary to that of Eq.\,\eqref{reduced_dynamics_gen}. Instead of providing an evolution equation, it provides a (time-independent) first-order nonlinear partial differential equation (PDE) that must be satisfied by the function $y^*(x)$. We will refer to this PDE as \emph{the invariance equation}, as it is a basic tool in the study of invariant manifolds. When we eventually introduce slow manifolds in Section \ref{basic_SM_theory} we will take up the problem of finding asymptotic solutions of the invariance equation.

Apparently all invariant manifolds satisfy the invariance equation. 
The converse is also true. Every solution $y^*$ of the invariance equation gives rise to an invariant manifold equal to the graph of $y^*$. We may therefore summarize the analytic perspective on invariant manifolds as follows.

\begin{proposition}[PDE characterization of invariant manifolds]\label{PDE_char}
The graph of a function $y^*(x)$ is an invariant manifold for the ODE
\begin{align}
\dot{y} =& f(x,y)\\
\dot{x} = & g(x,y)
\end{align}
if and only if $y^*$ is a solution of the invariance equation:
\begin{align}\label{invariance_eqn_gen_prop}
Dy^*(x)[g(x,y^*(x))] = f(x,y^*(x)).
\end{align}
\end{proposition}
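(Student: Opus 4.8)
The plan is to dispose of condition (a) of Definition \ref{invariant_manifold_def} immediately and then prove the biconditional for condition (b). Since $y^*$ is assumed smooth, its graph is automatically a smooth $k$-dimensional submanifold of $Z$ (it is globally a graph, so no folds arise), and therefore the entire content of the proposition is the assertion that the graph is flow-invariant if and only if $y^*$ solves the invariance equation \eqref{invariance_eqn_gen_prop}. I would prove the two implications separately, in both directions using existence and uniqueness of solutions of \eqref{gen_ode}, which I would guarantee by taking $f$ and $g$ smooth (local Lipschitz regularity suffices).

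For the forward implication, I would fix an arbitrary base point $x_0$ in the domain of $y^*$ and let $(x(t),y(t))$ denote the solution of \eqref{gen_ode_split_one}--\eqref{gen_ode_split_two} with initial condition $(x_0,y^*(x_0))$, which lies in $S$. Flow-invariance forces $y(t) = y^*(x(t))$ throughout the solution's interval of existence, exactly as in the derivation of the slaving relation \eqref{basic_slaving}. Differentiating this identity via the chain rule gives $\dot y(t) = Dy^*(x(t))[\dot x(t)]$; substituting $\dot x = g(x,y^*(x))$ and $\dot y = f(x,y^*(x))$ and evaluating at $t = 0$ yields $f(x_0,y^*(x_0)) = Dy^*(x_0)[g(x_0,y^*(x_0))]$. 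Since $x_0$ was arbitrary, this is precisely the invariance equation.

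For the converse I would run the same construction in reverse. Given a solution $y^*$ of \eqref{invariance_eqn_gen_prop} and any $x_0$, let $\tilde x(t)$ solve the reduced system \eqref{reduced_dynamics_gen}, $\dot{\tilde x} = g(\tilde x, y^*(\tilde x))$ with $\tilde x(0) = x_0$, and set $\tilde y(t) = y^*(\tilde x(t))$. A direct check shows the pair $(\tilde x(t),\tilde y(t))$ solves the full system: the $\tilde x$-equation holds by construction, and $\dot{\tilde y} = Dy^*(\tilde x)[\dot{\tilde x}] = Dy^*(\tilde x)[g(\tilde x, y^*(\tilde x))] = f(\tilde x, y^*(\tilde x)) = f(\tilde x, \tilde y)$, the middle equality being the invariance equation \eqref{invariance_equation_gen}. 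By uniqueness of solutions, $(\tilde x(t),\tilde y(t))$ is the solution of \eqref{gen_ode_split_one}--\eqref{gen_ode_split_two} through $(x_0,y^*(x_0))$, and it lies in $S$ throughout its interval of existence. Letting $x_0$ range over the domain of $y^*$ shows every trajectory starting in $S$ remains in $S$, which is condition (b).

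The calculations here are routine chain-rule manipulations; the only genuine subtlety is bookkeeping around domains of definition. Solutions of \eqref{gen_ode} need not exist for all $t \in \mathbb{R}$, so "invariant for all time" must be read relative to maximal intervals of existence, and one should observe that the reduced flow and the full flow share the same maximal interval through a given point of $S$ — which again follows from uniqueness. I would state this regularity convention explicitly at the outset (or, if the Review assumes global-in-time existence throughout, simply invoke it). No other step in the argument presents difficulty.
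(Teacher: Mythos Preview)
Your proof is correct and matches the paper's approach. The paper derives the forward implication in the text preceding the proposition (exactly your slaving-plus-chain-rule argument leading to Eqs.\,\eqref{invariance_equation_gen}--\eqref{reduced_dynamics_gen}) and then simply asserts the converse without detail; your argument for the converse via the reduced flow and uniqueness supplies what the paper leaves implicit.
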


The PDE perspective on invariant manifolds provided by Proposition \ref{PDE_char} is most useful when addressing questions such as (1) how smooth should one expect an invariant manifold to be? or (2) can invariant manifolds be constructed using functional fixed-point methods? (See for example the thesis \citep{Riley_2012}.) Asymptotic solution methods like those that will be discussed in this review also fit naturally within the PDE approach. On the other hand the PDE perspective has the disadvantage of obscuring the more geometrically-intuitive picture provided by the definition \ref{invariant_manifold_def}. 

In the remainder of this Review, we will often use the analytic approach to invariant manifold theory when performing explicit calculations, or when describing estimates for the validity time of certain constructions. However, we will refer to the geometric perspective offered by Definition \ref{invariant_manifold_def} whenever possible to build the reader's geometric intuition. Much of this geometric intuition will be extracted from the following ``local" geometric characterization of invariant manifolds.

\begin{proposition}[Invariance as Tangency]\label{tangency_prop}
A submanifold $S\subset Z$ is an invariant manifold for the ODE \eqref{gen_ode} if and only if for each $s\in S$ the vector $U(s)$ is tangent to $S$.
\end{proposition}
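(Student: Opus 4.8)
The plan is to work in the local graph coordinates guaranteed by property (a) of Definition \ref{invariant_manifold_def} and to show that, pointwise, the tangency condition $U(s)\in T_sS$ is \emph{literally} the invariance equation of Proposition \ref{PDE_char}. Concretely, I would fix $s\in S$, choose coordinates $z\mapsto(x,y)$ with $x\in\mathbb{R}^k$, $y\in\mathbb{R}^{d-k}$ in which a neighborhood of $s$ in $S$ is the graph $\{(x,y^*(x))\}$ of a smooth function $y^*$, and write the ODE as $\dot y=f(x,y)$, $\dot x=g(x,y)$ as in Eqs.\,\eqref{gen_ode_split_one}--\eqref{gen_ode_split_two}. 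Differentiating the defining relation $y=y^*(x)$ along curves lying in $S$ identifies the tangent space as $T_sS=\{(\delta x,\,Dy^*(x)[\delta x]):\delta x\in\mathbb{R}^k\}$, the graph of the linear map $Dy^*(x)$. Since $U(s)=(g(x,y^*(x)),f(x,y^*(x)))$ in these coordinates and the first slot already has the required form, $U(s)\in T_sS$ holds precisely when the second slot matches, i.e.\ when $f(x,y^*(x))=Dy^*(x)[g(x,y^*(x))]$. Thus ``$U$ tangent to $S$ at every $s$'' is equivalent to ``$y^*$ solves the invariance equation \eqref{invariance_eqn_gen_prop} on the patch.''

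With this reformulation the forward implication is immediate: if $S$ is invariant and $s\in S$, I would take the solution $z(t)$ of \eqref{gen_ode} with $z(0)=s$; invariance gives $z(t)\in S$, so $t\mapsto z(t)$ is a smooth curve in $S$ and its velocity $\dot z(0)=U(s)$ belongs to $T_sS$ by definition of the tangent space. (Equivalently, one can cite the observation made just before Proposition \ref{PDE_char} that every invariant manifold satisfies the invariance equation.)

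For the reverse implication, assuming $U(s)\in T_sS$ for all $s\in S$, the computation above shows $y^*$ satisfies the invariance equation, so Proposition \ref{PDE_char} yields directly that the graph of $y^*$ --- i.e.\ $S$ within the coordinate patch --- is invariant. If a self-contained argument is wanted instead, I would, given $s_0=(x_0,y^*(x_0))\in S$, solve the reduced system $\dot x=g(x,y^*(x))$ with $x(0)=x_0$, form the lifted curve $\bigl(x(t),y^*(x(t))\bigr)$, differentiate, and use the invariance equation to check that this curve solves the full ODE with initial value $s_0$; uniqueness of solutions then forces it to be \emph{the} solution through $s_0$, which therefore never leaves $S$.

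The genuine difficulties are global rather than local. A general submanifold is only a graph near each of its points, so the local invariance statements must be patched: a trajectory starting in $S$ stays in $S$ while it remains in a single chart, and a connectedness (open-and-closed) argument on the maximal subinterval of the trajectory's domain on which it lies in $S$ upgrades this to invariance throughout the domain of the solution. I would also note that the uniqueness theorem invoked above needs $U$ (hence $f$, $g$) to be at least locally Lipschitz, which the paper's standing smoothness hypotheses supply. The part I expect to require the most care is precisely this patching-and-regularity bookkeeping rather than the core linear-algebra identification $T_sS=\mathrm{graph}\,Dy^*(x)$ --- and in the single-chart setting that dominates the Review it is essentially vacuous.
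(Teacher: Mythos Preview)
Your proposal is correct and follows essentially the same route as the paper: work in graph coordinates, identify $T_sS$ as the graph of $Dy^*(x)$, and observe that $U(s)\in T_sS$ is literally the invariance equation \eqref{invariance_eqn_gen_prop}, so both directions reduce to Proposition~\ref{PDE_char}. The paper's proof is slightly terser---it invokes the invariance equation directly for the forward direction rather than the trajectory-velocity argument you lead with, and it omits the global patching and Lipschitz caveats you raise---but the core argument is the same.
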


\begin{proof}
For those familiar with flows, this follows immediately from the definition \ref{invariant_manifold_def}. However, we will give an explicit proof starting from the PDE picture of invariant manifolds provided by Proposition \ref{PDE_char}.

Move into coordinates $(x,y)$ where the submanifold $S$ is given as the graph of a function $y^*(x)$ and the ODE \eqref{gen_ode} takes the form $\dot{y} = f(x,y)$, $\dot{x} = g(x,y)$. In these coordinates we may characterize vectors tangent to $S$ as follows. A vector $V = (\delta x,\delta y)$ is tangent to $S$ at $s = (x,y^*(x))$ if and only if it is the initial velocity of a curve $(x(t),y(t))$ contained in $S$ and passing through $s$ at $t=0$. That is $V$ must be of the form $\frac{d}{dt}\mid_{t=0}(x(t),y(t))\equiv (\dot{x},\dot{y})$ where $y(t) = y^*(x(t))$ for all $t$ and $(x(0),y(0)) = (x,y^*(x))$. By the chain rule this means
\begin{align}
\delta y =& Dy^*(x)[\dot{x}]\\
\delta x =& \dot{x},
\end{align}
and that the general form of a vector tangent to $S$ at $s$ is given by $V = (\dot{x},Dy^*(x)[\dot{x}])$, where $\dot{x}$ is any $k$-component vector.

If $S$ is invariant then $y^*$ must satisfy the invariance equation \eqref{invariance_eqn_gen_prop}. Therefore if $s=(x,y^*(x))$ is any point on $S$, the value of $U = (g,f)$ at $s$ is given by $(g(x,y^*(x),f(x,y^*(x)))$ where
\begin{align}
f(x,y^*(x)) = Dy^*(x)[g(x,y^*(x))].
\end{align}
This vector has the form $(\dot{x},Dy^*(x)[\dot{x}])$ with $\dot{x} = g(x,y^*(x))$. By the above characterization of vectors tangent to $S$, this implies $U(s)$ is tangent to $S$.

Conversely, suppose that $U(s)$ is tangent to $S$ for all $s\in S$. Then $U(s)$ must be of the form $(\dot{x},Dy^*(x)[\dot{x}])$ for some $\dot{x}$. But we know that $U(s) = (g(x,y^*(x)),f(x,y^*(x)))$ as well. Therefore it must be the case that $\dot{x} = g(x,y^*(x))$. But this means
\begin{align}
f(x,y^*(x)) = Dy^*(x)[\dot{x}] = Dy^*(x)[g(x,y^*(x))],
\end{align}
for all $x$, as claimed.

\end{proof}

Further elements of the theory of invariant manifolds will not be necessary in what follows. Nevertheless, the theory is much richer than this Section's discussion suggests. We refer the reader to \citep{Kirchgraber_1986} for an application to multi-step numerical integrators, \citep{Robinson_1977} for a use of invariant manifold theory to establish so-called ``shadowing theorems," \citep{Capinsky_2016} for a fantastic application of invariant manifold theory in the context of Arnold diffusion, and \citep{Llave_2001} for a tutorial on the venerable Kolmogorov-Arnold-Moser (KAM) theory of persistent invariant tori in nearly-integrable Hamiltonian systems.

\section{Fast-Slow Systems\label{FS_systems_sec}}
If slow manifold reduction were a game then fast-slow systems would comprise the arena where the game is (usually) played. The purpose of this Section is to build a working understanding of that arena. Fast-slow systems will be defined, and techniques will be described for detecting fast-slow systems ``in the wild." We will argue that these techniques are important because a system's fast-slow identity may easily be hidden by obvious choices of a model's dependent variables. The Abraham-Lorentz model described in Section \ref{AL_section} will be used as a prototypical plasma-relevant example illustrating the theory.

\subsection{What is a fast-slow system?}\label{what_is_sec}
The notion of a fast-slow system is a refinement of the more primordial notion of a singularly-perturbed ODE. Our discussion of fast-slow systems therefore begins with generic singularly-perturbed ODEs of the type
\begin{align}
\epsilon\, \dot{y} =& f_\epsilon(x,y)\label{spODE_one}\\
\dot{x} =& g_\epsilon(x,y),\label{spODE_two}
\end{align}
where $\epsilon\ll 1$ is a small positive parameter and $f_\epsilon,g_\epsilon$ depend smoothly on $\epsilon$ in a neighborhood of $\epsilon = 0$. We assume that $x$ and $y$ live in vector spaces $X$ and $Y$ whose dimensions may differ, or even be infinite. If either of $X$ or $Y$ is infinite-dimensional we assume that space is normed and complete with respect to that norm. (In other words we require $X$ and $Y$ to be Banach spaces.)

Because $f_\epsilon$ and $g_\epsilon$ depend smoothly on $\epsilon$ they admit the formal power series expansions
\begin{align}
f_\epsilon =& f_0 + \epsilon f_1 + \epsilon^2 f_2 + \dots\\
g_\epsilon =& g_0 + \epsilon g_1 + \epsilon ^2 g_2 + \dots, 
\end{align}
where the $f_k,g_k$ are, up to constant multipliers, Taylor coefficients. When $f_0\neq 0$ the singularly-perturbed ODE \eqref{spODE_one}-\eqref{spODE_two} therefore exhibits an extremely-short $O(\epsilon)$ timescale. On the other hand, if there are regions in $(x,y)$-space where $f_0(x,y) = 0 $ then the dynamical timescale is $O(1)$ in those regions. It follows that Eqs.\,\eqref{spODE_one}-\eqref{spODE_two} typically comprise an example of a system with multiple timescales. Accordingly, in systems of the form \eqref{spODE_one}-\eqref{spODE_two} $y$ is referred to as the fast variable and $x$ is referred to as the slow variable.


We will now bring into focus a special class of singularly-perturbed ODEs for which the aforementioned multi-scale behavior manifests in a particularly organized fashion.

\begin{definition}[Fast-slow system]\label{FS_def}
A \emph{fast-slow system} (c.f. \citep{MacKay_2004,Fenichel_1979}) is an ODE of the form 
\begin{align}
\epsilon \,\dot{y} =& f_\epsilon(x,y)\\
\dot{x} = & g_\epsilon(x,y),
\end{align}
where $f_\epsilon,g_\epsilon$ are smooth functions of the parameter $\epsilon$ and $f_0$ satisfies the condition
\begin{align}
D_yf_0(x,y)\,\text{is invertible whenever }f_0(x,y)=0.\label{FS_condition}
\end{align}
Here $D_yf_0(x,y):Y\rightarrow Y$ is the linear map given by $D_yf_0(x,y)[\delta y] = \frac{d}{d\lambda}\big|_0 f_0(x,y+\lambda \delta y)$.
\end{definition}
While the condition \eqref{FS_condition} appears to be quite technical it is simple to motivate by considering the $\epsilon\rightarrow 0$ limit of Eqs.\,\eqref{spODE_one}-\eqref{spODE_two}. Suppose those equations admit an $\epsilon$-dependent family of solutions $(x_\epsilon(t),y_\epsilon(t))$ that \emph{does not} sample the $O(\epsilon)$ timescale. Heuristically we are assuming the existence of a \emph{slow solution}. Then because $\dot{y}_\epsilon = O(1)$ as $\epsilon\rightarrow 0$ the limiting solution $(x_0(t),y_0(t))$ must satisfy
\begin{align}
0 &= f_0(x_0(t),y_0(t))\label{limit_FS_one}\\
\dot{x}_0(t) & = g_0(x_0(t),y_0(t)).\label{limit_FS_two}
\end{align}
These limiting differential-algebraic equations (DAEs) may be quite complicated in general due to hidden differential constraints (\emph{secondary constraints} in the language of Dirac constraint theory, c.f. \citep{Gotay_1978} ) implied by Eq.\,\eqref{limit_FS_one}. (See \citep{Gear_1988} for a discussion of the \emph{differentiation index} for DAEs.) However, they are simple to understand in the special case where
\begin{align}
f_0(x,y) = 0\label{limit_constraint}
\end{align}
can be solved to give $y$ as a unique function of $x$, i.e. $y = y^*_0(x)$. Indeed, if $f_0 = 0$ implies $y = y_0^*(x)$ then Eqs.\,\eqref{limit_FS_one}-\eqref{limit_FS_two} reduce to the ordinary differential equation
\begin{align}
\dot{x}_0(t) = g_0(x_0(t),y_0^*(x_0(t))).\label{limit_ODE}
\end{align}
The condition \eqref{FS_condition} is a natural way to ensure that this reduction occurs; the implicit function theorem states that if condition \eqref{FS_condition} is satisfied then, at least locally, Eq.\,\eqref{limit_constraint} can be solved uniquely to give $y = y_0^*(x)$. In this Review we will always tacitly assume that the equation $f_0(x,y) =0$ has a global unique solution $y=y_0^*(x)$; examples of fast-slow systems with multi-branched solutions of $f_0=0$ may arise in some plasma physical contexts, but these examples lie beyond the scope of our discussion.

Another way to motivate condition \eqref{FS_condition} is to consider generic singularly-perturbed ODEs. In generic systems the function $f_0$ will be a generic $Y$-valued function of $(x,y)$. Therefore the derivative $D_yf_0(x,y):Y\rightarrow Y$ will be a generic linear map for almost all $x$. Because the space of invertible linear maps $Y\rightarrow Y$ is open the generic map $D_yf_0(x,y)$ should fall into this space for almost all $x$. Alternatively we may say that if $f_0$ does not satisfy \eqref{FS_condition} then we may ``fix" $f_0$ by subjecting it to an arbitrarily-small perturbation. Therefore ``most" singularly-perturbed systems of the form \eqref{spODE_one}-\eqref{spODE_two} will satisfy \eqref{FS_condition}.

Each of the previous arguments motivating condition \eqref{FS_condition} has its own weakness. The first argument invoking the implicit function theorem begs the question ``why should we demand that the DAE \eqref{limit_FS_one}-\eqref{limit_FS_two} be as simple as possible?" The second argument invoking genericity begs the question ``why should \emph{my} model be generic?" The flaws in these arguments indicate that restricting our attention to fast-slow systems may rule out certain practically interesting multi-scale phenomena. We will return to this point in Section \ref{QN_application} where we will give a detailed treatment of a significant plasma-physical example that lies outside the scope of fast-slow systems theory. Nevertheless, in this Section and the next several Sections we shall remain steadfast and devote significant attention to fast-slow systems because such systems do arise frequently in plasma physics. (c.f. Table \ref{table1}.)

\begin{example}\label{FS_example_easy}
The Abraham-Lorentz equations in the weak-drag regime, i.e. Eqs.\, \eqref{weakly_damped_one}-\eqref{weakly_damped_three}, comprise a simple and fundamental example of a fast-slow system relevant to plasma physics. As written this system does not comprise a singularly-perturbed ODE of the type \eqref{spODE_one}-\eqref{spODE_two} because there are effectively \emph{two} short timescales, one $O(\epsilon)$, the other $O(\epsilon^2)$. This issue can be remedied easily, however, by \emph{zooming in} on the $O(\epsilon)$ timescale. Let $\tau = t/\epsilon$ be the fast time variable associated with this magnification. In terms of $\tau$ the weak-drag Abraham-Lorentz equations become
\begin{align}
\epsilon \frac{2}{3} \frac{d\bm{a}}{d\tau} =& \bm{a} - \zeta\,\bm{v}\times\bm{B}(\bm{x})\label{rescaled_AL_one}\\
\frac{d\bm{v}}{d\tau} = &\bm{a}\\
\frac{d\bm{x}}{d\tau} = & \epsilon \bm{v}.\label{rescaled_AL_two}
\end{align}
Equations \eqref{rescaled_AL_one}-\eqref{rescaled_AL_two} comprise a singularly-perturbed ODE of type \eqref{spODE_one}-\eqref{spODE_two} with $y = \bm{a}$, $x = (\bm{x},\bm{v})$, and 
\begin{align}
f_\epsilon(x,y) =& \frac{3}{2}\bm{a} -\frac{3}{2} \zeta\,\bm{v}\times\bm{B}(\bm{x})\\
g_\epsilon(x,y) = & (\epsilon \bm{v}, \bm{a}).
\end{align}
Moreover, the $y$-derivative $D_yf_0(x,y)$ is readily computed as
\begin{align}
D_yf_0(x,y)[\delta y] = \frac{3}{2}\delta\bm{a},
\end{align}
which shows that $D_yf_0(x,y)$ is invertible for all $(x,y)$, in particular for those $(x,y)$ where $f_0(x,y) = 0$. It follows that condition \eqref{FS_condition} is satisfied and that the Abraham-Lorentz equations in the weak-drag regime comprise a fast-slow system.

\end{example}

\begin{example}\label{FS_example_hard}
Abraham-Lorentz dynamics in the zero drag regime ($\gamma = \infty$, c.f. Section \ref{zero_drag_sec}) provide a more typical and interesting example of the manner in which fast-slow systems arise in plasma physics. Equations \eqref{zero_drag_one}-\eqref{zero_drag_two}, in contrast to the weak drag regime equations, immediately take the form \eqref{spODE_one}-\eqref{spODE_two} of a singularly-perturbed ODE with $x = \bm{x}$, $y = \bm{v}$, and 
\begin{align}
f_\epsilon(x,y) =& \zeta\,\bm{v}\times\bm{B}(\bm{x})\\
g_\epsilon(x,y) = & \bm{v}.
\end{align} 
However the zero-drag equations \emph{do not} comprise a fast-slow system because
\begin{align}
D_yf_0(x,y)[\delta y] = & \zeta \,\delta\bm{v}\times\bm{B}(\bm{x}),
\end{align}
which clearly vanishes whenever $\delta \bm{v} = \lambda \bm{B}(\bm{x})$ for any $\lambda\in\mathbb{R}$. Thus $D_yf_0(x,y)$ is not invertible \emph{for any} $(x,y)$, in violation of condition \eqref{FS_condition}. It would therefore seem that fast-slow system theory cannot be applied to Abraham-Lorentz dynamics in the zero drag regime. 

On the other hand suppose we introduce new coordinates on $(\bm{x},\bm{v})$-space, $(\bm{x},v_1,v_2,v_\parallel)$, defined by the formula
\begin{align}
\bm{v} = v_\parallel \bm{v}(\bm{x}) + v_1 \,\bm{e}_1(\bm{x}) + v_2\,\bm{e}_2(\bm{x}),\label{moving_frame_v}
\end{align}
where $\bm{b} = \bm{B}/|\bm{B}|$ is the unit vector along the magnetic field, and $(\bm{e}_1,\bm{e}_2,\bm{b})$ form a space-dependent right-handed orthonormal frame, or \emph{moving frame}. 
In these new coordinates the zero-drag Abraham-Lorentz equations become
\begin{align}
\epsilon \,\dot{v}_1 =& \phantom{-}\zeta \,|\bm{B}(\bm{x})|\,v_2 + \epsilon \bm{v}\cdot\bm{R}(\bm{x})\,v_2 - \epsilon v_\parallel \,\bm{v}\cdot\nabla\bm{b}\cdot\bm{e}_1\\
\epsilon\, \dot{v}_2 = &\hspace*{.45em}\text{-}\zeta\,|\bm{B}(\bm{x})|\,v_1 - \epsilon\bm{v}\cdot \bm{R}(\bm{x})\,v_1-\epsilon v_\parallel \,\bm{v}\cdot\nabla\bm{b}\cdot\bm{e}_2\\
\dot{v}_\parallel = &\bm{v}\cdot \nabla\bm{b}\cdot\bm{v}\\
\dot{\bm{x}} = & \bm{v},
\end{align}
where $\bm{R} = (\nabla\bm{e}_1)\cdot\bm{e}_2$ is the so-called ``gyrogauge vector" elucidated by Littlejohn in \citep{Littlejohn_1984}, and $\bm{v}$ is now being used as shorthand notation for formula \eqref{moving_frame_v}.
Again these equations take the form \eqref{spODE_one}-\eqref{spODE_two} of a singularly-perturbed ODE with slow variable $x = (\bm{x},v_\parallel)$, fast variable $y = (v_1,v_2)$, $g_\epsilon = (g_\epsilon^{\bm{x}},g_\epsilon^{v_\parallel})$, $f_\epsilon = (f_\epsilon^{v_1},f_\epsilon^{v_2})$, and
\begin{align}
f_\epsilon^{v_1} = & \phantom{-}\zeta \,|\bm{B}(\bm{x})|\,v_2 + \epsilon \bm{v}\cdot\bm{R}(\bm{x})\,v_2 - \epsilon v_\parallel \,\bm{v}\cdot\nabla\bm{b}\cdot\bm{e}_1 \label{zero_drag_new_Feps_one}\\
f_\epsilon^{v_2} = &\hspace*{.45em}\text{-}\zeta\,|\bm{B}(\bm{x})|\,v_1 - \epsilon\bm{v}\cdot \bm{R}(\bm{x})\,v_1-\epsilon v_\parallel \,\bm{v}\cdot\nabla\bm{b}\cdot\bm{e}_2 \label{zero_drag_new_Feps_two}\\
g_\epsilon^{v_\parallel} = &\bm{v}\cdot \nabla\bm{b}\cdot\bm{v}  \\
g_\epsilon^{\bm{x}} = &\bm{v}.\label{zero_drag_new_Geps_two}
\end{align}
Apparently the coordinate transformation has changed the number of fast and slow variables. Moreover, the new form \eqref{zero_drag_new_Feps_one}-\eqref{zero_drag_new_Feps_two} for $f_\epsilon$ satisfies
\begin{align}
D_yf_0(x,y)[\delta y] = & \begin{pmatrix}
0 & \zeta\,|\bm{B}(\bm{x})|\\
-\zeta\,|\bm{B}(\bm{x})| & 0 
\end{pmatrix}\begin{pmatrix}
\delta v_1\\
\delta v_2
\end{pmatrix}
\end{align}
which vanishes if and only if $\delta y \equiv (\delta v_1,\delta v_2) = 0$. In terms of the dependent variables $(\bm{x},v_\parallel, v_1, v_2)$ the zero-drag Abraham-Lorentz equations are fast-slow!
\end{example}

Example \ref{FS_example_hard} illustrates an important practical point regarding detection of fast-slow systems when working with models from physics. The technical condition \eqref{FS_condition} defining a fast-slow system, if satisfied for some set of dependent variables, \emph{is not satisfied for every other choice of dependent variables.} Moreover, if a particular system does \emph{not} admit such a special set of dependent variables one's deep-held desire to put singularly-perturbed systems in fast-slow form could easily lead to an arduous, fruitless search. 
We are therefore forcibly lead to study a broader class of singularly-perturbed systems --- those that admit \emph{fast-slow splits}.

\begin{definition}[fast-slow split]\label{fs_split_def}
A singularly-perturbed ODE of form
\begin{align}
\epsilon\,\dot{z} = U_\epsilon(z)\label{fs_split_ode}
\end{align}
with $U_\epsilon(z)$ a smooth function of $\epsilon$ and $z$ admits a \emph{fast-slow split} if there is an $\epsilon$-dependent invertible change of dependent variables $z\mapsto (\overline{x}_\epsilon,\overline{y}_\epsilon)$ that (a) depends smoothly on $\epsilon$ along with its inverse,  and (b) transforms the ODE \eqref{fs_split_ode} into a fast-slow system.
\end{definition}

\noindent Apparently the Abraham-Lorentz equations in the weak-drag regime both comprise a fast-slow system and admit a fast-slow split. (The coordinate transformation is just the identity map.) On the other hand the same equations in the zero-drag regime do not comprise a fast-slow system but do admit a non-trivial fast-slow split. 

The crucial difficulty associated with the notion of a fast-slow split is assessing whether a given singularly-perturbed system admits a fast-slow split, and then finding the associated coordinate change when appropriate. The following subsection will provide practically useful tools for addressing this difficulty.

\subsection{When does a system admit a fast-slow split?\label{when_split}}
Having established that some singularly-perturbed ODEs can be fast-slow systems in disguise we now turn to the question of how to detect such systems ``in the wild." While this is not an easy question to answer in general, a satisfying partial resolution is given by the following coordinate-independent test. \citep{Noethen_2011}

\begin{proposition}[The transversality test]\label{FS_split_test}
Consider the ODE 
\begin{align}
\epsilon \,\dot{z} = U_\epsilon(z),\label{thm_ode}
\end{align}
with $U_\epsilon(z)$ a smooth function of $\epsilon$ and $z$. If Eq.\,\eqref{thm_ode} admits a fast-slow split, then the the following transversality condition must be satisfied:
\begin{itemize}
\item[(T)] The image $\mathrm{im}\, DU_0(z)\subset Z$ and the kernel $\mathrm{ker}\,DU_0(z)\subset Z$ are complementary subspaces for all $z$ where $U_0(z) = 0$.
\end{itemize}
\end{proposition}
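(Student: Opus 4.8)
The plan is to reduce the assertion to the special case of a genuine fast-slow system by exploiting the conjugation induced by the splitting coordinate change, and then to verify the transversality condition (T) directly from the invertibility hypothesis \eqref{FS_condition}. Concretely, suppose $z\mapsto(\bar{x}_\epsilon,\bar{y}_\epsilon)=\Phi_\epsilon(z)$ is a fast-slow split, with smooth inverse $\Psi_\epsilon=\Phi_\epsilon^{-1}$. Transforming $\epsilon\,\dot z=U_\epsilon(z)$ by $\Phi_\epsilon$ and applying the chain rule produces $\epsilon\,\dot{\bar z}=\bar U_\epsilon(\bar z)$ with $\bar U_\epsilon(\bar z)=D\Phi_\epsilon(\Psi_\epsilon(\bar z))[\,U_\epsilon(\Psi_\epsilon(\bar z))\,]$. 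Since by Definition \ref{fs_split_def} the transformed system is fast-slow, writing $\bar z=(\bar x,\bar y)$ forces $\bar U_\epsilon=(\epsilon\, g_\epsilon,\,f_\epsilon)$; in particular $\bar U_0=(0,\,f_0)$, and $D_{\bar y}f_0(\bar x,\bar y)$ is invertible whenever $f_0(\bar x,\bar y)=0$ by \eqref{FS_condition}.

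Next I would prove the key linear-algebra fact that at a zero of $U_0$ the linearization $DU_0$ is conjugate to the linearization of $\bar U_0$. Fix $z_0$ with $U_0(z_0)=0$ and put $\bar z_0=\Phi_0(z_0)$. Differentiating $\bar U_0(\bar z)=D\Phi_0(\Psi_0(\bar z))[\,U_0(\Psi_0(\bar z))\,]$ at $\bar z_0$ via the product rule for the bilinear evaluation map $(A,v)\mapsto A[v]$ yields two terms; the term in which the derivative falls on $D\Phi_0(\Psi_0(\cdot))$ is evaluated on $U_0(z_0)=0$ and hence vanishes. What remains, using $D\Psi_0(\bar z_0)=D\Phi_0(z_0)^{-1}$, is
\begin{equation}
D\bar U_0(\bar z_0)=D\Phi_0(z_0)\circ DU_0(z_0)\circ D\Phi_0(z_0)^{-1}.
\end{equation}
Because conjugation by the isomorphism $L=D\Phi_0(z_0)$ carries $\ker DU_0(z_0)$ onto $\ker D\bar U_0(\bar z_0)$ and $\mathrm{im}\,DU_0(z_0)$ onto $\mathrm{im}\,D\bar U_0(\bar z_0)$, the pair $(\ker,\mathrm{im})$ of $DU_0(z_0)$ is complementary in $Z$ if and only if the corresponding pair of $D\bar U_0(\bar z_0)$ is complementary in $\bar X\times\bar Y$. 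It therefore suffices to establish (T) for the fast-slow system with $\bar U_0=(0,f_0)$.

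Finally I would check transversality for $\bar U_0$ by direct computation. At $\bar z_0$, the condition $\bar U_0(\bar z_0)=0$ forces $f_0(\bar z_0)=0$, so $D_{\bar y}f_0(\bar z_0)$ is invertible, and $D\bar U_0(\bar z_0)[\delta\bar x,\delta\bar y]=(0,\,D_{\bar x}f_0(\bar z_0)[\delta\bar x]+D_{\bar y}f_0(\bar z_0)[\delta\bar y])$. Invertibility of $D_{\bar y}f_0(\bar z_0)$ gives $\mathrm{im}\,D\bar U_0(\bar z_0)=\{0\}\times\bar Y$ and $\ker D\bar U_0(\bar z_0)=\{(\delta\bar x,\,M\delta\bar x):\delta\bar x\in\bar X\}$ with $M=-D_{\bar y}f_0(\bar z_0)^{-1}D_{\bar x}f_0(\bar z_0)$, a graph over $\bar X$. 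These two subspaces intersect only at the origin and span $\bar X\times\bar Y$, since $(\alpha,\beta)=(\alpha,M\alpha)+(0,\beta-M\alpha)$ splits any vector into a kernel part and an image part. This yields (T).

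The main obstacle is the second step: one must resist the false temptation to assert that the linearization transforms covariantly under an arbitrary coordinate change. That is true only at an equilibrium, and it is precisely the vanishing of $U_0$ at $z_0$ that kills the spurious term and makes the conjugation formula hold — which is exactly why (T) is asserted only at zeros of $U_0$. A secondary technical point, pertinent when $X$ or $Y$ is infinite-dimensional, is that ``invertible'' in \eqref{FS_condition} must be understood in the Banach sense (bounded with bounded inverse), so that the graph description of $\ker D\bar U_0(\bar z_0)$ and the splitting $\bar X\times\bar Y=\ker\oplus\mathrm{im}$ are direct sums of closed subspaces; this is automatic here because $\Phi_\epsilon$ and $\Psi_\epsilon$ are smooth, hence $D\Phi_0(z_0)$ is a topological isomorphism.
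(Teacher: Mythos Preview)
Your proposal is correct and follows essentially the same strategy as the paper: both arguments hinge on the observation that at a zero of $U_0$ the linearizations in the two coordinate systems are related by similarity (the second-derivative term drops out precisely because $U_0$ vanishes there), reducing the problem to a direct block-matrix computation in the fast-slow coordinates where the invertibility of $D_{\bar y}f_0$ does the work. Your version is slightly more complete than the paper's, in that you explicitly verify the spanning half of complementarity (the paper only checks trivial intersection, tacitly relying on rank--nullity), and you flag the equilibrium hypothesis as the reason the conjugation is valid; but these are refinements of the same proof, not a different route.
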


\begin{proof}

Since $\epsilon\,\dot{z} = U_\epsilon(z)$ admits a fast-slow split there are coordinates $(x,y)$ on $z$-space in which the ODE takes the form $\epsilon\,\dot{y} = f_\epsilon(x,y)$, $\dot{x} = g_\epsilon(x,y)$ with $f_\epsilon$ satisfying condition \eqref{FS_condition}. Let 
\begin{align}
\overline{U}_\epsilon(x,y) = \begin{pmatrix}
\epsilon\,g_\epsilon(x,y)\\
f_\epsilon(x,y)
\end{pmatrix},
\end{align}
and write the derivative of $\overline{U}_\epsilon(x,y)$ at $\epsilon = 0$ in the block-matrix form
\begin{align}
D\overline{U}_0(x,y)= \begin{pmatrix}
0 & 0\\
D_xf_0(x,y) & D_yf_0(x,y)
\end{pmatrix}.
\end{align}

First we will establish that the intersection of $\text{im }D\overline{U}_0(x,y)$ with $\text{ker }\overline{U}_0(x,y)$ is trivial whenever $\overline{U}_0(x,y) = 0$.
If $\overline{U}_0(x,y) = 0$ and $(\delta x,\delta y)$ is simultaneously in the image and the kernel of $D\overline{U}_0(x,y)$, then
\begin{align}
\begin{pmatrix}
0 & 0\\
D_xf_0(x,y) & D_yf_0(x,y)
\end{pmatrix}\begin{pmatrix}
\delta x\\
\delta y
\end{pmatrix} &= 0\\
\begin{pmatrix}
0 & 0\\
D_xf_0(x,y) & D_yf_0(x,y)
\end{pmatrix}\begin{pmatrix}
\delta x^\prime\\
\delta y^\prime
\end{pmatrix} &= \begin{pmatrix}
\delta x\\
\delta y
\end{pmatrix},\label{image_condition}
\end{align}
for some $(\delta x^\prime,\delta y^\prime)$. In particular, because $\overline{U}_0(x,y) = 0$ if and only if $f_0(x,y) = 0$, $D_yf_0(x,y)$ is invertible and
\begin{align}
\delta y = - [D_yf_0(x,y)]^{-1}D_xf_0(x,y)[\delta x].\label{graph_thm_cond}
\end{align}
But Eq.\,\eqref{image_condition} implies that $\delta x=0$. Therefore the formula \eqref{graph_thm_cond} implies that both $\delta x$ and $\delta y$ must be zero. This means the intersection of $\text{im }D\overline{U}_0(x,y)$ with $\text{ker }\overline{U}_0(x,y)$ is trivial.

Next we will show that if $\Phi_\epsilon:\overline{z}\mapsto z$ is any smooth $\epsilon$-dependent change of coordinates and property (T) is satisfied in the coordinates $\overline{z}$ then (T) must also be satisfied in the original coordinates $z$. The usual change of variables formula for ODEs implies that $\epsilon\,\dot{z} = U_\epsilon(z)$ becomes $\epsilon\, \dot{\overline{z}} = \overline{U}_\epsilon(\overline{z})$ in $\overline{z}$-coordinates, where the vector fields $U_\epsilon$ and $\overline{U}_\epsilon$ are related by
\begin{align}
U_\epsilon(z) = D\Phi_\epsilon(\overline{z}_\epsilon^*)[\overline{U}_\epsilon(\overline{z}_\epsilon^*)],
\end{align}
with $\overline{z}_\epsilon^* = \overline{z}_\epsilon^*(z) = \Phi_\epsilon^{-1}(z)$. Consequently if $z$ is in the zero level set of $U_0$, then $z_0^* = \Phi_0^{-1}(z)$ must be in the zero level set of $\overline{U}_0$. It follows that the derivative of $U_0$ evaluated at $z$ in the zero level of $U_0$ must be given by
\begin{align}
DU_0(z)[\delta z] = D\Phi_0(\overline{z}_0^*)[D\overline{U}_0(z_0^*)[Dz_0^*(z)[\delta z]]],
\end{align}
where the usual terms involving second derivatives of $\Phi_0$ vanish because $\overline{U}_0(z_0^*) = 0$.
Because $z_0^*(z) = \Phi_0^{-1}(z)$, the derivative matrix $Dz_0^*(z)$ is the inverse of $M = D\Phi_0(\overline{z}_0^*)$. This means the matrix $DU_0(z)$ is related to the matrix $D\overline{U}_0(z_0^*)$ by the similarity transformation
\begin{align}
DU_0(z) = M\, D\overline{U}_0(z_0^*)\,M^{-1}.
\end{align}
This establishes property (T) in the $z$-coordinates for the following reason. Suppose $\delta z$ is simultaneously in the image and the kernel of $DU_0(z)$. (Remember that we are currently assuming $U_0(z)=0$.) Because $0 = DU_0(z)[\delta z] = M D\overline{U}_0(z_0^*)M^{-1}\delta z$ and $M$ is invertible the vector $\delta\overline{z} = M^{-1}\delta z$ must be in the kernel of $D\overline{U}_0(z_0^*)$. Because $\delta z = DU_0(z)[\delta z^\prime] = MD\overline{U}_0(z_0^*)M^{-1}\delta z^\prime$ for some $\delta z^\prime$, $\delta \overline{z}$ must also be of the form $\delta\overline{z} = D\overline{U}_0(z^*_0)M^{-1}\delta z^\prime$. In other words, $\delta \overline{z} = M^{-1}\delta z$ must simultaneously be in the image and kernel of $D\overline{U}_0(z_0^*)$. Therefore if $\delta z$ were non-zero there would be a non-zero $\delta \overline{z}$ in the intersection of the image and kernel of $D\overline{U}_0(z_0^*)$, in violation of property (T) in $\overline{z}$-coordinates. 

Finally we remark that the argument in the previous paragraph applied to the coordinate transformation $\Phi_\epsilon: (x,y)\mapsto z$ implies the desired result.
\end{proof}

Heuristically, the transversality test justifies searching for a hidden fast-slow split when appropriate, and prevents needless searching when inappropriate. What the test \emph{does not} do is identify the coordinate change comprising a fast-slow split. In fact the transversality test is merely a \emph{necessary} condition for a system to admit a fast-slow split. As discussed in \citep{Noethen_2011}, formulation of simple necessary and sufficient conditions for existence of a fast-slow split is a subtle matter.

Readers interested in technical statements of such necessary and sufficient conditions are encouraged to consult Section 2.1 of \citep{Noethen_2011}. There is, however, a useful and less technical rule of thumb for how to go about searching for a fast-slow split when a system passes the transversality test. Consider the ODE \eqref{thm_ode} on the fast timescale $\tau=t/\epsilon$, i.e. $dz/d\tau = U_\epsilon(z)$. Assuming \eqref{thm_ode} passes the transversality test the key to identifying a fast-slow split is to find sufficiently many independent conservation laws for the limiting fast-time dynamical system
\begin{align}
\frac{dz}{d\tau} = U_0(z).\label{limiting_fast_time}
\end{align}

It is easy to understand why finding conservation laws for \eqref{limiting_fast_time} should be a necessary condition for existence of a fast-slow split. If there is a hidden fast-slow split given by a coordinate transformation $\Phi_\epsilon: z\mapsto (x,y)$ then the limiting fast-time dynamics in $(x,y)$-coordinates must be given by
\begin{align}
\frac{dy}{d\tau} =& f_0(x,y)\\
\frac{dx}{d\tau} = &0.
\end{align}
Apparently each component of the slow variable $x$ is a conserved quantity in these coordinates. This implies that the components of $x$, when expressed in $z$-coordinates, must be conserved quantities for \eqref{limiting_fast_time}. We say that the slow variable $x$ must be a \emph{limiting conserved quantity}. In other words, if a fast-slow split exists then there must be as many conserved quantities for \eqref{limiting_fast_time} as there are components of the slow variable $x$ in the fast-slow split.

A practical procedure for detecting and finding fast-slow splits may therefore be summarized as follows.
\begin{enumerate}
\item Write the system under consideration in the form \eqref{thm_ode} and then apply the transversality test. (c.f. Proposition \ref{FS_split_test}.)
\item If the system fails the transversality test because condition (T) is not satisfied then stop. No fast-slow split can be found.
\item If the system passes the transversality test look for independent conserved quantities for the limiting fast-time dynamics \eqref{limiting_fast_time}. 
\item If the number of independent limiting conserved quantities is as large as the kernel of $DU_0(z)$ along the level set $U_0(z)=0$ then a fast-slow split likely exists with the slow variable given by the limiting conserved quantities.
\end{enumerate}

\begin{example}\label{FS_test_example_pass}
Let us apply this procedure to Abraham-Lorentz dynamics in the zero-drag regime. First we write Eqs.\,\eqref{zero_drag_one}-\eqref{zero_drag_two} in the form \eqref{thm_ode} by setting 
\[
z = \begin{pmatrix}
\bm{x}\\
\bm{v}
\end{pmatrix}
\]
and
\begin{align}
U_\epsilon(z) = \begin{pmatrix}
\epsilon\,\bm{v}\\
\zeta\,\bm{v}\times\bm{B}(\bm{x})
\end{pmatrix}.
\end{align}

Next we apply the transversality test as follows. First we note that $U_0(z)=0$ if and only if $\bm{v} = v_\parallel\bm{b}(\bm{x})$ for some $v_\parallel\in\mathbb{R}$. Therefore we may explicitly compute the derivative $DU_0(z)$ along the zero level of $U_0$ as
\begin{align}
DU_0(z)= \begin{pmatrix}
0 & 0\\
\zeta\,v_\parallel\bm{b}_\times\cdot\nabla\bm{B}^T  & - \zeta\,|\bm{B}|\bm{b}_\times
\end{pmatrix},
\end{align}
where the dyad $\bm{b}_\times$ is defined according to
\begin{align}
\bm{b}_\times\cdot\bm{w} & = \bm{b}\times\bm{w}.
\end{align}
It follows that the image of $DU_0(z)$ comprises all vectors of the form  
\begin{align}
\delta z_{\text{im}} = \begin{pmatrix}
0\\
\bm{w}_\perp
\end{pmatrix},
\end{align}
where $\bm{w}_\perp$ is any $3$-component vector perpendicular to $\bm{b}(\bm{x})$.
An image vector $\delta z_{\text{im}}$ will also be in the kernel of $DU_0(z)$ if and only if
\begin{align}
0=-\zeta |\bm{B}|\bm{b}_\times\cdot\bm{w}_\perp,
\end{align}
which is equivalent to $\bm{w}_\perp = 0$.
Thus the intersection of the image and kernel of $DU_0(z)$ contains only the zero vector. It follows that Abraham-Lorentz dynamics in the zero-drag regime pass the transversality test.

Because the transversality test returned a positive result we now look for conserved quantities of the limiting fast-time dynamics \eqref{limiting_fast_time}, which in this case are given explicitly by
\begin{align}
\frac{d\bm{v}}{d\tau} &= \zeta\,\bm{v}\times\bm{B}(\bm{x})\label{limit_v_zero_drag}\\
\frac{d\bm{x}}{d\tau}& = 0.\label{limit_x_zero_drag}
\end{align}
By Eq.\,\eqref{limit_x_zero_drag} the components of the position vector $\bm{x}$ comprise three independent limiting conserved quantities. These conserved quantities are insufficient to construct a fast-slow split, however, because the kernel of $DU_0(z)$ has dimension greater than three. Indeed, it is straightforward to verify that the general vector $\delta z_{\text{ker}} = (\delta\bm{x}_{\text{ker}},\delta\bm{v}_{\text{ker}})$ in the kernel of $DU_0(z)$ takes the form
\begin{align}
\delta \bm{x}_{\text{ker}} &= \delta\bm{x}\\
\delta\bm{v}_{\text{ker}} &= \delta v_\parallel \bm{b} -|\bm{B}|^{-1}\,v_\parallel\,\bm{b}_{\times}\cdot\bm{b}_\times\cdot (\delta\bm{x}\cdot\nabla\bm{B}),
\end{align}
which exhibits four independent free parameters $\delta\bm{x}$ and $\delta v_\parallel$. We therefore need one additional conserved quantity for Eqs.\,\eqref{limit_v_zero_drag}-\eqref{limit_x_zero_drag} before a fast-slow split can be found. The ``missing'' limiting conserved quantity emerges by computing the scalar product of Eq.\,\eqref{limit_v_zero_drag} with $\bm{b}(\bm{x})$ as
\begin{align}
\frac{d}{d\tau}(\bm{b}\cdot \bm{v}) = \bm{b}(\bm{x})\cdot \frac{d\bm{v}}{d\tau} = &\zeta\,\bm{b}(\bm{x})\cdot (\bm{v}\times\bm{B}(\bm{x})) = 0.
\end{align}
Thus $v_\parallel \equiv \bm{v}\cdot \bm{b}(\bm{x})$ is a fourth independent limiting conserved quantity.

As a final step we complete the partially-defined coordinate system $(\bm{x},v_\parallel)$ comprising the limiting conserved quantities by identifying a pair of additional dependent variables that are independent of $\bm{x}$ and $v_\parallel$. The perpendicular components of $\bm{v}$ relative to a moving frame $(\bm{e}_1,\bm{e}_2,\bm{b})$, i.e. $v_1 = \bm{v}\cdot \bm{e}_1$ and $v_2 = \bm{v}\cdot\bm{e}_2$, comprise such a pair. We now reasonably expect that the zero-drag Abraham-Lorentz equations will comprise a fast-slow system when expressed in terms of the dependent variables $(\bm{x},v_\parallel,v_1,v_2)$ with the slow variable $\bm{x} = (\bm{x},v_\parallel)$ and the fast variable $y = (v_1,v_2)$. This was verified explicitly in Example \ref{FS_example_hard}. We therefore conclude that the general procedure described in this subsection is sufficient to recover the fast-slow split for the zero-drag Abraham-Lorentz equations.
\end{example}

\begin{example}\label{FS_test_example_fail}
Let us also apply our procedure to Abraham-Lorentz dynamics in the maximal drag regime. ($\gamma = 0$, c.f. Section \ref{maximal_drag_sec}.) As a preparatory step we re-scale the acceleration variable $\bm{a}$ according to $\bm{a} = \epsilon \bm{\alpha}$. (Note that this scaling transformation is not invertible when $\epsilon=0$, and therefore cannot possibly be subsumed into the coordinate transformation $\Phi_\epsilon$ defining a fast-slow split.) In terms of $\bm{\alpha}$ the maximal drag equations may be written
\begin{align}
\epsilon\,\frac{2}{3} \dot{\bm{\alpha}} =& \epsilon\,\bm{\alpha} - \zeta\,\bm{v}\times\bm{B}\\
\dot{\bm{v}} =& \bm{\alpha}\\
\dot{\bm{x}}=&\bm{v}. 
\end{align}
As in Example \ref{FS_test_example_pass}, we begin by writing Eqs.\,\eqref{maximally_damped_one}-\eqref{maximally_damped_three} in the form \eqref{thm_ode} by setting 
\begin{align}
z = \begin{pmatrix}
\bm{x}\\
\bm{v}\\
\bm{\alpha}
\end{pmatrix}
\end{align}
and 
\begin{align}
U_\epsilon(z) = \begin{pmatrix}
 \epsilon\bm{v}\\
\epsilon\bm{\alpha}\\
- \frac{3}{2}\,\zeta\,\bm{v}\times\bm{B}(\bm{x}) + \frac{3}{2}\epsilon\bm{\alpha}
\end{pmatrix}.
\end{align}

In order to apply the transversality test we note that $U_0(z)=0$ if and only if
\begin{align}
\bm{v} = & v_\parallel \bm{b}(\bm{x}),
\end{align}
for some $v_\parallel\in\mathbb{R}$.
We may therefore compute the derivative $DU_0(z)$ along the zero level of $U_0$ explicitly as
\begin{align}
DU_0(z) = \begin{pmatrix}
0 & 0& 0\\
0 & 0& 0\\
-\frac{3}{2}\,\zeta\,v_\parallel\,\bm{b}_\times\cdot\nabla\bm{B}^T & \frac{3}{2}\,\zeta\,|\bm{B}|\bm{b}_\times & 0
\end{pmatrix}.
\end{align}
The image of this matrix comprises the collection of vectors of the form
\begin{align}
\delta z_{\text{im}} = \begin{pmatrix}
0\\
0\\
\bm{w}_\perp
\end{pmatrix}
\end{align}
where $\bm{w}_\perp$ is any $3$-component vector perpendicular to $\bm{b}(\bm{x})$. Such image vectors $\delta z_{\text{im}}$ are also in the kernel of $DU_0(z)$ because the right-most column of $DU_0(z)$ contains only zeros. It follows that Abraham-Lorentz dynamics in the maximal drag regime \emph{fail} the transversality test. Therefore this system \emph{cannot} admit a fast-slow split. 
\end{example}

The previous example illustrates that one's ability to apply fast-slow systems theory to a given system exhibiting a singular perturbation structure depends sensitively on one's chosen asymptotic scaling. In one scaling regime the transversality test may pass (e.g. the weak-drag Abrahm-Lorentz equations) while in another scaling regime for the same system (e.g. the maximal-drag Abraham-Lorentz equations) the transversality test may fail. As is usually the case, careful physical, and sometimes numerical considerations are necessary to determine which scaling regime matters most in any given scenario.

In any case the fact that some singularly-perturbed systems do not admit fast-slow splits motivates the introduction of yet another class of singularly-perturbed system --- those systems that fail the transversality test.

\begin{definition}[Degenerate fast-slow system]
A degenerate fast-slow system is an ODE of the form
\begin{align}
\epsilon \dot{z} = U_\epsilon(z),
\end{align}
where $U_\epsilon$ depends smoothly on $\epsilon$ and
\begin{align}
\text{im }DU_0(z)\,\cap\,\text{ker }DU_0(z)\neq \emptyset
\end{align}
for some $z$ where $U_0(z) = 0$.
\end{definition}

\noindent We will postpone further discussion of the largely unexplored topic of degenerate fast-slow systems to Section \ref{QN_application}.

\section{Basic Slow Manifold Theory\label{basic_SM_theory}}
We will now take up the topic of slow manifold reduction in earnest. Going forward we will assume the reader is familiar with the background material presented in Sections \ref{AL_section}-\ref{FS_systems_sec}, which will be drawn upon frequently without hesitation or elaboration. We also advise readers to compare and contrast our discussion with those of \citep{MacKay_2004} and \citep{Gorban_2004}, which comprise reviews of slow manifold theory from a mathematical perspective and from a kinetic theory perspective, respectively. 

Because slow manifolds arise frequently in plasma physics that bear no relation to kinetic-to-fluid reductions the emphasis of our presentation will be broader than that in \citep{Gorban_2004}. Because most plasma physicists lack good training in differential topology our presentation will also be more pedestrian than that contained in \citep{MacKay_2004}. (We will, however, try to keep the geometry of slow manifold reduction in focus.) Moreover, because we believe computational plasma physics will benefit substantially from embracing ideas from slow manifold reduction, our discussion will highlight a perspective on slow manifolds that is especially compatible with numerical investigations.

Slow manifold reduction theory amounts to a rigorous, systematic, and geometric approach to the common notion of closure. Because of its associated level of rigor slow manifold reduction distinguishes itself from closure theories based on uncontrolled approximations like the direct interaction approximation due to Kraichnan \citep{Kraichnan_1959} or the far-reaching generalization thereof due to Martin, Siggia, and Rose.\,\citep{Martin_1973} As mentioned in the Introduction, the theory aligns instead more closely with the Chapman-Enskog formalism, which is fundamentally based on an asymptotic separation of timescales controlled by some small parameter $\epsilon$. While slow manifolds are not tied up directly with the presence of multiple spatial scales they are by no means incompatible with spacetime scale separation. We remark that a spacetime covariant approach to slow manifold reduction seems to be missing from the literature at the present time, even though there appears to be no essential impediment to developing such a theory. 

\subsection{Origins of slow manifolds\label{origins_sec}}

The simplest and most direct way to describe the theory of slow manifolds is to work within the context of fast-slow systems developed in Section \ref{FS_systems_sec}. Therefore suppose that we have been given a plasma model that is formulated as a fast-slow system as in Definition \ref{FS_def}. Such a system generally exhibits a pair of disparate timescales, a short $O(\epsilon)$ scale owing to the fact that generically $\dot{y} = O(1/\epsilon)$, and a longer $O(1)$ scale that characterizes trajectory segments passing within the region where $f_\epsilon(x,y) = O(\epsilon)$. Due to this intermingling of fast and slow dynamics a natural question to ask is whether there are special solutions where the fast timescale is completely inactive, or else only excited with a very small amplitude. We will refer to such solutions as slow solutions.

A suggestive mechanical analogy to keep in mind throughout this discussion is a pendulum whose massive end (with mass $m_1$) is attached to a second mass ($m_2$) by a light, stiff spring. While generic motions of such a system involve both swinging of the pendulum and relatively rapid oscillations of the stiff spring, physical experience suggests that there should be system motions during which the stiff spring is not excited, so that the overall dynamics resembles that of an ordinary pendulum with mass $M = m_1+ m_2$. We leave it as an exercise for the reader to verify that this two-mass system may be formulated as a fast-slow system, where the small parameter $\epsilon$ is proportional to the ratio of the pendulum frequency to the spring frequency.

Perhaps the most remarkable feature of fast-slow systems is that slow solutions of such systems organize themselves in a geometrically simple manner. Due to the typical complexity of the phase portrait for a dynamical system one might instead expect that such solutions tend to disperse themselves around phase space in a haphazard fashion. But due to the special structural properties of fast-slow systems these solutions in fact lie along special submanifolds in phase space called \emph{slow manifolds} that are readily computable using asymptotic methods. We may bring these slow manifolds to light as follows.

First observe, as we did earlier when motivating the definition of fast-slow systems, that if $(x_\epsilon(t),y_\epsilon(t))$ is a slow solution of a fast-slow system for each $\epsilon$ then the limiting solution $(x_0(t),y_0(t))$ must satisfy Eqs.\,\eqref{limit_FS_one}-\eqref{limit_FS_two}. In particular the limiting fast variable $y_0(t)$ must be \emph{slaved} to the limiting slow variable $x_0(t)$ according to $y_0(t) = y_0^*(x_0(t))$, where $y_0^*(x)$ is implicitly defined by the formula $f_0(x,y_0^*(x)) = 0$. (Recall that the definition of fast-slow systems ensures this equation uniquely determines $y_0^*$, at least locally.) Let us refer to the submanifold $S_0$ given by the graph of $y_0^*$, i.e.
\begin{align}
S_0 = \{(x,y)\mid y = y_0^*(x)\},
\end{align}
as the \emph{limiting slow manifold}.

Apparently each limiting slow trajectory must be contained in the limiting slow manifold. In fact the limiting slow manifold is the union of all the limiting slow trajectories. If we suppose that slow trajectories exist for finite $\epsilon$ this observation leads to the  seemingly-reasonable hypothesis that the qualitative organization of slow trajectories when $\epsilon =0$ persists when $\epsilon$ is small but finite. In particular it suggests that the collection of finite-$\epsilon$ slow trajectories form a small deformation of the limiting slow manifold. This deformed submanifold $S_\epsilon$ must have the form
\begin{align}
S_\epsilon = \{(x,y)\mid y = y_\epsilon^*(x)\},
\end{align}
where $y_\epsilon^*(x)$ is some function that tends to $y_0^*(x)$ as $\epsilon\rightarrow 0$. Moreover, because we expect $S_\epsilon$ to be the union of finite-$\epsilon$ slow trajectories it is reasonable to suspect that $S_\epsilon$ is an invariant manifold for the fast-slow system. (See Section \ref{IM_section} for background on invariant manifolds.) Indeed, if $S_\epsilon$ were not an invariant manifold then there would be slow solutions that start in $S_\epsilon$ and then eventually leave, contradicting the hypothesis that $S_\epsilon$ contains entire slow trajectories. 

Let us now test our hypothesis that there is a deformation $S_\epsilon$ of the limiting slow manifold $S_0$ that (a) contains finite-$\epsilon$ slow trajectories, and (b) is invariant under the flow of the (finite-$\epsilon$) fast-slow system. By Proposition \ref{PDE_char} from Section \ref{IM_section} with $f(x,y) = f_\epsilon(x,y)/\epsilon$ and $g(x,y) = g_\epsilon(x,y)$, invariance of $S_\epsilon$ implies that the slaving function $y_\epsilon^*(x)$ must satisfy the (scaled) invariance equation
\begin{align}
\epsilon\,Dy_\epsilon^*(x)[g_\epsilon(x,y_\epsilon^*(x))] = f_\epsilon(x,y_\epsilon^*(x)).\label{scaled_invariance_eq}
\end{align}
If there is a $y_\epsilon^*$ that solves this equation then the graph of $y_\epsilon^*$ will necessarily be an invariant manifold. Remarkably, solutions contained in such an invariant manifold are \emph{automatically} free of the $O(\epsilon)$ timescale because for such solutions (a) the timescale of the fast variable $y(t) = y_\epsilon^*(x(t))$ is determined by $x(t)$, and (b) $x(t)$ is a solution of the equation $\dot{x} = g_\epsilon(x,y^*_\epsilon(x)) = O(1)$. The task of testing our hypothesis therefore reduces to showing that the invariance equation \eqref{scaled_invariance_eq} admits a solution that is asymptotic to $y_0^*$.

Proving that \eqref{scaled_invariance_eq} admits a solution is generally a highly-nontrivial task because first-order systems of partial differential equations usually cannot be solved by hand, and sometimes do not admit solutions at all. Let us therefore attempt to develop a preliminary understanding of solutions of \eqref{scaled_invariance_eq} using asymptotic expansions. Specifically let us suppose $y_\epsilon^*(x)$ has the asymptotic expansion
\begin{align}
y^*_\epsilon(x) = y_0^*(x) + \epsilon\,y_1^*(x) + \epsilon^2\,y_2^*(x) + \dots,\label{formal_SM_ansatz_proto}
\end{align}
and investigate conditions imposed on the coefficient functions $y_k^*$ by the invariance equation \eqref{scaled_invariance_eq}.

Substituting the ansatz \eqref{formal_SM_ansatz_proto} into \eqref{scaled_invariance_eq} and then collecting $O(1)$ terms leads to
\begin{align}
0 = f_0(x,y_0^*(x)),
\end{align}
which is consistent with our assumption that $S_\epsilon$ is a small deformation of the limiting slow manifold $S_0$. Collecting the $O(\epsilon)$ terms leads to 
\begin{align}
Dy_0^*[g_0(x,y_0^*)] = D_yf_0(x,y_0^*)[y_1^*]+f_1(x,y_0^*),
\end{align}
which, by the definition of fast-slow systems, can be used to solve for $y_1^*$ in terms of quantities already computed at zero'th order. In particular, 
\begin{align}
y_1^* = [D_yf_0(x,y_0^*)]^{-1}\bigg(Dy_0^*[g_0(x,y_0^*)]-f_1(x,y_0^*)\bigg).
\end{align}
Going to yet-higher orders the following pattern emerges. Within the collection of $O(\epsilon^k)$ terms generated by substituting \eqref{formal_SM_ansatz_proto} into \eqref{scaled_invariance_eq} the only term involving the coefficient $y_k^*$ is $D_yf_0(x,y_0^*)[y_k^*]$. All other terms in the collection involve only the coefficients $y_l^*$ with $l<k$. It follows that the coefficients $y_k^*$ may be explicitly computed recursively for all $0 \leq k <\infty$.

The conclusion that we draw from this asymptotic analysis is that if a solution $y_\epsilon^*$ of the invariance equation \eqref{scaled_invariance_eq} exists and is smooth in $\epsilon$ then that solution has a unique asymptotic expansion in terms of the $y_k^*$ that may be computed recursively for all $k$. Moreover, even if a true solution does not exist, the asymptotic expansion of such a solution \emph{always exists}. We summarize this curious result by introducing the notion of a formal slow manifold,

\begin{definition}[formal slow manifold]\label{fSM_def}
Given a fast-slow system $\epsilon\,\dot{y} = f_\epsilon(x,y)$, $\dot{x} = g_\epsilon(x,y)$, a \emph{formal slow manifold} is a formal power series solution $y_\epsilon^*$ of the invariance equation
\begin{align}
\epsilon\,Dy_\epsilon^*(x)[g_\epsilon(x,y_\epsilon^*(x))] = f_\epsilon(x,y_\epsilon^*(x)).\label{scaled_invariance_eq_def}
\end{align}
\end{definition}
\noindent We then state the following
\begin{theorem}[Existence and uniqueness of formal slow manifolds]\label{SM_eau}
Associated with each fast-slow system is a unique formal slow manifold $y_\epsilon^* = y_0^* + \epsilon\,y_1^* + \epsilon^2\,y_2^* + \dots$. The coefficients $y_k^*$ may be explicitly computed recursively. In particular we have the following low-order formulas:
\begin{align}
0 = & f_0(x,y_0^*(x))\label{limit_sm_eqn}\\
y_1^* =& [D_yf_0(x,y_0^*)]^{-1}\bigg(Dy_0^*[g_0(x,y_0^*)]-f_1(x,y_0^*)\bigg)\label{first_order_sm_eqn}
\end{align}
\end{theorem}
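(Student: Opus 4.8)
The plan is to turn the informal ``a pattern emerges'' remark in the paragraphs preceding the statement into a clean strong induction. First I would substitute the ansatz $y_\epsilon^*(x) = \sum_{k\ge 0}\epsilon^k\,y_k^*(x)$ into the invariance equation \eqref{scaled_invariance_eq_def} and expand both sides as formal power series in $\epsilon$: on the right one Taylor-expands $f_\epsilon(x,\cdot)$ about $y_0^*(x)$ while simultaneously using $f_\epsilon = f_0 + \epsilon f_1 + \dots$, and on the left the explicit prefactor $\epsilon$ shifts every contribution up by one order in $\epsilon$ and applies the chain rule to $Dy_\epsilon^*$. Matching coefficients of $\epsilon^0$ gives $0 = f_0(x,y_0^*(x))$, which is \eqref{limit_sm_eqn}; under the standing assumption (recorded just after condition \eqref{FS_condition}) that this equation has a unique global solution $y = y_0^*(x)$, the zeroth coefficient is determined.

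The core of the argument is the inductive step. I would show by strong induction on $k\ge 1$ that, once $y_0^*,\dots,y_{k-1}^*$ are known, the coefficient of $\epsilon^k$ in the expanded invariance equation takes the form
\[
D_yf_0(x,y_0^*)[y_k^*] = \Phi_k\big(x;\,y_0^*,\dots,y_{k-1}^*\big),
\]
where $\Phi_k$ is an explicit expression assembled from $f_0,\dots,f_k$, $g_0,\dots,g_{k-1}$, the lower coefficients $y_0^*,\dots,y_{k-1}^*$, and their $x$-derivatives. The one place requiring genuine care is the bookkeeping claim that $y_k^*$ enters the order-$\epsilon^k$ equation \emph{only} through the single linear term $D_yf_0(x,y_0^*)[y_k^*]$: on the right-hand side, any higher Taylor term $D_y^m f_0(x,y_0^*)[\dots]$ with $m\ge 2$ that involves $y_k^*$ necessarily carries an additional factor $y_l^*$ with $l\ge 1$ and hence sits at order $\ge\epsilon^{k+1}$, and any $f_j$ with $j\ge 1$ multiplying a $y_k^*$-dependent term is pushed to order $\ge\epsilon^{k+1}$ as well; on the left-hand side the prefactor $\epsilon$ forces the highest-index coefficient appearing at order $\epsilon^k$ to be $y_{k-1}^*$ (via $\epsilon\,Dy_{k-1}^*[g_0]$). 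Since condition \eqref{FS_condition} makes $D_yf_0(x,y_0^*(x))$ invertible along the zero set of $f_0$, the displayed identity solves uniquely for $y_k^*$, which closes the induction and yields existence, uniqueness, and recursive computability all at once.

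To get the explicit first-order formula \eqref{first_order_sm_eqn} I would simply run the $k=1$ case: the $\epsilon^1$ coefficient on the left is $Dy_0^*(x)[g_0(x,y_0^*)]$, on the right it is $D_yf_0(x,y_0^*)[y_1^*] + f_1(x,y_0^*)$, and applying $[D_yf_0(x,y_0^*)]^{-1}$ gives \eqref{first_order_sm_eqn}. I expect the only real obstacle to be formalizing the ``$y_k^*$ appears only linearly through $D_yf_0$'' claim cleanly; a tidy way to do this is to assign to each monomial in the $y_l^*$'s and their derivatives a weight equal to $\sum_l l$ (counted with multiplicity) plus the $\epsilon$-power supplied either by an $f_j$ factor or by the left-hand prefactor, and to observe that producing total weight exactly $k$ while including a factor of $y_k^*$ leaves no room for any other factor of positive weight --- so every companion factor involves $y_0^*$ alone, i.e.\ one is looking precisely at the linear Taylor term $D_yf_0(x,y_0^*)[y_k^*]$. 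Everything else is routine formal-power-series manipulation.
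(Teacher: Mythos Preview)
Your proposal is correct and follows essentially the same approach as the paper: the paper's argument (given in the paragraphs immediately preceding the theorem statement rather than in a formal proof environment) substitutes the power-series ansatz into the invariance equation, identifies the $O(1)$ and $O(\epsilon)$ equations explicitly, and then asserts the same inductive pattern you prove --- that at order $\epsilon^k$ the only occurrence of $y_k^*$ is through $D_yf_0(x,y_0^*)[y_k^*]$, whence invertibility of $D_yf_0$ along $f_0=0$ yields the recursion. Your weight-counting device is a clean way to make rigorous what the paper leaves as ``the following pattern emerges,'' but it is a formalization of the same idea rather than a different route.
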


\begin{example}\label{weak_drag_SM_example}
Theorem \ref{SM_eau} may be applied directly to compute the formal slow manifold associated with Abraham-Lorentz dynamics in the weak drag regime. (c.f. Section \ref{weak_drag_sec}.) As explained in Example \ref{FS_example_easy} from Section \ref{what_is_sec}, when this model is expressed in terms of the fast time $\tau$ it becomes a fast-slow system with $x = (\bm{x},\bm{v})$, $y = \bm{a}$, and 
\begin{align}
f_0(x,y) =& \frac{3}{2}\bm{a} - \frac{3}{2}\zeta\,\bm{v}\times\bm{B}(\bm{x})\\
g_0(x,y) = & (0,\bm{a})\label{weak_drag_g0}\\
g_1(x,y) = &(\bm{v},0).\label{weak_drag_g1}
\end{align}
All higher-order coefficients in the power series expansions of $f_\epsilon$ and $g_\epsilon$ are zero. Equation \eqref{limit_sm_eqn} in the statement of Theorem \ref{SM_eau} therefore implies that the limiting slow manifold is given as the graph of $y_0^* = \bm{a}_0^*$, where
\begin{align}
\bm{a}_0^*(\bm{x},\bm{v}) = \zeta\,\bm{v}\times\bm{B}(\bm{x}).\label{weak_drag_y0}
\end{align}
In order to compute the first-order correction to the limiting slow manifold from \eqref{first_order_sm_eqn} we first note that the derivatives $D_yf_0$ and $Dy_0^*$ are given by
\begin{align}
D_yf_0[\delta y ] = & \frac{3}{2}\delta\bm{a}\\
Dy_0^*(x)[\delta x] = & \zeta\,\delta\bm{v}\times\bm{B} + \zeta\,\bm{v}\times(\delta\bm{x}\cdot\nabla\bm{B}).
\end{align}
Therefore we have
\begin{align}
Dy_0^*[g_0(x,y_0^*)]-f_1(x,y_0^*) =& \zeta\,\bm{a}_0^*\times\bm{B}\nonumber\\
=& (\bm{v}\times\bm{B})\times\bm{B},
\end{align}
and $y_1^* = \bm{a}_1^*$ with
\begin{align}
\bm{a}_1^* = \frac{2}{3} (\bm{v}\times\bm{B})\times\bm{B}.\label{weak_drag_y1}
\end{align}
In summary, the formal slow manifold for Abraham-Lorentz dynamics in the weak-drag regime is given by $y_\epsilon^* = \bm{a}_\epsilon^*$, where
\begin{align}
\bm{a}_\epsilon^* = \zeta\,\bm{v}\times\bm{B} +\epsilon \,\frac{2}{3}(\bm{v}\times\bm{B})\times\bm{B}+O(\epsilon^2).
\end{align}
The first term in this series is nothing more than the usual Lorentz force on a charged particle. The second term is the small-velocity form of the so-called Landau-Lifshitz form of the radiation drag force. We will connect this result to the work of Spohn after introducing the notion of slow manifold reduction in the next subsection.
\end{example}

\begin{example}\label{zero_drag_sm_example}
While Theorem \ref{SM_eau} cannot be applied directly to Abraham-Lorentz dynamics in the zero-drag regime, it can be applied after changing dependent variables from $(\bm{x},\bm{v})$ to $(\bm{x},v_\parallel,v_1,v_2)$ as in Example \ref{FS_example_hard}; recall that $x = (\bm{x},v_\parallel)$ and $y = (v_1,v_2)$ comprise a fast-slow split for this system. As for the ingredients required to apply Theorem \ref{SM_eau}, we have $f_\epsilon = (f_\epsilon^{v_1},f_\epsilon^{v_2})$, $g_\epsilon = (g_\epsilon^{\bm{x}},g_\epsilon^{v_\parallel})$, 
\begin{align}
f_0^{v_1}(x,y) =&\hspace*{1.2em}\zeta |\bm{B}| \,v_2 \\
f_0^{v_2}(x,y) = &-\zeta |\bm{B}|\,v_1\\
g_0^{v_\parallel}(x,y) = & \bm{v}\cdot\nabla\bm{b}\cdot\bm{v}\\
g_0^{\bm{x}}(x,y) = & \bm{v}
\end{align}
at $0^{\text{th}}$ order, and
\begin{align}
f_1^{v_1}(x,y) =&\hspace*{1.2em}\bm{v}\cdot\bm{R}\,v_2 - v_\parallel \bm{v}\cdot\nabla\bm{b}\cdot \bm{e}_1 \\
f_1^{v_2}(x,y) = &-\bm{v}\cdot \bm{R}\,v_1 - v_\parallel \bm{v}\cdot\nabla\bm{b}\cdot\bm{e}_2\\
g_1^{v_\parallel}(x,y) = & 0\\
g_1^{\bm{x}}(x,y) = & 0
\end{align}
at first order. The formal slow manifold associated with this system is therefore of the form $y_\epsilon^* = ((v_{1})^*_0,(v_{2})^*_0)$, i.e. the perpendicular velocity variables are slaved to the particle position $\bm{x}$ and the parallel velocity $v_\parallel$. Because $f_0(x,y) =0$ if and only if $v_1 = v_2 = 0$ the limiting slaving function $y_0^*$ vanishes:
\begin{align}
\begin{pmatrix}
(v_{1})_0^*\\
(v_{2})_0^*
\end{pmatrix} = 
\begin{pmatrix}
0\\
0
\end{pmatrix} .\label{zero_drag_y0}
\end{align}
It follows that the general formula \eqref{first_order_sm_eqn} for the first-order slaving function simplifies to
\begin{align}
y_1^* = &-[D_yf_0(x,y_0^*)]^{-1}\bigg(f_1(x,y_0^*)\bigg).\label{simplifed_y1}
\end{align}
And since the derivative matrix $D_yf_0(x,y_0^*)$ is given by
\begin{align}
D_yf_0(x,y_0^*)[\delta y] =
\begin{pmatrix}
0 & \zeta\,|\bm{B}|\\
-\zeta\,|\bm{B}| & 0
\end{pmatrix} 
\begin{pmatrix}
\delta v_1\\
\delta v_2
\end{pmatrix},
\end{align}
equation \eqref{simplifed_y1} implies
\begin{align}
\begin{pmatrix}
(v_{1})_1^*\\
(v_{2})_1^*
\end{pmatrix}=
\frac{1}{\zeta\,|\bm{B}|} \begin{pmatrix}
0 & -1\\
1 & 0
\end{pmatrix} \begin{pmatrix}
v_\parallel^2\,\bm{b}\cdot\nabla\bm{b}\cdot \bm{e}_1\\
v_\parallel^2\,\bm{b}\cdot\nabla\bm{b}\cdot\bm{e}_2
\end{pmatrix}.\label{curv_drift_slave}
\end{align}
Equation \eqref{curv_drift_slave} says that the first-order correction to the zero-drag formal slow manifold is given by the so-called curvature drift velocity from guiding center theory. This connection to guiding center theory may be emphasized further by introducing the notation 
\begin{align}
\bm{v}_{\perp\epsilon}^* = (v_1)^*_\epsilon\, \bm{e}_1 + (v_2)^*_\epsilon\,\bm{e}_2,\label{perp_shorthand}
\end{align}
in terms of which \eqref{curv_drift_slave} becomes
\begin{align}
\bm{v}_{\perp 1 }^* =- \frac{v_\parallel^2\,(\bm{b}\cdot\nabla\bm{b})\times\bm{b}}{\zeta\,|\bm{B}|}\equiv \bm{v}_c,\label{zero_drag_y1}
\end{align}
which is the familiar expression for the curvature drift velocity. 

We leave it as an exercise for the reader to verify that the second-order slaving function $\bm{v}_{\perp 2}^*$ is given by
\begin{align}
\bm{v}_{\perp 2}^* = -\frac{v_\parallel ( \bm{b}\cdot \nabla \bm{v}_c + \bm{v}_c\cdot \nabla\bm{b} )\times\bm{b}}{\zeta\,|\bm{B}|}.\label{zero_drag_y2}
\end{align}

\end{example}

The argument used above to deduce Theorem \ref{SM_eau} actually says nothing at all about the existence of a solution $y_\epsilon^*$ of the invariance equation. We therefore cannot infer from that argument that finite-$\epsilon$ slow solutions of a fast-slow system comprise an invariant manifold, or even that finite-$\epsilon$ slow solutions exist! In fact the formal power series $y_\epsilon^*$ may easily fail to be convergent \citep{Lorenz_1987,Vanneste_2004,MacKay_2004}. What then, if anything, does a formal slow manifold have to tell us about solutions of a fast-slow system?

While this important question will be taken up in greater detail in Section \ref{interpretation_SM} we may preempt the more complete discussion with the following heuristic picture. Consider the finite truncation 
\begin{align}
y_\epsilon^{*N} = y_0^* + \epsilon\,y_1^* +\dots + \epsilon^N\,y_N^*
\end{align}
of a formal slow manifold, where $N$ is some fixed non-negative integer. While the function $y_\epsilon^{*N}$ does not solve the invariance equation \eqref{scaled_invariance_eq_def} exactly, it does solve it in an approximate sense: when $y_\epsilon^{*N}$ is substituted into Eq.\,\eqref{scaled_invariance_eq_def} the difference between the left- and right-hand sides is $O(\epsilon^{N+1})$. Because $N$ is arbitrary, we say that formal slow manifolds comprise invariant manifolds to all orders in perturbation theory. Referring then to the interpretation of invariant manifolds in terms of tangency given in Proposition \ref{tangency_prop} the angle between the vector field $U_\epsilon = (g_\epsilon,f_\epsilon/\epsilon)$ and the graph of the truncation $y_\epsilon^{*N}$ becomes arbitrarily small as $N$ increases. In this sense finite truncations of formal slow manifolds are \emph{almost} invariant sets. (See for example \citep{Kristiansen_2016} for a precise statement of the sense in which a particular class of slow manifolds comprises almost invariant sets.)

Where true invariant sets contain solutions for all time, almost invariant sets generally only keep solutions nearby over some finite time interval. The term ``sticky set" is an appropriate descriptor for these objects. This ``sticking" time interval is generally $O(1)$ as $\epsilon$ tends to zero for truncated formal slow manifolds that do not exhibit normal instabilities. While such an interval certainly significantly exceeds the short $O(\epsilon)$ timescale, its length often cannot be increased by increasing $N$ or decreasing $\epsilon$. However, under certain special circumstances, some of which will be explained in Section \ref{interpretation_SM}, the sticking time may be much longer, or even infinite. Whatever the case, truncated formal slow manifolds certainly play an important dynamical role on the $O(1)$ timescale. We are therefore motivated to set aside special terminology for truncations of formal slow manifolds.

\begin{definition}[slow manifold]\label{sm_defined_properly}
Given a fast-slow system with formal slow manifold $y_\epsilon^* = y_0^* + \epsilon\,y_1^*+\dots$, a \emph{slow manifold} of order $N$ is a submanifold $\widetilde{S}_\epsilon$ in $(x,y)$-space of the form
\begin{align}
\widetilde{S}_\epsilon = \{(x,y)\mid y = \widetilde{y}_\epsilon(x)\},
\end{align}
where $\widetilde{y}_\epsilon(x)$ depends smoothly on $x$ and $\epsilon$ and
\begin{align}
y_\epsilon^* - \widetilde{y}_\epsilon = O(\epsilon^{N+1}),
\end{align}
where the difference is understood in the sense of formal power series.
\end{definition}

\begin{remark}
This definition should be compared with MacKay's definition of slow manifolds from \citep{MacKay_2004}.
\end{remark}

It is enlightening to compare the qualitative description of slow manifolds just presented with qualitative features of the classical averaging theory due to Kruskal.\,\citep{Kruskal_1962} Recall that this theory forms the mathematical basis for the guiding center description of charged particle motion in strong magnetic fields, amongst other reduced models. Like fast-slow systems theory, Kruskal's theory deals with dynamical systems that exhibit a short $O(\epsilon)$ timescale and a longer $O(1)$ timescale. In contrast to the theory of slow manifolds, however, Kruskal's formalism aims to describe system motions during which both timescales are active simultaneously. In order to make progress in such an endeavor Kruskal assumes that the limiting short-timescale dynamics comprise strictly periodic motion and then proceeds to ``average" over those rapid periodic oscillations. The averaged equations produced by Kruskal's method generally have a validity time that is $O(1)$ as $\epsilon$ tends to zero. This is true in spite of the fact that the guiding center equations of motion are often tacitly assumed to remain valid over arbitrarily-long time intervals. We will argue in Section \ref{interpretation_SM} that understanding the link between many popular reduced models in plasma physics and slow manifold theory exposes an uncomfortable truth: the fallacy of overestimating the validity time of reduced models occurs frequently within plasma physics, even outside of discussions of guiding center theory. 

\subsection{Slow manifold reduction\label{reduction_sec}}
Having established that slow manifolds possess a certain stickiness, we now turn to describing approximately the slow dynamics of solutions that are stuck to a slow manifold. This discussion will lead to the notion of \emph{slow manifold reduction}.

Recall from Section \ref{IM_section} that if a dynamical system $\dot{x} = g(x,y)$, $\dot{y} = f(x,y)$ admits an invariant manifold given by the graph of a function $y^*(x)$ then for any solution $(x(t),y(t))$ contained in the invariant manifold Eq.\,\eqref{reduced_dynamics_gen} must be satisfied by $x(t)$. That is, the dynamics of the variable $x$ close on themselves when initial conditions for $(x,y)$ are chosen to lie on the invariant manifold. In this scenario $y^*(x)$ may be interpreted as a closure function. Because formal slow manifolds may be thought of as invariant manifolds to all orders in perturbation theory it is therefore natural to expect that the equation
\begin{align}
\dot{x} = g_\epsilon(x,y_\epsilon^*(x)),\label{slow_reduced_proto}
\end{align}
describes the dynamics of solutions that are stuck to a slow manifold of any order. This point is driven home by the fact that when $y_\epsilon^*$ is a genuine solution of the invariance equation (meaning $y_\epsilon^*$ is a slow manifold of order $\infty$) Eq.\,\eqref{slow_reduced_proto} is exactly satisfied for solutions initialized on the graph of $y_\epsilon^*$. 

We will return to the issue of describing the precise sense in which Eq.\,\eqref{slow_reduced_proto} governs the dynamics of slow trajectories in Section \ref{interpretation_SM}. Nevertheless, the rough picture of the dynamical role played by \eqref{slow_reduced_proto} is worth describing here. While Eq.\,\eqref{slow_reduced_proto} is not technically an ordinary differential equation as it is written (the right-hand-side is not a well-defined function of $x$ because $y_\epsilon^*$ is defined only as a formal power series), we can extract a genuine ordinary differential equation by expanding the right-hand-side as a formal power series in $\epsilon$ and then truncating at some order $N$. Due to the truncation this ordinary differential equation will not be satisfied exactly by any particular solution of the original fast-slow system. However on any of the $O(1)$ time intervals over which a trajectory sticks to a slow manifold of order $N$ the difference between (a) a solution of Eq.\,\eqref{slow_reduced_proto} truncated at $N^{\text{th}}$ order and (b) the $x$-component of a solution of the original fast-slow system initialized within $O(\epsilon^N)$ of the slow manifold can be made arbitrarily small by increasing $N$.

The preceding interpretation of Eq.\,\eqref{slow_reduced_proto} may be summarized by the assertion that \eqref{slow_reduced_proto} describes the dynamics of solutions initialized on a formal slow manifold to all orders in perturbation theory. Any statement of properties possessed by solutions stuck to a slow manifold that is valid to all orders in $\epsilon$ may therefore be conveniently formulated in terms of the notion of a fast-slow system's \emph{formal slow manifold reduction}.

\begin{definition}[formal slow manifold reduction]\label{formal_SMR_definition}
Given a fast-slow system $\dot{x} = g_\epsilon(x,y)$, $\epsilon\,\dot{y} = f_\epsilon(x,y)$ with formal slow manifold $y_\epsilon^*$ the system's \emph{formal slow manifold reduction} is the formal ODE on $x$-space given by
\begin{align}
\dot{x} = g_\epsilon^*(x),
\end{align}
where
\begin{align}
g_\epsilon^*(x)= g_\epsilon(x,y_\epsilon^*(x)),\label{formal_SMR_def}
\end{align}
and $g_\epsilon^*$ is interpreted as a formal power series in $\epsilon$.
\end{definition}

\noindent In contrast, properties of stuck solutions that pertain to a specific order of accuracy are more conveniently formulated in terms of a fast-slow system's \emph{slow manifold reduction} of order $N$.

\begin{definition}[slow manifold reduction]\label{smr_defined}
Given a fast-slow system $\dot{x} = g_\epsilon(x,y)$, $\epsilon\,\dot{y} = f_\epsilon(x,y)$ with formal slow manifold $y_\epsilon^*$, a \emph{slow manifold reduction} of order $N$ is any ODE $
\dot{x} = \widetilde{g}_\epsilon(x)$ with the property
\begin{align}
\widetilde{g}_\epsilon(x) - g_\epsilon(x,y_\epsilon^*(x)) = O(\epsilon^{N+1}).\label{SMR_def}
\end{align}
\end{definition}

\begin{example}
A first-order slow manifold reduction of Abraham-Lorentz dynamics in the weak-drag regime may be constructed as follows. As discussed in Example \ref{weak_drag_SM_example}, the $0^{\text{th}}$- and $1^{\text{st}}$-order terms in this system's formal slow manifold $y_\epsilon^*(x) = \bm{a}_\epsilon^*(\bm{x},\bm{v})$ are given by Eqs.\,\eqref{weak_drag_y0} and \eqref{weak_drag_y1}. This means that the system's formal slow manifold reduction $g_\epsilon^*(x)=(\dot{\bm{x}}_\epsilon^*,\dot{\bm{v}}_\epsilon^*)$, which is generally given by
\begin{align}
g_\epsilon^*(x) &= g_0(x,y_0^*)\nonumber\\
 &+ \epsilon\,(g_1(x,y_0^*) +D_yg_0(x,y_0^*)[y_1^*]) + O(\epsilon^2),
\end{align}
may be written explicitly
\begin{align}
\dot{\bm{v}}_\epsilon^* &= \zeta\,\bm{v}\times\bm{B} + \epsilon\, \frac{2}{3}(\bm{v}\times\bm{B})\times\bm{B}+O(\epsilon^2)\\
\dot{\bm{x}}_\epsilon^* & = \epsilon\,\bm{v} 
\end{align}
where we have used the definitions of $g_0(x,y)$ and $g_1(x,y)$ from Eqs.\,\eqref{weak_drag_g0} and \eqref{weak_drag_g1}. Now we may merely omit all of the terms in the formal slow manifold reduction that are $O(\epsilon^2)$ or higher to obtain a slow manifold reduction of order $N=1$:
\begin{align}
\dot{\widetilde{\bm{v}}}_\epsilon &= \zeta\,\bm{v}\times\bm{B} + \epsilon\, \frac{2}{3}(\bm{v}\times\bm{B})\times\bm{B}\label{LL_one}\\
\dot{\widetilde{\bm{x}}}_\epsilon & = \epsilon\,\bm{v}.\label{LL_two}
\end{align}
Notice that according to Definition \ref{smr_defined} the form of the $O(\epsilon^2)$ terms in any first-order slow manifold reduction may be modified at will without changing the order of the reduction. While we have here made the seemingly-obvious choice of setting those terms equal to zero, there are often good reasons to make other choices. (c.f. Section \ref{hamiltonian_SM}.)

The first-order slow manifold reduction \eqref{LL_one}-\eqref{LL_two} coincides with the small (but non-zero) velocity limit of Eq.\,(76.3) in \citep{Landau_fields_1975}. (See \citep{Vranic_2016} for a convenient laboratory-frame formula giving the Landau-Lifshitz form of the radiation drag force for arbitrary velocities.) It therefore represents Landau's and Lifshitz's proposal for the dynamical equations governing a single charged particle that is experiencing radiation drag. In contrast to the Abraham-Lorentz equations, which evolve in the nine-dimensional phase space of triples $(\bm{x},\bm{v},\bm{a})$, the Landau-Lifshitz equations evolve in the six-dimensional phase space of pairs $(\bm{x},\bm{v})$. We have explained this reduction in dimensionality using the notion of slow manifolds. This should be compared with the argument of Spohn,\,\cite{Spohn_2000} who speaks of a \emph{critical manifold} in the phase space for Abraham-Lorentz dynamics. (To be precise, Spohn deals with the more accurate, but qualitatively similar Lorentz-Abraham-Dirac equation.) Spohn's critical manifold is really the same thing as our slow manifold. This coincidence comes as no surprise because in \citep{Spohn_2000} Spohn explicitly makes the link between his argument and Fenichel's geometric singular perturbation theory, which we have mentioned forms one of the pillars of slow manifold theory. We will return again to Spohn's work on radiation drag when discussing the rigorous validity limits of slow manifold reduction in Section \ref{interpretation_SM}.

\end{example}

\begin{example}\label{second_order_smr_zero_drag}
Let us also examine a second-order slow manifold reduction of the zero-drag Abraham-Lorentz equations. Recall from Example \ref{FS_example_hard} that this system admits a fast-slow split, and that the first few terms of the associated formal slow manifold were computed in Example \ref{zero_drag_sm_example}. In terms of the notation introduced near the end of the latter example the formal slow manifold reduction for the zero-drag equations is given by
\begin{align}
\dot{v}_{\parallel\epsilon}^* &=(v_\parallel \bm{b} + \bm{v}_{\perp\epsilon}^*)\cdot\nabla\bm{b}\cdot  \bm{v}_{\perp\epsilon}^* \\
\dot{\bm{x}}_{\epsilon}^* &=v_\parallel\,\bm{b} +  \bm{v}_{\perp\epsilon}^*.
\end{align}
Using Eqs.\,\eqref{zero_drag_y0}, \eqref{zero_drag_y1}, and \eqref{zero_drag_y2} for the first three coefficients in $y_\epsilon^*$ the previous equations may be simplified and truncated at $O(\epsilon^3)$ to produce the following slow manifold reduction of order $N=2$:
\begin{align}
\dot{\widetilde{v}}_{\parallel\epsilon} &=  - \epsilon^2 \bm{b}\cdot\nabla\frac{1}{2}|\bm{v}_c|^2 \label{2nd_order_smr_1}\\
\dot{\widetilde{\bm{x}}}_{\epsilon}  &= v_\parallel\,\bm{b}  +\epsilon\bm{v}_c\nonumber\\
&-\epsilon^2\frac{v_\parallel ( \bm{b}\cdot \nabla \bm{v}_c + \bm{v}_c\cdot \nabla\bm{b} )\times\bm{b}}{\zeta\,|\bm{B}|}.\label{2nd_order_smr_2}
\end{align}
We remind the reader that the curvature drift velocity $\bm{v}_c$ was defined in Eq.\,\eqref{zero_drag_y1}.

Equations \eqref{2nd_order_smr_1}-\eqref{2nd_order_smr_2} bear a striking resemblance to the well-known guiding center equations of motion for a charged particle moving through a strong magnetic field. There are however notable differences: the $\nabla B$-drift is absent from Eq.\,\eqref{2nd_order_smr_2}, and while Eq.\,\eqref{2nd_order_smr_1} resembles the mirror force $-\mu \bm{b}\cdot\nabla|\bm{B}|$ in form, $\frac{1}{2}|\bm{v}_c|^2$ appears in place of the effective potential $\mu |\bm{B}|$. Perhaps even more curiously the magnetic moment $\mu$ does not appear at all! What then is the interpretation of these equations?

The key to answering this question is recognizing that relative to the asymptotic scaling used for the zero-drag equations the cyclotron period is $O(\epsilon)$. In keeping with our earlier qualitative description of the dynamical role played by slow manifolds, the zero-drag slow manifold reduction therefore cannot sample the cyclotron period. On the other hand the $O(1)$ timescale that characterizes motion on the slow manifold is also described correctly by the usual guiding center theory. The only way to reconcile these two observations is to conclude that \eqref{2nd_order_smr_1}-\eqref{2nd_order_smr_2} describe particle dynamics \emph{on the zero level set of the magnetic moment}. This perspective immediately explains the ``missing" terms in the second-order slow manifold reduction --- those terms are all proportional to $\mu$. The reconciliation also implies that the $O(\epsilon^2)$ terms in the slow manifold reduction must appear in the guiding center equations of motion expanded to a comparable order in $\epsilon$.
\end{example}

The passage from a fast-slow system to its slow manifold reduction happens frequently within plasma physics. In fact very few reduced models for plasma behavior based on controlled approximations do not arise in this manner. (The authors would be excited to learn of exceptions to this rule!) For example, the famous magnetohydrodynamic (MHD) model for strongly-magnetized plasmas arises as the lowest-order slow manifold reduction of the two-fluid-Maxwell model. In the non-dissipative case this slow manifold reduction was described in detail in Ref.\,\citep{Burby_two_fluid_2017}, where order $0$, $1$, and $2$ reductions were computed. The kinetic version of MHD known variously as kinetic MHD or the guiding center plasma model also arises via slow manifold reduction, this time applied to the Vlasov-Maxwell model. This observation was used to show for the first time that non-dissipative kinetic MHD comprises an infinite-dimensional Hamiltonian system in Ref.\,\citep{Burby_Sengupta_2017_pop}. Other examples abound, some of which are summarized in Table \ref{table1}.

\begin{table}[htb]
\caption{Some slow manifolds from plasma physics\label{table1}}
\begin{center}
\begin{tabular}{c|c}
\hline
\textbf{parent model} & \textbf{slow manifold}\\
\hline
Vlasov-Maxwell & Navier-Stokes-Maxwell\\ 
(unmagnetized, collisional) & \\
\hline
Vlasov-Maxwell & Braginskii\\
 (magnetized, collisional) & \\
 \hline
Vlasov-Maxwell & Kinetic MHD ($k_\perp\rho \ll 1$)\\
(magnetized, weakly-collisional) & Gyrokinetics ($k_\perp\rho \sim 1$)\\
\hline
Vlasov-Maxwell & Vlasov-Poisson ($0^{\text{th}}$ order)\\
(weakly-relativistic) & Vlasov-Darwin ($1^{\text{st}}$ order)\\
\hline
Two-fluid-Maxwell & MHD ($0^{\text{th}}$ order)\\
(magnetized) & Hall MHD ($1^{\text{st}}$ order)\\ 
\hline
Lorentz force dynamics & $\mu = 0$ guiding center\\
(magnetized) & \\
\hline
Lorentz loop dynamics & finite-$\mu$ guiding center\\
(magnetized) & \\
\end{tabular}
\end{center}
\end{table}

A simple corollary of the prevalence of slow manifold reductions in plasma physics is that any valid result for general slow manifold reductions applies immediately to a broad class of reduced plasma models. For instance, if a reduced model may be shown to arise as the $0^{\text{th}}$-order slow manifold reduction of a fast-slow system then a simple consequence of Theorem \ref{SM_eau} is that the (formal) accuracy of such a model may be increased by constructing higher-order slow manifold reductions. Such higher-order reduced models may be deduced systematically, and allow inclusion of important physical effects absent from the leading-order reduction. The ideal MHD model for instance obeys the frozen-in Law, whereas the higher-order slow manifold reductions in Ref.\,\onlinecite{Burby_two_fluid_2017} incorporate (non-dissipative, fluid-based) non-ideal effects responsible for magnetic reconnection. Outside of MHD interesting outstanding applications of this observation include deriving weakly-relativistic reduced kinetic models that improve on the Vlasov-Darwin model, incorporating the effects of harmonic generation in wave-mean-flow models such as those discussed in Refs.\,\citep{Dewar_1970,Gjaja_Holm_1996,Burby_Ruiz_2019}, and computing the effects of small deviations from neutrality in nominally quasineutral plasmas. (The last application will be discussed in Section \ref{QN_application}.)



A second general principle obeyed by a fast-slow system's slow manifold reductions, and therefore by many reduced models in plasma physics, concerns conservation laws. If a fast-slow system obeys a conservation law 
\begin{align}
\frac{d}{dt}Q(x(t),y(t)) = 0,
\end{align}
then the formal slow manifold reduction of that system, $\dot{x} = g_\epsilon(x,y_\epsilon^*(x))$, \emph{inherits} the formal conservation law
\begin{align}
\frac{d}{dt}Q^*_\epsilon(x(t)) = \frac{d}{dt}Q(x(t),y_\epsilon^*(x(t))) = 0.
\end{align}
In particular when $\epsilon=0$ we obtain the conservation law
\begin{align}
\frac{d}{dt}Q^*_0(x(t)) = \frac{d}{dt}Q(x(t),y_0^*(x(t))) = 0
\end{align}
for the limiting slow manifold reduction $\dot{x}=\widetilde{g}_0(x) = g_0(x,y_0^*(x))$. The precise meaning of this inheritance result is described in the following

\begin{theorem}[inheritance of first integrals]\label{scalar_inheritance_thm}
Let $\dot{x} = g_\epsilon(x,y)$, $\epsilon\,\dot{y} = f_\epsilon(x,y)$ be a fast slow system with formal slow manifold $y_\epsilon^*$. If this systems possesses an $\epsilon$-dependent scalar conserved quantity $Q_\epsilon(x,y)$ that is smooth in $\epsilon$ and $(x,y)$ then the quantity 
\begin{align}
Q_\epsilon^*(x) = Q_\epsilon(x,y_\epsilon^*(x)),\label{inherited_scalar}
\end{align}
interpreted as a formal power series in $\epsilon$, is a conserved quantity for the system's formal slow manifold reduction $\dot{x} = g_\epsilon^*(x)$. In other words,
\begin{align}
DQ_\epsilon^*(x)[g_\epsilon^*(x)] = 0,\label{inherited_conservation_of_scalar}
\end{align}
where equality is understood in the sense of formal power series. In particular the lowest-order consequences of \eqref{inherited_conservation_of_scalar} are
\begin{align}
0&=DQ_0^*(x)[g_0^*(x)]\label{limit_cons_thm}\\
0&=DQ_1^*(x)[g_0^*(x)] + DQ_0^*(x)[g_1^*(x)]\\
0&=DQ_2^*(x)[g_0^*(x)] + DQ_1^*(x)[g_1^*(x)] + DQ_0^*(x)[g_2^*(x)]\label{cons_break_thm}
\end{align}
\end{theorem}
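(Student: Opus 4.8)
The plan is to carry out everything within the ring of formal power series in $\epsilon$, using only two ingredients: the partial-differential-equation form of the conservation law, and the scaled invariance equation \eqref{scaled_invariance_eq_def} satisfied by $y_\epsilon^*$. First I would restate the hypothesis. Differentiating $Q_\epsilon(x(t),y(t))$ along a solution of the fast-slow system and substituting $\dot x = g_\epsilon$, $\dot y = \epsilon^{-1}f_\epsilon$ gives, after clearing the denominator,
\begin{align}
\epsilon\,D_xQ_\epsilon(x,y)[g_\epsilon(x,y)] + D_yQ_\epsilon(x,y)[f_\epsilon(x,y)] = 0.\label{plan_cons_pde}
\end{align}
Since a solution passes through every point of $(x,y)$-space, \eqref{plan_cons_pde} is an identity in $(x,y)$, and also a formal identity in $\epsilon$. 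This is the only place the dynamics is used.

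Next I would simply apply the chain rule to $Q_\epsilon^*(x)=Q_\epsilon(x,y_\epsilon^*(x))$, obtaining
\begin{align}
DQ_\epsilon^*(x)[v] = D_xQ_\epsilon(x,y_\epsilon^*(x))[v] + D_yQ_\epsilon(x,y_\epsilon^*(x))[Dy_\epsilon^*(x)[v]],
\end{align}
and then set $v = g_\epsilon^*(x) = g_\epsilon(x,y_\epsilon^*(x))$. The invariance equation \eqref{scaled_invariance_eq_def} gives $Dy_\epsilon^*(x)[g_\epsilon^*(x)] = \epsilon^{-1}f_\epsilon(x,y_\epsilon^*(x))$, so multiplying the chain-rule identity through by $\epsilon$ yields
\begin{align}
\epsilon\,DQ_\epsilon^*(x)[g_\epsilon^*(x)] = \epsilon\,D_xQ_\epsilon(x,y_\epsilon^*(x))[g_\epsilon^*(x)] + D_yQ_\epsilon(x,y_\epsilon^*(x))[f_\epsilon(x,y_\epsilon^*(x))].
\end{align}
The right-hand side is precisely \eqref{plan_cons_pde} evaluated at $y=y_\epsilon^*(x)$, hence vanishes. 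Because $\epsilon$ is not a zero divisor in the ring of formal power series, I may divide by $\epsilon$ to conclude $DQ_\epsilon^*(x)[g_\epsilon^*(x)]=0$, which is \eqref{inherited_conservation_of_scalar}. The relations \eqref{limit_cons_thm}--\eqref{cons_break_thm} then drop out mechanically upon writing $Q_\epsilon^*=Q_0^*+\epsilon Q_1^*+\dots$ and $g_\epsilon^*=g_0^*+\epsilon g_1^*+\dots$, substituting, and collecting powers of $\epsilon$.

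I expect the main (indeed essentially the only) subtlety to be the bookkeeping around the spurious factor $\epsilon^{-1}$: one must verify that $f_\epsilon(x,y_\epsilon^*(x))$ has vanishing $O(1)$ term --- which is exactly Eq.\,\eqref{limit_sm_eqn} from Theorem \ref{SM_eau}, $f_0(x,y_0^*(x))=0$ --- so that $\epsilon^{-1}f_\epsilon(x,y_\epsilon^*(x))$ and $Dy_\epsilon^*(x)[g_\epsilon^*(x)]$ are genuine formal power series, and one must never divide by $\epsilon$ before that cancellation has been invoked. A minor, routine point is the upgrade of ``conserved along solutions'' to the pointwise identity \eqref{plan_cons_pde}; this is just local existence of solutions through each phase-space point, though in the $d=\infty$ setting of Section \ref{QN_application} one should pause to confirm the relevant well-posedness. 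Notably, no normal-stability, hyperbolicity, or convergence assumptions enter, since the argument is entirely formal.
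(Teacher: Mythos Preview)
Your proof is correct and follows essentially the same route as the paper's: conservation law, substitution $y=y_\epsilon^*$, invariance equation, chain rule. The only cosmetic difference is in handling the $\epsilon^{-1}$ factor: the paper uses the invariance equation in the direction $f_\epsilon(x,y_\epsilon^*)/\epsilon = Dy_\epsilon^*(x)[g_\epsilon^*(x)]$ to eliminate the $1/\epsilon$ \emph{before} applying the chain rule, whereas you go the other way and cancel the $\epsilon$ at the end; both are fine, and your explicit remarks about $f_0(x,y_0^*)=0$ and non-zero-divisors are a nice touch of care that the paper leaves implicit.
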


\begin{remark}
If a fast-slow systems possesses structural properties other than conservation laws those properties tend to be inherited by the system's formal slow manifold reduction as well. For instance in Section \ref{hamiltonian_SM} we will demonstrate that slow manifold reductions of Hamiltonian fast-slow systems may be chosen to be Hamiltonian systems in their own right.
\end{remark}

\begin{proof}
This is a simple corollary of two fundamental properties of composition of functions that are formal power series in $\epsilon$. First, if $q(\epsilon)$ is a smooth function of $\epsilon$ and we define $f_\epsilon(x) = q(\epsilon)$ as a function of the variable $x$, then $f(x^*_\epsilon) = q(\epsilon)$ whenever $x_\epsilon^* = x_0 + \epsilon\,x_1+\epsilon^2\,x_2 + \dots$ is an $x$-valued formal power series. Second, if $F_\epsilon:X\rightarrow Y$ and $G_\epsilon: Y\rightarrow Z$ are smooth formal power series maps between vector spaces then the chain rule
\begin{align}
D(G_\epsilon\circ F_\epsilon)(x) = DG_\epsilon(F_\epsilon(x))\circ DF_\epsilon(x)\label{formal_chain_rule}
\end{align}
is satisfied to all orders in $\epsilon$.

We apply these simple properties as follows. First note that because $Q_\epsilon(x,y)$ is a conserved quantity $DQ_\epsilon(x,y)[(g_\epsilon(x,y),f_\epsilon(x,y)/\epsilon)] = 0$
for all $(x,y)$. By the first property above we therefore have the power series identity $DQ_\epsilon(x,y^*_\epsilon)[(g_\epsilon(x,y^*_\epsilon),f_\epsilon(x,y^*_\epsilon)/\epsilon)] = 0$.
Using the definition of the formal slow manifold the last identity may also be written as
\begin{align}
DQ_\epsilon(x,y^*_\epsilon)[(g_\epsilon(x,y^*_\epsilon),Dy_\epsilon^*(x)[g_\epsilon(x,y^*_\epsilon)])] = 0.\label{proto_cons_prf}
\end{align}
But by the chain rule \eqref{formal_chain_rule} for formal power series,
\begin{align}
D{Q}_\epsilon^*(x)[\delta x] = DQ_\epsilon(x,y_\epsilon^*)[(\delta x,Dy_\epsilon^*(x)[\delta x])]
\end{align}
for all $\delta x$, in particular for $\delta x = g_\epsilon(x,y^*_\epsilon)$. Therefore Eq.\,\eqref{proto_cons_prf} implies the desired result.
\end{proof}

\noindent The limiting conservation law \eqref{limit_cons_thm} explains a number of interesting conservation laws across a variety of reduced plasma models. When barotropic electron and ion fluids are coupled to an electromagnetic field \emph{via} Maxwell's equations particle relabeling symmetry implies the presence of a circulation invariant for each fluid species of the form
\begin{align}
\frac{d}{dt}\oint_{C(t)} (q_\sigma \bm{A} +\epsilon\, m_\sigma \bm{u}_\sigma)\cdot d\bm{x} = 0,\label{circulation_invariant_tfm}
\end{align}
where the integrand may be interpreted as the canonical momentum carried by species $\sigma$, and the closed curve $C(t)$ is dragged along by the flow of species $\sigma$. The dimensionless parameter $\epsilon$ appears naturally when formulating the two-fluid-Maxwell equations as a fast-slow system as in Ref.\,\citep{Burby_two_fluid_2017}. The $0^{\text{th}}$-order slow manifold for the two-fluid-Maxwell system is (barotropic) ideal magnetohydrodynamics. Therefore Theorem \ref{scalar_inheritance_thm} implies the $\epsilon\rightarrow 0$ limit of the circulation law \eqref{circulation_invariant_tfm},
\begin{align}
q_\sigma\frac{d}{dt}\oint_{C(t)}\bm{A}\cdot d\bm{x} = 0,\label{limit_circulation_invariant}
\end{align} 
should be a conservation law for ideal MHD. But Eq.\,\eqref{limit_circulation_invariant} is (up to an unimportant factor of $q_\sigma$) just an integral statement of the well-known frozen-in law for ideal magnetic fields. The same limiting conservation law \eqref{limit_cons_thm} explains quadratic free-energy conservation in gyrokinetics \citep{Schekochihin_2009}, the phase-space circulation invariant for kinetic MHD discovered in \citep{Burby_Sengupta_2017_pop}, and the non-obvious form of the energy flux in the Vlasov-Poisson system's local energy conservation law. \citep{Similon_1981,Qin_weak_FT_2014}
 
Another general reduced model phenomenon highlighted by Theorem \ref{scalar_inheritance_thm} pertains to \emph{breakdown} of conservation laws. One might hope naively that since $Q_0^*$ is a conserved quantity for the $0^{\text{th}}$-order slow manifold reduction of a fast slow system the ostensibly more-accurate truncation
\begin{align}
Q_\epsilon^{*N}(x) = Q^*_0(x) + \epsilon\,Q_1^*(x) + \dots +\epsilon^N\,Q_N^*(x)
\end{align}
would be conserved by an $N^{\text{th}}$ order slow manifold reduction. However Eq.\,\eqref{cons_break_thm} may be used to show that this is not necessarily true. Suppose we set $N=1$ and use the simplest first-order truncation of $g_\epsilon^*$,
\begin{align}
\widetilde{g}_\epsilon(x)& = g_0(x,y_0^*)\nonumber\\
& + \epsilon\bigg(g_1(x,y_0^*) + D_yg_0(x,y_0^*)[y_1^*]\bigg),\label{test_first_order_reduction}
\end{align}
to define a slow manifold reduction of order $1$. Note that since $g_1^*$ in Eq.\,\eqref{cons_break_thm} is given by $g_1^*(x) = g_1(x,y_0^*) + D_yg_0(x,y_0^*)[y_1^*]$
another way to write the slow manifold reduction \eqref{test_first_order_reduction} is
\begin{align}
\widetilde{g}_\epsilon(x) = g^*_0(x) + \epsilon \,g_1^*(x).\label{simple_first_order_trunc}
\end{align}
The time derivative of $Q_\epsilon^{*1} = Q_0^* + \epsilon\, Q_1^*$ along a solution of the ODE defined by \eqref{simple_first_order_trunc} is given by
\begin{align}
DQ_\epsilon^{*1}(x)[\widetilde{g}_\epsilon] &= DQ_0^*(x)[g_0^*(x)]\nonumber\\
&+\epsilon\bigg(DQ_1^*(x)[g_0^*(x)] + DQ_0^*(x)[g_1^*(x)]\bigg)\nonumber\\
&+\epsilon^2\bigg(DQ_1^*(x)[g_1^*(x)]\bigg).
\end{align}
By the second-order consequence \eqref{cons_break_thm} of the inheritance Theorem \eqref{scalar_inheritance_thm} the previous expression simplifies to
\begin{align}
DQ_\epsilon^{*1}(x)[\widetilde{g}_\epsilon]  = -\epsilon^2\bigg( DQ_2^*(x)[g_0^*(x)]  + DQ_0^*(x)[g_2^*(x)]  \bigg),
\end{align}
which is by no means zero in general. We may say that the first-order slow manifold reduction \eqref{simple_first_order_trunc} conserves $Q_\epsilon^{*1}$ up to order $\epsilon^2$, but we cannot definitely say whether or not this first-order reduced model satisfies a conservation law associated with $Q_\epsilon$. Given an \emph{a priori} conserved quantity $Q_\epsilon$ and a slow manifold reduction $\widetilde{g}_\epsilon$ for a fast-slow system there are apparently additional conditions on $\widetilde{g}_\epsilon$ beyond the basic ordering condition \eqref{SMR_def} imposed by requiring $\widetilde{g}_\epsilon$ conserves some approximation to $Q_\epsilon^*$.

\begin{example}\label{energy_breakdown_example}
The generic breakdown of conservation laws in higher-order slow manifold reductions may be seen explicitly in the context of Abraham-Lorentz dynamics in the zero-drag regime. (c.f. Section \ref{zero_drag_sec}.) The zero-drag equations conserve the kinetic energy
\begin{align}
E =& \frac{1}{2}|\bm{v}|^2\nonumber\\
=& \frac{1}{2}v_\parallel^2 + \frac{1}{2}v_1^2 + \frac{1}{2} v_2^2,\label{conserved_kinetic_ex}
\end{align}
where the second line of \eqref{conserved_kinetic_ex} gives the conserved quantity in terms of the system's fast-slow split $x =(\bm{x},v_\parallel)$, $y = (v_1,v_2)$. In accordance with Eq.\,\eqref{limit_cons_thm} from Theorem \ref{scalar_inheritance_thm}, the $0^{\text{th}}$-order slow manifold reduction of the zero-drag equations,
\begin{align}
\dot{v}_\parallel =& 0\\
\dot{\bm{x}} = & v_\parallel \bm{b}(\bm{x}),
\end{align}
conserves the limiting kinetic energy
\begin{align}
E_0^*(\bm{x},v_\parallel) = \frac{1}{2}v_\parallel^2.
\end{align}
It is also true (somewhat serindipitously) that the first-order slow manifold reduction given by
\begin{align}
\dot{v_\parallel}& = 0\\
\dot{\bm{x}} &= v_\parallel\,\bm{b}  +\epsilon\,\bm{v}_c
\end{align}
conserves the first-order-accurate expression for the kinetic energy
\begin{align}
E_0^* + \epsilon\,E_1^* = \frac{1}{2}v_\parallel^2.
\end{align}
However, the second-order slow manifold reduction defined in Eqs.\,\eqref{2nd_order_smr_1}-\eqref{2nd_order_smr_2}
does not generally conserve the second-order-accurate energy,
\begin{align}
E_0^* + \epsilon E_1^* + \epsilon^2 E_2^* = \frac{1}{2}v_\parallel^2 + \epsilon^2\frac{1}{2}|\bm{v}_c|^2,\label{example_2nd_order_energy}
\end{align}
because 
\begin{align}
&\frac{d}{dt}(E_0^* + \epsilon E_1^* + \epsilon^2 E_2^*)= - \epsilon^2\,v_\parallel \bm{b}\cdot\nabla\frac{1}{2}|\bm{v}_c|^2\nonumber\\
& + \epsilon^2\,v_\parallel\bm{b}\cdot\nabla \frac{1}{2}|\bm{v}_c|^2+ \epsilon^3\,\bm{v}_c\cdot\nabla \frac{1}{2}|\bm{v}_c|^2\nonumber\\
&- \epsilon^4  \,\bigg(\frac{v_\parallel ( \bm{b}\cdot \nabla \bm{v}_c + \bm{v}_c\cdot \nabla\bm{b} )\times\bm{b}}{\zeta\,|\bm{B}|}\bigg) \cdot\nabla \frac{1}{2}|\bm{v}_c|^2\nonumber\\
 &= O(\epsilon^3).
\end{align}
This failure of the second-order slow manifold reduction \eqref{2nd_order_smr_1}-\eqref{2nd_order_smr_1} to conserve the second-order energy \eqref{example_2nd_order_energy} does \emph{not} imply that energy conservation breaks down at second order on the slow manifold. Instead it illustrates the more general point that not all slow manifold reductions of a given order are created equal. Some of those reductions need not be compatible with the underlying fast-slow system's exact conservation laws. In applications where conservation laws ought to be respected slow manifold reductions that preserve approximations of all known conservation laws should be preferred over naive power series truncations.

In Section \ref{hamiltonian_SM} we will explain a powerful strategy for constructing slow manifold reductions of any order that do preserve conservation laws, as well as other more nuanced structural properties, when the underlying fast-slow system admits a Hamiltonian structure. Other invariant-preserving slow manifold truncations for dissipative systems exist as well. For example, if some fraction of a fast-slow system's dependent variables are chosen to be equal to the conserved quantities of interest then it is straightforward to ensure that any slow manifold reduction preserves the associated conservation laws. To wit, in the zero-drag equations the parallel velocity $v_\parallel$ may be exchanged with the kinetic energy $\varepsilon$, giving the system of equations
\begin{align}
\epsilon\,\dot{v}_1&=\hspace*{.8em}\zeta\,|\bm{B}|\,v_2  \nonumber\\
&\hspace*{1.2em}+\epsilon\,(v_\parallel^{\star}\,\bm{b}+\bm{v}_\perp)\cdot(\bm{R}\,v_2 - v_\parallel^\star\,\nabla\bm{b}\cdot \bm{e}_1)\\
\epsilon\,\dot{v}_2&=-\zeta\,|\bm{B}|\,v_1\nonumber\\
&\hspace*{1.2em}-\epsilon\,(v_\parallel^{\star}\,\bm{b}+\bm{v}_\perp)\cdot(\bm{R}\,v_1 + v_\parallel^\star\,\nabla\bm{b}\cdot \bm{e}_2)\\
\dot{\varepsilon} &= 0\\
\dot{\bm{x}} &=  v_\parallel^{\star}\,\bm{b} + \bm{v}_\perp,
\end{align}
where
\begin{align}
v_\parallel^{\star} = \pm \sqrt{2\varepsilon - |\bm{v}_\perp|^2}.
\end{align}
In spite of the change of dependent variables these equations still comprise a fast-slow system with slow variable $x = (\bm{x},\varepsilon)$ and fast variable $y = (v_1,v_2)$. Theorem \ref{SM_eau} therefore ensures there is a unique formal slow manifold with slaving functions $v_{1\epsilon}^*(\bm{x},\varepsilon),v_{2\epsilon}^*(\bm{x},\varepsilon)$ and associated formal slow manifold reduction
\begin{align}
\dot{\varepsilon}_\epsilon^*& = 0\label{energy_surface_one}\\
\dot{\bm{x}}_\epsilon^* & = \pm\sqrt{2\varepsilon - |\bm{v}_{\perp\epsilon}^*|^2} \,\bm{b}(\bm{x})+ \bm{v}_{\perp\epsilon}^*.\label{energy_surface_two}
\end{align}
If a slow manifold reduction of order $N$ is extracted from Eqs.\,\eqref{energy_surface_one}-\eqref{energy_surface_two} by setting all $O(\epsilon^{N+1})$ terms equal to zero the kinetic energy $\varepsilon$ will be conserved regardless of the $N$ chosen. Note that the slaving functions $v_{1\epsilon}^*(\bm{x},\varepsilon),v_{2\epsilon}^*(\bm{x},\varepsilon)$ are not the same as the slaving functions found earlier in the $x=(\bm{x},v_\parallel)$ representation of the fast-slow system. Also note that the $x = (\bm{x},\varepsilon)$ representation has the disadvantage of necessarily coping with the two branches of the square root function.
\end{example}

\subsection{The zero-derivative principle\label{zdp_sec}}
The preceding discussion suggests that working with slow manifolds (Section \ref{origins_sec}) and their associated slow manifold reductions (Section \ref{reduction_sec}) seems to require the laborious intermediate step of finding perturbative solutions of the invariance equation \eqref{scaled_invariance_eq_def} using pencil and paper. In each of the examples we have used to illustrate basic elements of the theory we have performed the necessary calculations in precisely this manner. However, manual calculations, especially those involving perturbation expansions, may only be carried so far. It is often true that computing second- or third-order terms of a formal slow manifold requires only modest bookkeeping and error checking. But fourth-order calculations easily test the limits of human fortitude, and may even push computer algebra systems (CAS) beyond their capacities. At higher orders still, neither the sweat of scientists nor the crunch of CAS will serve to reliably complete the computations.

In this Section we will present an alternative approach to finding approximate solutions of the invariance equation that is amenable to numerical computation. (Here we distinguish between numerical computations and symbolic computations.) The approach is based on the so-called \emph{zero-derivative principle}, and has been applied by Gear, Kaper, Kevrekidis, and Zagaris \citep{Gear_2006} to numerically compute approximations to slow manifolds for various practically-relevant problems. With the zero-derivative principle in hand points near a fast-slow system's slow manifold $\{(x,y_\epsilon^*(x))\}$ may be computed given (a) an accurate numerical integration scheme for full fast-slow dynamics, and (b) a choice for the $x$-component of the desired point $(x,y_\epsilon^*(x))$ on the slow manifold. Manual calculations are not required. In principle the formal slow manifold may be approximated with any desired accuracy. In practice this accuracy is limited by the truncation and roundoff errors that affect all numerical recipes. 

Our discussion will focus on a theoretical demonstration of the zero-derivative principle. For details on numerical implementation of the theory we refer to \citep{Gear_2005,Gear_2006,Zagaris_2009}. Other numerical approaches to approximating slow manifolds include the intrinsic low-dimensional manifold approach of \citep{Maas_1992}, the Computational Singular Perturbation (CSP) technique of \citep{Lam_1994}, the extended zero-derivative principle of \citep{Benoit_2015}, the bounded derivative principle developed in \citep{Kreiss_1985} and applied in \citep{Ariel_2012}, and Fraser's method from \citep{fraser_1988}.

The basic idea underlying the zero-derivative principle is that solutions of a fast-slow system that are stuck to a slow manifold do not sample the $O(\epsilon)$ timescale. Therefore the time derivatives of such a solution should not grow as $\epsilon$ tends to $0$. Moreover if a solution \emph{does} contain a fast component with small amplitude then a sufficiently high-order time derivative of that solution should amplify that fast component to an appreciable size. We are therefore motivated to investigate the relationship between high-order time derivatives of solutions to fast-slow systems and high-order approximations of slow manifolds.

It will be convenient for the purposes of this discussion to study fast-slow systems in fast time $\tau = t/\epsilon$. In terms of $\tau$ a general fast-slow system takes the form
\begin{align}
\frac{dy}{d\tau} &= f_\epsilon(x,y)\\
\frac{dx}{d\tau}& = \epsilon\,g_\epsilon(x,y),
\end{align}
and the invariance equation \eqref{scaled_invariance_eq_def}  that determines the system's formal slow manifold does not change. Note that in terms of $\tau$ the system remains in the category of ordinary differential equations as $\epsilon\rightarrow0$.

On the formal slow manifold the first $\tau$-derivative of the fast variable $y$ is nearly zero. This can be seen by appealing directly to the invariance equation \eqref{scaled_invariance_eq_def}. Alternatively we may come to the same conclusion by comparing the formal slow manifold $y_\epsilon^*(x)$ with the function $\widetilde{y}_\epsilon^{1}(x)$ whose graph is the zero level set of $dy/d\tau =f_\epsilon(x,y)$ as follows. (This may seem like an inefficient demonstration, but it serves to motivate our general proof of the zero derivative principle.) Because 
\begin{align}
0 =  f_\epsilon(x,\widetilde{y}_\epsilon^{1}(x)),
\end{align}
the formal power series expansion of  $\widetilde{y}_\epsilon^{1}(x)$,
\begin{align}
\widetilde{y}_\epsilon^{1} = \widetilde{y}_0^{1} + \epsilon\,\widetilde{y}_1^{1} + \epsilon^2\,\widetilde{y}_2^{1}+\dots,  
\end{align}
satisfies the sequence of equations
\begin{align}
0&=f_0(x,\widetilde{y}_0^{1})\\
0&=f_1(x,\widetilde{y}_0^{1}) + D_yf_0(x,\widetilde{y}_0^{1})[\widetilde{y}_1^{1}]\\
0&=\dots.
\end{align}
These equations show in particular that 
\begin{align}
\widetilde{y}_0^{1}(x)  = y_0^*(x),
\end{align}
where $y_0^*$ is the leading-order term in the formal slow manifold $y_\epsilon^*$. Therefore the $\tau$-derivative of $y$ on the slow manifold may be estimated as follows. If $\delta y = y_\epsilon^*-\widetilde{y}^{1}_\epsilon$, then 
\begin{align}
\bigg(\frac{dy}{d\tau}\bigg)^*_\epsilon =& f_\epsilon(x,y_\epsilon^*)\nonumber\\
=&f_\epsilon(x,\widetilde{y}^{1}_\epsilon + \delta y)\nonumber\\
=&f_\epsilon(x,\widetilde{y}^{1}_\epsilon)\nonumber\\
& + \int_0^1D_yf_\epsilon(x,\widetilde{y}^{1}_\epsilon + \lambda\,\delta y)[\delta y]\,d\lambda\nonumber\\
=&\bigg(\int_0^1D_yf_\epsilon(x,\widetilde{y}^{1}_\epsilon + \lambda\,\delta y)\,d\lambda\bigg)[\delta y].\label{first_estimate_zdp}
\end{align}
Because 
\begin{align}
\delta y =& (y_0^*+\epsilon\,y_1^* + \dots) - (\widetilde{y}_0^{1} + \epsilon\,\widetilde{y}_1^{1} +\dots)\nonumber\\
=&\epsilon (y_1^* - \widetilde{y}_1^{1}) +\dots\nonumber\\
=&O(\epsilon),
\end{align}
and the linear operator in parentheses in Eq.\,\eqref{first_estimate_zdp} is $O(1)$,  we conclude from Eq.\,\eqref{first_estimate_zdp} that $\left(\frac{dy}{d\tau}\right)^*_\epsilon =O(\epsilon)$.

Aside from demonstrating the smallness of the first $\tau$-derivative on the formal slow manifold the preceding argument also shows that we may obtain a $0^{\text{th}}$-order accurate approximation $\widetilde{y}_\epsilon^{1}$ of the formal slow manifold $y_\epsilon^*$ by solving the algebraic equation
\begin{align}
0 = \mathfrak{D}^1_\epsilon(x,\widetilde{y}_\epsilon^{1}(x)) = f_\epsilon(x,\widetilde{y}_\epsilon^{1}(x)).
\end{align}
(This is not a trivial observation because the limiting slow manifold is defined by the equation $\mathfrak{D}^1_0 = 0$, not $\mathfrak{D}^1_\epsilon = 0$.)
Here $0^{\text{th}}$-order accurate means $y_\epsilon^* - \widetilde{y}_\epsilon^{1} = O(\epsilon)$. Referring then to the basic idea underlying the zero-derivative principle explained earlier we are therefore lead to consider the zero level sets of the higher derivative functions
\begin{align}
\mathfrak{D}^N_\epsilon(x,y) \equiv \frac{d^Ny}{d\tau^N},\label{frakD_def}
\end{align}
and how well solutions $\widetilde{y}^N_\epsilon$ of the associated vanishing derivative conditions
\begin{align}
\mathfrak{D}^N_\epsilon(x,\widetilde{y}^N_\epsilon(x)) = 0,
\end{align}
approximate the formal slow manifold $y_\epsilon^*$. In particular we are motivated to prove the following
\begin{theorem}[The zero-derivative principle]\label{zdp}
Given a fast-slow system with formal slow manifold $y_\epsilon^*$ define the functions $\mathfrak{D}^N_\epsilon(x,y)$ according to Eq.\,\eqref{frakD_def}. If $\widetilde{y}^N_\epsilon(x)$ is the unique solution of 
\begin{align}
\mathfrak{D}^N_\epsilon(x,\widetilde{y}^N_\epsilon(x)) = 0\label{zero_nth_deriv}
\end{align}
with $\widetilde{y}^N_0 = y_0^*$ then 
\begin{align}
y_\epsilon^* - \widetilde{y}^N_\epsilon = O(\epsilon^{N}).
\end{align}
In other words, the graph of $\widetilde{y}^N_\epsilon$ defines a slow manifold of order $N-1$. (c.f. Definition \ref{sm_defined_properly}.)
\end{theorem}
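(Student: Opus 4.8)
The plan is to iterate $N$ times the calculation carried out just above for the first $\tau$-derivative. Set $\delta y = y_\epsilon^* - \widetilde y^N_\epsilon$, a formal power series in $\epsilon$ whose leading term vanishes because $\widetilde y^N_0 = y_0^*$. Substituting $\widetilde y^N_\epsilon = y_\epsilon^* - \delta y$ into the vanishing-derivative condition \eqref{zero_nth_deriv} and applying the fundamental theorem of calculus in the $y$ argument (mirroring the manipulation in Eq.\,\eqref{first_estimate_zdp}) would give
\begin{align}
0 = \mathfrak{D}^N_\epsilon(x,y_\epsilon^*(x)) - M_\epsilon(x)\big[\delta y(x)\big], \qquad M_\epsilon(x) = \int_0^1 D_y\mathfrak{D}^N_\epsilon\big(x,\,y_\epsilon^*(x)-\lambda\,\delta y(x)\big)\,d\lambda .
\end{align}
Hence $\delta y = M_\epsilon^{-1}\big[\mathfrak{D}^N_\epsilon(x,y_\epsilon^*(x))\big]$, so the theorem reduces to two claims: (i) $\mathfrak{D}^N_\epsilon(x,y_\epsilon^*(x)) = O(\epsilon^N)$; and (ii) the leading-order term $M_0(x)$ is invertible, so that $M_\epsilon^{-1} = O(1)$ as a formal power series of linear maps.

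For (i) I would use the recursion implied by the definition \eqref{frakD_def}, namely $\mathfrak{D}^1_\epsilon = f_\epsilon$ and $\mathfrak{D}^{N+1}_\epsilon = D_x\mathfrak{D}^N_\epsilon[\epsilon\,g_\epsilon] + D_y\mathfrak{D}^N_\epsilon[f_\epsilon]$; equivalently $\mathfrak{D}^N_\epsilon$ is the $(N-1)$-fold total $\tau$-derivative of $f_\epsilon$. The key step is the identity that, for any smooth $\phi_\epsilon(x,y)$, the total $\tau$-derivative $\dot\phi_\epsilon = D_x\phi_\epsilon[\epsilon\,g_\epsilon]+D_y\phi_\epsilon[f_\epsilon]$ obeys, once restricted to the formal slow manifold,
\begin{align}
\dot\phi_\epsilon\big(x,y_\epsilon^*(x)\big) = \epsilon\, D\big[\phi_\epsilon(x,y_\epsilon^*(x))\big]\big[g_\epsilon(x,y_\epsilon^*(x))\big].
\end{align}
This follows by inserting the invariance equation $f_\epsilon(x,y_\epsilon^*(x)) = \epsilon\,Dy_\epsilon^*(x)[g_\epsilon(x,y_\epsilon^*(x))]$ into $\dot\phi_\epsilon(x,y_\epsilon^*(x))$ and then recombining the $x$- and $y$-derivatives by the chain rule. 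Since the operator $\psi\mapsto D\psi[\,g_\epsilon(\cdot,y_\epsilon^*(\cdot))\,]$ never lowers the order in $\epsilon$, iterating this identity $N-1$ times and using $f_\epsilon(x,y_\epsilon^*(x)) = O(\epsilon)$ (again the invariance equation) yields $\mathfrak{D}^N_\epsilon(x,y_\epsilon^*(x)) = O(\epsilon^N)$.

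For (ii), note first that $M_0(x) = D_y\mathfrak{D}^N_0(x,y_0^*(x))$, since $\delta y$ and $y_\epsilon^*-y_0^*$ vanish at $\epsilon=0$. I would then prove by induction on $N$ that $D_y\mathfrak{D}^N_0(x,y)$, evaluated at any point with $f_0(x,y)=0$, equals $[D_yf_0(x,y)]^N$. The base case is $\mathfrak{D}^1_0=f_0$. For the step, setting $\epsilon=0$ in the recursion gives $\mathfrak{D}^{N+1}_0(x,y)=D_y\mathfrak{D}^N_0(x,y)[f_0(x,y)]$; differentiating this in $y$ produces a term carrying an explicit factor $f_0(x,y)$, which drops out on the set $\{f_0=0\}$, leaving $D_y\mathfrak{D}^{N+1}_0 = D_y\mathfrak{D}^N_0\circ D_yf_0 = [D_yf_0]^{N+1}$ there by the inductive hypothesis. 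Because $f_0(x,y_0^*(x))=0$, the fast-slow condition \eqref{FS_condition} makes $D_yf_0(x,y_0^*(x))$ invertible, hence so is $M_0(x)=[D_yf_0(x,y_0^*(x))]^N$. Combining (i) and (ii) gives $\delta y = y_\epsilon^*-\widetilde y^N_\epsilon = O(\epsilon^N)$, which by Definition \ref{sm_defined_properly} means the graph of $\widetilde y^N_\epsilon$ is a slow manifold of order $N-1$. (The existence and uniqueness of $\widetilde y^N_\epsilon$ with $\widetilde y^N_0=y_0^*$ that the statement hypothesises is itself a consequence of the formal implicit function theorem applied to \eqref{zero_nth_deriv}, since the same leading Jacobian $[D_yf_0(x,y_0^*(x))]^N$ is invertible.)

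I expect the main obstacle to be claim (i). Inserting $f_\epsilon(x,y_\epsilon^*)=O(\epsilon)$ naively into the recursion only produces $\mathfrak{D}^N_\epsilon(x,y_\epsilon^*)=O(\epsilon)$; extracting the remaining $N-1$ powers of $\epsilon$ requires noticing the chain of cancellations forced by the invariance equation, packaged in the statement that on the slow manifold a total $\tau$-derivative degrades to $\epsilon$ times a slow ($x$-only) derivative. The remaining ingredients — the fundamental-theorem-of-calculus splitting and the inductive evaluation of the leading Jacobian — are routine.
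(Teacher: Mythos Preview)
Your proposal is correct and follows essentially the same route as the paper's proof: the paper organizes the argument into four lemmas (the recursion for $\mathfrak{D}_\epsilon^N$, invertibility of $D_y\mathfrak{D}_0^N$ at the limiting slow manifold, existence/uniqueness of $\widetilde y_\epsilon^N$, and the estimate $\mathfrak{D}_\epsilon^N(x,y_\epsilon^*)=O(\epsilon^N)$), then combines them via the same fundamental-theorem-of-calculus splitting you write, with $L_\epsilon = \int_0^1 D_y\mathfrak{D}_\epsilon^N(x,\widetilde y_\epsilon^N+\lambda\,\delta y)\,d\lambda$ in place of your $M_\epsilon$. Your key identity $\dot\phi_\epsilon(x,y_\epsilon^*) = \epsilon\,D[\phi_\epsilon(\cdot,y_\epsilon^*(\cdot))][g_\epsilon^*]$ is exactly the paper's inductive step for Lemma~\ref{fsm_zdp}, and your explicit formula $D_y\mathfrak{D}_0^N(x,y_0^*)=[D_yf_0(x,y_0^*)]^N$ is a slight sharpening of the paper's Lemma~\ref{invertibility_lemma}, which proves invertibility by the same induction without naming the power.
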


Theorem \ref{zdp} is powerful for two reasons. First, whereas a slow manifold was defined in Def.\,\ref{sm_defined_properly} as an approximate solution of a nonlinear system of partial differential equations, Theorem \ref{zdp} shows that the same objects may be computed as solutions of nonlinear systems of algebraic equations. Second, because the level set functions $\mathfrak{D}_\epsilon^N$ are equal to time derivatives of the fast variable $y$ evaluation of $\mathfrak{D}_\epsilon^N$ may be performed by running simulations of the underlying fast-slow system and using the simulation results to approximate the appropriate derivative. In particular there is no need to compute the $\mathfrak{D}_\epsilon^N$ for large $N$ by hand. Put slightly differently, Theorem \ref{zdp} reduces the computation of accurate slow manifolds to the combination of two problems (1) finding the zeros of a function, and (2) estimating high-order derivatives of solutions of ordinary differential equations. Each of these problems is well suited to numerical computation; neither requires any amount of pen-and-paper manipulations. (Of course, the computations \emph{could} be done by hand given sufficient motivation.)

In order to prove Theorem \ref{zdp} it is useful to first assemble four supporting Lemmas so that the proof's general structure will be more apparent. After stating and proving the Lemmas we will immediately give the desired proof.

 First we take note of the following basic relationship between the functions $\mathfrak{D}_\epsilon^N$ for different values of $N$ that reflects the simple recursive property of ordinary derivatives $(d/d\tau)^N = (d/d\tau)(d/d\tau)^{N-1}$.

\begin{lemma}\label{recursive_derivative_formula}
For each $N> 1$ the function $\mathfrak{D}_\epsilon^N(x,y)$ defined in Eq.\,\eqref{frakD_def} is related to the function $\mathfrak{D}_\epsilon^{N-1}(x,y)$ by the formula
\begin{align}
\mathfrak{D}_\epsilon^N(x,y) = &\epsilon\,D_x\mathfrak{D}_\epsilon^{N-1}(x,y)[g_\epsilon(x,y)]\nonumber\\
 &+ D_y\mathfrak{D}_\epsilon^{N-1}(x,y)[f_\epsilon(x,y)].
\end{align}
\end{lemma}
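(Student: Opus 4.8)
The plan is to prove Lemma \ref{recursive_derivative_formula} by unpacking the definition \eqref{frakD_def} of $\mathfrak{D}^N_\epsilon$ and invoking the chain rule. Recall that $\mathfrak{D}^N_\epsilon(x,y)$ is defined to be the $N$-th $\tau$-derivative of $y$ along a solution of the fast-time fast-slow system, expressed as a function of the instantaneous values of $x$ and $y$. The first point to make precise is \emph{why} $\mathfrak{D}^N_\epsilon$ is in fact well-defined as a function of $(x,y)$ alone: this follows by induction, since $\mathfrak{D}^1_\epsilon(x,y) = f_\epsilon(x,y)$ depends only on $(x,y)$, and differentiating any such function of $(x,y)$ along the flow produces again a function of $(x,y)$ by the chain rule (because $dx/d\tau$ and $dy/d\tau$ are themselves functions of $(x,y)$).

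The core computation is then short. For a solution $(x(\tau),y(\tau))$ of the fast-time equations $dy/d\tau = f_\epsilon(x,y)$, $dx/d\tau = \epsilon\,g_\epsilon(x,y)$, we have by definition
\begin{align}
\mathfrak{D}^N_\epsilon(x(\tau),y(\tau)) = \frac{d}{d\tau}\left[\mathfrak{D}^{N-1}_\epsilon(x(\tau),y(\tau))\right].
\end{align}
Applying the multivariable chain rule to the right-hand side gives
\begin{align}
\frac{d}{d\tau}\mathfrak{D}^{N-1}_\epsilon(x(\tau),y(\tau)) = D_x\mathfrak{D}^{N-1}_\epsilon(x,y)\!\left[\frac{dx}{d\tau}\right] + D_y\mathfrak{D}^{N-1}_\epsilon(x,y)\!\left[\frac{dy}{d\tau}\right],
\end{align}
and substituting $dx/d\tau = \epsilon\,g_\epsilon(x,y)$ and $dy/d\tau = f_\epsilon(x,y)$ yields exactly the claimed formula. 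Since every point $(x,y)$ in the domain lies on some solution curve, the identity holds as an identity of functions of $(x,y)$, not merely along a single trajectory.

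I would organize the write-up as: (1) a sentence establishing by induction that $\mathfrak{D}^N_\epsilon$ is a genuine function of $(x,y)$, with base case $\mathfrak{D}^1_\epsilon = f_\epsilon$; (2) the one-line chain-rule computation above; (3) a remark that arbitrariness of the base point $(x,y)$ upgrades the trajectory identity to a functional identity. I do not anticipate a genuine obstacle here — the only subtlety worth flagging explicitly is the well-definedness of $\mathfrak{D}^N_\epsilon$ as a function on $(x,y)$-space, which is really the content that makes the recursion meaningful; once that is granted the chain rule does all the work. If one wants to be scrupulous in the Banach-space setting, one should note that $f_\epsilon$ and $g_\epsilon$ are smooth so all Fréchet derivatives $D_x\mathfrak{D}^{N-1}_\epsilon$, $D_y\mathfrak{D}^{N-1}_\epsilon$ exist and the chain rule for Fréchet-differentiable maps between Banach spaces applies verbatim.
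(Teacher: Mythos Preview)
Your proposal is correct and takes essentially the same approach as the paper, which simply states that the lemma is ``merely an application of the chain rule'' to the identity $\mathfrak{D}_\epsilon^N = \frac{d}{d\tau}\mathfrak{D}_\epsilon^{N-1}$. Your write-up is in fact more careful than the paper's one-line proof, since you explicitly justify why $\mathfrak{D}_\epsilon^N$ is well-defined as a function of $(x,y)$ and why the trajectory identity upgrades to a functional one.
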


\begin{proof}
This is merely an application of the chain rule to formula
\begin{align}
\mathfrak{D}_\epsilon^N(x,y)  = \frac{d}{d\tau}\mathfrak{D}_\epsilon^{N-1}(x,y).
\end{align}
\end{proof}

Next we note that the non-degeneracy property \eqref{FS_condition} satisfied by all fast-slow systems, which may be read as a property of the function $\mathfrak{D}_0^1$, naturally ``propagates" across $N$ values.
\begin{lemma}\label{invertibility_lemma}
For each $N\geq 1$ the function $\mathfrak{D}_\epsilon^N(x,y)$ defined in Eq.\,\eqref{frakD_def} satisfies the following non-degeneracy condition. The partial derivative $D_y\mathfrak{D}_\epsilon^N(x,y)$ is invertible with $O(1)$ inverse for $(\epsilon,x,y)$ in a neighborhood of $\Gamma_0 = \{(\epsilon,x,y)\mid \epsilon = 0,\,y= y_0^*(x)\}$.

\end{lemma}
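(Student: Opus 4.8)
The plan is to prove the Lemma by induction on $N$, using the recursion of Lemma \ref{recursive_derivative_formula} to step from $N-1$ to $N$ and exploiting the fact that the leading-order fast vector field $f_0$ vanishes along $\Gamma_0$. As a preliminary bookkeeping remark, one notes that each $\mathfrak{D}_\epsilon^N(x,y)$ is smooth in $(\epsilon,x,y)$: this is obvious for $N=1$, where $\mathfrak{D}_\epsilon^1=f_\epsilon$, and Lemma \ref{recursive_derivative_formula} expresses $\mathfrak{D}_\epsilon^N$ through differentiations and algebraic combinations of $\mathfrak{D}_\epsilon^{N-1}$ with the smooth maps $f_\epsilon,g_\epsilon$, so smoothness propagates; in particular $D_y\mathfrak{D}_\epsilon^N(x,y)$ is a continuous function of $(\epsilon,x,y)$ valued in bounded linear maps $Y\to Y$. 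The base case $N=1$ is then immediate: $D_y\mathfrak{D}_\epsilon^1=D_yf_\epsilon$, and along $\Gamma_0$ we have $f_0(x,y_0^*(x))=0$, so the defining condition \eqref{FS_condition} of a fast-slow system makes $D_yf_0(x,y_0^*(x))$ invertible. Since the invertible bounded operators form an open set on which inversion is continuous, $D_yf_\epsilon(x,y)$ remains invertible with a uniformly bounded (``$O(1)$'') inverse throughout a neighborhood of $\Gamma_0$.

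For the inductive step I would assume $D_y\mathfrak{D}_\epsilon^{N-1}$ is invertible with $O(1)$ inverse near $\Gamma_0$ and differentiate the identity of Lemma \ref{recursive_derivative_formula} in $y$. Applying the chain and Leibniz rules gives, schematically,
\begin{align}
D_y\mathfrak{D}_\epsilon^N(x,y)[\delta y] &= \epsilon\,\big(\text{terms built from }D_x\mathfrak{D}_\epsilon^{N-1},\,g_\epsilon,\,D_yg_\epsilon\big)\nonumber\\
&\quad + D_y^2\mathfrak{D}_\epsilon^{N-1}(x,y)\big[\delta y,\,f_\epsilon(x,y)\big]\nonumber\\
&\quad + D_y\mathfrak{D}_\epsilon^{N-1}(x,y)\big[D_yf_\epsilon(x,y)[\delta y]\big].
\end{align}
Evaluating on $\Gamma_0$, the first line vanishes because of the explicit prefactor $\epsilon$, and the second line vanishes because $f_0(x,y_0^*(x))=0$; only the last line survives, so
\begin{equation}
D_y\mathfrak{D}_0^N(x,y_0^*(x)) = D_y\mathfrak{D}_0^{N-1}(x,y_0^*(x))\circ D_yf_0(x,y_0^*(x)).
\end{equation}
By the inductive hypothesis together with the base case, this is a composition of two bounded invertible operators with uniformly bounded inverses, hence itself invertible with $O(1)$ inverse (its inverse being $[D_yf_0]^{-1}\circ[D_y\mathfrak{D}_0^{N-1}]^{-1}$). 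Passing from $\Gamma_0$ to a neighborhood of $\Gamma_0$ is then done exactly as in the base case, using continuity of $D_y\mathfrak{D}_\epsilon^N$ and openness of the invertible operators; since $N$ is fixed and finite, only finitely many shrinkings of the neighborhood occur, which is harmless.

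The only genuinely delicate point --- the place I expect to have to be careful --- is the precise meaning of the ``$O(1)$'' claim, i.e.\ the uniformity of the bound on $[D_y\mathfrak{D}_\epsilon^N]^{-1}$, because $x$ ranges over a possibly infinite-dimensional Banach space $X$ and $\Gamma_0$ is noncompact. The clean way to dispatch this is to read the conclusion as a statement about tube neighborhoods of bounded (or precompact) portions of $\Gamma_0$, or equivalently to build into the fast-slow hypothesis a uniform-in-$x$ bound $\|[D_yf_0(x,y_0^*(x))]^{-1}\|\le C$; under either convention the continuity/openness argument above yields a uniform bound on the new neighborhood and the induction closes cleanly. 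Everything else in the argument is routine chain-rule bookkeeping, with the single structural input being the vanishing of $f_0$ on $\Gamma_0$.
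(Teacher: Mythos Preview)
Your proof is correct and follows essentially the same inductive argument as the paper: both establish the base case via \eqref{FS_condition} and openness of invertibles, then use Lemma~\ref{recursive_derivative_formula} to obtain the key identity $D_y\mathfrak{D}_0^N(x,y_0^*)=D_y\mathfrak{D}_0^{N-1}(x,y_0^*)\circ D_yf_0(x,y_0^*)$ on $\Gamma_0$, concluding by composition of invertibles and openness. Your version is somewhat more explicit (you write out all three Leibniz terms and explain why two vanish) and you flag the uniformity-in-$x$ issue underlying the ``$O(1)$'' claim, which the paper's proof leaves implicit; this is a fair point but does not change the structure of the argument.
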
  

\begin{proof}
First consider $N=1$. By property \eqref{FS_condition} $D_y\mathfrak{D}_0^1(x,y_0^*(x))$ is invertible for each $x$. Because the space of invertible operators is open, $D_y\mathfrak{D}_\epsilon^1(x,y)$ must also be invertible for $(\epsilon,x,y)$ sufficiently close to $(0,x,y_0^*(x))$. This proves the non-degeneracy property for $N=1$.

Now suppose that the non-degeneracy property is satisfied for $N = m-1$. We will prove that it must also be satisfied for $N = m$, and therefore by induction that it is satisfied for all $N\geq 1$. By Lemma \ref{recursive_derivative_formula} we may express the function $\mathfrak{D}_\epsilon^{m}$ in terms of $\mathfrak{D}_\epsilon^{m-1}$ as
\begin{align}
\mathfrak{D}_\epsilon^m(x,y) = &\epsilon\,D_x\mathfrak{D}_\epsilon^{m-1}(x,y)[g_\epsilon(x,y)]\nonumber\\
 &+ D_y\mathfrak{D}_\epsilon^{m-1}(x,y)[f_\epsilon(x,y)].\label{m_eqn}
\end{align}
Therefore the partial derivative $D_y\mathfrak{D}_\epsilon^m(x,y)$ evaluated for $(\epsilon,x,y)=(0,x,y_0^*(x))\in\Gamma_0$ must be given by
\begin{align}
D_y\mathfrak{D}_0^m(x,y_0^*) = D_y\mathfrak{D}_0^{m-1}(x,y_0^*)[D_yf_0(x,y_0^*)].
\end{align}
This formula expresses $D_y\mathfrak{D}_0^m(x,y_0^*)$ as the composition of the operator $D_y\mathfrak{D}_0^{m-1}(x,y_0^*)$, which is invertible by the inductive hypothesis, and the operator $D_yf_0(x,y_0^*)$, which is invertible by property \eqref{FS_condition}. Because the composition of invertible operators is also invertible, we conclude that $D_y\mathfrak{D}_0^m(x,y_0^*)$ itself is invertible. The invertibility of $D_y\mathfrak{D}_\epsilon^m(x,y)$ for $(\epsilon,x,y)$ \emph{near} $\Gamma_0$ now follows from the openness of the space of invertible linear operators, just as in the case $N=1$.
\end{proof}

Next we will justify the implicit claim made in the statement of Theorem \ref{zdp} that there is a unique solution $\widetilde{y}^N_\epsilon$ of $\mathfrak{D}^N_\epsilon = 0$ that is close to the limiting slow manifold $y_0^*$.

\begin{lemma}\label{uniqueness_lemma}
For each fixed $x$ there exists a unique solution $y = \widetilde{y}_\epsilon^N$ of the algebraic equation 
\begin{align}
\mathfrak{D}_\epsilon^N(x,y) = 0\label{def_eqs}
\end{align}
that satisfies $\widetilde{y}_0^N = y_0^*$.
\end{lemma}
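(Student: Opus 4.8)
The plan is to read Eq.~\eqref{def_eqs} as an implicit equation for $y$ in terms of $\epsilon$ (with $x$ held fixed as a parameter) and to solve it by the implicit function theorem about the base solution $(\epsilon,y)=(0,y_0^*(x))$. The two hypotheses one needs for this were essentially pre-assembled in the preceding two Lemmas: Lemma~\ref{recursive_derivative_formula} lets one compute $\mathfrak{D}_\epsilon^N$ recursively (and shows it is a genuine smooth function of $(\epsilon,x,y)$, being a polynomial expression in $f_\epsilon$, $g_\epsilon$, $\epsilon$ and their derivatives), while Lemma~\ref{invertibility_lemma} supplies the invertibility of the relevant partial derivative. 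Since we only need existence and uniqueness here, the $O(1)$ bound on the inverse from Lemma~\ref{invertibility_lemma} is not used; plain invertibility at the base point suffices.

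The one genuine verification is that $(\epsilon,y)=(0,y_0^*(x))$ actually solves $\mathfrak{D}_\epsilon^N=0$, i.e.\ that $\mathfrak{D}_0^N(x,y_0^*(x))=0$ for every $N\geq 1$. I would prove this by induction on $N$ using Lemma~\ref{recursive_derivative_formula}. For $N=1$ we have $\mathfrak{D}_0^1(x,y)=f_0(x,y)$, which vanishes at $y=y_0^*(x)$ by the defining relation \eqref{limit_sm_eqn}. For the inductive step, setting $\epsilon=0$ in Lemma~\ref{recursive_derivative_formula} gives $\mathfrak{D}_0^N(x,y)=D_y\mathfrak{D}_0^{N-1}(x,y)[f_0(x,y)]$; evaluating at $y=y_0^*(x)$ and using $f_0(x,y_0^*(x))=0$ yields $\mathfrak{D}_0^N(x,y_0^*(x))=D_y\mathfrak{D}_0^{N-1}(x,y_0^*(x))[0]=0$. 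Thus the limiting slow manifold lies inside the zero set of every higher-derivative function.

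With the base solution identified, I would apply the implicit function theorem (valid in the Banach-space setting of Section~\ref{FS_systems_sec}) to the smooth map $(\epsilon,y)\mapsto \mathfrak{D}_\epsilon^N(x,y)$: it vanishes at $(0,y_0^*(x))$, and its $y$-derivative there, $D_y\mathfrak{D}_0^N(x,y_0^*(x))$, is invertible by Lemma~\ref{invertibility_lemma}. The theorem then produces, for $\epsilon$ in a neighborhood of $0$, a unique smooth $\widetilde{y}_\epsilon^N(x)$ with $\widetilde{y}_0^N(x)=y_0^*(x)$ and $\mathfrak{D}_\epsilon^N(x,\widetilde{y}_\epsilon^N(x))=0$. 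The uniqueness clause of the implicit function theorem --- uniqueness among solutions near $y_0^*(x)$ --- is precisely the sense in which $\widetilde{y}_\epsilon^N$ is ``the'' solution with $\widetilde{y}_0^N=y_0^*$ claimed in the Lemma, and smooth dependence on $x$ follows by carrying $x$ along as an extra parameter in the same application. No step presents a serious obstacle; the only one with any content is the inductive check $\mathfrak{D}_0^N(x,y_0^*(x))=0$, and that is immediate once Lemma~\ref{recursive_derivative_formula} is in hand.
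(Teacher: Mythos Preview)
Your proposal is correct and follows essentially the same route as the paper: verify that $(\epsilon,y)=(0,y_0^*(x))$ solves $\mathfrak{D}_\epsilon^N=0$ via the recursive formula of Lemma~\ref{recursive_derivative_formula} together with $f_0(x,y_0^*(x))=0$, invoke the invertibility of $D_y\mathfrak{D}_0^N(x,y_0^*(x))$ from Lemma~\ref{invertibility_lemma}, and apply the implicit function theorem. The only cosmetic difference is that the paper treats $(\epsilon,x)$ jointly as parameters and then patches the local solutions into a global one, whereas you fix $x$ and carry it as an extra parameter; also, your induction on $N$ for the base-point check is slightly more than needed, since a single application of Lemma~\ref{recursive_derivative_formula} at $\epsilon=0$ already gives $\mathfrak{D}_0^N(x,y_0^*(x))=D_y\mathfrak{D}_0^{N-1}(x,y_0^*(x))[f_0(x,y_0^*(x))]=0$ without any inductive hypothesis on $\mathfrak{D}_0^{N-1}$.
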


\begin{proof}
The proof is an application of the implicit function theorem. Define the three-variable function $\mathfrak{D}^N(\epsilon,x,y) = \mathfrak{D}_\epsilon^N(x,y)$. By Lemma \ref{recursive_derivative_formula}, this function may also be written
\begin{align}
\mathfrak{D}^N(\epsilon,x,y)=&\epsilon\,D_x\mathfrak{D}_\epsilon^{N-1}(x,y)[g_\epsilon(x,y)]\nonumber\\
 &+ D_y\mathfrak{D}_\epsilon^{N-1}(x,y)[f_\epsilon(x,y)].
\end{align}
For each $x_0$ the value of $\mathfrak{D}^N(0,x_0,y_0^*(x_0))$ is therefore
\begin{align}
\mathfrak{D}^N(0,x_0,y_0^*(x_0)) = D_y\mathfrak{D}_0^{N-1}(x_0,y_0^*)[f_0(x_0,y_0^*)]=0,
\end{align}
where we have used the definition of the limiting slow manifold. In addition the partial derivative $D_y\mathfrak{D}^N(0,x_0,y_0^*(x_0))$ is given by
\begin{align}
D_y\mathfrak{D}^N(0,x_0,y_0^*(x_0)) =D_y\mathfrak{D}_\epsilon^{N-1}(x_0,y_0^*)[D_yf_\epsilon(x_0,y_0^*)],
\end{align}
which is invertible by Lemma \ref{invertibility_lemma}. The implicit function theorem therefore says that there is a neighborhood $W$ of $(0,x_0)$ and a unique function $\widetilde{y}^N:W\rightarrow Y$ defined on that neighborhood such that $\widetilde{y}^N(0,x_0) = y_0^*(x_0)$ and
\begin{align}
\mathfrak{D}^N(\epsilon,x,\widetilde{y}^N(\epsilon,x)) = 0
\end{align}
for each $(\epsilon,x)\in W$.
If two of these neighborhoods intersect nontrivially the functions $\widetilde{y}^N(\epsilon,x)$ must agree on the overlaps. Therefore the locally-defined $\widetilde{y}^N(\epsilon,x)$ stitch together to globally define a unique function
\begin{align}
\widetilde{y}^N_\epsilon(x) \equiv \widetilde{y}^N(\epsilon,x),
\end{align}
that satisfies $\mathfrak{D}_\epsilon^N=0$ and $\widetilde{y}^N_0 = y_0^*$.
\end{proof}

As a final step in preparation for the proof of Theorem \ref{zdp} we will establish the basic relationship between the formal slow manifold $y_\epsilon^*$ and the derivative functions $\mathfrak{D}_\epsilon^N(x,y)$.
\begin{lemma}\label{fsm_zdp}
If $y_\epsilon^*$ is the formal slow manifold associated with a fast-slow system then 
\begin{align}
\mathfrak{D}_\epsilon^N(x,y_\epsilon^*(x)) = O(\epsilon^N),\label{fsm_basic_estimate_zdp}
\end{align}
where $\mathfrak{D}_\epsilon^N(x,y)$ is defined in Eq.\,\eqref{frakD_def}.
\end{lemma}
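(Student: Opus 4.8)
The plan is to prove the estimate by induction on $N$, working entirely in the ring of formal power series in $\epsilon$ with coefficients that are smooth functions of $x$. Write $\phi_\epsilon^N(x)\equiv\mathfrak{D}_\epsilon^N(x,y_\epsilon^*(x))$, understood as the formal composition of the power series $\mathfrak{D}_\epsilon^N$ and $y_\epsilon^*$; recall that, for such a series, ``$O(\epsilon^m)$'' means precisely that the coefficients of $\epsilon^0,\dots,\epsilon^{m-1}$ vanish, so that Fr\'echet differentiation in $x$, acting coefficientwise, preserves the property of being $O(\epsilon^m)$. The goal is then to show $\phi_\epsilon^N = O(\epsilon^N)$ for every $N\geq 1$. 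For the base case $N=1$ one has $\phi_\epsilon^1(x)=\mathfrak{D}_\epsilon^1(x,y_\epsilon^*(x))=f_\epsilon(x,y_\epsilon^*(x))$, and the invariance equation \eqref{scaled_invariance_eq_def} defining the formal slow manifold gives $f_\epsilon(x,y_\epsilon^*(x))=\epsilon\,Dy_\epsilon^*(x)[g_\epsilon(x,y_\epsilon^*(x))]$, which is manifestly $O(\epsilon)$.

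The inductive step reduces to establishing the recursion
\begin{align}
\phi_\epsilon^N(x)=\epsilon\,D\phi_\epsilon^{N-1}(x)[g_\epsilon(x,y_\epsilon^*(x))].\label{zdp_recursion_plan}
\end{align}
To derive \eqref{zdp_recursion_plan} I would start from Lemma \ref{recursive_derivative_formula},
\begin{align}
\mathfrak{D}_\epsilon^N(x,y)=\epsilon\,D_x\mathfrak{D}_\epsilon^{N-1}(x,y)[g_\epsilon(x,y)]+D_y\mathfrak{D}_\epsilon^{N-1}(x,y)[f_\epsilon(x,y)],\nonumber
\end{align}
evaluate at $y=y_\epsilon^*(x)$, and use the invariance equation to replace $f_\epsilon(x,y_\epsilon^*(x))$ by $\epsilon\,Dy_\epsilon^*(x)[g_\epsilon(x,y_\epsilon^*(x))]$ in the second term. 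After factoring out $\epsilon$ one is left with $\bigl(D_x\mathfrak{D}_\epsilon^{N-1}+D_y\mathfrak{D}_\epsilon^{N-1}\circ Dy_\epsilon^*\bigr)$, evaluated at $(x,y_\epsilon^*(x))$, acting on $g_\epsilon(x,y_\epsilon^*(x))$; but by the chain rule this operator is exactly $D\phi_\epsilon^{N-1}(x)$, the total $x$-derivative of the restriction $x\mapsto\mathfrak{D}_\epsilon^{N-1}(x,y_\epsilon^*(x))$, which gives \eqref{zdp_recursion_plan}. Granting the inductive hypothesis $\phi_\epsilon^{N-1}=O(\epsilon^{N-1})$, coefficientwise differentiation yields $D\phi_\epsilon^{N-1}(x)=O(\epsilon^{N-1})$; since $g_\epsilon(x,y_\epsilon^*(x))=O(1)$, the recursion \eqref{zdp_recursion_plan} forces $\phi_\epsilon^N(x)=\epsilon\cdot O(\epsilon^{N-1})=O(\epsilon^N)$, closing the induction.

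The one place that needs care — and the main, if modest, obstacle — is the bookkeeping in the inductive step: one must scrupulously distinguish the two-variable partial derivatives $D_x\mathfrak{D}_\epsilon^{N-1}$ and $D_y\mathfrak{D}_\epsilon^{N-1}$ from the total derivative $D\phi_\epsilon^{N-1}$ of the composite function, and it is precisely the invariance equation that supplies the factor $\epsilon$ needed to make the $D_y\mathfrak{D}_\epsilon^{N-1}[f_\epsilon]$ term coalesce with the $\epsilon\,D_x\mathfrak{D}_\epsilon^{N-1}[g_\epsilon]$ term into the single chain-rule expression in \eqref{zdp_recursion_plan}. The remaining ingredients — that $x$-differentiation does not lower the $\epsilon$-order of a formal series, and that $g_\epsilon(x,y_\epsilon^*(x))$ is $O(1)$ — are routine. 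A slicker but heuristically equivalent derivation of \eqref{zdp_recursion_plan} is to observe that along a formal trajectory confined to the slow manifold one has $dx/d\tau=\epsilon\,g_\epsilon(x,y_\epsilon^*(x))$ and $\mathfrak{D}_\epsilon^{N-1}(x,y_\epsilon^*(x))=d^{N-1}y/d\tau^{N-1}$, so that $\phi_\epsilon^N=\frac{d}{d\tau}\phi_\epsilon^{N-1}=D\phi_\epsilon^{N-1}[dx/d\tau]$ reproduces \eqref{zdp_recursion_plan} at once; I would nonetheless present the version built on Lemma \ref{recursive_derivative_formula} to keep the argument squarely inside the formal power series framework used throughout.
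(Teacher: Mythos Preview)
Your proposal is correct and follows essentially the same argument as the paper: an induction on $N$ with the base case handled by the invariance equation, and the inductive step obtained by evaluating the recursion of Lemma~\ref{recursive_derivative_formula} at $y=y_\epsilon^*(x)$, replacing $f_\epsilon(x,y_\epsilon^*)$ via the invariance equation, and recognizing the resulting expression as $\epsilon$ times the total $x$-derivative of the restricted function (the paper calls this $\alpha_\epsilon^{m-1}$, your $\phi_\epsilon^{N-1}$). Your added remarks on why coefficientwise differentiation preserves $O(\epsilon^m)$-order make explicit a point the paper leaves implicit, but the core route is the same.
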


\begin{proof}
This will be another proof by induction. Because the earlier discussion in the text already establishes the result \eqref{fsm_basic_estimate_zdp} for $N=1$, all we must do is prove \eqref{fsm_basic_estimate_zdp} for $N = m$ assuming \eqref{fsm_basic_estimate_zdp} is true when $N=m-1$.

As in the proof of the Lemma \eqref{invertibility_lemma} we begin by expressing $\mathfrak{D}_\epsilon^m(x,y)$ in terms of $\mathfrak{D}_\epsilon^{m-1}(x,y)$ using Eq.\,\eqref{m_eqn}. We then substitute $y = y_\epsilon^*$, use the invariance equation \eqref{scaled_invariance_eq_def}, and apply the chain rule as follows.
\begin{align}
\mathfrak{D}_\epsilon^m(x,y_\epsilon^*) = &\epsilon\,D_x\mathfrak{D}_\epsilon^{m-1}(x,y)[g_\epsilon(x,y_\epsilon^*)]\nonumber\\
 &+ D_y\mathfrak{D}_\epsilon^{m-1}(x,y_\epsilon^*)[f_\epsilon(x,y_\epsilon^*)]\nonumber\\
(\text{by \eqref{scaled_invariance_eq_def}}) =&\epsilon\,D_x\mathfrak{D}_\epsilon^{m-1}(x,y)[g_\epsilon(x,y_\epsilon^*)]\nonumber\\
 &+ \epsilon\,D_y\mathfrak{D}_\epsilon^{m-1}(x,y_\epsilon^*)[Dy_0^*(x)[g_\epsilon(x,y_\epsilon^*)]]\nonumber\\
 (\text{by \eqref{formal_chain_rule}})=& \epsilon\, D\alpha_\epsilon^{m-1}(x)[g_\epsilon(x,y_\epsilon^*)],\label{estimate_formal_zdp}
\end{align}
where
\begin{align}
\alpha_\epsilon^{m-1}(x) = \mathfrak{D}_\epsilon^{m-1}(x,y_\epsilon^*(x)).
\end{align}
By the inductive hypothesis, the function $\alpha_\epsilon^{m-1}(x) = O(\epsilon^{m-1})$. Substituting this estimate into Eq.\,\eqref{estimate_formal_zdp} therefore gives
\begin{align}
\mathfrak{D}_\epsilon^m(x,y_\epsilon^*)  = \epsilon\,O(\epsilon^{m-1}) = O(\epsilon^m).
\end{align}
\end{proof}

The proof of Theorem \ref{zdp} now proceeds as follows.
\begin{proof}[proof of Theorem \ref{zdp}]
Set $\delta y = y_\epsilon^* - \widetilde{y}_\epsilon^N$. By Lemma \ref{fsm_zdp} we have the simple estimate
\begin{align}
\mathfrak{D}_\epsilon^N(x,y_\epsilon^*) = & \mathfrak{D}_\epsilon^N(x,\widetilde{y}_\epsilon^N + \delta y)\nonumber\\
=&\mathfrak{D}_\epsilon^N(x,\widetilde{y}_\epsilon^N) \nonumber\\
&+ \int_0^1 D_y \mathfrak{D}_\epsilon^N(x,\widetilde{y}_\epsilon^N +\lambda \delta y)[\delta y]\,d\lambda\nonumber\\
=&\bigg(\int_0^1 D_y \mathfrak{D}_\epsilon^N(x,\widetilde{y}_\epsilon^N +\lambda \delta y)\,d\lambda\bigg)[\delta y]\nonumber\\
=&O(\epsilon^N).
\end{align}
Therefore the proof will be complete if we can show that the linear operator
\begin{align}
L_\epsilon = \int_0^1 D_y \mathfrak{D}_\epsilon^N(x,\widetilde{y}_\epsilon +\lambda \delta y)\,d\lambda
\end{align}
is invertible with $O(1)$ inverse. (Note that $L_\epsilon$ must be interpreted as a formal power series whose coefficients are linear operators.)

Lemma \ref{uniqueness_lemma} demonstrates that $\widetilde{y}^N_\epsilon = y_0^* + O(\epsilon)$ and that $\delta y = O(\epsilon)$. Therefore when $\epsilon = 0$, the operator $L_\epsilon$ is given by
\begin{align}
L_0 = D_y \mathfrak{D}_0^N(x,y_0^* ),
\end{align}
which is invertible by Lemma \ref{invertibility_lemma}. It follows that if we write $L_\epsilon = L_0 + \epsilon\,\delta L_\epsilon$, where $\delta L_\epsilon = O(1)$ accounts for all of the higher-order terms in the series $L_\epsilon$, we may write the formal inverse (i.e. the inverse to all orders in $\epsilon$) of $L_\epsilon$ as
\begin{align}
(L_\epsilon)^{-1} = L_0^{-1}\bigg(1 - \epsilon\,[L_0^{-1}\delta L_\epsilon]+\epsilon^2[L_0^{-1}\delta L_\epsilon]^2 + O(\epsilon^3)\bigg),
\end{align}
which is clearly $O(1)$.
\end{proof}

\section{Interpretation of Formal Slow Manifolds\label{interpretation_SM}}
As explained in Section \ref{basic_SM_theory} in the context of fast-slow systems, every slow manifold of order $N$ is given as the graph of a function $\widetilde{y}_\epsilon$ that approximately solves the invariance equation. Since true solutions of the invariance equation give rise to invariant manifolds it is natural to wonder if slow manifolds have properties that approximate those of invariant manifolds. This Section aims to study this question with greater precision than was used in Section \ref{basic_SM_theory}. In particular it aims to describe approaches to assessing the \emph{normal stability} of a slow manifold. If a trajectory starts on a slow manifold will it remain close to the slow manifold for a substantial period of time, or will it rapidly wander away? We will discuss both the magnitude of this normal wandering and the timescale over which the wandering remains small. 

This question is interesting from the perspective of plasma physics because it cuts to the heart of an important and often overlooked aspect of reduced plasma models: the reduced model's time of validity. If a reduced model arises as the slow manifold reduction of a fast-slow system then the validity time of that model can be no longer than the timescale for normal stability of the associated slow manifold. Over longer timescales the role of physical effects not captured by the reduced model becomes amplified, possibly leading to a dramatic breakdown of the reduced model's predictions. This breakdown could be signaled from within the reduced model, as when the firehose instability develops in the kinetic MHD model \citep{Schekochihin_2010} or when high-$\bm{k}$ whistler waves develop in Hall MHD \citep{Farmer_2019,Huba_1991}, or may be undetectable based on predictions of the reduced model alone, as in the case of applying averaging to nearly-integrable systems exhibiting Arnold diffusion. \citep{Capinsky_2016} (The action variables in such systems can wander appreciably on timescales that are longer than $1/\epsilon^N$ for each $N$, but the averaged dynamics make no such prediction.)

As a first step we will formulate a general approach to assessing normal stability that mimics the familiar process of analyzing the linear stability of equilibria. In particular we will identify a non-autonomous linear system associated with a given slow manifold whose stability properties are tied to the slow manifold's nonlinear normal stability. Being non-autonomous, this linear system is in general more difficult to analyze than the sort of linear systems used to study the stability of equilibria or periodic orbits. However the goal of this system's analysis is also more ambitious than equilibrium stability analyses --- the aim is to assess the stability of an entire reduced model rather than an individual solution.

Next we discuss two important classes of slow manifolds that arise frequently in applications and for which specialized techniques exist for the analysis of normal stability --- normally-hyperbolic and normally-elliptic slow manifolds. These two classes correspond to slow manifolds that exponentially attract or repel nearby trajectories, and slow manifolds around which nearby trajectories oscillate, respectively. In the normally hyperbolic case we will summarize how Fenichel's geometric singular perturbation theory \citep{Fenichel_1979} establishes the existence of a true invariant manifold whose asymptotic expansion agrees with the formal slow manifold to all orders in perturbation theory. In the normally elliptic case we will show that system trajectories starting on a slow manifold will typically remain nearby on an $O(1)$ time interval. While this interval significantly exceeds the $O(\epsilon)$ timescale characterizing the fast dynamics, it generally cannot be enlarged by decreasing $\epsilon$ due to potential resonances between the normal oscillations. We will also explain how the theory of adiabatic invariance for Hamiltonian system can sometimes be used to significantly improve the normal stability timescale when resonance phenomena only weakly effect the slow manifold.

Our aim in presenting this material is not to provide definitive estimates for the normal stability of slow manifolds in any generality. Instead we hope to establish useful heuristics that may be applied without excessive difficulty by practicing plasma physicists. Because the stability of reduced plasma models is so frequently overlooked, we believe that a working knowledge of these heuristics is just as important as a working knowledge of the more formal aspects of slow manifold reduction theory outlined in Section \ref{basic_SM_theory}. This Section should also be compared with MacKay's discussion of slow manifold stability in \citep{MacKay_2004}.

\subsection{Normal deviations: general case}
Suppose we are given a fast-slow system and a slow manifold $\widetilde{y}_\epsilon$ for that system of order $N$. We say that a solution $(x(t),y(t))$ of this system \emph{starts on the slow manifold} if
\begin{align}
y(0)= \widetilde{y}_\epsilon(x(0)).
\end{align}
If the graph of $\widetilde{y}_\epsilon$ were an invariant manifold then solutions that start on the slow manifold would stay on the slow manifold for all time, i.e. we would have $y(t) = \widetilde{y}_\epsilon(x(t)) $ for all $t\in\mathbb{R}$. However because $\widetilde{y}_\epsilon$ only solves the invariance equation approximately the true value of $y(t)$ must in general differ somewhat from $\widetilde{y}_\epsilon(x(t))$. We would like to estimate the size of this deviation,
\begin{align}
\xi(t) = y(t) - \widetilde{y}_\epsilon(x(t)),\label{xi_introduced}
\end{align}
under the assumption that $\xi(0) = 0$. This is what we will refer to as the \emph{normal stability problem}.

To that end interpret Eq.\,\eqref{xi_introduced} as a change of dependent variables $(x,y)\mapsto (x,\xi)$. In the new variables $(x,\xi)$ the fast-slow system becomes
\begin{align}
\epsilon\,\dot{\xi} =& f_\epsilon(x,\widetilde{y}_\epsilon + \xi) - \epsilon\,D\widetilde{y}_\epsilon(x)[g_\epsilon(x,\widetilde{y}_\epsilon + \xi)]\label{x_xi_one}\\
\dot{x} = & g_\epsilon(x,\widetilde{y}_\epsilon + \xi).\label{x_xi_two}
\end{align}
The most direct approach to assessing the normal stability of the slow manifold $\widetilde{y}_\epsilon$ is to solve Eqs.\,\eqref{x_xi_one}-\eqref{x_xi_two} with the initial condition $(x(0),\xi(0)) = (x,0)$ and then check if the norm $||\xi(t)||$ remains small for $t$ away from $0$, or if it grows when $t$ is sufficiently large. (We will not discuss the choice of norm here; in finite dimensions the choice is often not so important, while in infinite dimensions the choice may be extremely subtle.) Of course such a procedure is overly ambitious because exact solutions of the nonlinear differential equations \eqref{x_xi_one}-\eqref{x_xi_two} usually cannot be computed by hand. However we \emph{do} know that $\widetilde{y}_\epsilon$ approximately solves the invariance equation, which suggests we might be able to make partial progress without knowledge of the exact solution. In fact we will show that our knowledge of $\widetilde{y}_\epsilon$ allows us to put Eqs.\,\eqref{x_xi_one}-\eqref{x_xi_two} into a \emph{normal form} that is often amenable to analysis of $||\xi(t)||$.

As a first step in identifying the desired normal form we recapitulate the sense in which $\widetilde{y}_\epsilon$ approximately solves the invariance equation as follows. Because $\delta y = \widetilde{y}_\epsilon - y_\epsilon^* = O(\epsilon^{N+1})$ the residual of the invariance equation,
\begin{align}
R_\epsilon(x) = f_\epsilon(x,\widetilde{y}_\epsilon) - \epsilon\,D\widetilde{y}_\epsilon(x)[g_\epsilon(x,\widetilde{y}_\epsilon)],
\end{align}
must satisfy
\begin{align}
R_\epsilon(x) &=  f_\epsilon(x,y_\epsilon^*+\delta y) \nonumber\\
&- \epsilon\,D(y_\epsilon^*+\delta y)(x)[g_\epsilon(x,y_\epsilon^*+\delta y)]\nonumber\\
& = f_\epsilon(x,y_\epsilon^*) - \epsilon\,Dy_\epsilon^*(x)[g_\epsilon(x,y_\epsilon^*)]\nonumber\\
& +\bigg(\int_0^1 D_yf_\epsilon(x,y_\epsilon^*+\lambda\delta y)\,d\lambda\bigg)[\delta y]\nonumber\\
& -\epsilon\,D\delta y(x)\left[\int_0^1g_\epsilon(x,y_\epsilon^*+\lambda\delta y)\,d\lambda\right]\nonumber\\
& - \epsilon\,Dy_\epsilon^*(x)\left[\int_0^1D_yg_\epsilon(x,y_\epsilon^*+\lambda\delta y)[\delta y]\,d\lambda\right]\nonumber\\
& = O(\epsilon^{N+1}).
\end{align}
Therefore there must be an $O(1)$ function $r_\epsilon(x)$ such that $R_\epsilon(x) = \epsilon^{N+1}r_\epsilon(x)$. 

Next we use $R_\epsilon = \epsilon^{N+1}r_\epsilon$ to re-write the dynamical equations \eqref{x_xi_one}-\eqref{x_xi_two} for $(x,\xi)$ as
\begin{align}
\epsilon\,\dot{\xi} = & F_\epsilon(x,\xi)[\xi] - \epsilon\,D\widetilde{y}_\epsilon(x)[G_\epsilon(x,\xi)[\xi]] \nonumber\\
&+ \epsilon^{N+1}\,r_\epsilon(x)\\
\dot{x} = & g_\epsilon(x,\widetilde{y}_\epsilon) + G_\epsilon(x,\xi)[\xi],
\end{align}
where the $(x,\xi)$-dependent linear operators $F_\epsilon(x,\xi),G_\epsilon(x,\xi)$ are given by
\begin{align}
F_\epsilon(x,\xi) =& \int_0^1 D_yf_\epsilon(x,\widetilde{y}_\epsilon + \lambda\xi)\,d\lambda\\
G_\epsilon(x,\xi) = & \int_0^1 D_yg_\epsilon(x,\widetilde{y}_\epsilon + \lambda\xi)\,d\lambda.
\end{align}

Finally we use the Taylor identities
\begin{align}
&F_\epsilon(x,\xi) =  D_yf_\epsilon(x,\widetilde{y}_\epsilon)\nonumber\\
 +& \int_0^1\int_0^1\lambda_0\,D^2_yf_\epsilon(x,\widetilde{y}_\epsilon + \lambda_0\lambda_1\xi)[\xi,\xi]\,d\lambda_0\,d\lambda_1\\
 &G_\epsilon(x,\xi) =  D_yG_\epsilon(x,\widetilde{y}_\epsilon)\nonumber\\
 +& \int_0^1\int_0^1\lambda_0\,D^2_yg_\epsilon(x,\widetilde{y}_\epsilon + \lambda_0\lambda_1\xi)[\xi,\xi]\,d\lambda_0\,d\lambda_1,
\end{align}
to write the evolution equations for $(x,\xi)$ in a form that is weakly-nonlinear in $\xi$:
\begin{align}
\epsilon\,\dot{\xi} &= D_yf_\epsilon(x,\widetilde{y}_\epsilon)[\xi] - \epsilon\,D\widetilde{y}_\epsilon(x)[D_yg_\epsilon(x,\widetilde{y}_\epsilon)[\xi]]\nonumber\\
&+\mathcal{N}_\epsilon(x,\xi) + \epsilon^{N+1}\,r_\epsilon(x)\label{nf_one}\\
\dot{x} &= g_\epsilon(x,\widetilde{y}_\epsilon) + D_yg_\epsilon(x,\widetilde{y}_\epsilon)[\xi] + \mathcal{M}_\epsilon(x,\xi). \label{nf_two}
\end{align}
Here $\mathcal{N}_\epsilon(x,\xi) = O(||\xi||^2)$ and $\mathcal{M}_\epsilon(x,\xi)=O(||\xi||^2)$ are each nonlinear functions of $\xi$. This proves the following. (C.f. the discussion of normal forms in \citep{Cox_2003}) 
\begin{proposition}[slow manifold normal form]\label{nf_prop}
Suppose the graph of $\widetilde{y}_\epsilon$ is a slow manifold of order $N$ for a fast-slow system $\epsilon\,\dot{y} = f_\epsilon(x,y)$, $\dot{x} = g_\epsilon(x,y)$. Then there is a change of dependent variables $(x,y)\mapsto (x,\xi)$ that transforms for the fast-slow system into the form
\begin{align}
\epsilon\,\dot{\xi} &= D_yf_\epsilon(x,\widetilde{y}_\epsilon)[\xi] - \epsilon\,D\widetilde{y}_\epsilon(x)[D_yg_\epsilon(x,\widetilde{y}_\epsilon)[\xi]]\nonumber\\
&+\mathcal{N}_\epsilon(x,\xi) + \epsilon^{N+1}\,r_\epsilon(x)\label{nf_thm_one}\\
\dot{x} &= g_\epsilon(x,\widetilde{y}_\epsilon) + D_yg_\epsilon(x,\widetilde{y}_\epsilon)[\xi] + \mathcal{M}_\epsilon(x,\xi),\label{nf_thm_two}
\end{align}
where $\mathcal{N}_\epsilon(x,\xi),\mathcal{M}_\epsilon(x,\xi)$ are nonlinear functions of $\xi$, and $r_\epsilon = O(1)$ as $\epsilon\rightarrow 0$.
\end{proposition}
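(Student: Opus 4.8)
The plan is to prove this by an explicit, $\epsilon$-dependent change of dependent variables followed by repeated use of Taylor's theorem with integral remainder. First I would introduce the new fast variable $\xi = y - \widetilde{y}_\epsilon(x)$ and regard $(x,y)\mapsto (x,\xi)$ as an invertible coordinate change on $X\times Y$ (invertibility being immediate since $\xi$ depends affinely on $y$). Differentiating $\xi$ along solutions and applying the chain rule gives at once
\begin{align}
\epsilon\,\dot{\xi} &= f_\epsilon(x,\widetilde{y}_\epsilon + \xi) - \epsilon\,D\widetilde{y}_\epsilon(x)[g_\epsilon(x,\widetilde{y}_\epsilon + \xi)]\\
\dot{x} &= g_\epsilon(x,\widetilde{y}_\epsilon + \xi),
\end{align}
which is the exact, still-nonlinear starting point.

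The second step is to extract the residual of the invariance equation. Set
\begin{align}
R_\epsilon(x) = f_\epsilon(x,\widetilde{y}_\epsilon(x)) - \epsilon\,D\widetilde{y}_\epsilon(x)[g_\epsilon(x,\widetilde{y}_\epsilon(x))],
\end{align}
so that the $\xi$-equation above is $\epsilon\,\dot{\xi} = R_\epsilon(x) + [\text{terms vanishing at }\xi=0]$. Writing $\delta y = \widetilde{y}_\epsilon - y_\epsilon^*$, expanding $R_\epsilon$ about $y_\epsilon^*$ with fundamental-theorem-of-calculus remainders, and cancelling the $O(1)$ part against the scaled invariance equation \eqref{scaled_invariance_eq_def} satisfied to all orders by $y_\epsilon^*$, every surviving term carries a factor of $\delta y$ or $D\delta y$. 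Since $\widetilde{y}_\epsilon$ is a slow manifold of order $N$ (Definition \ref{sm_defined_properly}), $\delta y = O(\epsilon^{N+1})$ as a formal power series; hence the Taylor coefficients of the honest smooth function $R_\epsilon(x)$ at $\epsilon=0$ vanish through order $N$. Hadamard's lemma (iterated Taylor-with-remainder in $\epsilon$) then produces a smooth $r_\epsilon(x)$ with $r_\epsilon = O(1)$ and $R_\epsilon = \epsilon^{N+1} r_\epsilon$.

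The third step is purely algebraic Taylor bookkeeping in the $\xi$-variable. Writing $f_\epsilon(x,\widetilde{y}_\epsilon+\xi) = f_\epsilon(x,\widetilde{y}_\epsilon) + F_\epsilon(x,\xi)[\xi]$ and $g_\epsilon(x,\widetilde{y}_\epsilon+\xi) = g_\epsilon(x,\widetilde{y}_\epsilon) + G_\epsilon(x,\xi)[\xi]$ with the integral operators $F_\epsilon(x,\xi) = \int_0^1 D_yf_\epsilon(x,\widetilde{y}_\epsilon+\lambda\xi)\,d\lambda$ and likewise for $G_\epsilon$, then splitting each of these once more via a second-order Taylor identity into $D_yf_\epsilon(x,\widetilde{y}_\epsilon)$ (resp. $D_yg_\epsilon(x,\widetilde{y}_\epsilon)$) plus an $O(\|\xi\|^2)$ remainder, and finally collecting all quadratic-and-higher remainders into the functions $\mathcal{N}_\epsilon(x,\xi)$ and $\mathcal{M}_\epsilon(x,\xi)$, reproduces exactly Eqs.\,\eqref{nf_thm_one}-\eqref{nf_thm_two} once $R_\epsilon = \epsilon^{N+1}r_\epsilon$ is inserted for the forcing term.

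I expect the main obstacle to be the second step: passing cleanly from ``the formal power series of the residual vanishes to order $N$'' to ``$R_\epsilon = \epsilon^{N+1}r_\epsilon$ with $r_\epsilon$ a legitimate $O(1)$ smooth function.'' The subtlety is that $y_\epsilon^*$ need not converge — it is a purely formal object — whereas $\widetilde{y}_\epsilon$, and therefore $R_\epsilon$, are honest smooth functions. The safe route is never to evaluate $y_\epsilon^*$ directly: one instead observes that the Taylor coefficients of the smooth function $\epsilon\mapsto R_\epsilon(x)$ at $\epsilon=0$ coincide with the coefficients of the formal residual built from $y_\epsilon^*$ (by uniqueness of Taylor expansion together with the recursive construction in Theorem \ref{SM_eau}), and those vanish through order $N$ by Definition \ref{sm_defined_properly}; Hadamard's lemma then finishes. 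In infinite dimensions one needs only that the maps involved are smooth between Banach spaces so that Taylor-with-remainder and Hadamard's lemma remain valid, which requires no new idea.
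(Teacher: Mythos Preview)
Your proposal is correct and follows essentially the same approach as the paper: the change of variables $\xi = y - \widetilde{y}_\epsilon(x)$, the residual estimate $R_\epsilon = \epsilon^{N+1}r_\epsilon$ obtained by expanding about $y_\epsilon^*$ with fundamental-theorem-of-calculus remainders, and the two-stage Taylor bookkeeping via the integral operators $F_\epsilon,G_\epsilon$ all mirror the paper's argument step by step. Your explicit invocation of Hadamard's lemma and the remark that one should read off the vanishing Taylor coefficients of the honest smooth function $R_\epsilon$ rather than ``evaluate'' the formal series $y_\epsilon^*$ is in fact slightly more careful than the paper, which simply asserts the existence of $r_\epsilon$ after the formal $O(\epsilon^{N+1})$ estimate.
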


The power of Proposition \ref{nf_prop} comes from the weakly-nonlinear evolution equation for $\xi$. This equation illuminates the basic role played by the non-autonomous linear system
\begin{align}
\epsilon\,\dot{\widetilde{\xi}} &= D_yf_\epsilon(x,\widetilde{y}_\epsilon)[\widetilde{\xi}] - \epsilon\,D\widetilde{y}_\epsilon(x)[D_yg_\epsilon(x,\widetilde{y}_\epsilon)[\widetilde{\xi}]],\label{stability_eqn}
\end{align}
where $x = x(t)$ is the $x$-component of a solution $(x(t),\xi(t))$ of Eqs.\,\eqref{nf_thm_one}-\eqref{nf_thm_two}. As the following Theorem will show, establishing stability of the linear system \eqref{stability_eqn} is sometimes sufficient for establishing corresponding nonlinear normal stability estimates for the underlying fast-slow system. In particular it will show that if the linear equation \eqref{stability_eqn} exhibits stability on an $O(1/\epsilon^k)$ timescale then under certain technical hypotheses the corresponding slow manifold will exhibit normal stability on a similar timescale.

\begin{theorem}[a normal stability estimate]\label{nse}
Let $\epsilon\,\dot{y} = f_\epsilon(x,y)$, $\dot{x} = g_\epsilon(x,y)$ be a fast-slow system and suppose the graph of $\widetilde{y}_\epsilon$ is a slow manifold of order $N$. Also suppose that the second derivatives $D^2_yf_\epsilon(x,y), D^2_yg_\epsilon(x,y)$ are bounded uniformly in $(\epsilon,x,y)$, and the functions $\widetilde{y}_\epsilon(x),D\widetilde{y}_\epsilon(x),r_\epsilon(x)$ are bounded uniformly in $(\epsilon,x)$. 

Let $(x(t),y(t))$ be a solution of this fast-slow system with initial condition $(x,\widetilde{y}_\epsilon(x))$. If there are some $\epsilon$-independent constants $C,\nu_0,k\geq0$ with $\frac{N-1}{2}\geq k$ such that all solutions $\widetilde{\xi}(t)$ of the non-autonomous linear system
\begin{align}
\epsilon\,\dot{\widetilde{\xi}} &= D_yf_\epsilon(x,\widetilde{y}_\epsilon)[\widetilde{\xi}] - \epsilon\,D\widetilde{y}_\epsilon(x)[D_yg_\epsilon(x,\widetilde{y}_\epsilon)[\widetilde{\xi}]],\label{stability_eqn_thm}
\end{align}
satisfy 
\begin{align}
||\widetilde{\xi}(t)||\leq C\exp(\epsilon^k \nu_0 |t-t_0|)||\widetilde{\xi}(t_0)||,\label{stability_assumption}
\end{align}
then there are $\epsilon$-independent positive constants $E_0,\alpha_0$ such that 
\begin{align}
||y(T) - \widetilde{y}_\epsilon(x(T))||\leq \epsilon^{\frac{N+1}{2}}E_0
\end{align}
for all $T\in[0,\frac{\alpha_0}{\epsilon^k\nu_0}]$. That is, the solution $(x(t),y(t))$ will stay within $O(\epsilon^{\frac{N+1}{2}})$ of the slow manifold $\widetilde{y}_\epsilon$ on an $O(\epsilon^{-k})$ time interval for all sufficiently large $N$.

\end{theorem}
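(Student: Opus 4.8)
The plan is to estimate the deviation $\xi(t)=y(t)-\widetilde{y}_\epsilon(x(t))$ directly from the normal form \eqref{nf_thm_one}-\eqref{nf_thm_two}, using Duhamel's principle together with the linear stability hypothesis \eqref{stability_assumption}, and a Gr\"onwall-type bootstrap to control the nonlinear terms. Since $\xi(0)=0$ by construction, the only sources driving $\xi$ away from zero are (i) the residual forcing $\epsilon^{N+1}r_\epsilon(x)$ in \eqref{nf_thm_one}, and (ii) the nonlinearity $\mathcal{N}_\epsilon(x,\xi)=O(\|\xi\|^2)$. First I would write \eqref{nf_thm_one} as $\epsilon\,\dot\xi = A_\epsilon(t)[\xi] + \mathcal{N}_\epsilon(x,\xi) + \epsilon^{N+1}r_\epsilon(x)$, where $A_\epsilon(t)[\cdot] = D_yf_\epsilon(x(t),\widetilde{y}_\epsilon)[\cdot] - \epsilon\,D\widetilde{y}_\epsilon(x(t))[D_yg_\epsilon(x(t),\widetilde{y}_\epsilon)[\cdot]]$ is exactly the generator of the linear system \eqref{stability_eqn_thm}. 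Let $\Phi_\epsilon(t,s)$ denote its propagator; the hypothesis \eqref{stability_assumption} says precisely $\|\Phi_\epsilon(t,s)\| \le C\exp(\epsilon^k\nu_0|t-s|)$. Duhamel gives
\begin{align}
\xi(t) = \frac{1}{\epsilon}\int_0^t \Phi_\epsilon(t,s)\Big[\mathcal{N}_\epsilon(x(s),\xi(s)) + \epsilon^{N+1}r_\epsilon(x(s))\Big]\,ds.
\end{align}

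Next I would run the bootstrap. Assume, on a maximal interval $[0,T^\ast]$, that $\|\xi(t)\|\le \epsilon^{(N+1)/2}E_0$ for a constant $E_0$ to be chosen; under the uniform bounds hypothesis, $\mathcal{N}_\epsilon(x,\xi) = O(\|\xi\|^2)$ means $\|\mathcal{N}_\epsilon(x(s),\xi(s))\| \le K\|\xi(s)\|^2 \le K\,\epsilon^{(N+1)/2}E_0\,\|\xi(s)\|$ for a uniform constant $K$, and $\|r_\epsilon(x(s))\|\le M$ uniformly. Inserting these into the Duhamel formula and using $\|\Phi_\epsilon(t,s)\|\le C e^{\epsilon^k\nu_0(t-s)}$ yields, for $t\le T^\ast$,
\begin{align}
\|\xi(t)\| \le \frac{C}{\epsilon}\int_0^t e^{\epsilon^k\nu_0(t-s)}\Big(K\epsilon^{(N+1)/2}E_0\|\xi(s)\| + M\epsilon^{N+1}\Big)\,ds.
\end{align}
The inhomogeneous term integrates to at most $\dfrac{CM}{\epsilon}\cdot\dfrac{e^{\epsilon^k\nu_0 t}-1}{\epsilon^k\nu_0}\cdot\epsilon^{N+1}$; on $t\le \alpha_0/(\epsilon^k\nu_0)$ the factor $(e^{\epsilon^k\nu_0 t}-1)/(\epsilon^k\nu_0)$ is $O(\epsilon^{-k})$, so this contribution is $O(\epsilon^{N-2k})$, and since $k\le (N-1)/2$ we have $N-2k\ge 1 > (N+1)/2$ exactly when $N\ge 1$ — wait, here I need $N-2k \ge (N+1)/2$, i.e. $N-1\ge 4k$... so more carefully: with $k\le(N-1)/2$ one gets $N+1-2k\ge (N+3)/2 \ge (N+1)/2$, and after dividing by the extra $\epsilon$ from $1/\epsilon$ the inhomogeneous contribution is $O(\epsilon^{N-2k})$ with $N-2k\ge 1$; combined with the Gr\"onwall exponential factor (which on the stated interval is $O(1)$) this is $\le \tfrac12\epsilon^{(N+1)/2}E_0$ once $E_0$ is chosen large. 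The linear-in-$\|\xi\|$ term gives a Gr\"onwall kernel of size $\tfrac{C K E_0}{\epsilon}\epsilon^{(N+1)/2} = CKE_0\epsilon^{(N-1)/2}\le CKE_0$ (using $N\ge 1$), so its accumulated effect over the $O(\epsilon^{-k})$-length interval is bounded by a multiplicative Gr\"onwall constant $e^{\alpha_0 CKE_0/\nu_0}$ that is $O(1)$; this is what closes the continuity argument and shows $T^\ast$ can be extended to the full interval $[0,\alpha_0/(\epsilon^k\nu_0)]$, giving $\|\xi(T)\|\le \epsilon^{(N+1)/2}E_0$ as claimed.

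Two technical points deserve care and are, I expect, the main obstacles. First, the propagator bound \eqref{stability_assumption} is assumed for \emph{the} non-autonomous system along \emph{a} trajectory $x(t)$, but in the Duhamel estimate I need it along the \emph{actual} perturbed trajectory $x(t)$ solving \eqref{nf_thm_two}; since $\dot x = g_\epsilon(x,\widetilde{y}_\epsilon) + O(\|\xi\|)$ and $\|\xi\|$ is small, one must verify (using the uniform bounds on $\widetilde y_\epsilon$, $D\widetilde y_\epsilon$, and $D^2_y g_\epsilon$) that the perturbed $x$-trajectory stays in the region where \eqref{stability_assumption} is valid and that the coefficient operators $D_yf_\epsilon(x(t),\widetilde y_\epsilon),\,D_yg_\epsilon(x(t),\widetilde y_\epsilon)$ remain bounded; this is precisely why the hypotheses impose uniform bounds on the second derivatives and on $\widetilde y_\epsilon,D\widetilde y_\epsilon,r_\epsilon$. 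Second, the bookkeeping of the two competing $\epsilon$-powers — the gain $\epsilon^{N+1}$ from the residual versus the losses $\epsilon^{-1}$ (from the singular $1/\epsilon$) and $\epsilon^{-k}$ (from integrating the exponential over a long interval) — must be tracked carefully, and it is exactly the inequality $k\le \frac{N-1}{2}$ in the hypothesis that guarantees the net power $\epsilon^{N-2k}$ dominates the target $\epsilon^{(N+1)/2}$; the calibration of the constants $E_0$ and $\alpha_0$ then amounts to the usual "small enough $\alpha_0$, large enough $E_0$" continuity-method balancing, which I would carry out at the end once the estimates above are in place.
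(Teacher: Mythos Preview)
Your approach is sound and will work, but it differs from the paper's proof in one key respect: where you run a bootstrap/continuity argument (assume $\|\xi\|\le \epsilon^{(N+1)/2}E_0$, linearize the quadratic term, then apply the ordinary Gr\"onwall inequality), the paper keeps the nonlinearity quadratic and applies the \emph{Bihari--LaSalle} (nonlinear Gr\"onwall) inequality directly. After the same variation-of-parameters step $\xi(t)=U_{t,0}\zeta(t)$, the paper obtains an integral inequality of the form $\epsilon\|\zeta(T)\|\le A\int_0^T e^{3\epsilon^k\nu_0 t}\|\zeta(t)\|^2\,dt + B\epsilon^{N+1}\int_0^T e^{3\epsilon^k\nu_0 t}\,dt$, bounds $\|\zeta(T)\|$ by the solution $z(T)$ of the corresponding Riccati-type ODE, solves that ODE explicitly (the answer involves $\tan$), and reads off $T^*$ from the blow-up time of the tangent. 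The advantage of the paper's route is that the constants $E_0,\alpha_0$ come out in closed form with no circular dependence; the advantage of yours is that it avoids Bihari--LaSalle entirely and uses only the elementary linear Gr\"onwall lemma, at the cost of having to choose $\alpha_0$ \emph{after} $E_0$ so that the Gr\"onwall amplification factor $\exp(CKE_0\alpha_0/\nu_0)$ stays $O(1)$.

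Two corrections to your bookkeeping. First, the inhomogeneous contribution is $O(\epsilon^{N-k})$, not $O(\epsilon^{N-2k})$: you have $\epsilon^{-1}\cdot O(\epsilon^{-k})\cdot\epsilon^{N+1}=\epsilon^{N-k}$. Fortunately $N-k\ge (N+1)/2$ is \emph{equivalent} to the hypothesis $k\le (N-1)/2$, so the argument still closes with the stated assumption; your detour through ``$N-2k\ge 1$'' reaches the right condition by accident. Second, your first ``technical point'' is a non-issue: the hypothesis \eqref{stability_assumption} is already formulated along the actual solution trajectory $x(t)$ (see the sentence preceding \eqref{stability_eqn} in the text), so no perturbation argument for the propagator is needed.
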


\begin{proof}
According to  Proposition \ref{nf_prop} we may apply the change of dependent variables $(x,y)\mapsto (x,\xi)$, where $\xi = y -\widetilde{y}_\epsilon(x)$ is the normal deviation from the slow manifold, in order to write the fast-slow system in the form \eqref{nf_thm_one}-\eqref{nf_thm_two}. In terms of these dependent variables the initial conditions are $(x(0),\xi(0)) = (x,0)$, and the solution of interest is $(x(t),\xi(t))$.

Let $U_{t,t_0}:Y\rightarrow Y$ be the propagator for the linear system \eqref{stability_eqn_thm} so that
\begin{align}
\epsilon\,\partial_tU_{t,t_0} &= D_yf_\epsilon(x,\widetilde{y}_\epsilon)[U_{t,t_0}]\nonumber\\
& - \epsilon\,D\widetilde{y}_\epsilon(x)[D_yg_\epsilon(x,\widetilde{y}_\epsilon)[U_{t,t_0}]],
\end{align}
and $U_{t_0,t_0}= \text{id}$.
Using $U_{t,t_0}$ we may introduce the variation-of-parameters ansatz $\xi(t) = U_{t,0}\zeta(t)$, where $\zeta(0) = 0$. Because $\dot{\xi} = (\partial_tU_{t,0})\zeta(t) + U_{t,0}\dot{\zeta}(t)$, $\zeta(t)$ must satisfy the nonlinear equation
\begin{align}
\epsilon\,\dot{\zeta} = U_{0,t}\mathcal{N}_\epsilon(x,U_{t,0}\zeta) + \epsilon^{N+1}\,U_{0,t}r_\epsilon(x),
\end{align}
which may also be written as the nonlinear integral equation 
\begin{align}
\epsilon\,\zeta(T) &= \int_0^TU_{0,t}\mathcal{N}_\epsilon(x,U_{t,0}\zeta)\,dt \nonumber\\
&+ \epsilon^{N+1}\int_0^TU_{0,t}r_\epsilon(x)\,dt.\label{integral_eqn_prf}
\end{align}

We may now use the integral equation \eqref{integral_eqn_prf} together with hypothesis \eqref{stability_assumption} to estimate the size of $||\zeta(t)||$ as follows. First we take the norm of each side of \eqref{integral_eqn_prf} and apply the triangle inequality to obtain
\begin{align}
\epsilon\,||\zeta(T)||&\leq \int_0^T ||U_{0,t}\mathcal{N}_\epsilon(x,U_{t,0}\zeta)||\,dt\nonumber\\
&+ \epsilon^{N+1}\int_0^T||U_{0,t}r_\epsilon(x)||\,dt.\label{intermediate_inequality}
\end{align}
Next we use hypothesis \eqref{stability_assumption} and the uniform bounds on $D^2_yf_\epsilon,D^2_yg_\epsilon,r_\epsilon$ to establish the estimates
\begin{align}
 ||U_{0,t}\mathcal{N}_\epsilon(x,U_{t,0}\zeta)||&\leq C \,\exp\left(\epsilon^k\,\nu_0t\right) || \mathcal{N}_\epsilon(x,U_{t,0}\zeta)||\nonumber\\
 &\leq C \,N_0 \,\exp\left(\epsilon^k\,\nu_0t\right)\,||U_{t,0}\zeta(t)||^2\nonumber\\
 &\leq C^3\,N_0\,\exp\left(3\,\epsilon^k\,\nu_0\,t\right)\,|| \zeta(t) ||^2
\end{align}
and 
\begin{align}
|| U_{0,t}r_\epsilon(x)||\leq C\,\exp\left(3\,\epsilon^k\,\nu_0\,t\right)\,R_0
\end{align}
for $t>0$. Here $R_0\geq0$ is a uniform bound on $||r_\epsilon(x)||$ and  $N_0\geq0$ may be expressed in terms of the uniform bounds $||D^2_yf_\epsilon(x.y)||\leq F_0$, $||D^2_yg_\epsilon(x,y)||\leq G_0$, and $||D\widetilde{y}_\epsilon(x)||\leq Y_0$ as
\begin{align}
N_0 = F_0 + \epsilon\,Y_0\,G_0.
\end{align}
Combining these estimates with the inequality \eqref{intermediate_inequality} then implies
\begin{align}
\epsilon\,||\zeta(T)||&\leq C^3\,N_0\,\int_0^T \exp(3\,\epsilon^k\,\nu_0\,t) ||\zeta(t)||^2\,dt\nonumber\\
& + \epsilon^{N+1}\,C\,R_0\int_0^T\exp(3\,\epsilon^k\,\nu_0\,t)\,dt.\label{basic_integral_inequality}
\end{align}

We now use the Bihari-LaSalle inequality to bound $||\zeta(T)||$ by the solution of the integral \emph{equality} corresponding to Eq.\,\eqref{basic_integral_inequality}. Namely, $||\zeta(T)|| \leq z(t)$ where $z(t)$ is the solution of the ordinary differential equation
\begin{align}
\epsilon \dot{z} = \exp(3\,\epsilon^k\,\nu_0\,t) (C^3 \,N_0 z^2 + \epsilon^{N+1}\,C\,R_0),
\end{align}
with initial condition $z(0)=0$. As is readily verified, an explicit expression for $z(T)$ is given by
\begin{align}
z(T) &= \frac{1}{C}\sqrt{\frac{R_0}{N_0}}\epsilon^{\frac{N+1}{2}}\nonumber\\
&\times\text{tan}\left(\frac{\epsilon^{\frac{N-1}{2}-k}C^2 \sqrt{R_0N_0}}{3\nu_0}[\exp(3\epsilon^k \nu_0 T)-1]\right).
\end{align}
Using $\text{tan}\,x\leq \frac{1}{\pi/2 - x}$ for $x\in[0,\pi/2)$, we therefore obtain the estimate
\begin{align}
||\zeta(T)||\leq \frac{\frac{1}{C}\sqrt{\frac{R_0}{N_0}}\epsilon^{\frac{N+1}{2}}}{\frac{\pi}{2} - \frac{\epsilon^{\frac{N-1}{2}-k}C^2\sqrt{R_0N_0}}{3\nu_0}[\exp(3\epsilon^k\nu_0T)-1]}.\label{zeta_ineq}
\end{align}
Using $||\xi(T)|| = ||U_{T,0}\zeta(T)||\leq C\exp(\epsilon^k\nu_0T)||\zeta(T)||\leq C\exp(3\epsilon^k\nu_0T)||\zeta(T)||$, the inequality \eqref{zeta_ineq} then implies 
\begin{align}
||\xi(T)||\leq \frac{\sqrt{\frac{R_0}{N_0}}\epsilon^{\frac{N+1}{2}}\exp(3\epsilon^k\nu_0T)}{\frac{\pi}{2} - \frac{\epsilon^{\frac{N-1}{2}-k}C^2\sqrt{R_0N_0}}{3\nu_0}[\exp(3\epsilon^k\nu_0T)-1]}.\label{xi_ineq}
\end{align}
Because the right-hand-side of Eq.\,\eqref{xi_ineq} is monotonically increasing as a function of $T$ for each $E_0> \frac{2}{\pi}\sqrt{\frac{R_0}{N_0}}$ the norm $||\xi(T)||$ must satisfy the inequality 
\begin{align}
||\xi(T)||\leq \epsilon^{\frac{N+1}{2}} E_0\label{desired_ineq}
\end{align}
for $0\leq T\leq T^*$, where $T^*$ is the solution of the equation
\begin{align}
E_0 = \frac{\sqrt{\frac{R_0}{N_0}}\exp(3\epsilon^k\nu_0T^*)}{\frac{\pi}{2} - \frac{\epsilon^{\frac{N-1}{2}-k}C^2\sqrt{R_0N_0}}{3\nu_0}[\exp(3\epsilon^k\nu_0T^*)-1]}.
\end{align}
Simple algebraic manipulations finally show that when $\frac{N-1}{2}\geq k$ we have $T^*\geq \alpha_0/(\nu_0\epsilon^k)$, where
\begin{align}
\alpha_0 = \frac{1}{3}\text{ln}\left(\frac{\frac{\pi}{2}E_0}{\sqrt{\frac{R_0}{N_0}}+\frac{C^2 E_0\sqrt{R_0N_0}}{3\nu_0}}\right),
\end{align} which completes the proof.

\end{proof}

\begin{remark}
Note that Theorem \ref{nse} does not require uniform bounds on the \emph{first derivative} $D_yf_\epsilon(x,y)$. Avoiding assumptions on the boundedness of $D_yf_\epsilon(x,y)$ is important in infinite dimensions because in many interesting problems $D_yf_\epsilon$ is an unbounded operator like the Laplacian. As discussed in Ref.\,\citep{Kristiansen_2016}, the boundedness of $D_yf_\epsilon(x,y)$ is also irrelevant to establishing uniform bounds on the remainder term $r(x)$ in the normal form. 

\end{remark}

\begin{example}
Consider the zero-drag Abraham-Lorentz dynamics introduced in Section \ref{zero_drag_sec}. We have already shown that this system admits a fast-slow split (Example \ref{FS_example_hard}) and computed the associated formal slow manifold $y_\epsilon^* = y_0^* +\epsilon\,y_1^* + \dots$ up to and including the $O(\epsilon^2)$ coefficient $y_2^*$. (Example \ref{zero_drag_sm_example}) We will now apply the ideas behind Theorem \ref{nse} in order to provide a rigorous estimate of this system's normal stability properties. We will revisit this question in Section \ref{normal_ellipticity_section}, where will use the theory of adiabatic invariants to refine our estimate.

Let $\widetilde{y}_\epsilon = y_0^* + \epsilon\,y_1^*$ be the first-order slow manifold given by naive power series truncation. Recall that $y_0^* = 0$ and $y_1^*$ was shown in Eq.\,\eqref{curv_drift_slave} to be given by the well-known curvature drift velocity. 
Given a solution $(x(t),y(t))$ of the zero-drag equations with initial condition $(x(0),y(0)) = (x_0,\widetilde{y}_\epsilon^*(x_0))$, we would like to estimate the timescale over which the normal deviation from the first-order slow manifold $\xi(t) = y(t) - \widetilde{y}_\epsilon(x(t)) = (\xi_1(t),\xi_2(2))$ remains small. (Note that initially the deviation is $0$.)

As a first step, we will establish a stability estimate for the linearized normal dynamics defined by Eq.\,\eqref{stability_eqn} with initial condition $\widetilde{\xi}(t_0) = \widetilde{\xi}_0$. Remember that this equation is non-autonomous due to the dependence on $x(t)$ of the linear operators on the right-hand-side. Setting $\widetilde{\xi} = (\widetilde{\xi}_1,\widetilde{\xi}_2)$, we may rewrite \eqref{stability_eqn} in the form
\begin{align}
\epsilon\,\frac{d}{dt}\begin{pmatrix}\widetilde{\xi}_1\\ \widetilde{\xi}_2\end{pmatrix}
=\,\zeta\,|\bm{B}(\bm{x}(t))|\begin{pmatrix}
0 & 1\\
-1 & 0
\end{pmatrix}\begin{pmatrix}
\widetilde{\xi}_1\\
\widetilde{\xi}_2
\end{pmatrix} +\epsilon\, \mathbf{L}_\epsilon(t)\begin{pmatrix}
\widetilde{\xi}_1\\
\widetilde{\xi}_2
\end{pmatrix},\label{example_ls}
\end{align} 
where the time-dependent $2\times2$ matrix $\mathbf{L}_\epsilon(t)$ is given by
\begin{align}
\mathbf{L}_\epsilon(t) = D_yf_1(x,\widetilde{y}) - D\widetilde{y}_\epsilon(x)\cdot D_y g_0(x,\widetilde{y}_\epsilon).
\end{align}
While we leave finding the explicit form of $\mathbf{L}_\epsilon(t)$ as an exercise for the reader, we do note that $\mathbf{L}_\epsilon(t)$ satisfies the following (time-independent) inequality for each two-component vector $\widetilde{\xi}$:
\begin{align}
|\mathbf{L}_\epsilon(t)\widetilde{\xi}|\leq \frac{\sqrt{2E}}{L_1}Q\left(\epsilon\,\frac{\sqrt{2E}}{L_1 B_\text{min}}\right)|\widetilde{\xi}|.\label{example_estimate_L}
\end{align}
Here $E$ is the energy of the solution $(x(t),y(t))$, $L_1^{-1}$ is an upper bound on the matrix norm $||\nabla\bm{b}||$, $B_{\text{min}}$ is the minimum value of the magnetic field strength, and $Q$ is a second-order polynomial with positive coefficients that depend at most on an upper bound for the gradient of the field line curvature  $L_2^{-2}\geq|\nabla(\bm{b}\cdot\nabla\bm{b})|$.
While the linear system \eqref{example_ls} cannot be solved in closed form, we may obtain a useful estimate of the type \eqref{stability_assumption} needed in Theorem \ref{nse} by introducing the variation of parameters ansatz $\widetilde{\xi}(t) = W_{t,t_0}\widetilde{\zeta}(t)$, where the matrix $W_{t,t_0}$ is given by
\begin{align}
W_{t,t_0} = &
\begin{pmatrix}
\cos\varphi(t) & -\sin\varphi(t)\\
\sin\varphi(t) & \cos\varphi(t)\\
\end{pmatrix}\\
\varphi(t) = & - \zeta \int_{t_0}^{t}|\bm{B}(\bm{x}(t))|\,dt.
\end{align}
The significance of the orthogonal matrix $W_{t,t_0}$ is that if it were the case that $\mathbf{L}_\epsilon(t)  =0$, then the new dependent variable $\widetilde{\zeta}(t)$ would be constant in time as a result of the simple identity 
\begin{align}
\partial_tW_{t,t_0} = \zeta |\bm{B}(\bm{x}(t))|\,\begin{pmatrix}
0 & 1\\
-1 & 0
\end{pmatrix}W_{t,t_0}.
\end{align}
Because $\mathbf{L}_\epsilon(t)$ is actually not zero, substituting the variation of parameters ansatz into the linearized normal stability equation \eqref{example_ls} leads to the nontrivial evolution law for $\widetilde{\zeta}$:
\begin{align}
\dot{\widetilde{\zeta}}(t) = W_{t,t_0}^T\mathbf{L}_\epsilon(t)W_{t,t_0}\widetilde{\zeta}(t),
\end{align}
with initial condition $\widetilde{\zeta}(t_0) =\widetilde{\xi}_0$. This evolution law may be equivalently written as the integral equation
\begin{align}
\widetilde{\zeta}(T) = \widetilde{\xi}_0 + \int_{t_0}^{T}W_{t,t_0}^T\mathbf{L}_\epsilon(t)W_{t,t_0}\widetilde{\zeta}(t)\,dt,
\end{align}
for $T\geq t_0$,which implies the integral inequality
\begin{align}
|\widetilde{\zeta}(T)|&\leq |\widetilde{\xi}_0| + \int_{t_0}^{T}|W_{t,t_0}^T\mathbf{L}_\epsilon(t)W_{t,t_0}\widetilde{\zeta}(t)|\,dt\nonumber\\
&\leq |\widetilde{\xi}_0| + \int_{t_0}^{T} |\mathbf{L}_\epsilon(t)W_{t,t_0}\widetilde{\zeta}(t)|\,dt\nonumber\\
&\leq |\widetilde{\xi}_0| + \int_{t_0}^{T} \frac{\sqrt{2E}}{L_1}Q\left(\epsilon\,\frac{\sqrt{2E}}{L_1 B_\text{min}}\right)|W_{t,t_0}\widetilde{\zeta}(t)|\,dt\nonumber\\
&\leq |\widetilde{\xi}_0| + \int_{t_0}^{T} \frac{\sqrt{2E}}{L_1}Q\left(\epsilon\,\frac{\sqrt{2E}}{L_1 B_\text{min}}\right)|\widetilde{\zeta}(t)|\,dt,
\end{align}
where we have applied the triangle inequality, the orthogonality of $W_{t,t_0}$, and the inequality \eqref{example_estimate_L}. We finally invoke Gronwall's inequality and the time reversal symmetry of our argument so far to conclude
\begin{align}
|\widetilde{\xi}(t)| &= |\widetilde{\zeta}(t)|\nonumber\\
&\leq \exp\left(\frac{\sqrt{2E}}{L_1}Q\left(\epsilon\frac{2E}{L_1 \,B_{\text{min}}}\right)|t-t_0|\right)|\widetilde{\xi}(0)|,\label{example_lnse}
\end{align}
which says that this slow manifold is linearly normally stable on a timescale comparable to the particle transit time $\tau_{\text{transit}}\sim L_1/\sqrt{2E}$. Note that this timescale is significantly longer (by a factor of $1/\epsilon$) than the cyclotron timescale.

Now we will use Theorem \ref{nse} to upgrade the linearized normal stability estimate \eqref{example_lnse} into a nonlinear normal stability estimate. In other words, instead of bounding the growth of a general solution $\widetilde{\xi}(t)$ of the linear system \eqref{stability_eqn}, we will now bound the deviation $\xi(t)$ of the specific solution $(x(t),y(t))$ of the nonlinear fast-slow system from the first-order slow manifold $\widetilde{y}_\epsilon$. All that we have to do is check that the hypotheses in the statement of Theorem \eqref{nse} are satisfied. Apparently the parameters $N,k,\nu_0$ are given by 
\begin{align}
N &= 1\,\text{ (order of the slow manifold)}\\
k & = 0\,\text{ (temporal exponent)}\\
\nu_0 & = \frac{\sqrt{2E}}{L_1} Q\left(\epsilon\frac{2E}{L_1 \,B_{\text{min}}}\right)\,\text{ (linear stability timescale)}.
\end{align}
Assuming for simplicity that all derivatives of the magnetic field are uniformly bounded, it is straightforward to verify that the second derivatives $D^2f_\epsilon$, $D^2g_\epsilon$ and the quantities $\widetilde{y}_\epsilon(x),D\widetilde{y}_{\epsilon}(x), r_\epsilon(x)$ are uniformly bounded as well. The Theorem therefore allows us to conclude that any trajectory of the zero-drag equations with $(x(0),y(0)) = (x_0,\widetilde{y}_\epsilon(x))$ will remain within $O(\epsilon)$ of the slow manifold for an $O(1)$ interval of time. A similar argument starting from an order $N>1$ slow manifold leads to the same conclusion but with an $O(\epsilon^{(N+1)/2})$ error bound. Therefore if we would like to generate solutions that stay on the formal slow manifold with $O(\epsilon^M)$ accuracy the Theorem implies that we should initialize those trajectories on a slow manifold of order $2M-1$. Note in particular that using this argument increasing the accuracy of the slow manifold does not generally lead to an improved normal stability timescale. 

\end{example}

\subsection{Normal hyperbolicity}
While the argument based on Gronwall's inequality used in Theorem \ref{nse} may be applied to the slow manifold of any fast-slow system, for certain special types of slow manifolds more specialized techniques are available to study normal stability. In particular there is a powerful theory due to Fenichel \citep{Fenichel_1979} available for analyzing the so-called normally-hyperbolic slow manifolds. Such slow manifolds arise frequently in fast-slow systems that exhibit dissipation, instability, or a combination thereof on the $O(\epsilon)$ timescale. Fenichel's theory is rigorous in general only in finite dimensions, but does suggest strategies for analysis in infinite dimensions as well.

We say that a finite-dimensional fast-slow system's slow manifold is \emph{normally hyperbolic} if for each $x\in X$ the eigenvalues of the derivative $D_yf_0(x,y_0^*(x))$ all have non-zero real part. This terminology is best understood after re-scaling time according to $t = \epsilon\,\tau$ so that the fast-slow system becomes $dy/d\tau = f_\epsilon(x,y)$, $dx/d\tau = \epsilon\,g_\epsilon(x,y)$. In the limit $\epsilon\rightarrow 0$ dynamics on this short timescale are described by the relatively simple system $dy/d\tau = f_0(x,y)$, $dx/d\tau = 0$, which shows that the slow variable $x$ is frozen and that the fast variable $y$ has an equilibrium point $y = y_0^*(x)$ for each $x$. Moreover the linearization of $y$-dynamics about these equilibrium points is described by the family of ordinary differential equations $d\,\delta y/d\tau = D_yf_0(x,y_0^*(x))[\delta y] $. Generally speaking an equilibrium point of a dynamical system whose associated linearized dynamics is characterized by eigenvalues with non-zero real parts is referred to as a hyperbolic fixed point. Apparently the limit of a normally-hyperbolic slow manifold comprises a family of hyperbolic fixed points parameterized by $x\in X$ for the fast-slow system's limiting short timescale dynamics. Therefore dynamics in the vicinity of a normally-hyperbolic slow manifold tend to be rapidly attractive in some directions and rapidly repulsive in others.

Perhaps the most remarkable feature of normally-hyperbolic slow manifolds is their resilience. As is true of all slow manifolds the $\epsilon\rightarrow 0$ limit of a normally-hyperbolic slow manifold is a true invariant set for the limiting short-time dynamics. But while in general the limiting slow manifold only persists as a formal slow manifold $y_\epsilon^*$ when $\epsilon$ moves slightly above zero a normally-hyperbolic slow manifold survives the transition to finite-$\epsilon$ (finite but sufficiently small) as a genuine invariant set. Moreover dynamics on this invariant set are slow and dynamics in the vicinity of the invariant set resemble that of the $\epsilon\rightarrow 0$ limit. One might think that this all means the formal power series $y_\epsilon^*$ converges to a function whose graph is equal to the normally-hyperbolic slow manifold, but this is not correct in general. While a true invariant manifold does exist, its dependence on $\epsilon$ is only $C^\infty$ if the right-hand-side of the fast-slow system is $C^\infty$ in $(x,y,\epsilon)$. (See \citep{Jones_1995} for a detailed discussion of the regularity properties for normally-hyperbolic slow manifolds.) The actual relationship between the invariant set and the formal slow manifold is the following: the Taylor coefficients in $\epsilon$ of the invariant manifold are given precisely by the coefficients in the formal power series $y_\epsilon^*$.

In the special case where the eigenvalues of $Df_0(x,y_0^*(x))$ all have strictly-negative real parts the normal stability argument given in Theorem \ref{nse} may be modified (essentially by allowing for a negative time constant $\nu_0$) to show that this special type of normally-hyperbolic slow manifold will often be normally stable over unbounded time intervals. This suggests the presence of a true invariant set that attracts nearby trajectories. However this type of reasoning does not reveal the essence of what makes normally-hyperbolic slow manifolds special. In fact the Gronwall-type argument is incapable of even suggesting the presence of a true invariant manifold when eigenvalues with positive real parts (i.e. normal instabilities) coexist with eigenvalues with negative real parts. While the details of the proof of persistence of normally-hyperbolic slow manifolds lie beyond the scope of this Review we refer interested readers to \citep{Fenichel_1979} and \citep{Jones_1995} for detailed expositions.

A striking implication of the persistence of normally-hyperbolic slow manifolds in general is the persistence of normally-hyperbolic slow manifolds with \emph{strictly-unstable} normal dynamics, i.e. linearized normal dynamics characterized by eigenvalues of $D_yf_0(x,y_0^*(x))$ with strictly positive real parts. Such slow manifolds are repulsive to nearby trajectories, but nevertheless there is in principle an exact way to choose initial conditions that do not excite the rapidly-growing normal modes. This has interesting implications for models of charged particles interacting with their own field.

In the weakly-relativistic regime seemingly careful derivations of the charged particle equations of motion lead to the Abraham-Lorentz model that we have referred to frequently in this Review. However this model has an apparent physical flaw related to the presence of so-called ``runaway solutions" characterized by a particle's acceleration escaping to infinity. Spohn in \citep{Spohn_2000} recognized that this runaway behavior may be identified with the rapid repulsion of a normally-hyperbolic slow manifold with purely-unstable normal dynamics. This instability appeared explicitly in Example \ref{weak_drag_SM_example}, where we saw that the dynamics normal to the Abraham-Lorentz slow manifold are characterized by the equation $\delta\dot{\bm{a}} = \frac{3}{2}\delta\bm{a}$, which has the single eigenvalue $\lambda = 3/2$ with algebraic multiplicity $3$. Spohn also recognized that Fenichel's theory on persistence of normally-hyperbolic slow manifolds, particularly of the unstable variety, implied the existence of a true invariant set close to the limiting slow manifold $\bm{a}_0^* = \zeta \bm{v}\times\bm{B}(\bm{x})$ that is necessarily free of the bothersome runaway solutions. The physical implication of Spohn's observations is that all of the physical content of the Abraham-Lorentz model is contained in the reduction of that model to its exact slow manifold. This gives a partial rigorous justification of the ``order reduction" technique proposed by Landau and Lifshitz \citep{Landau_fields_1975} for eliminating runaway solutions; to first order in the small parameter $\epsilon$ dynamics on the true Abraham-Lorentz slow manifold are the same as the Landau-Lifshitz equations. It also points to the limitations inherent to the order reduction method -- the exact expression for the equations of motion on the Abraham-Lorentz slow manifold do in fact differ slightly from the Landau-Lifshitz equation and this difference cannot be computed in explicit form in general. It is therefore an interesting challenge to consider practical methods for numerically simulating the ``full" Abraham-Lorentz slow manifold reduction and determining whether there are important effects contained in the full reduction that cannot be captured by low-order truncations like the Landau-Lifshitz equations.

\subsection{Normal ellipticity\label{normal_ellipticity_section}}
We say that a finite-dimensional fast-slow system's slow manifold is \emph{normally elliptic} if for each $x\in X$ the derivative $D_yf_0(x,y_0^*(x))$ has purely imaginary eigenvalues and is diagonalizable over the field of complex numbers. As was true of normal hyperbolicity this terminology is best understood after re-scaling time according to $t = \epsilon\,\tau$ so that the fast-slow system becomes $dy/d\tau = f_\epsilon(x,y)$, $dx/d\tau = \epsilon\,g_\epsilon(x,y)$. In the limit $\epsilon\rightarrow 0$ dynamics on this short timescale are described by the relatively simple system $dy/d\tau = f_0(x,y)$, $dx/d\tau = 0$, which shows that the slow variable $x$ is frozen and that the fast variable $y$ has an equilibrium point $y = y_0^*(x)$ for each $x$. Moreover the linearization of $y$-dynamics about these equilibrium points is described by the family of ordinary differential equations $d\,\delta y/d\tau = D_yf_0(x,y_0^*(x))[\delta y] $. Generally speaking an equilibrium point of a dynamical system whose associated linearized dynamics has purely imaginary eigenvalues is referred to as an elliptic fixed point. Apparently the limit of a normally elliptic slow manifold comprises a family of elliptic fixed points parameterized by $x\in X$ for the fast-slow system's limiting short timescale dynamics. Therefore dynamics in the vicinity of a normally-elliptic slow manifold tend to rapidly oscillate in the normal directions, at least on sufficiently short timescales.

Normally-elliptic slow manifolds are fragile, in stark contrast to the normally-hyperbolic variety. While the leading-order approximation of such a slow manifold $y_0^*$ is a true invariant manifold for the limiting short timescale dynamics, the formal power series $y_\epsilon^*$ usually is not the Taylor expansion of a true invariant manifold for finite-$\epsilon$ dynamics. This means that trajectories with initial conditions near a normally-elliptic slow manifold may eventually wander into regions of $(x,y)$-space where dynamics become fast. In this Section we will discuss one mechanism, resonance, that can produce such wandering and another mechanism, adiabatic invariance, that can suppress wandering over long intervals of time. Rather than develop these concepts in full generality we will illustrate them using a pair of examples.

\begin{example}\label{resonance_example}
Unlike normally-hyperbolic slow manifolds, normally-elliptic slow manifolds may not survive the transition from $\epsilon=0$ to finite but small $\epsilon$. In his Review MacKay \citep{MacKay_2004} gives a clear example of this phenomenon where resonances between fast normal oscillations and the slower oscillations tangent to the slow manifold cause the ``exact" slow manifold to blow up. (We refer the reader to Example 2 in the last reference.) On the other hand for practical purposes the question of existence of an exact slow manifold may be less relevant than the question of normal stability of approximate slow manifolds. If trajectories near an approximate slow manifold happen to stay near that approximate slow manifold over a long enough time interval then a reduced model based on slow manifold reduction may be applicable even when a nearby exact slow manifold does not exist. Conversely if those same trajectories diverge from the approximate slow manifold on a short enough timescale then the slow manifold reduction approach needs to be handled with considerable care.

It may be tempting to suppose that normally-elliptic slow manifolds automatically have exceptional normal stability properties because on the $O(\epsilon)$ timescale these objects are neutrally stable. And it is indeed usually the case that the ``sticking time" for a normally-elliptic slow manifold tends to be at least $O(1)$, meaning that trajectories near an approximate normally-elliptic slow manifold will experience many oscillation periods in the vicinity of the slow manifold. However resonance effects may still sometimes spoil normal stability over time intervals that are $O(\epsilon^{-k})$ for $k>0$. For example consider the following fast-slow coupled oscillator system,
\begin{align}
\dot{p}_1 &= -\frac{\Omega^2}{\epsilon}\left(1 + \frac{\epsilon \kappa^2}{2}q_2^2 - \frac{\epsilon\kappa^2}{2 \Omega^2} p_2^2\right)q_1 +\kappa^2 p_1p_2 q_2\label{neg_eng_one}\\
\dot{q}_1 & = \frac{1}{\epsilon}\left(1+\frac{\epsilon \kappa^2}{2\Omega^2} p_2^2 - \frac{\epsilon\kappa^2}{2}q_2^2\right) p_1 - \kappa^2 q_1p_2q_2\\
\dot{p}_2 & = \frac{\Omega^2}{\epsilon}\left(1 - \frac{\epsilon \kappa^2}{2}q_1^2 + \frac{\epsilon\kappa^2}{2\Omega^2}p_1^2\right)q_2 + \kappa^2 p_1q_1p_2\\
\dot{q}_2 & = -\frac{1}{\epsilon}\left(1 - \frac{\epsilon\kappa^2}{2\Omega^2}p_1^2 + \frac{\epsilon\kappa^2}{2}q_1^2\right)p_2 - \kappa^2 p_1 q_1 q_2\\
\dot{p}_3 &= - \omega^2 q_3\\
\dot{q}_3  & = p_3,\label{neg_eng_six}
\end{align}
where $\Omega,\omega,\kappa>0$ are real positive constants.
The fast variable $y = (q_1,p_1,q_2,p_2)$, the slow variable $x=(q_3,p_3)$, and it is easy to check that this system has an exact normally-elliptic slow manifold given by the slaving function $y^*(x) = (0,0,0,0)$. (Non-existence of a slow manifold is a non-issue here.) While trajectories of this system initially contained in the graph of $y^*$ will remain within that graph for all time, some trajectories that begin within $O(\epsilon)$ of that graph attain an $O(1)$ normal deviation after an $O(\epsilon^{-1})$ time interval. To see this change variables twice. First introduce the oscillator action-angle variables $p_i = -\Omega^{1/2}(2J_i)^{1/2}\sin\theta_i$, $q_i = \Omega^{-1/2}(2J_i)^{1/2}\cos\theta_i$ for $i = 1,2$ so that the system transforms into
\begin{align}
\dot{J}_1 &= 2\kappa^2J_1J_2\sin(2\theta_1+2\theta_2)\\
\dot{\theta}_1& = \frac{\Omega}{\epsilon} + \kappa^2 J_2 \cos(2\theta_1+2\theta_2)\\
\dot{J}_2 & = 2\kappa^2 J_1 J_2\sin(2\theta_1+2\theta_2)\\
\dot{\theta}_2 & = -\frac{\Omega}{\epsilon} + \kappa^2 J_1 \cos(2\theta_1+2\theta_2)\\
\dot{p}_3 &= - \omega^2 q_3\\
\dot{q}_3  & = p_3,
\end{align}
which clearly displays the leading-order resonance between the $(q_1,p_1)$ and $(q_2,p_2)$ oscillators.
Then introduce variables adapted to the resonance, $J  = J_1 - J_2$, $I = J_2 $, $\theta = \theta_1$, $\zeta = \theta_1+\theta_2$, to transform the system a second time into
\begin{align}
\dot{J} &= 0\label{first_res_vars_one}\\
\dot{\theta} & = \frac{\Omega}{\epsilon}+ \kappa^2 I \cos 2\zeta\\
\dot{I} & = 2\kappa^2 (J+I) I \sin 2\zeta\\
\dot{\zeta} & = \kappa^2 (J+2I) \cos 2\zeta\label{first_res_vars_four}\\
\dot{p}_3 &= - \omega_3^2 q_3\\
\dot{q}_3 &= p_3,\label{first_res_vars_six}
\end{align}
which highlights the fact that while $\theta_1$ and $\theta_2$ evolve rapidly individually the resonant combination $\zeta = \theta_1+\theta_2$ evolves slowly.
When written in the form \eqref{first_res_vars_one}-\eqref{first_res_vars_six} it is straightforward to verify that this fast-slow system has three independent first integrals, the action difference $J$, the $(q_3,p_3)$ oscillator energy 
\begin{align}
H_3 = \frac{1}{2}p_3^2 + \frac{1}{2}\omega^2q_3^2,
\end{align}
and the total $(q_1,p_1)$-$(q_2,p_2)$ oscillator energy
\begin{align}
H_{12} &= \Omega J + \epsilon \kappa^2 (J+I)I\cos2\zeta.
\end{align}
Note in particular that the $(q_2,p_2)$ oscillator is an example of a so-called ``negative energy mode," \citep{Morrison_neg_modes_1989} and that $h = (H_{12} - \Omega J )/\epsilon=  \kappa^2 (J+I)I\cos 2\zeta$ is also a constant of motion.  Trajectories contained in the graph of $y^*$ have $J = J_1 - J_2 = 0-0 = 0$ and $I = J_2 = 0$. Therefore a particular class of trajectories that begin within an $O(\epsilon)$ neighborhood of the slow manifold is characterized by initial conditions $J_1(0) = \epsilon\,\gamma$, $J_2(0) = \epsilon \gamma$, $\theta_1(0)= 0$, $\theta_2(0) = 0$, or $J = 0$, $I(0) = \epsilon \,\gamma$, $\theta(0) = 0$, $\zeta(0) = 0$, where $\gamma >0$ is any positive constant. For these trajectories the integral $h =  \epsilon^2 \kappa^2 \gamma^2>0 $, which implies that the values of $I$ and $\zeta$ along these trajectories must be related by
\begin{align}
I = \frac{\epsilon \gamma}{\sqrt{\cos2\zeta}}.\label{I_graph}
\end{align}
Apparently as $\zeta$ approaches $\pi/4$ the action $I = J_2$ approaches infinity, which already suggests a normal instability. In order to be more quantitative suppose that at time $T$ the value of $I$ has changed from $I(0) = \epsilon \gamma$ to $I(T) =\infty$, which corresponds to a trajectory initially within $O(\epsilon)$ of the slow manifold achieving an infinite normal deviation. According to the relationship \eqref{I_graph} the value of $\zeta$ when $t = T$ must be $\zeta(T) = \pi/4$. Then according to Eq.\,\eqref{first_res_vars_four} $T$ must be given by the formula
\begin{align}
T = \frac{1}{2 \kappa^2 \epsilon \gamma}\int_0^{\pi/4}\frac{d\zeta}{\sqrt{\cos 2\zeta}}= \frac{\sqrt{2}K(1/2)}{4\kappa^2 \epsilon \gamma}\approx \frac{.66}{\kappa^2 \epsilon\gamma},
\end{align}
where $K$ is the complete elliptic integral of the first kind. Thus each member of this family of trajectories wanders infinitely far from the normally-elliptic slow manifold after a finite time $T = O(1/\epsilon)$. Note that this explosive normal instability is an inherently nonlinear effect brought on by the resonant coupling; the linearized normal dynamics are perfectly stable. 

\end{example}\label{mu_example}

\begin{example}\label{adiabatic_invariance_example}
While resonances may destabilize (approximate) normally-elliptic slow manifolds over sufficiently long timescales, another mechanism sometimes help to mitigate this destabilization in Hamiltonian fast-slow systems --- adiabatic invariance. We will illustrate this phenomenon with a few toy examples before applying our observations to the normally-elliptic slow manifold contained in zero-drag Abraham-Lorentz dynamics.

Consider first the $3$ degree-of-freedom Hamiltonian oscillator system with symplectic form $\bm{\Omega}_\epsilon =  \epsilon \,dq_1 \wedge dp_2 + \epsilon \,dq_2 \wedge dp_2 + dq_3 \wedge dp_3$ and Hamiltonian 
\begin{align}
H &= \frac{1}{2} p_1^2 + \frac{1}{2}\Omega^2 q_1^2 + \frac{1}{2}p_2^2 + \frac{1}{2}\Omega^2 q_2^2 + \frac{1}{2}p_3^2 + \frac{1}{2}\omega^2 q_3^2\nonumber\\
& + \frac{\epsilon \kappa^2}{4}\left(\Omega^{-2}p_1^2 p_2^2 + \Omega^2 q_1^2 q_2^2 - p_1^2 q_2^2 - q_1^2 p_2^2 + 4 p_1 q_1 p_2 q_2 \right),\label{pos_en_ham}
\end{align}
where $\Omega,\omega,\kappa >0$ are positive real constants. Due to the placement of $\epsilon$ in the symplectic form Hamilton's equations for this system read $\dot{p}_i = -\epsilon^{-1}\,\partial_{q_i}H$, $\dot{q}_i = \epsilon^{-1}\,\partial_{p_i}H$ for $i=1,2$ and $\dot{p}_3 = -\partial_{q_3}H$, $\dot{q}_3 = \partial_{p_3}H$. More explicitly we have
\begin{align}
\dot{p}_1 &= -\frac{\Omega^2}{\epsilon}\left(1 + \frac{\epsilon \kappa^2}{2}q_2^2 - \frac{\epsilon\kappa^2}{2 \Omega^2} p_2^2\right)q_1 -\kappa^2 p_1p_2 q_2\\
\dot{q}_1 & = \frac{1}{\epsilon}\left(1+\frac{\epsilon \kappa^2}{2\Omega^2} p_2^2 - \frac{\epsilon\kappa^2}{2}q_2^2\right) p_1 + \kappa^2 q_1p_2q_2\\
\dot{p}_2 & = -\frac{\Omega^2}{\epsilon}\left(1 + \frac{\epsilon \kappa^2}{2}q_1^2 - \frac{\epsilon\kappa^2}{2\Omega^2}p_1^2\right)q_2 - \kappa^2 p_1q_1p_2\\
\dot{q}_2 & = \frac{1}{\epsilon}\left(1 + \frac{\epsilon\kappa^2}{2\Omega^2}p_1^2 - \frac{\epsilon\kappa^2}{2}q_1^2\right)p_2 + \kappa^2 p_1 q_1 q_2\\
\dot{p}_3 &= - \omega^2 q_3\\
\dot{q}_3  & = p_3.
\end{align}
This system has an exact normally-elliptic slow manifold $(q_1,p_1,q_2,p_2) = (0,0,0,0)$ and does suffer from a resonant perturbation because in the unperturbed action-angle variables $p_i = -\Omega^{1/2}(2J_i)^{1/2}\sin\theta_i$, $q_i = \Omega^{-1/2}(2J_i)^{1/2}\cos\theta_i$, $i = 1,2$, the Hamiltonian becomes
\begin{align}
H = \Omega (J_1 + J_2) + \frac{1}{2}p_3^2 + \frac{1}{2}\omega^2 q_3^2  + \epsilon\,\kappa^2 J_1 J_2 \cos(2\theta_1 - 2\theta_2),\label{pos_en_ham_aa}
\end{align}
which clearly displays the $(2,-2)$ resonance in the perturbation. However, in contrast to the situation from the previous example, the resonance in this system does not lead to a destabilization of the slow manifold. 

The normal stability of this system's slow manifold can be seen as follows. Let $H_{12} = H - \frac{1}{2}p_3^2 - \frac{1}{2}\omega^2 q_3^2$ be the $(q_1,p_1)$-$(q_2,p_2)$ oscillator energy. Note that $H_{12}$ does not depend on $(q_3,p_3)$. It is simple to verify that the Hessian of $H_{12}$ at $(q_1,p_1,q_2,p_2) = (0,0,0,0)$ is positive definite. It follows then from the Morse lemma (see for example  \citep{Abraham_2008} Section 3.2) that there is an open neighborhood $U$ of $(0,0,0,0)$ in $(q_1,p_1,q_2,p_2)$-space and coordinates $(x,y,z,w)$ on $U$ in which the function $H_{12}$ takes the form $H_{12} = \frac{1}{2}x^2 + \frac{1}{2}y^2 + \frac{1}{2}z^2 + \frac{1}{2} w^2$. Because $H_{12}$ is a first integral this implies that the norm defined by $||(x,y,z,w)||_H = \sqrt{H_{12}}$ is constant in time for any trajectory beginning within $U$. In particular any trajectory that begins within a radius-$\epsilon^N$ ball defined by the norm $||\cdot||_H$, where $N$ is any positive integer, will remain within that ball for all time. In other words the slow manifold in this system is nonlinearly normally stable.

The key difference between the stable slow manifold in this example and the unstable slow manifold in Example \ref{resonance_example} is that normal oscillations about the stable slow manifold did not contain negative energy modes while normal oscillations about the unstable slow manifold did. This observation has only limited utility however because our two examples have the unreasonably special property that the fast variable $y$ decouples completely from the slow variable $x$. How much of the dichotomy that we just exhibited is due to this degeneracy, and how much is not?

As a first step in answering this question let us modify the Hamiltonian \eqref{pos_en_ham} by making $\kappa^2$ a positive function of $q_3$ instead of a constant, e.g. $\kappa^2(q_3) = \kappa_0^2 (1+q_3^2/2)$. This modification does not change the form of the $(q_1,p_1,q_2,p_2)$ evolution equations, but it does substantially modify the form of the $(q_3,p_3)$ equations according to 
\begin{align}
\dot{p}_3 &= -\omega^2 q_3 - \frac{\epsilon}{4} \frac{d\kappa^2}{dq_3}\left(\Omega^{-2}p_1^2 p_2^2 + \Omega^2 q_1^2 q_2^2 - p_1^2 q_2^2 - q_1^2 p_2^2 + 4 p_1 q_1 p_2 q_2 \right)\\
\dot{q}_3 & = p_3.
\end{align}
Apparently this modification leads to non-trivial coupling between the fast and slow variables. In particular it is no longer the case that $H_{12}=H - \frac{1}{2}p_3^2 - \frac{1}{2}\omega^2 q_3^2$ is a constant of motion because energy can slosh between all three oscillators. This implies that the Lyapunov stability argument given earlier can not be used to deduce normal stability of the slow manifold. What can be done?

In spite of the nontrivial fast-slow coupling introduced by a $q_3$-dependent $\kappa^2$ this system still possesses a symmetry that was present in the constant-$\kappa$ case. This is most clear from the expression for the Hamiltonian in terms of unperturbed action-angle variables given in Eq.\,\eqref{pos_en_ham_aa}. (Note that this expression is still valid with a $q_3$-dependent $\kappa^2$.) If the angle variables $(\theta_1,\theta_2)$ are subject to the transformation $(\theta_1,\theta_2)\mapsto (\theta_1 + \Delta \theta, \theta_2 + \Delta \theta)$, where $\Delta\theta$ is any real constant, the Hamiltonian does not change. Noether's theorem for Hamiltonian systems therefore implies that the quantity $J = J_1 + J_2 = \Omega^{-1} (p_1^2 + \Omega^2 q_1^2)/2 + \Omega^{-1}(p_2^2 + \Omega^2 q_2^2)/2$ is a constant of motion. Fortuitously the action sum $J$ with $q_3$-dependent $\kappa$ has exactly the properties that $H_{12}$ had in the constant-$\kappa$ case that allowed the Lyapunov stability argument to function. We conclude that our $\kappa$-coupling does not destabilize the slow manifold. The explanation for stability has changed, however, from an argument based on negative energy modes to an argument based on symmetry.

Arguments based on symmetry, or more generally \emph{approximate symmetry}, are perhaps the most powerful tools for investigating stability properties of normally-elliptic slow manifolds in Hamiltonian fast-slow systems. As a demonstration consider modifying the Hamiltonian \eqref{pos_en_ham} again by adding a small symmetry-breaking perturbation. More specifically, use $\mathcal{H} = H + \epsilon\,\delta H$ as a Hamiltonian, where $H$ is as before (with $q_3$-dependent $\kappa$) and  $\delta H = p_1 \pi(q_3)$ where $\pi(q_3)$ is some smooth function of $q_3$. This modification is more significant than allowing for a non-constant $\kappa$ because the set $(q_1,p_1,q_2,p_2)=(0,0,0,0)$ is no longer a true invariant manifold; it is likely this this system \emph{does not} have an exact slow manifold. Moreover because in action-angle variables the perturbation $\delta H = -\pi(q_3) \,\Omega^{1/2}(2J_1)^{1/2}\sin\theta_1 $ the symmetry $(\theta_1,\theta_2)\mapsto (\theta_1 + \Delta \theta, \theta_2 + \Delta \theta)$ is broken. Therefore neither of our previous arguments for establishing normal stability apply to this system. We will nevertheless show that because the symmetry-breaking term is non-resonant the system still possesses an approximate symmetry. We will also show that this approximate symmetry is useful for analyzing stability properties of the slow manifold. 

We begin by transforming once again into the resonance-adapted coordinates $(J,\theta,I,\zeta)$ introduced in Example \ref{resonance_example}. In these coordinates the Hamiltonian $\mathcal{H}$ takes the form
\begin{align}
\mathcal{H} = \Omega J + \frac{1}{2}p_3^2 + \frac{1}{2}\omega^2 q_3^2 + \epsilon\, \kappa^2 (J + I) I \cos 2\zeta - \epsilon \pi (2\Omega)^{1/2}(J+I)^{1/2}\sin\theta.
\end{align}
Next we introduce a near-identity canonical change of coordinates that is equal to the $t = 1$ flow of a Hamiltonian system with Hamiltonian $K_2 = -\epsilon^2 \pi (2/\Omega)^{1/2} (J+I)^{1/2}\cos\theta$. In these coordinates $(\overline{J},\overline{\theta},\overline{I},\overline{\zeta},\overline{q}_3,\overline{p}_3)$ the Hamiltonian becomes
\begin{align}
\mathcal{H} &= \Omega \overline{J} + \frac{1}{2}\overline{p}_3^2 + \frac{1}{2}\omega^2 \overline{q}_3^2 + \epsilon\,\kappa^2(\overline{q}_3) (\overline{J}+\overline{I})\overline{I} \cos 2\overline{\zeta} - \epsilon^2\frac{1}{2}\pi^2(\overline{q}_3)\nonumber\\
& - \epsilon^2 p_3 \frac{d\pi}{d\overline{q}_3}\sqrt{\frac{2}{\Omega}}(\overline{J}+\overline{I})^{1/2}\cos\overline{\theta} + \epsilon^2 \pi(\overline{q}_3)\kappa^2(\overline{q}_3) \sqrt{\frac{2}{\Omega}}(\overline{J}+\overline{I})^{1/2}\overline{I}\sin(\overline{\theta} - 2\overline{\zeta})\nonumber\\
& +O(\epsilon^3).
\end{align}
Note that this transformation pushes the symmetry-breaking terms to one higher order in $\epsilon$ in the Hamiltonian. Finally we apply one more near-identity canonical transformation that averages over the $\theta$-dependence in the second-order Hamiltonian in order to push the symmetry-breaking terms to $O(\epsilon^3)$. In this last coordinate system $(\overline{\overline{J}},\overline{\overline{\theta}},\overline{\overline{I}},\overline{\overline{\zeta}},\overline{\overline{q}}_3,\overline{\overline{p}}_3)$ the Hamiltonian becomes
\begin{align}
\mathcal{H} = \Omega\overline{\overline{J}} + \frac{1}{2}\overline{\overline{p}}_3^2 + \frac{1}{2}\omega^2 \overline{\overline{q}}_3^2 + \epsilon\,\overline{\overline{\kappa}}^2(\overline{\overline{J}}+ \overline{\overline{I}})\overline{\overline{I}}\cos 2\overline{\overline{\zeta}} - \epsilon^2\frac{1}{2}\overline{\overline{\pi}}^2 + O(\epsilon^3).
\end{align}
The Hamiltonian is therefore invariant under rotations of $\overline{\overline{\theta}}$ up to third order in $\epsilon$. In fact through the application of more near-identity canonical transformations the symmetry-breaking terms can be pushed to as high an order in $\epsilon$ as desired. However even without the application of additional near-identity coordinate transformations we can already see that $\dot{\overline{\overline{J}}} = O(\epsilon^2)$. This implies that if $\overline{\overline{J}}=0$ when $t=0$ then it will remain $O(\epsilon)$ over an $O(1/\epsilon)$ time interval, and will require at least $O(1/\epsilon^2)$ seconds in order to attain a value that is $O(1)$. In other words $\overline{\overline{J}}$ is an adiabatic invariant. The significance of this observation is that the zero level set of $\overline{\overline{J}}$ may be shown to define a slow manifold, and the adiabatic invariance of $\overline{\overline{J}}$ implies directly that this slow manifold is normally-stable over $O(1/\epsilon)$ time intervals. (We invite readers to check this fact themselves; the demonstration uses the fact that $\overline{\overline{J}} = J + O(\epsilon)$ and that the normal Hessian of $J$ on the limiting slow manifold is positive definite. These observations allow one to apply a perturbative version of the Lyapunov stability argument.)

More generally the presence of adiabatic invariants provides normally-elliptic slow manifolds with better stability properties. An interesting physical implication of this fact relates to the zero-drag Abraham-Lorentz equations. It is an old and venerable result that the (drag-free) motion of a charged particle in a (smooth) strong magnetic field possesses an adiabatic invariant $\mu = \mu_0 + \epsilon\,\mu_1 + \epsilon^2\,\mu_2 + \dots$ to all-orders in $\epsilon$. (That all of the coefficients in the formal power series for $\mu$ exist follows from the fact that this system possesses only a single fast angle. More generally resonances may cause the series to break down at some order. See Ref.\,\citep{Burby_gc_2013} for explicit expressions for $\mu_k$ with $k\leq 2$.) Concretely this means that if $\mu^{(N)} = \sum_{k = 1}^N\mu_k$ is an order-$N$ approximation of $\mu$ then $d\mu^{(N)}/dt = O(\epsilon^{N+1})$. Therefore trajectories that begin on the zero level set of $\mu^{(N)}$ require at least $O(1/\epsilon^{N+1})$ seconds to attain $\mu^{(N)} = O(1)$. This establishes a strong normal stability result for the zero-drag slow manifold because we have already mentioned that this slow manifold is coincident with the zero level set of the adiabatic invariant. (Note that this implication again uses the fact that $\mu = \mu_0 + O(\epsilon)$ and that the normal Hessian of $\mu_0$ along the limiting slow manifold is positive definite. )
\end{example}

\section{Slow Manifolds and Implicit Numerical Integrators\label{SMI_sec}}
A peculiar feature of plasmas is the variety of spatio-temporal scales that they support. While this variety is an essential part of the rich and exploitable phenomenology of plasma physics, it also makes the simulation of systems such as magnetic fusion reactors, accretion discs, and inertially-confined plasmas a daunting task in general. As a result two general trends have emerged in the field of plasma simulations that aim to cope with the inherent stiffness of multi-scale plasma dynamics in different ways. 

First, and perhaps most traditionally, there is the approach based on analytical model reduction. This approach involves first developing theoretical insight into the stiffness inherent to the plasma process of interest, usually through a combination of experimental observations and simple scaling arguments. Using this insight asymptotic methods such as slow manifold reduction, or even ad-hoc tricks, may sometimes be applied to develop reduced plasma models that are free of stiffness, and yet capable of capturing the essence of the physics. Once a non-stiff reduced model has been identified efficient simulations may become much easier to identify. In the area of strongly-magnetized plasma physics perhaps the most striking example of the model-reduction approach to addressing stiffness is embodied by simulations of gyrokinetics \citep{Taylor_1967,Frieman_1982,Beer_1995,Sugama_2000,Brizard_POP_2000,Brizard_2007,Schekochihin_2009,Scott_2010,Abel_2013}, which is a reduced model of magnetized plasma microturbulence that eliminates the need to resolve cyclotron timescale dynamics.

The second general approach to simulating stiff plasma dynamics is based on implicit integration schemes. Here the basic idea is to exploit the fact that implicit integrators tend to have more robust numerical stability properties than their explicit counterparts. In the most favorable cases implicit schemes can successfully ``step over" the extremely short timescales responsible for a problem's stiffness, thereby achieving similar computational savings as in the model reduction approach.  A number of recent examples of how to successfully apply this approach to coping with stiffness have been produced by the Applied Math and Plasma Physics group at Los Alamos National Laboratory, especially by Chac\'on and collaborators. See for examples Refs. \citep{Taitano_2018,Chacon_HOLO_2017,Chacon_Stanier_2016,Chacon_Chen_2016}. 

There are shortcomings associated with each of these general approaches to simulating stiff dynamics. 
\begin{enumerate}
	\item Reduced models are typically formulated using scaling limits, and therefore are known only as asymptotic series. This means that physical effects captured by higher-order terms in the expansions are extremely difficult to account for. Moreover, models derived as scaling limits are known to include certain unphysical scaling artifacts (spurious instabilities, stiffness, etc.) that may easily corrupt a simulation. An example of such a scaling artifact is the high-$\bm{k}$ behavior of the whistler wave dispersion relation \citep{Farmer_2019} in Hall magnetohydrodynamics.  
	\item Current approaches to designing efficient implicit integration schemes, including their preconditioners, suffer from a lack of general guiding mathematical principles. For example a generally challenging aspect of the HOLO approach to implicit integration reviewed in Ref. \citep{Chacon_HOLO_2017} is identifying a useful high-order (HO) low-order (LO) split of the system variables. In the best case a HO-LO split of the problem will simultaneously ease the large memory requirements associated with the HO variables by way of nonlinear elimination and replace the stiff update for the whole system with a non-stiff update for just the LO variables. However state-of-the-art approaches to finding such useful HOLO splits are based on problem-dependent intuition.
	\item It is difficult to know \emph{a priori} that the short timescales being stepped-over by a large implicit step are sufficiently \emph{inactive} to justify stepping over them.
\end{enumerate}

The purpose of this Section is to describe how slow manifold theory naturally leads to a third approach to simulating temporally-stiff dynamics that retains most of the benefits and avoids most of the pitfalls of the previous two approaches. The rough idea is to use the analytical insight into a problem's stiffness afforded by a fast-slow split to systematically ease the technical difficulties associated with implicit integration. We refer to the implementation of this idea as \emph{slow manifold integration} (SMI).

The theoretical foundation for SMI rests on the notion of discrete-time fast-slow systems.
\begin{definition}[Discrete-time fast-slow system]
Let $X,Y,\Gamma$ be finite-dimensional vector spaces. A \emph{discrete-time fast-slow system with parameter} $\gamma\in \Gamma$ is a family of smooth mappings $\Phi_{\gamma}:X\times Y\rightarrow X\times Y$ that depends smoothly on $\gamma$ near $0\in \Gamma$ with the following properties.
\begin{itemize}
\item[(1)] There is a family of smooth mappings $\psi_x:Y\rightarrow Y$ smoothly parameterized by $x\in X$ such that $\Phi_0(x,y) = (x,\psi_x(y))$.
\item[(2)] For each $x\in X$ the mapping $\psi_x$ has a unique non-degenerate fixed point $y=y_0^*(x)$. In particular $D\psi_x(y_0^*(x)) - \text{id}_Y$ is an invertible linear map $Y\rightarrow Y$.
\end{itemize}
\end{definition}

\begin{example}\label{projection_DTFS}
Perhaps the simplest example of a discrete-time fast-slow system has the trivial parameter space $\Gamma = \{0\}$ and is defined by the map $\Phi(x,y) = (x,0)$. Apparently in this case we have $\psi_x(y) = 0$, which has the unique fixed point $y=y_0^*(x) = 0$. This fixed point is non-degenerate because $D\psi_x(0) = 0$, which implies $D\psi_x(y_0^*(x)) - \text{id}_Y = -\text{id}_Y$ is invertible.
\end{example}

The theory of discrete-time fast-slow systems mimics the theory of continuous-time fast-slow system in a number of respects. For example where invariant graphs for continuous-time fast-slow systems obey the invariance equation, invariant graphs for discrete-time fast-slow systems obey the following discrete-time analogue of the invariance equation.

\begin{proposition}\label{DT_inv_eq_prop}
If $\Phi_{\gamma} = (G_\gamma,F_\gamma):(x,y)\mapsto (G_\gamma(x,y),F_\gamma(x,y))$ is a discrete-time fast-slow system with parameter $\gamma$ and there is a smooth $\gamma$-dependent function $y_\gamma^*:X\rightarrow Y$ such that $S_\gamma = \{(x,y)\in X\times Y\mid y = y^*_\gamma(x)\}$ is an invariant manifold for $\Phi_\gamma$ then $y_\gamma^*$ is a solution of the \emph{discrete-time invariance equation}  
\begin{align}
F_\gamma(x,y_\gamma^*(x)) = y_\gamma^*(G_\gamma(x,y_\gamma^*(x))).\label{DT_inv_eq}
\end{align}
\end{proposition}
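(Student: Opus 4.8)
The plan is to mimic the derivation of the continuous-time invariance equation (the argument preceding Proposition \ref{PDE_char}), replacing "solution of the ODE begins in $S$" with "point of $X\times Y$ is mapped by $\Phi_\gamma$" and "stays in $S$ for all time" with "its image again lies in $S_\gamma$." Concretely, I would start from the hypothesis that $S_\gamma$ is invariant under $\Phi_\gamma$, unpack what invariance of a graph means, and read off the stated identity by equating the $Y$-component of the image point computed two ways.

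First I would fix an arbitrary $x\in X$ and consider the point $p = (x, y_\gamma^*(x))\in S_\gamma$. Since $\Phi_\gamma = (G_\gamma, F_\gamma)$, its image is
\begin{align}
\Phi_\gamma(p) = \big(G_\gamma(x,y_\gamma^*(x)),\, F_\gamma(x,y_\gamma^*(x))\big).
\end{align}
Invariance of $S_\gamma$ means $\Phi_\gamma(p)\in S_\gamma$, i.e. $\Phi_\gamma(p)$ is again a graph point: its $Y$-component equals $y_\gamma^*$ evaluated at its $X$-component. Writing $x' = G_\gamma(x,y_\gamma^*(x))$ for the $X$-component of the image, membership $\Phi_\gamma(p)\in S_\gamma$ is exactly the statement
\begin{align}
F_\gamma(x,y_\gamma^*(x)) = y_\gamma^*(x') = y_\gamma^*\big(G_\gamma(x,y_\gamma^*(x))\big),
\end{align}
which is the discrete-time invariance equation \eqref{DT_inv_eq}. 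Since $x$ was arbitrary, this holds for all $x\in X$, completing the argument. I would also remark that, unlike in the continuous-time case, no chain rule or differentiation is needed here: the discrete-time invariance equation is purely algebraic/functional, so the proof is essentially a one-line unpacking of the definition of "graph" and "invariant set."

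There is no serious obstacle in this direction of the implication; the only point requiring a word of care is making precise the statement "$S_\gamma$ is a graph implies a point of $S_\gamma$ whose image lies in $S_\gamma$ has $Y$-component determined by its $X$-component via $y_\gamma^*$" — this is immediate from $S_\gamma = \{(x,y)\mid y = y_\gamma^*(x)\}$ being the graph of a \emph{single-valued} function, so that membership in $S_\gamma$ is equivalent to the relation $y = y_\gamma^*(x)$ holding. (If one wanted the converse — that any solution of \eqref{DT_inv_eq} yields an invariant graph — the same chain of equalities runs backwards, but the proposition as stated only asks for the forward direction, so I would stop here.) The non-degeneracy hypothesis on the fixed point $y_0^*(x)$ of $\psi_x$ plays no role in this particular proposition; it is what will later guarantee existence and uniqueness of a formal solution $y_\gamma^*$, analogous to Theorem \ref{SM_eau}, but it is not needed to show that an \emph{already-given} invariant graph satisfies \eqref{DT_inv_eq}.
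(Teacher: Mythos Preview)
Your argument is correct. The paper actually states this proposition without proof, evidently regarding it as the direct discrete-time analogue of the continuous-time PDE characterization in Proposition~\ref{PDE_char}; your one-line unpacking of ``$\Phi_\gamma$ maps graph points to graph points'' is exactly the argument the reader is implicitly expected to supply, and your observation that the non-degeneracy hypothesis on $\psi_x$ plays no role here (it enters only in Theorem~\ref{formal_DTSM_existence}) is also on target.
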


A more significant parallel between the the continuous- and discrete-time theories pertains to the existence of formal slow manifolds. In order to show this we first define a \emph{discrete-time formal slow manifold} to be a formal power series solution of the discrete-time invariance equation. This definition is slightly less trivial than the continuous-time case because we allow for vector valued parameter $\gamma$.
\begin{definition}[Discrete-time formal slow manifold]\label{DTFSM_def}
A \emph{discrete-time formal slow manifold} for a discrete-time fast-slow system with parameter $\gamma \in \Gamma$ is a formal power series $y_\gamma^*:X\rightarrow Y$ of the form 
\begin{align}
y_\gamma^*(x) = y_0^*(x) + y_1^*(x)[\gamma] + y_2^*(x)[\gamma,\gamma] + y_3^*(x)[\gamma,\gamma,\gamma] + \dots,
\end{align}
where the coefficient $y_k^*(x)$ is a symmetric $k$-linear function of $\gamma$ for each $k$, that solves the discrete-time invariance equation \eqref{DT_inv_eq} to all orders in $\gamma$.
\end{definition}

As the following Theorem shows, each discrete-time fast-slow system possesses a unique discrete-time formal slow manifold.

\begin{theorem}[existence and uniqueness of DT formal slow manifolds]\label{formal_DTSM_existence}
Associated with each discrete-time fast-slow system $\Phi_{\gamma} = (G_\gamma,F_\gamma)$ is precisely one discrete-time formal slow manifold. Moreover the lowest-order coefficients of the discrete-time formal slow manifold $y_\gamma^*$ are determined by the formulas
\begin{align}
\psi_x(y_0^*(x)) &= y_0^*(x)\\
y_1^*(x)[\gamma] & = \left[D\psi_x(y_0^*) - \mathrm{id}_Y\right]^{-1}\bigg[ Dy_0^*(x)[G_1(x,y_0^*)[\gamma]]  - F_1(x,y_0^*)[\gamma]\bigg],
\end{align}
where $F_1(x,y)[\gamma] = \partial_\gamma F_0(x,y)[\gamma]$ and similarly for $G_1$.
\end{theorem}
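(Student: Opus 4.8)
The plan is to mimic the proof of Theorem \ref{SM_eau} (existence and uniqueness of continuous-time formal slow manifolds), substituting the discrete-time invariance equation \eqref{DT_inv_eq} for the scaled invariance equation and tracking the multilinear structure in $\gamma$. First I would substitute the ansatz from Definition \ref{DTFSM_def}, $y_\gamma^*(x) = y_0^*(x) + y_1^*(x)[\gamma] + y_2^*(x)[\gamma,\gamma] + \dots$, into \eqref{DT_inv_eq}, expand both sides in powers of $\gamma$ using the smooth dependence of $F_\gamma, G_\gamma$ on $\gamma$ together with the Taylor expansions $F_\gamma = F_0 + F_1[\gamma] + \dots$ and $G_\gamma = G_0 + G_1[\gamma] + \dots$, and then collect terms order by order in $\gamma$. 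The key structural feature to isolate is that on the right-hand side $y_\gamma^*(G_\gamma(x,y_\gamma^*(x)))$, when expanded, produces at order $k$ the term $Dy_0^*(x)[\,\cdot\,]$ acting on the degree-$k$ piece of $G_\gamma(x,y_\gamma^*(x))$ (which involves $y_k^*$ only through $G_0$-composition, and $G_0(x,y) = x$ by property (1)) plus lower-order contributions; meanwhile on the left-hand side $F_\gamma(x,y_\gamma^*(x))$ contributes $D\psi_x(y_0^*)[y_k^*]$ at order $k$ (since $F_0(x,y) = \psi_x(y)$) plus lower-order terms. Because $G_0(x,y) = x$, the only appearance of $y_k^*$ on the right at order $k$ is $Dy_0^*(x)$ composed with $D_yG_0 = 0$... actually since $G_0(x,y)=x$ is independent of $y$, the $y_k^*$ dependence on the right at order $k$ vanishes except through the outer argument, so one must be slightly careful: the term $y_k^*(x)$ itself (evaluated at $G_0 = x$) appears on the right. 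Let me restate: at order $k$ one gets $D\psi_x(y_0^*)[y_k^*(x)[\gamma,\dots,\gamma]]$ on the left and $y_k^*(x)[\gamma,\dots,\gamma]$ on the right (from evaluating $y_\gamma^*$ at its degree-zero argument $x$), plus known lower-order data $\mathcal{R}_k$ on both sides.

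This yields the recursion $\left[D\psi_x(y_0^*) - \mathrm{id}_Y\right] y_k^*(x)[\gamma,\dots,\gamma] = \mathcal{R}_k(x)[\gamma,\dots,\gamma]$, where $\mathcal{R}_k$ is a symmetric $k$-linear function of $\gamma$ built from $y_0^*, \dots, y_{k-1}^*$ and the Taylor coefficients of $F_\gamma, G_\gamma$. By property (2) of the definition of a discrete-time fast-slow system, $D\psi_x(y_0^*(x)) - \mathrm{id}_Y$ is invertible, so this determines $y_k^*$ uniquely as a symmetric $k$-linear function of $\gamma$ (symmetry and multilinearity being inherited from $\mathcal{R}_k$ and from the linearity of the inverse operator). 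The base case $k=0$ is exactly the fixed-point equation $\psi_x(y_0^*(x)) = y_0^*(x)$, which has a unique solution by property (2) as well (non-degeneracy gives local uniqueness; as with the continuous-time theory we tacitly assume a global single-valued branch). Induction on $k$ then gives existence and uniqueness of the entire formal power series. The first-order formula claimed in the statement then follows by computing $\mathcal{R}_1$ explicitly: differentiating \eqref{DT_inv_eq} in $\gamma$ at $\gamma = 0$ and using $G_0(x,y) = x$, $F_0 = \psi_x$ gives $D\psi_x(y_0^*)[y_1^*[\gamma]] + F_1(x,y_0^*)[\gamma] = y_1^*(x)[\gamma] + Dy_0^*(x)[G_1(x,y_0^*)[\gamma]]$, and solving for $y_1^*$ reproduces the boxed expression.

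I expect the main obstacle to be purely bookkeeping: carefully expanding the composition $y_\gamma^*(G_\gamma(x,y_\gamma^*(x)))$ to general order $k$ and verifying that (a) the coefficient of $y_k^*$ at order $k$ is precisely $\mathrm{id}_Y$ (this rests crucially on $G_0$ being the projection $(x,y)\mapsto x$, a consequence of property (1)), and (b) the remainder $\mathcal{R}_k$ genuinely depends only on lower-order data. This is the analogue of the ``only the term $D_yf_0(x,y_0^*)[y_k^*]$ involves $y_k^*$'' observation in the proof sketch preceding Theorem \ref{SM_eau}, and the verification is conceptually identical, just with a Faà di Bruno-type expansion for the composition replacing the simpler chain-rule bookkeeping of the continuous-time case. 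I would handle this by invoking the formal chain rule \eqref{formal_chain_rule} (valid to all orders) rather than writing out combinatorial sums, noting only that each application of the chain rule to a composition of formal power series lowers the effective order of the innermost unknown, so no circularity arises. Everything else — symmetry and multilinearity of the coefficients, invertibility of the relevant operator — is immediate from the hypotheses, exactly as in the continuous-time theorem it parallels.
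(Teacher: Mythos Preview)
The paper states this theorem without proof; immediately after the statement it moves on to the definition of slow manifold integrators. So there is no ``paper's own proof'' to compare against.

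Your approach is correct and is exactly the discrete-time analogue of the argument the paper sketches for Theorem~\ref{SM_eau}. The key structural observations you make are right: property~(1) forces $G_0(x,y)=x$ and $F_0(x,y)=\psi_x(y)$, so at order $k$ the only occurrence of $y_k^*$ on the left of \eqref{DT_inv_eq} is $D\psi_x(y_0^*)[y_k^*]$ (from $D_yF_0$), while on the right the inner argument $G_\gamma(x,y_\gamma^*(x))$ at order $k$ depends only on $y_0^*,\dots,y_{k-1}^*$ (since $D_yG_0=0$), leaving $y_k^*(x)$ as the sole order-$k$ contribution from the outer composition. This gives the claimed recursion with operator $D\psi_x(y_0^*)-\mathrm{id}_Y$, whose invertibility is exactly property~(2). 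Your order-1 computation reproduces the stated formula. The base case $k=0$ is handled directly by property~(2), which asserts both existence and uniqueness of the fixed point $y_0^*(x)$ outright, so you need not invoke non-degeneracy separately there.
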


With the basic structural properties of discrete-time fast-slow systems in place we are now in a good position to define and discuss slow manifold integrators more precisely.
\begin{definition}[Slow manifold integrator]
A \emph{slow manifold integrator} for a fast slow system $\dot{x} = g_\epsilon(x,y)$, $\epsilon\,\dot{y} = f_\epsilon(x,y)$ is a two-parameter family of mappings $\Psi_{(h,\epsilon)}:X\times Y\rightarrow X\times Y$ with the following properties.
\begin{itemize}
\item[(1)] For fixed $\epsilon > 0$ and sufficiently small $h$ the mapping $\Psi_{(h,\epsilon)}$ approximates the time-$h$ flow of the fast-slow system. 
\item[(2)] The two-parameter family of mappings $\Phi_{(h,\delta)} = \Psi_{(h,h\delta)}$ is a discrete-time fast-slow system with parameter $\gamma = (h,\delta)\in\mathbb{R}^2$.
\end{itemize}
\end{definition}

\begin{remark}
The parameters $h$ and $\delta$ should be interpreted as the integrator's timestep and the ratio of $\epsilon$ to that timestep. When any integrator is used to step over the short $O(\epsilon)$ timescale the parameter $\delta $ is close to $0$. Property $(2)$ in the definition is therefore a constraint on the behavior of the integrator when the timestep is chosen to resolve the $O(1)$ timescale but not the $O(\epsilon)$ timescale, i.e. when $h$ is chosen to satisfy $\epsilon \ll h\ll 1$. In contrast property $(1)$ constrains the behavior of the integrator when the timestep is chosen to satisfy $h\ll \epsilon\ll 1$.
\end{remark}

Slow manifold integrators address the shortcomings mentioned earlier of both model reduction and generic implicit integration. This may be seen as follows.

Consider first the problems associated with model reduction. Because slow manifold integrators are by definition integrators for (unreduced) fast-slow systems, they automatically avoid any complications related to scaling artifacts that arise within reduced models. Moreover slow manifold integrators are all but forced to be implicit schemes, or at least implicit in the fast variable $y$. This follows from the fact that any integrator that is a slow manifold integrator must be well-defined when the timestep $h$ is much larger than $\epsilon$. Therefore slow manifold integrators will typically possess the numerical stability properties required to step over a fast-slow system's stiff timescale without the necessity of removing that stiffness analytically. We remark that this same benefit is shared with most other kinds of implicit integrators.

Next consider the preconditioning problem that plagues typical implicit integrators when taking large timesteps.
Because a slow manifold integrator in the large timestep regime $\epsilon\ll h\ll 1$ comprises a discrete-time fast slow system there is a powerful and general strategy for alleviating this problem. To see this suppose that a slow manifold integrator $(x_{n+1},y_{n+1}) = \Phi_{(h,\delta)}(x_n,y_n)$ is implicitly defined by the relations 
\begin{align}
A(x_{n+1},y_{n+1};x_n,y_n,h,\delta) &= 0\label{implicit_one}\\
B(x_{n+1},y_{n+1};x_n,y_n,h,\delta) &= 0,\label{implicit_two}
\end{align}
where $A$ takes values in $X$, $B$ takes values in $Y$, and both $A$ and $B$ are smooth in each of their arguments. When $(h,\delta) = (0,0)$ the solution of these implicit equations must be given by $y_{n+1} = \psi_{x_n}(y_{n})$, $x_{n+1} = x_n$. In other words the inverse of the mapping $C_{(x_n,y_n)}:(x_{n+1},y_{n+1})\mapsto (A(x_{n+1},y_{n+1};x_n,y_n,0,0),B(x_{n+1},y_{n+1};x_n,y_n,0,0))$ (assuming it exists in a neighborhood of the origin) applied to $(0,0)\in X\times Y$ must be given by
\begin{align}
C_{(x_n,y_n)}^{-1}(0,0) = (x_n,\psi_{x_n}(y_n)).
\end{align}
Therefore if we apply $C_{(x_n,y_n)}^{-1}$ to both sides of the implicit equations \eqref{implicit_one}-\eqref{implicit_two} and use the fundamental theorem of calculus we may identify the following alternative set of implicit equations defining the integrator:
\begin{align}
\begin{pmatrix}
x_{n+1}\\
y_{n+1}
\end{pmatrix} =
\begin{pmatrix}
x_{n}\\
\psi_{x_n}(y_n)
\end{pmatrix} & - h\int_0^1 DC_{(x_n,y_n)}^{-1}(A_\lambda,B_\lambda)\begin{pmatrix} (\partial_hA)_\lambda\\ (\partial_hB)_\lambda\end{pmatrix}\,d\lambda\nonumber\\
& - \delta \int_0^1 DC_{(x_n,y_n)}^{-1}(A_\lambda,B_\lambda)\begin{pmatrix} (\partial_\delta A)_\lambda\\ (\partial_\delta B)_\lambda\end{pmatrix}\,d\lambda,\label{better_implicit}
\end{align}
where $A_\lambda \equiv A(x_{n+1},y_{n+1};x_n,y_n,\lambda h,\lambda \delta)$ and $B_\lambda\equiv B(x_{n+1},y_{n+1};x_n,y_n,\lambda h,\lambda \delta)$. Remarkably this new implicit defining equation \eqref{better_implicit} is \emph{nearly explicit} because all of the implicit terms are multiplied by either $h$ or $\delta$, each of which are naturally small parameters in the large timestep regime $\epsilon\ll h\ll  1$. Therefore if the derivatives of the implicit terms are well behaved conventional iterative schemes for solving Eq.\,\eqref{better_implicit} such as Picard iteration will not require any further preconditioning in order to ensure rapid converge to the solution. Note furthermore that in order to obtain the nearly-explicit defining equation \eqref{better_implicit} the most complicated operation involved is computing the derivative matrix of $C^{-1}_{(x_n,y_n)}$ for fixed $(x_n,y_n)$. Because finding this derivative only involves analyzing the behavior of the original defining equations \eqref{implicit_one}-\eqref{implicit_two} with $(h,\delta)=(0,0)$, this step can often be performed by hand. 

Finally consider the issue of possibly-questionable accuracy of implicit integrators with large timesteps. Because a slow manifold integrator is required to be an accurate approximation of the fast-slow system's flow when all timescales are resolved by $h$ it is natural to expect that a \emph{single iteration} of a slow manifold integrator will be accurate when initial conditions are chosen close to the fast-slow system's formal slow manifold. Indeed, trajectories that begin near a slow manifold that is normally stable on an $O(1)$ timescale exhibit only small-amplitude fast dynamics over the course of a timestep $h$ that satisfies $\epsilon\ll h\ll 1$. Therefore a slow manifold integrator with such a timestep \emph{is} resolving all of the solution's timescales with a level of accuracy controlled by how close the initial condition is to the slow manifold. When considering the accuracy of a slow manifold integrator after \emph{many iterations} it is helpful to exploit the fact that by Theorem \ref{formal_DTSM_existence}  a slow manifold integrator always possesses its own discrete-time formal slow manifold. To the extent that this discrete-time formal slow manifold coincides with the continuous-time formal slow manifold similar arguments to those given in Section \ref{interpretation_SM} may be used to establish a discrete normal stability timescale. Over this stable timescale the slow manifold integrator will be guaranteed to keep a solution initially close to the slow manifold within a small neighborhood thereof. Because a single iteration of a slow manifold integrator should be accurate when applied to a phase point close to the slow manifold this in turn ensures that the local timestepping error will remain small over the discrete normal stability timescale.

\begin{example}\label{IM_example}
When implicit midpoint discretization is applied to a fast-slow system $\dot{x} = g_\epsilon(x,y)$, $\epsilon\,\dot{y} = f_\epsilon(x,y)$ the result is the following system of implicit evolution equations
\begin{align}
\frac{x_{n+1}-x_n}{h} =& g_\epsilon\left(\frac{x_n+x_{n+1}}{2},\frac{y_n + y_{n+1}}{2}\right)\label{IM_one}\\
\frac{y_{n+1}-y_n}{h} = &\frac{1}{\epsilon} f_\epsilon\left(\frac{x_n+x_{n+1}}{2},\frac{y_n + y_{n+1}}{2}\right).\label{IM_two}
\end{align}
The mapping $\Psi_{(h,\epsilon)}:(x_n,y_n)\mapsto (x_{n+1},y_{n+1})$ so defined comprises a fundamental example of a slow manifold integrator. To see that it is indeed a slow manifold integrator first rewrite the implicit defining equations \eqref{IM_one}-\eqref{IM_two} in terms of $\delta = \epsilon/h$ to obtain
\begin{align}
{x_{n+1}-x_n} =& h\,g_{h\delta}\left(\frac{x_n+x_{n+1}}{2},\frac{y_n + y_{n+1}}{2}\right)\label{IMdelta_one}\\
\delta(y_{n+1}-y_n) = & f_{h\delta}\left(\frac{x_n+x_{n+1}}{2},\frac{y_n + y_{n+1}}{2}\right).\label{IMdelta_two}
\end{align}
It follows that when $(h,\delta) = (0,0)$ the mapping $\Phi_{(h,\delta)} \equiv \Psi_{(h,h\delta)}$ is defined by the equations 
\begin{align}
x_{n+1} - x_n &= 0\\
0& = f_0\left(\frac{x_n+x_{n+1}}{2},\frac{y_n + y_{n+1}}{2}\right).
\end{align}
These equations may be solved explicitly in terms of the continuous-time limiting slow manifold $y_0^*$ as $x_{n+1} = x_n$ and $y_{n+1} = -y_n + 2 y_0^*(x_n)$. This shows that this integrator is a slow manifold integrator with $\psi_x(y) = -y + y_0^*(x)$. Note that because the fixed point of this $\psi_x$ is given by $y = y_0^*(x)$ an immediate corollary of this observation is that the discrete-time formal slow manifold agrees with the continuous-time formal slow manifold to leading order. We leave it as an exercise for the reader to check how closely the implicit-midpoint formal slow manifold approximates the continuous-time formal slow manifold.

Let us now analyze the single-timestep accuracy of this slow manifold integrator when $(h,\delta)$ is near $(0,0)$, i.e. in the timestep regime where $\epsilon\ll h\ll 1$. Note that while it is well known that the implicit midpoint method has formal second-order accuracy when $h\ll \epsilon\ll 1$, it is not immediately clear how the local approximation error scales in the regime $\epsilon\ll h \ll1$. In order to make progress we introduce three technical assumptions that ensure the derivatives of $f_\epsilon(x,y)$ and $g_\epsilon(x,y)$ with respect to $(x,y)$ do not vary too wildly with $(x,y,\epsilon)$. The most straightforward of these is the following.
\begin{itemize}
\item[(A1)] Each of the partial derivatives $D_xg_\epsilon(x,y),D_yg_\epsilon(x,y),D_xf_\epsilon(x,y),D_yf_\epsilon(x,y)$ is bounded in the appropriate induced norms uniformly in $(x,y,\epsilon)$ by positive constants $G_X^0$, $G_Y^0$, $F_X^0$, $F_Y^0$, respectively. The second derivatives are also uniformly bounded.
\end{itemize}
The second ensures that the separation between the $O(1)$ and $O(\epsilon)$ timescales is large uniformly in $(x,y)$,
\begin{itemize}
\item[(A3)] The derivative $  D_yf_\epsilon(x,y)$ is invertible for each $(x,y,\epsilon)$. Moreover there is an $(x,y,\epsilon)$-independent positive constant $\omega$ such that $|| [D_yf_\epsilon(x,y)]^{-1}||\leq 1/\omega$ for all $(x,y,\epsilon)$.
\end{itemize}
In this assumption $||\cdot||$ denotes the induced operator norm and the constant $\omega$ may be interpreted as $\epsilon$ time the slowest frequency characterizing dynamics normal to the slow manifold.
The third and most technical assumption places an even stronger constraint on the variability of $D_yf_\epsilon(x,y)$, namely
\begin{itemize}
\item[(A3)] There is a positive constant $\Gamma$ independent of $(x,y,\epsilon)$ such that
\begin{align}
\bigg|\bigg| \left(\int_0^1 [D_yf_\epsilon(x,y)]^{-1}D_yf_\epsilon(x+\lambda u,y+\lambda v)\,d\lambda\right)^{-1} \bigg|\bigg|\leq 1+\Gamma,
\end{align}
for all $(u,v)\in X\times Y$.

\end{itemize}
Note in particular that the assumption (A3) is satisfied with $\Gamma = 0$ if $D_yf_\epsilon(x,y)$ is independent of $(x,y)$. 

With assumptions (A1-3)  in mind, suppose that $(x_n,y_n)\in X\times Y$ is a given point in the fast-slow system's phase space. We would like to estimate the difference between the single-timestep evolution of $(x_n,y_n)$ predicted by implicit midpoint, $(x_{n+1},y_{n+1})$, and the time-$h$ evolution of $(x_n,y_n)$ provided by the true system dynamics, $(\hat{x}_{n+1},\hat{y}_{n+1})$. As we have already mentioned, we are interested in the regime $\epsilon\ll h\ll 1$, or $(h,\delta)\approx (0,0)$.  Let $(e_X,e_Y) = (x_{n+1} - \hat{x}_{n+1} ,y_{n+1} - \hat{y}_{n+1}) $ be the deviation whose size we would like to estimate. According to implicit midpoint $(x_{n+1},y_{n+1})$ is determined implicitly by
\begin{align}
\delta(y_{n+1} - y_n) &= f_{h\delta}\left(\frac{x_{n+1}+x_n}{2},\frac{y_{n+1}+y_n}{2}\right)\label{IM_a}\\
x_{n+1} - x_n & = h\,g_{h\delta}\left(\frac{x_{n+1}+x_n}{2},\frac{y_{n+1}+y_n}{2}\right).\label{IM_b}
\end{align}
By integrating the exact fast-slow system in time we see that $(\hat{x}_{n+1},\hat{y}_{n+1})$ satisfies the similar set of equations
\begin{align}
\delta(\hat{y}_{n+1} - y_n) &= \overline{f}\label{ex_a}\\
\hat{x}_{n+1} - x_n& = h\,\overline{g}\label{ex_b},
\end{align}
where $\overline{f} = h^{-1}\int_{0}^h f_{h\delta}(\hat{x}(t_0),\hat{y}(t_0))\,dt_0$ and $\overline{g} = h^{-1}\int_0^h g_{h\delta}(\hat{x}(t_0),\hat{y}(t_0))\,dt_0$ are the time-$h$ orbit averages of $f_\epsilon$ and $g_\epsilon$ along the unique solution of the fast-slow system $(\hat{x}(t),\hat{y}(t))$ with $(\hat{x}(0),\hat{y}(0)) = (x_n,y_n)$. Subtracting \eqref{ex_a} from \eqref{IM_a}, subtracting \eqref{ex_b} from \eqref{IM_b}, and applying the fundamental theorem of calculus therefore leads to the following system of equations satisfied by the deviation $(e_X,e_Y)$,
\begin{align}
\delta\, e_Y &= \underline{f} - \overline{f} + \frac{1}{2} F_X[e_X] + \frac{1}{2} F_Y[e_Y]\label{linear_error_one}\\
e_X& = h\,\underline{g} - h\,\overline{g} + \frac{h}{2} G_X[e_X] + \frac{h}{2} G_Y[e_Y],\label{linear_error_two}
\end{align}
where we have introduced the convenient shorthand
\begin{align}
Q_Z &= \int_0^1 D_zq_{h\delta}\left(\frac{\hat{x}_{n+1} + x_n}{2} + \lambda\frac{e_X}{2},\frac{\hat{y}_{n+1} + y_n}{2} + \lambda\frac{e_Y}{2}\right)\,d\lambda\\
\underline{q} & = q_{h\delta}\left(\frac{\hat{x}_{n+1} + x_n}{2} ,\frac{\hat{y}_{n+1} + y_n}{2}\right)
\end{align}
with $Q \in\{F,G\}$, $Z \in\{X,Y\}$. We would like to solve Eqs.\,\eqref{linear_error_one}-\eqref{linear_error_two} for $(e_X,e_Y)$ as functions of the residuals $R_f = \underline{f} - \overline{f}$ and $R_g = \underline{g} - \overline{g}$ so that the norm of $(e_X,e_Y)$ may be estimated in terms of the norm of $(R_g,R_f)$. Before this can be done however we must establish invertibility of the linear map $L = \delta\,\text{id}_Y - \frac{1}{2} F_Y$, which we turn to next.

To see that $L$ is invertible first observe that $\underline{\Omega}^{-1}F_X$, with $\underline{\Omega} = D_yf_\epsilon(\hat{x}_{n+1}/2 +x_n/2,\hat{y}_{n+1}/2 +y_n/2 )$, must be invertible by assumption (A3) with $(u,v) = (e_X,e_Y)$ and $(x,y) = (\hat{x}_{n+1}+x_n,\hat{y}_{n+1}+y_n)/2$. It follows immediately that $F_X$ is invertible. Next consider the parameter-dependent linear map $L_\lambda = \lambda\,\delta\,\text{id}_Y - \frac{1}{2} F_Y $. We have just established that $L_0$ is invertible. It follows then from the openness of the set of invertible matrices that $L_\lambda$ must also be invertible for $\lambda$ sufficiently close to zero. In fact we may use assumption (A2) to bound the norm of $L_\lambda^{-1}$ as follows. Write $L_\lambda = -\frac{1}{2}F_X (1 - 2\,\lambda \,\delta\,F_X^{-1})$ and define $U_\lambda = (1 - 2\,\lambda \,\delta\,F_X^{-1})^{-1}$. Because $L_\lambda^{-1} = -2\,U_\lambda F_X^{-1}$ the induced norm of $L_\lambda^{-1}$ is bounded by the norm of $U_\lambda$ according to 
\begin{align}
||L_\lambda^{-1}||\leq 2||F_X^{-1}||\,||U_\lambda||\leq 2 ||(\underline{\Omega}^{-1}F_X)^{-1}\underline{\Omega}^{-1}||\,|| U_\lambda||\leq \frac{2 (1+\Gamma)}{\omega}||U_\lambda||,
\end{align}
we we note in passing that we have applied assumptions (A2) and (A3). The norm of $U_\lambda$ may be bound in turn by first recognizing that $U_\lambda$ solves the ordinary differential equation $\partial_\lambda U_\lambda = - U_\lambda A U_\lambda$ with $A = - 2\delta \,F_X^{-1}$, and then applying a variant of the Gronwall inequality. This leads to the estimate 
\begin{align}
||U_\lambda ||\leq\frac{1}{1-\lambda ||A||} \leq \frac{1}{1 - \frac{2\lambda\,\delta\,(1+\Gamma)}{\omega}},
\end{align}
which shows that $L_\lambda $ is invertible when $\lambda < \omega/(2\delta(1+\Gamma))$. In particular $L = L_1$ is invertible and $L^{-1}$ is bounded according to 
\begin{align}
|| L^{-1}|| \leq \frac{\frac{2 (1+\Gamma)}{\omega}}{1 - \delta \frac{2 (1+\Gamma)}{\omega}}\label{L_mat_bound}
\end{align}
as long as $\delta < \omega/(2(1+\Gamma))$.

We may now return to the task of solving Eqs.\,\eqref{linear_error_one}-\eqref{linear_error_two} for $(e_X,e_Y)$ as functions of the residuals $(R_g,R_f)$. Because $L = \delta\,\text{id}_Y - \frac{1}{2} F_Y $ is invertible Eq.\,\eqref{linear_error_one} implies that $e_Y$ may be expressed in terms of $e_Y$ and $R_f$ as
\begin{align}
e_Y = L^{-1}R_f + \frac{1}{2} L^{-1}F_X[e_X].\label{proto_ey}
\end{align}
Substituting this result into Eq.\,\eqref{linear_error_two} then shows that $e_X$ must satisfy
\begin{align}
e_X = h R_g + \frac{h}{2}G_X[e_X] + \frac{h}{2}G_YL^{-1}R_f +  \frac{h}{4}G_YL^{-1}F_X[e_X].
\end{align}
This equation may be used to solve for $e_X$ provided the linear map $\mathcal{I} = 1 - \frac{h}{2}G_X - \frac{h}{4} G_Y L^{-1}F_X$ is invertible. Resorting to assumption (A1), and repeating the argument given in the previous paragraph, it is straightforward to show that the invertibility of $\mathcal{I}$ is guaranteed and the norm of $\mathcal{I}^{-1}$ is bounded by
\begin{align}
||\mathcal{I}^{-1}||\leq \left(1 - \frac{h}{2}G_X^0 - \frac{h}{4}G_Y^0F_X^0\frac{\frac{2 (1+\Gamma)}{\omega}}{1 - \delta \frac{2 (1+\Gamma)}{\omega}} \right)^{-1}\label{I_mat_bound}
\end{align}
provided the timestep is chosen sufficiently small to ensure
\begin{align}
h < \left(\frac{1}{2}G_X^0 +\frac{1}{4}G_Y^0F_X^0\frac{\frac{2 (1+\Gamma)}{\omega}}{1 - \delta \frac{2 (1+\Gamma)}{\omega}}\right)^{-1}.
\end{align}
Therefore $e_X$ may be written in terms of $(R_g,R_f)$ as
\begin{align}
e_X = h \mathcal{I}^{-1}R_g + \frac{h}{2}\mathcal{I}^{-1}G_YL^{-1}R_f,\label{ex_formula}
\end{align}
and by Eq.\,\eqref{proto_ey} the fast-variable error $e_Y$ may be written in similar terms as
\begin{align}
e_Y =   \frac{h}{2}L^{-1}F_X\mathcal{I}^{-1}R_g +L^{-1}R_f+ \frac{h}{4}L^{-1}F_X\mathcal{I}^{-1}G_YL^{-1}R_f.\label{ey_formula}
\end{align}

In order to complete the derivation of our single-step error bound in the regime $\epsilon\ll h\ll 1$ we must now bound the residuals $R_f$ and $R_g$. We only consider $R_f$; bounding $R_g$ will follow exactly the same argument. By definition $R_f = \underline{f} - \overline{f}$, where $\underline{f}$ is $f_{h\delta}$ evaluated at the mean of the orbit's endpoints and $\overline{f}$ is the orbit average of $f_{h\delta}$. If $f_M$ denotes the midpoint quadrature approximation of the orbit average of $f_{h\delta}$ we may write $R_f = \underline{f} - f_M + f_M - \overline{f}$. Therefore the norm of the residual bounded by $||R_f|| \leq || \underline{f}-f_M || + || f_M-\overline{f} ||$. The second term is bounded by the usual midpoint quadrature error estimate, $ || f_M-\overline{f} ||\leq C_0\,\delta\,h^3 || \hat{y}^{\prime\prime} ||_{[0,h]}$, where $C_0$ is an $O(1)$ positive constant and $|| \hat{y}^{\prime\prime} ||_{[0,h]}\equiv \text{sup}_{t\in[0,h]}|| \hat{y}^{\prime\prime}(t)||$. The first term may be bounded by applying the fundamental theorem of calculus twice to estimate the difference between the mean of the orbit endpoints and the orbit at the midpoint, resulting in $|| \underline{f}-f_M || \leq \frac{h^2}{8} F_X^0 || \hat{x}^{\prime\prime}||_{[0,h]} + \frac{h^2}{8} F_Y^0 || \hat{y}^{\prime\prime} ||_{[0,h]}$. The fast residual is therefore bounded according to 
\begin{align}
||R_f|| \leq C_0 \delta h^3  || \hat{y}^{\prime\prime\prime} ||_{[0,h]}  + \frac{h^2}{8} F_X^0 || \hat{x}^{\prime\prime}||_{[0,h]} + \frac{h^2}{8} F_Y^0 || \hat{y}^{\prime\prime} ||_{[0,h]}.\label{f_residual}
\end{align}
Applying the same reasoning to $R_g$ leads to the bound
\begin{align}
||R_g|| \leq C_0  h^2  || \hat{x}^{\prime\prime\prime} ||_{[0,h]}  + \frac{h^2}{8} G_X^0 || \hat{x}^{\prime\prime}||_{[0,h]} + \frac{h^2}{8} G_Y^0 || \hat{y}^{\prime\prime} ||_{[0,h]}.\label{g_residual}
\end{align}
Finally we gather the bounds \eqref{f_residual}-\eqref{g_residual}, \eqref{I_mat_bound}, and \eqref{L_mat_bound}, and apply the triangle inequality to the formulas \eqref{ex_formula}-\eqref{ey_formula} to obtain the following explicit \emph{a priori} error bound for a single implicit midpoint step with $\epsilon\ll h\ll 1$:
\begin{align}
||e_Y||\leq \alpha_1\, \delta h^3 || \hat{y}^{\prime\prime\prime} ||_{[0,h]} + \alpha_2\, h^3 || \hat{x}^{\prime\prime\prime} ||_{[0,h]} + \alpha_3\, h^2\,|| \hat{x}^{\prime\prime}||_{[0,h]}+\alpha_4\,h^2\, || \hat{y}^{\prime\prime} ||_{[0,h]}\label{fast_var_estimate_IM}\\
||e_X||\leq \beta_1\,\delta\,h^4  || \hat{y}^{\prime\prime\prime} ||_{[0,h]} + \beta_2\,h^3 || \hat{x}^{\prime\prime\prime} ||_{[0,h]}  + \beta_3\,h^3 || \hat{x}^{\prime\prime}||_{[0,h]} + \beta_4\,h^3|| \hat{y}^{\prime\prime} ||_{[0,h]},\label{slow_var_estimate_IM}
\end{align}
where the $\alpha_k,\beta_k$ are each $O(1)$ positive constants that may be extracted from the above formulas in straightforward fashion.

The error estimates \eqref{fast_var_estimate_IM}-\eqref{slow_var_estimate_IM} demonstrate several notable features of the implicit midpoint method in the large timestep regime. First they suggest how crude an approximation the method is for solutions away from the slow manifold. Such solutions have large time derivatives, $\hat{y}^{(n)}=O(1/(h\delta)^n)$, which implies that the right-hand-sides of \eqref{fast_var_estimate_IM}-\eqref{slow_var_estimate_IM} are not necessarily small for $(h,\delta)$ near zero. On the other hand if $(x_n,y_n)$ is sufficiently close to the fast-slow system's formal slow manifold and the slow manifold is normally stable over the time interval $[0,h]$ the bounds are much more useful. By the zero derivative principle explained in Section \ref{zdp_sec} solutions initialized sufficiently close to the slow manifold will have $O(1)$ time derivatives. In particular if $(x_n,y_n)$ is sufficiently close to the slow manifold to ensure that the second and third time derivatives are $O(1)$ then Eqs.\,\eqref{fast_var_estimate_IM}-\eqref{slow_var_estimate_IM} show that the slow variable evolution has the same formal accuracy as the small-timestep implicit midpoint method. Interestingly howerver the formal accuracy in the fast variable for the same type of solution is one order in $h$ worse than small-timestep implicit midpoint. In practice this dichotomy should not prove problematic because the slow variable $x$ is often associated with ``macroscopic" quantities that are more easily measured than the ``microscopic" quantities encoded in the fast variable $y$.

As a final remark we note that choosing initial conditions so that the norms of the time derivatives in Eqs.\,\eqref{fast_var_estimate_IM}-\eqref{slow_var_estimate_IM} are $O(1)$ may be facilitated through a careful application of the zero-derivative principle. Instead of solving for the needed coefficients in $y_\epsilon^*$ by hand, one may apply a root finding algorithm to the functions $\mathfrak{D}_\epsilon^N$ introduced in Section \ref{zdp_sec} whose zero levels were shown to define approximate slow manifolds. During the root-finding procedure the $\mathfrak{D}_\epsilon^N(x,y)$ may be calculated by hand, which is certainly easier that solving for the $y_k^*$, or as in Ref.\,\citep{Gear_2006} a separate small-timestep integrator may be used to estimate $d^Ny/d\tau^N$ numerically. 

Within the geophysical fluid dynamics community the importance of choosing initial conditions for simulations that are close to the slow manifold has been recognized for some time. See for example Ref.\,\cite{Leith_1980} for an application of the nonlinear normal mode initialization introduced in\,\cite{Machenhauer_1977}, and Ref.\,\cite{Vautard_1986} for analytic expansion initialization. Failure to select initial conditions on the slow manifold leads to oscillations in numerical solutions of various forms of the primitive equations that are inconsistent with atmostpheric data. This phenomenon is by no means unique to the modeling of geophysical flows. Figure \ref{IC_fig} shows an example of what happens when the implicit midpoint method is used to integrate the zero-drag Abraham-Lorentz equations using a large numerical timestep and initial conditions that are not close to the system's slow manifold. From the perspective of the general theory of slow manifolds, the origin of the unphysical oscillations generated by the ``off-manifold" initial condition in Figure \ref{IC_fig} is exactly the same as the origin of unphysical gravity-inertia waves in simulations of a quasi-geostrophic atmosphere --- in each case a normally-elliptic slow manifold is to blame. More generally, any time that a reduced plasma model may be identified with a normally-elliptic slow manifold large-timestep implicit simulations of the corresponding parent model must invariably cope with unphysical grid-scale oscillations. Because such slow manifolds occur in many plasma models that exhibit both timescale separation and weak dissipation, slow manifold initialization is an important consideration in computational plasma physics even outside of the scope of slow manifold integrators.

\begin{figure}
\includegraphics[scale = .8]{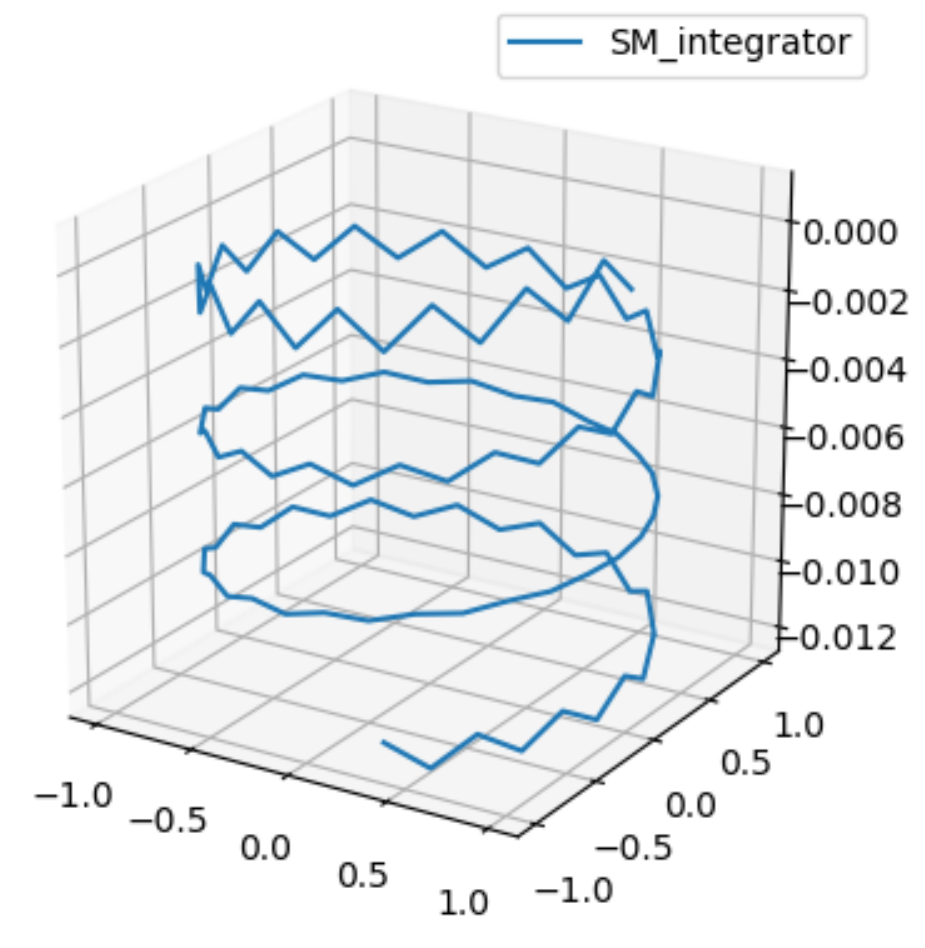}
\includegraphics[scale = .8]{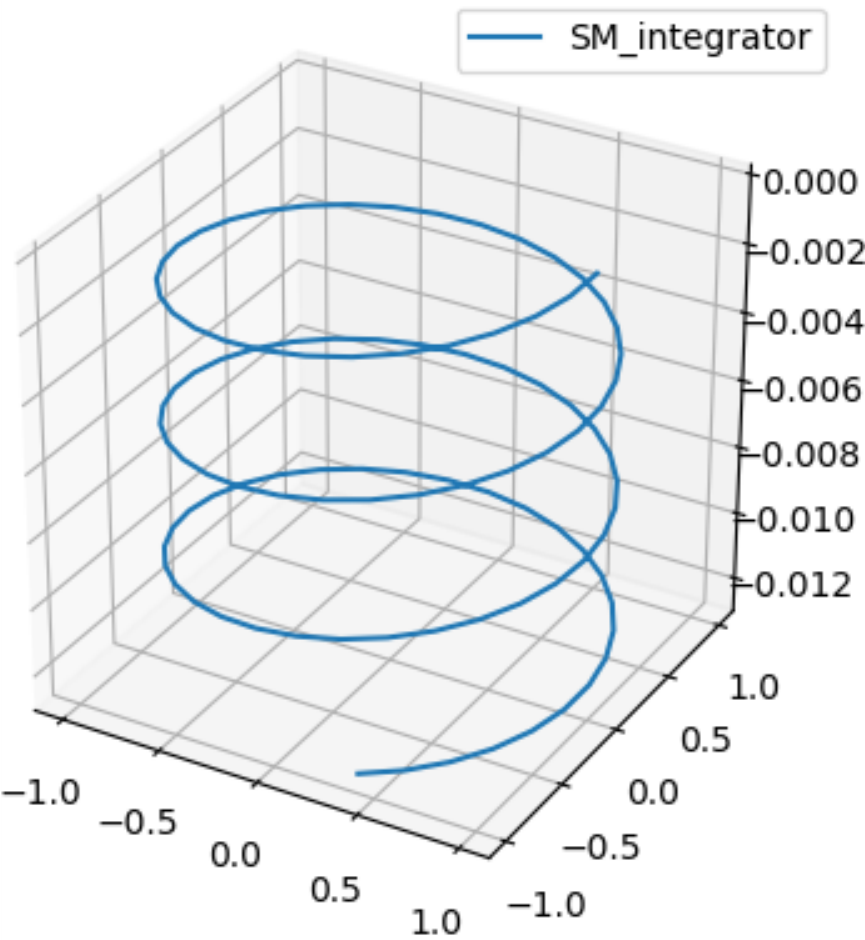}
\caption{\label{IC_fig} Results of integrating the zero-drag Abraham-Lorentz system with implicit midpoint using initial conditions off (left) and on (right) the slow manifold. In Cartesian coordinates, the magnetic field is $\bm{B} = \nabla\psi\times \bm{e}_z$ with $\psi = 4(x^2 + y^2)^2$. The parameter $\epsilon = 10^{-1}$ and the numerical timestep $\Delta t = .2\gg \epsilon$ does not resolve $\epsilon$. The off-manifold initial condition is $(x,y,z,v_\parallel,v_1,v_2) = (1,0,0,1,1,1)$, while the on-manifold initial condition is $(1,0,0,1,v^*_1,v^*_2)$, where $v_1^*\bm{e}_1 + v_1^*\bm{e}_2 = \epsilon \,\bm{v}_{\perp1}^* + \epsilon^2\,\bm{v}_{\perp 2}^*$ and $\bm{v}_{\perp 1}^*,\bm{v}_{\perp 2}^*$ are given in Eqs.\,\eqref{zero_drag_y1}-\eqref{zero_drag_y2}. The integration on the right agrees well with the true solution, while the integration on the left exhibits unphysical grid-scale oscillations.}
\end{figure} 

\end{example}

The notion of slow manifold integration should be compared with the related notion of asymptotic-preserving (AP) scheme. For the theoretical foundations of AP schemes we refer readers to the  Review article \citep{Jin_2012}. For several notable applications of the AP idea to problems in plasma physics we recommend \citep{Filbet_2010,Degond_2016,Degond_2017,Degond_2010}. An AP scheme for a fast-slow system may be defined as an integrator $\Phi_{(h,\epsilon)}:(x_n,y_n)\mapsto (x_{n+1},y_{n+1})$ for the fast-slow system such that (a) for fixed $\epsilon$ and sufficiently-small $h$ the map $\Phi_{(h,\epsilon)}$ approximates the fast-slow system's time-$h$ flow, and (b) as $\epsilon\rightarrow 0$ the map $\Phi_{(h,\epsilon)}$ approximates the time-$h$ flow of the system's limiting slow manifold reduction $\dot{x} = g_0^*(x)$. Apparently property (a) for AP schemes coincides with property (1) in the definition of slow manifold integrators, but property (b) for AP schemes is different from a slow manifold integrator's property (2). This difference is not superficial; AP schemes are not necessarily slow manifold integrators, and slow manifold integrators are not necessarily AP schemes.

For a simple example of an AP scheme that is not a slow manifold integrator consider the fast-slow system $\epsilon\,\dot{y} = A y$, $\dot{x} = g(x)$, where $A$ is an invertible matrix and $g$ is an arbitrary smooth function of $x$. The slow manifold for this system is  $y_\epsilon^*(x) = 0$ and the limiting slow manifold reduction is simply $\dot{x} = g(x)$. Suppose we discretize time using the scheme
\begin{align}
y_{n+1} &= y_n + \frac{h}{\epsilon} A \bigg([1 - w(h/\epsilon)]y_n + w(h/\epsilon)\left[y_{n+1} - y_n - \gamma_0\right]\bigg)\\
x_{n+1} & = x_n + h\,g(x_n)
\end{align}
where $w(h/\epsilon) = \exp(-\epsilon^2/h^2)$ and $\gamma_0\in Y$ is a non-zero constant. For fixed $\epsilon$ this scheme converges exponentially quickly to the forward Euler method as $h$ tends to zero, and therefore satisfies property (a). On the other hand fixing $h$ and sending $\epsilon$ to zero leads to the limit scheme
\begin{align}
y_{n+1} &= y_n + \gamma_0\label{ex_fwd_euler_limit_y}\\
x_{n+1}& = x_n + h\,g(x_n).\label{ex_fwd_euler_limit}
\end{align}
Because Eq.\,\eqref{ex_fwd_euler_limit} comprises the forward Euler discretization for $\dot{x} = g(x)$ this limit scheme approximates the time-$h$ flow of the limiting slow manifold reduction, which implies that property (b) is satisfied. We conclude that the scheme is AP. However it cannot be a slow manifold integrator because the limiting $y$ evolution \eqref{ex_fwd_euler_limit} has no fixed points.

For an example of a slow manifold integrator that is not AP consider Lagrangian system defined by the $\epsilon$-dependent Lagrangian
\begin{align}
\ell_\epsilon(\bm{q},\bm{p},\dot{\bm{q}},\dot{\bm{p}}) = \frac{1}{2}\epsilon |\dot{\bm{p}}|^2 + \frac{1}{2}\epsilon |\dot{\bm{q}}|^2  + \bm{p}\cdot\dot{\bm{q}} - H(\bm{q},\bm{p}),
\end{align}
where $\bm{q},\bm{p}\in\mathbb{R}^d$ and $H$ is any smooth function on $\mathbb{R}^{2d}$. In first-order form, the Euler-Lagrange equations associated with this Lagrangian are given by $\epsilon\,\dot{\bm{f}} = \bm{v} - \partial_{\bm{p}}H$, $\epsilon\,\dot{\bm{v}} = -\bm{f} - \partial_{\bm{q}}H$, $\dot{\bm{p}} = \bm{f}$, $\dot{\bm{q}} = \bm{v}$. This is a fast-slow system with slow variable $x = (\bm{q},\bm{p})$ and fast variable $y = (\bm{v},\bm{f})$. The limiting slow manifold is $y_0^*(x) = (\partial_{\bm{p}}H, -\partial_{\bm{q}}H)$, whence the limiting slow manifold reduction is $\dot{\bm{q}} = \partial_{\bm{p}}H$, $\dot{\bm{p}} = -\partial_{\bm{q}}H$. Notice that the limiting slow manifold reduction coincides with the well-known canonical Hamilton's equations. Suppose we discretize time using the scheme 
\begin{align}
\epsilon \frac{\bm{v}_{n+1/2} - \bm{v}_{n-1/2}}{h} &= -\frac{\bm{f}_{n+1/2} + \bm{f}_{n-1/2}}{2} - \partial_{\bm{q}}H\left(\bm{q}_n, \bm{p}_n \right)\\
\epsilon\frac{\bm{f}_{n+1/2} - \bm{f}_{n-1/2}}{h} & = \frac{\bm{v}_{n+1/2} + \bm{v}_{n-1/2}}{2}  - \partial_{\bm{p}}H\left(\bm{q}_n ,\bm{p}_n \right)\\
  \frac{\bm{q}_{n+1}-\bm{q}_n}{h}  & = \bm{v}_{n+1/2}\\
 \frac{\bm{p}_{n+1} - \bm{p}_n}{h} & = \bm{f}_{n+1/2}.
\end{align} 
For fixed $\epsilon$ this scheme is formally second-order accurate in $h$, implying that property (1) is satisfied. As $(h,\delta) = (h,\epsilon/h)$ tends to zero the scheme limits to
\begin{align}
\bm{f}_{n+1/2} = & -\bm{f}_{n-1/2} - 2\partial_{\bm{q}}H(\bm{q}_n,\bm{p}_n)\\
\bm{v}_{n+1/2} = & -\bm{v}_{n-1/2} + 2\partial_{\bm{p}}H(\bm{q}_n,\bm{p}_n)\\
\bm{p}_{n+1} = & \bm{p}_n\\
\bm{q}_{n+1} = & \bm{q}_n,
\end{align}
which shows that property (2) is satisfied. (C.f. Example \ref{IM_example}.) Therefore this scheme defines a slow manifold integrator. However in Ref.\,\citep{Ellison_2018} Ellison \emph{et. al.} have shown that even for simple choices of $H$ the scheme obtained by sending $\epsilon$ to zero while holding $h$ fixed,
\begin{align}
\frac{\bm{f}_{n+1/2} + \bm{f}_{n-1/2}}{2}&=  - \partial_{\bm{q}}H\left(\bm{q}_n, \bm{p}_n \right)\\
\frac{\bm{v}_{n+1/2} + \bm{v}_{n-1/2}}{2}& =    \partial_{\bm{p}}H\left(\bm{q}_n ,\bm{p}_n \right)\\
  \frac{\bm{q}_{n+1}-\bm{q}_n}{h}  & = \bm{v}_{n+1/2}\\
 \frac{\bm{p}_{n+1} - \bm{p}_n}{h} & = \bm{f}_{n+1/2},
\end{align}
exhibits so-called \emph{parasitic modes}. (See Section II.A in the latter reference.) These parasitic modes prevent this limit scheme from being a stable integrator for the limit system, even when initial conditions are chosen on the slow manifold. Using terminology from the AP literature, this limit scheme is not even weakly-AP because well-prepared initial data will eventually evolve away from the well-prepared constraint set defined by the slow manifold.

While the class of AP schemes differs from the class of slow manifold integrators a number of schemes do lie at the intersection of the two. For example the implicit midpoint method applied within a fast-slow split is both a slow manifold integrator and an AP scheme. The same is true of the backward Euler method. In cases where both concepts apply it may be beneficial to recognize that a scheme is a slow manifold integrator, even after the AP property has been verified. For example doing so allows one to immediately invoke Theorem \ref{formal_DTSM_existence} to conclude that the scheme possesses a discrete-time formal slow manifold and to estimate how closely the discrete-time formal slow manifold approximates the continuous-time formal slow manifold. If the discrete-time formal slow manifold disagrees with the continuous-time formal slow manifold at, say, third order, then there is no hope for the scheme to capture physical effects, e.g. drifts, associated with third-order corrections to the slow manifold. While we will not discuss this further in this Review, methods for analyzing the normal stability of continuous-time formal slow manifolds may also be adapted to the analyses of the stability of discrete-time formal slow manifolds; in order for a slow manifold integrator to be weakly-AP its formal slow manifold must be nonlinearly normally stable.

There are many interesting open questions pertaining to slow manifold integrators. May composition methods be used to construct higher-order slow manifold integrators out of lower-order examples like implicit midpoint or backward Euler? Can the order of the discrete-time slow manifold be increased without increasing the formal order of the scheme? In the context of non-dissipative fast-slow systems can normally-elliptic slow manifold integrators be found that exhibit discrete-time slow manifolds with good normal stability properties over $O(1/\epsilon^k)$ time intervals? Is there a discrete-time analogue of the zero-derivative principle that would allow the numerical generation of a points on a discrete-time slow manifold? Are there benefits to developing an integrator for a reduced model by first developing a slow manifold integrator for its parent model and then applying discrete-time slow manifold reduction? These questions as well as warrant further investigation.

\section{Presymplectic Slow Manifold Reduction\label{hamiltonian_SM}}
In this section we will study slow manifolds and slow manifold reduction in the context of fast-slow systems that happen to possess Hamiltonian structure. This category of fast-slow systems deserves special attention because many models of plasma behavior with weak dissipation may be approximated as Hamiltonian systems over  sufficiently short timescales. (See for instance Refs.\,\cite{Morrison_1980} and \cite{Marsden_1982}, where the Hamiltonian structure underlying the collisionless Vlasov-Maxwell system was uncovered for the first time, and Ref.\,\cite{Morrison_MHD_1980} for the discovery the MHD's Poisson bracket.) We will show that a fast-slow system's Hamiltonian structure may be exploited systematically to identify slow manifold reductions that comprise Hamiltonian systems in their own right. In particular we will describe a method of constructing slow manifold reductions for Hamiltonian fast-slow systems that leads to \emph{inheritance} of the parent model's Hamiltonian structure by the reduced model. In connection with this inherited Hamiltonian structure we will also discuss inheritance of continuous symmetries, which by way of Noether's theorem provides a useful method for studying conservation laws in Hamiltonian slow manifold reductions.

Recently a number of fundamental non-dissipative reduced plasma models have been shown to arise as slow manifold reductions, and had their Hamiltonian structures explained by the ideas of inheritance that will be described in this Section. In particular the Hamiltonian structures underlying (extended) MHD \cite{Burby_two_fluid_2017}, kinetic MHD \cite{Burby_Sengupta_2017_pop}, guiding center theory \cite{Burby_loops_2019}, and nonlinear wave-mean-flow interaction \cite{Burby_Ruiz_2019} have been explained in this manner. In his review \cite{MacKay_2004} MacKay also discusses inheritance of Hamiltonian structure by slow manifolds, and treats several examples outside of plasma physics. Our discussion will cover some of the same ground as MacKay as well as additional topics.

\subsection{Inheritance of Hamiltonian structure}

The foundational set of ideas that underlies inheritance of Hamiltonian structure by slow manifolds is the following. Let $(Z,\omega)$ be a symplectic manifold, a smooth manifold $Z$ equipped with a closed non-degenerate two-form $\omega$. Let $U$ be a Hamiltonian vector field on $M$ with Hamiltonian $H$. Then $U$, $H$, and $\omega$ are related by \emph{Hamilton's equations} \cite{Abraham_2008}, which read
\begin{align}
\iota_U\omega = \mathbf{d}H,\label{basic_symplectic_hamilton_equation}
\end{align}
where $i_U$ is the interior product with respect to $U$ and $\mathbf{d}$ is the exterior derivative on $Z$. We say that $U$ has a symplectic Hamiltonian structure. If $S\subset Z$ is an invariant manifold for $U$ then the dynamics of $U$ restricted to $S$ possess a \emph{pre}symplectic Hamiltonian structure, meaning that on the manifold $S$ there is a closed two-form $\omega_S$ and a smooth function $H_S$ such that the vector field $U_S$ induced by $U$ on $S$ satisfies
\begin{align}
\iota_{U_S}\omega_S = \mathbf{d}H_S.\label{primordial_reduced_hamilton}
\end{align}
In this previous formula $\mathbf{d}$ is the exterior derivative on $S$. Note that because $\omega_S$ may be degenerate Eq.\,\eqref{primordial_reduced_hamilton} does not by itself uniquely determine $U_S$, in contrast to Eq.\,\eqref{basic_symplectic_hamilton_equation}. Instead the component of $U_S$ that lies in the kernel of $\omega_S$ must be determined by requiring $U_S$ to be the restriction of $U$ to the invariant manifold $S$.  If $I_S:S\rightarrow M $ is the inclusion map for $S$ the two-form $\omega_S$ and the smooth function $H_S$ are given by
\begin{align}
\omega_S &= I_S^*\omega\\
H_S & = I_S^*H,
\end{align}
where $I_S^*$ is the pullback operator associated with the inclusion. In other words the reduced dynamics on the invariant manifold $S$ naturally \emph{inherit} a Hamiltonian structure from the unreduced dynamics on $Z$.


The formula Eq.\,\eqref{primordial_reduced_hamilton} is simple to establish. By Hamilton's equations for $U$ and the commutativity of the exterior derivative with pullback we have the identity
\begin{align}
\mathbf{d}H_S = I_S^*\mathbf{d}H = I_S^*(\iota_U\omega).\label{proto_reduced_ham}
\end{align}
By the definition of pullback if $\dot{s}\in T_sS$ is a vector tangent to $S$ at $s\in S$ we have
\begin{align}
I_S^*(\iota_U\omega)(s)[\dot{s}] &= \omega(I_S(s))[U(I_S(s)), T_sI_S[\dot{s}]]\nonumber\\
& = \omega(I_S(s))[T_sI_S[U_S(s)],T_sI_S[\dot{s}]]\nonumber\\
& = (I_S^*\omega)(s)[U_S(s),\dot{s}]\nonumber\\
& =( \iota_{U_S}\omega_S)(s)[\dot{s}],\label{inherit_exact_eqn1}
\end{align}
where we have also used the fact that $U_S$ and $U$ are $I_S$-related, i.e. $T_sI_S[U_S(s)] = U(s)$. Because Eq.\,\eqref{inherit_exact_eqn1} holds for all $s\in S$ we conclude that $I_S^*(\iota_U\omega) = \iota_{U_S}\omega_S$. In light of Eq.\,\eqref{proto_reduced_ham} this implies Eq.\,\eqref{primordial_reduced_hamilton}.

Suppose now that $U$ is the infinitesimal generator of a Hamiltonian system on $Z$ that admits a fast-slow split. Recall (c.f. Definition \ref{fs_split_def}) that this means there are coordinates $(x,y)$ on $Z$ in which the dynamical system $\dot{z} = U(z)$ is equivalent to a fast-slow system in sense of Definition \ref{FS_def}. If $\bm{\psi}_\epsilon:z\mapsto (x,y)$ denotes the coordinate transformation and $W_\epsilon(x,y) = (g_\epsilon(x,y),f_\epsilon(x,y)/\epsilon)$ denotes the vector field $U$ expressed in the $(x,y)$-coordinates then $W_\epsilon$ satisfies its own set of Hamilton's equations because 
\begin{align}
\iota_U\omega &= \mathbf{d}H\nonumber\\
\Rightarrow\bm{\psi}_{\epsilon*}(\iota_U\omega) &= \bm{\psi}_{\epsilon*}(\mathbf{d}H)\nonumber\\
\Rightarrow  \iota_{W_\epsilon}\Omega_\epsilon &= \mathbf{d}\mathcal{H}_\epsilon,
\end{align}
where $\Omega_\epsilon = \bm{\psi}_{\epsilon *}\omega$ and $\mathcal{H}_\epsilon = \bm{\psi}_{\epsilon*}H$ denote the symplectic form and the Hamiltonian function expressed in the $(x,y)$-coordinates. We may therefore synthesize the theory of slow manifolds in fast-slow systems with the theory of invariant manifolds in Hamiltonian systems to study the possibility that slow manifolds in Hamiltonian fast-slow systems inherit Hamiltonian structure. If slow manifolds were truly invariant manifolds this would be described completely by the argument in the previous paragraph. However, because slow manifolds are only invariant objects to all-orders in perturbation theory the inheritance phenomenon must be reexamined.

First we define the class of Hamiltonian fast-slow systems that we will study as follows.
\begin{definition}[Hamiltonian fast-slow system]
Let $W_\epsilon(x,y) = (g_\epsilon(x,y),f_\epsilon(x,y)/\epsilon)$ be the infinitesimal generator of a fast-slow system on $X\times Y$. This fast-slow system is \emph{Hamiltonian} if there is an $\epsilon$-dependent function $H_\epsilon$ and an $\epsilon$-dependent closed $2$-form $\Omega_\epsilon$ that depend smoothly on $\epsilon$ and such that
\begin{align}
\iota_{W_\epsilon}\Omega_\epsilon = \mathbf{d}H_\epsilon.
\end{align}
\end{definition}
\noindent Note that we work in the class of \emph{presymplectic} manifolds, which is strictly larger than the class of symplectic manifolds, and which is distinct from the class of Poisson manifolds. We refer the reader to MacKay's review in Ref.\,\citep{MacKay_2004} for a discussion of slow manifolds for Hamiltonian systems on symplectic and Poisson manifolds. A practical reason for considering the presymplectic case is the flexibility it provides when working with systems with gauge degrees of freedom. For example, in \cite{Burby_loops_2019} a reparameterization symmetry of phase space loops is exploited when formulating guiding center dynamics as slow manifold dynamics in loop space. This reparameterization symmetry naturally leads to a presymplectic, rather than symplectic structure on loop space. While the gauge freedom associated with loop reparameterization may be eliminated in order to obtain a symplectic formulation of loop dynamics, calculations are simpler when working directly with the presymplectic formulation. Another illustrative example of this point from plasma physics may be found in \cite{Burby_two_fluid_2017}, where the Hamiltonian structure of magnetohydrodynamics is explained using slow manifold reduction. There a useful technical step in the calculations of the Hamiltonian structure was eliminating the electron number density using the Gauss constraint. Because the Gauss constraint may be understood as a momentum map, this elimination of the electron density may be understood as restriction to a level set of that momentum map. As is true for symplectic Hamiltonian systems in general, restriction of the system's symplectic form to the momentum map's level set necessarily produces a presymplectic, rather than symplectic form. (The degenerate directions may be related to the symmetry underlying the momentum map.) More generally, presymplectic manifolds are the natural setting for Lagrangian dynamical systems with degenerate Lagrangians. An interesting future research direction would be formulating a theory of slow manifold reduction for Hamiltonian dynamical systems on Dirac manifolds \cite{Courant_1990}, which generalize Hamiltonian dynamical systems on both presymplectic and Poisson manifolds.

The basic result governing the inheritance of Hamiltonian structure by slow manifolds is the following.
\begin{theorem}[formal inheritance of Hamiltonian structure]\label{formal_hamiltonian_inheritance}
If $\epsilon\,\dot{y} = f_\epsilon(x,y)$, $\dot{x} =  g_\epsilon(x,y)$ is a Hamiltonian fast-slow system then there are formal power series 
\begin{align}
H_{\epsilon}^* &= H_{0}^* + \epsilon \,H_{1}^* + \epsilon^2 H_{2}^* + \dots\\
\Omega_{\epsilon}^* &= \Omega_{0}^* + \epsilon\,\Omega_{1}^* + \epsilon^2\,\Omega_{2}^* + \dots,
\end{align}
where each $H_{k}^*$ is a smooth function on $X$, each $\Omega_{k}^*$ is a closed $2$-form on $X$, and we have the equality of formal power series
\begin{align}
\iota_{g_\epsilon^*}\Omega_{\epsilon}^* = \mathbf{d}H_{\epsilon}^*.\label{formal_hamilton_equation}
\end{align}
(Recall from Def.\,\ref{formal_SMR_definition} that $g_\epsilon^*$ is the formal power series generator of slow dynamics on $X$.) Moreover we have the following explicit formulas for $\Omega_{\epsilon}^*$ and $H_{\epsilon}^*$:
\begin{align}
H_{\epsilon}^*(x) &= H_{\epsilon}(x,y_\epsilon^*(x))\label{thm_slow_ham}\\
\Omega_{\epsilon}^*(x)[V_x,W_x] &= \Omega_\epsilon(x,y_\epsilon^*(x))\left[(V_x,Dy_\epsilon^*(x)[V_x]),(W_x,Dy_\epsilon^*(x)[W_x])\right].\label{thm_slow_symp}
\end{align}
\end{theorem}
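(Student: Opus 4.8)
The plan is to mimic, order-by-order in $\epsilon$, the coordinate-free argument already given in the text for genuine invariant manifolds of symplectic Hamiltonian systems, replacing ``invariant manifold'' by ``formal slow manifold'' and ``pullback along the inclusion'' by the formal-power-series analogues of pullback and inclusion. Concretely, I would define the formal map $\iota_\epsilon^*:X\to X\times Y$, $\iota_\epsilon^*(x)=(x,y_\epsilon^*(x))$, interpreted as a formal power series in $\epsilon$ with coefficient maps $x\mapsto y_k^*(x)$, and let $D\iota_\epsilon^*(x)[V_x]=(V_x,Dy_\epsilon^*(x)[V_x])$ be its (formal) tangent map. Then $\Omega_\epsilon^*$ and $H_\epsilon^*$ as written in \eqref{thm_slow_symp} and \eqref{thm_slow_ham} are precisely the formal pullbacks $(\iota_\epsilon^*)^*\Omega_\epsilon$ and $(\iota_\epsilon^*)^*H_\epsilon$. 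The first task is to check that these are well-defined formal power series whose coefficients have the claimed type: $H_k^*$ is obtained by Taylor-expanding $H_\epsilon(x,y_\epsilon^*(x))$ and collecting powers of $\epsilon$ (using the composition property for formal power series cited in the proof of Theorem \ref{scalar_inheritance_thm}), and $\Omega_k^*$ is obtained similarly. Closedness of each $\Omega_k^*$ follows because $\mathbf{d}$ commutes with pullback and with collecting powers of $\epsilon$: $\mathbf{d}\Omega_\epsilon=0$ implies $\mathbf{d}(\iota_\epsilon^*)^*\Omega_\epsilon=(\iota_\epsilon^*)^*\mathbf{d}\Omega_\epsilon=0$ as a formal series, hence $\mathbf{d}\Omega_k^*=0$ for every $k$.

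The heart of the proof is establishing the formal Hamilton equation \eqref{formal_hamilton_equation}. I would run the same three-line computation as in Eqs.\,\eqref{proto_reduced_ham}--\eqref{inherit_exact_eqn1}, but now each equality is an identity of formal power series. Starting from the hypothesis $\iota_{W_\epsilon}\Omega_\epsilon=\mathbf{d}H_\epsilon$, apply the formal pullback $(\iota_\epsilon^*)^*$ to both sides; the right-hand side gives $\mathbf{d}H_\epsilon^*$ by commutativity of $\mathbf{d}$ with pullback (valid to all orders, just as the chain rule \eqref{formal_chain_rule} is). For the left-hand side I need the key geometric fact that the vector field $g_\epsilon^*$ on $X$ and the vector field $W_\epsilon=(g_\epsilon,f_\epsilon/\epsilon)$ are $\iota_\epsilon^*$-related \emph{as formal power series}, i.e. $D\iota_\epsilon^*(x)[g_\epsilon^*(x)]=W_\epsilon(x,y_\epsilon^*(x))$. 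But this is exactly the invariance equation \eqref{scaled_invariance_eq_def} rewritten: the slow component is $g_\epsilon(x,y_\epsilon^*)=g_\epsilon^*(x)$ by definition, and the fast component is $Dy_\epsilon^*(x)[g_\epsilon^*(x)]$, which equals $f_\epsilon(x,y_\epsilon^*)/\epsilon$ precisely because $y_\epsilon^*$ solves the invariance equation. With this $\iota_\epsilon^*$-relatedness in hand, the formal version of the pointwise computation \eqref{inherit_exact_eqn1} gives $(\iota_\epsilon^*)^*(\iota_{W_\epsilon}\Omega_\epsilon)=\iota_{g_\epsilon^*}\big((\iota_\epsilon^*)^*\Omega_\epsilon\big)=\iota_{g_\epsilon^*}\Omega_\epsilon^*$, and combining the two sides yields \eqref{formal_hamilton_equation}.

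The main obstacle — or at least the point requiring the most care — is justifying that all of these manipulations (pullback of forms, interior product with a formal vector field, the identity $(\iota^*)^*\iota_W\Omega=\iota_{g^*}\iota^*\Omega$ for related fields, and $\mathbf{d}\circ(\iota^*)^*=(\iota^*)^*\circ\mathbf{d}$) really do hold coefficient-by-coefficient for formal power series, since $\iota_\epsilon^*$ is not a genuine smooth map into a manifold but an $\epsilon$-indexed family whose limit need not converge. The clean way to handle this is to observe that each of these operations is, at a fixed order $\epsilon^N$, a finite algebraic combination of the coefficients $y_0^*,\dots,y_N^*$ and finitely many derivatives of $\Omega_\epsilon$, $H_\epsilon$, $g_\epsilon$, $f_\epsilon$ evaluated along $(x,y_0^*(x))$; for each such $N$ one may replace $y_\epsilon^*$ by a genuine smooth truncation (a slow manifold of order $N$ in the sense of Definition \ref{sm_defined_properly}), apply the honest differential-geometric identities for the genuine embedding $x\mapsto(x,\widetilde y_\epsilon^N(x))$, and then note that the $\epsilon^N$-coefficient of the resulting identity depends only on $y_0^*,\dots,y_N^*$ and is therefore unchanged by the truncation. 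This ``truncate, apply the smooth identity, read off the coefficient'' device — exactly the philosophy behind Definition \ref{fSM_def} and the remark that formal slow manifolds are invariant manifolds to all orders — reduces every step to the classical symplectic computation \eqref{proto_reduced_ham}--\eqref{inherit_exact_eqn1} already carried out in the text, and the remaining work is the routine bookkeeping of expanding \eqref{thm_slow_ham}--\eqref{thm_slow_symp} to extract the low-order coefficients $H_0^*,H_1^*,\Omega_0^*,\Omega_1^*$, which I would leave to the reader or relegate to a short explicit display.
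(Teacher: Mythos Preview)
Your proposal is correct and is essentially the paper's proof, just phrased in the coordinate-free language of pullbacks and $\iota_\epsilon^*$-related vector fields rather than the paper's explicit evaluation on tangent vectors $(\delta x,Dy_\epsilon^*(x)[\delta x])$. The one organizational difference is that the paper treats the Hamilton equation \eqref{formal_hamilton_equation} by direct formal substitution (no truncation needed, since it is just an algebraic identity in formal power series once the invariance equation is invoked), and reserves the truncation device exactly as you describe---replace $y_\epsilon^*$ by $\widetilde{y}_\epsilon = y_0^*+\dots+\epsilon^N y_N^*$, apply the honest identity $\mathbf{d}\widetilde{I}_\epsilon^*\Omega_\epsilon=0$, then let $N$ be arbitrary---solely for the closedness claim.
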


\begin{proof}
By Hamilton's equations, for each $(\delta x,\delta y)\in T_{(x,y)}X\times Y$
\[
\Omega_\epsilon(x,y)[(g_\epsilon(x,y),f_\epsilon(x,y)/\epsilon),(\delta x,\delta y)] = \mathbf{d}H_\epsilon(x,y)[(\delta x,\delta y)].
\]
In particular if we set $y = y_\epsilon^*(x)$ and $\delta y = Dy_\epsilon^*(x)[\delta x]$, where $y_\epsilon^*$ is the fast-slow system's formal slow manifold, then we obtain the following equality of formal power series
\begin{align}
\Omega_\epsilon(x,y_\epsilon^*)[(g_\epsilon(x,y_\epsilon^*),f_\epsilon(x,y_\epsilon^*)/\epsilon),(\delta x , Dy_\epsilon^*(x)[\delta x])] = \mathbf{d}H_\epsilon(x,y_\epsilon^*)[(\delta x , Dy_\epsilon^*(x)[\delta x])].
\end{align}
Because the invariance equation defining the formal slow manifold says $f_\epsilon(x,y_\epsilon^*(x)) = \epsilon\,Dy_\epsilon^*(x)[g_\epsilon(x,y_\epsilon^*)]$ this implies
\begin{align}
\Omega_\epsilon(x,y_\epsilon^*)[(g_\epsilon(x,y_\epsilon^*),Dy_\epsilon^*(x)[g_\epsilon(x,y_\epsilon^*)]),(\delta x , Dy_\epsilon^*(x)[\delta x])] = \mathbf{d}H_\epsilon(x,y_\epsilon^*)[(\delta x , Dy_\epsilon^*(x)[\delta x])],
\end{align}
or in terms of $\Omega_\epsilon^*$ and $g_\epsilon^*$,
\begin{align}
\Omega_\epsilon^*(x)[g_\epsilon^*(x),\delta x] = \mathbf{d}H_\epsilon(x,y_\epsilon^*)[(\delta x,Dy_\epsilon^*(x)[\delta x])].
\end{align}
The formula \eqref{formal_hamilton_equation} now follows from the chain rule for formal power series applied to $\mathbf{d}H_\epsilon^*$:
\begin{align}
\mathbf{d}H_\epsilon^*(x)[\delta x] = \mathbf{d}H_\epsilon(x,y_\epsilon^*(x))[(\delta x,Dy_\epsilon^*(x)[\delta x])].
\end{align}

To finish the proof we must now demonstrate that the coefficients in the formal power series $\Omega_\epsilon^*$ are all closed. To that end let $\widetilde{y}_\epsilon = y_0^* + \epsilon\,y_1^* + \dots+ \epsilon^N y_N^*$ be the $N$'th-order slow manifold defined by naive power series truncation. By the fundamental theorem of calculus there is an $O(1)$ formal power series $\alpha_\epsilon$ whose coefficients are $2$-forms on $X$ such that 
\begin{align}
\Omega_\epsilon^* = \widetilde{I}_\epsilon^*\Omega_\epsilon + \epsilon^{N+1} \alpha_\epsilon,
\end{align}
where $\widetilde{I}_\epsilon(x) = (x,\widetilde{y}_\epsilon(x))$ is the slow manifold's inclusion map. By the commutativity of pullback with exterior differentiation we therefore have
\begin{align}
\mathbf{d}\Omega_\epsilon^* = \epsilon^{N+1}\mathbf{d}\alpha_\epsilon,
\end{align}
which says that all coefficients in the formal power series $\mathbf{d}\Omega_\epsilon^*$ of order less than or equal to $N$ must vanish. Because $N$ is arbitrary this implies $\mathbf{d}\Omega_\epsilon^* = 0$ as a formal power series.

\end{proof}

\begin{example}\label{zero_drag_formal_structure}
We will apply Theorem \ref{formal_hamiltonian_inheritance} to identify the formal Hamiltonian structure underlying Abraham-Lorentz slow dynamics in the zero-drag regime. In particular we will find the first few terms in the expansions of $H_{\epsilon}^*$ and $\Omega_{\epsilon}^*$ and relate them to Littlejohn's pioneering results \cite{Littlejohn_1981,Littlejohn_1982,Littlejohn_1983,Littlejohn_1984} on the Hamiltonian structure underlying guiding center theory. (Recall from Example \ref{second_order_smr_zero_drag}  that the slow manifold in the zero-drag equations corresponds to guiding center dynamics on the zero level set of the magnetic moment adiabatic invariant. This is why there \emph{should} be a relationship between the guiding center Hamiltonian structure and the Hamiltonian structure of the zero-drag slow dynamics.)

First we must demonstrate that the zero-drag equations may be written as a fast-slow system that satisfies the hypotheses of Theorem \ref{formal_hamiltonian_inheritance}. As mentioned in Section \ref{zero_drag_sec}, the zero-drag equations arise from a variational principle.   One implication of this variational principle is that the zero-drag equations possess a symplectic Hamiltonian structure with Hamiltonian 
\begin{align}
H_\epsilon(\bm{x},\bm{v})  = \epsilon\,\frac{1}{2}|\bm{v}|^2, \label{zero_drag_ham_example}
\end{align}
and symplectic form
\begin{align}
\omega_\epsilon(\bm{x},\bm{v})[(\delta\bm{x}_1,\delta\bm{v}_1),(\delta\bm{x}_2,\delta\bm{v}_2)] = -\zeta\,\bm{B}(\bm{x})\cdot\delta\bm{x}_1\times\delta\bm{x}_2 + \epsilon\,(\delta\bm{x}_1\cdot \delta\bm{v}_2 - \delta\bm{x}_2\cdot\delta\bm{v}_1).\label{zero_drag_form_example}
\end{align}
In fact it is straightforward to verify directly that the infinitesimal generator of zero-drag dynamics $U_\epsilon(\bm{x},\bm{v}) = (\bm{v},\epsilon^{-1}\,\zeta\,\bm{v}\times\bm{B}(\bm{x}))$ satisfies $\iota_{U_\epsilon}\omega_\epsilon = \mathbf{d}H_\epsilon$. It follows that the zero-drag equations written in terms of their fast-slow split $x= (\bm{x},v_\parallel)$, $y = (v_1,v_2)$ (c.f. Example \ref{FS_example_hard}) also possess a Hamiltonian structure that may be obtained from Eqs.\,\eqref{zero_drag_ham_example} -\eqref{zero_drag_form_example} by merely changing coordinates. In particular if $W_\epsilon = (f_\epsilon/\epsilon,g_\epsilon)$ with $f_\epsilon$ and $g_\epsilon$ given in Eqs.\,\eqref{zero_drag_new_Feps_one}-\eqref{zero_drag_new_Geps_two} then $\iota_{W_\epsilon}\Omega_\epsilon = \mathbf{d}H_\epsilon$ with
\begin{align}
H_\epsilon(x,y) = \epsilon\,\frac{1}{2}v_\parallel^2 + \epsilon\,\frac{1}{2}v_1^2 + \epsilon\,\frac{1}{2}v_2^2,
\end{align}
and
\begin{align}
&\Omega_\epsilon(x,y)[(\delta x_1,\delta y_1),(\delta x_2,\delta y_2)] =\nonumber\\
& - \left(\zeta\,\bm{B}(\bm{x}) + \epsilon\,v_\parallel\,\nabla\times\bm{b}+\epsilon\,\nabla\times\bm{v}_\perp\right)\cdot \delta\bm{x}_1\times\delta\bm{x}_2\nonumber\\
& + \epsilon\,(\delta\bm{x}_1\cdot\delta\bm{v}_2 - \delta\bm{x}_2\cdot\delta\bm{v}_1).
\end{align} 
Here we have introduced the useful shorthand notation $\delta\bm{v} = \delta v_\parallel\,\bm{b} + \delta v_1\,\bm{e}_1 + \delta v_2\,\bm{e}_2$, and we are reusing the shorthand $\bm{v}_\perp = v_1\,\bm{e}_1+v_2\,\bm{e}_2$ introduced in Eq.\,\eqref{perp_shorthand}. We are therefore justified in applying Theorem \ref{formal_hamiltonian_inheritance} directly.

According to the formula \eqref{thm_slow_ham} the formal power series $H_{\epsilon}^*$ may be written 
\begin{align}
H_\epsilon^*(x) = \epsilon\,\frac{1}{2}v_\parallel^2 + \epsilon\,\frac{1}{2}|\bm{v}_{\perp\epsilon}^*|^2,
\end{align}
where $\bm{v}_{\perp\epsilon}^*$ (c.f. Eq.\,\eqref{perp_shorthand}) encodes the slaving function $y_\epsilon^*= ((v_{1})_\epsilon^*,(v_{2})_\epsilon^*)$ for the perpendicular velocity. We have effectively already examined the first few terms of the formal power series $H_\epsilon$ in Example \ref{energy_breakdown_example}, where we considered $E_\epsilon^* = H_\epsilon^*/\epsilon$. We may therefore immediately write 
\begin{align}
H_0^*(x) &= 0\label{h0_zero_drag}\\
H_1^*(x) & = \frac{1}{2}v_\parallel^2\label{h1_zero_drag}\\
H_2^*(x) & = 0\\
H_3^*(x) & = \frac{1}{2}|\bm{v}_c|^2,\label{h3_zero_drag}
\end{align}
where the curvature drift velocity $\bm{v}_c$ was given earlier in Eq.\,\eqref{zero_drag_y1}. 

According to the formula \eqref{thm_slow_symp} the formal power series $\Omega_\epsilon^*$ is specified by 
\begin{align}
&\Omega_\epsilon^*(x)[\delta x_1,\delta x_2] =\nonumber\\
& - \left(\zeta\,\bm{B}(\bm{x}) +\epsilon\,v_\parallel\,\nabla\times\bm{b} +\epsilon\,\nabla\times\bm{v}_{\perp\epsilon}^*\right)\cdot \delta\bm{x}_1\times\delta\bm{x}_2\nonumber\\
&+\epsilon\,(\delta\bm{x}_1\,\delta v_{\parallel 2} - \delta\bm{x}_2\, \delta v_{\parallel 1})\cdot(\bm{b}(\bm{x}) + \partial_{v_\parallel}\bm{v}_{\perp\epsilon}^*).
\end{align}
The first several coefficients of the formal power series expansion for $\Omega_\epsilon^*$ are therefore given by
\begin{align}
\Omega_0^*(x)[\delta x_1,\delta x_2] =& - \zeta\,\bm{B}(\bm{x})\cdot\delta\bm{x}_1\times\delta\bm{x}_2\label{omega0_zero_drag}\\
\Omega_1^*(x)[\delta x_1,\delta x_2] =& -v_\parallel\,\nabla\times\bm{b}(\bm{x})\cdot\delta\bm{x}_1\times\delta\bm{x}_2 + (\delta\bm{x}_1\,\delta v_{\parallel 2} - \delta\bm{x}_2\,\delta v_{\parallel 1})\cdot\bm{b}(\bm{x})\label{omega1_zero_drag}\\
\Omega_2^*(x)[\delta x_1,\delta x_2] =&-( \nabla\times\bm{v}_c)\cdot\delta\bm{x}_1\times\delta\bm{x}_2\nonumber\\
& + (\delta\bm{x}_1\,\delta v_{\parallel 2} - \delta\bm{x}_2\, \delta v_{\parallel 1})\cdot \partial_{v_\parallel}\bm{v}_c\\
\Omega_3^*(x)[\delta x_1,\delta x_2] = &-( \nabla\times\bm{v}_{\perp 2}^*)\cdot\delta\bm{x}_1\times\delta\bm{x}_2\nonumber\\
&+ (\delta\bm{x}_1\,\delta v_{\parallel 2} - \delta\bm{x}_2\, \delta v_{\parallel 1})\cdot \partial_{v_\parallel}\bm{v}_{\perp 2}^*,
\end{align}
where the second-order slaving function $\bm{v}_{\perp 2}^*$ was given in Eq.\,\eqref{zero_drag_y2}. The skeptical reader is encouraged to verify that each of the $2$-forms $\Omega_0^*,\Omega_1^*,\Omega_2^*,\Omega_3^*$ is indeed closed, as guaranteed by Theorem \ref{formal_hamiltonian_inheritance}.

In order to scrutinize these formal power series in light of Littlejohn's work it is most convenient to refer to Ref.\,\citep{Littlejohn_1984}, where Littlejohn gives an expression for the so-called Poincar\'e-Cartan form for guiding center dynamics in Eq.\,(12) and and expression for the guiding center Hamiltonian in Eq.\,(13). Consider first the Hamiltonian. The leading-order contribution to Littlejohn's guiding center Hamiltonian is $\frac{1}{2} v_\parallel^2 + \mu |\bm{B}|(\bm{x})$, where $\mu$ is the magnetic moment adiabatic invariant. Apparently this expression agrees with our Eq.\,\eqref{h1_zero_drag} precisely when $\mu=0$. This is consistent with out earlier assertion that the zero-drag slow manifold corresponds to guiding center motion on the $\mu = 0$ level set. Now consider the symplectic form. Littlejohn's symplectic form may be recovered from his Poincar\'e-Cartan form $\overline{\theta}$ in Eq.\,(12) by first writing $\overline{\theta} = \vartheta - H\,dt$ where $\vartheta$ has no $dt$-component and then defining $\Omega_{\text{gc}} = -\mathbf{d}\vartheta$. We then restrict to one of the level sets of the magnetic moment $\mu$, which enables a direct comparison with our $\Omega_\epsilon^*$. This immediately leads to the coincidences $\Omega_{\text{gc}0} =\Omega_0^* $ and $\Omega_{\text{gc}1} = \Omega_1^*$, again indicating consistency with our earlier claim about the $\mu=0$ level set. (Note however that these last coincidences do not require $\mu=0$.) However for $\Omega_{\text{gc}2}$ we find
\begin{align}
\Omega_{\text{gc}2}(x)[\delta x_1,\delta x_2] = -\mu \nabla\times\bm{R}\cdot \delta\bm{x}_1\times\delta\bm{x}_2,
\end{align}
which does not agree with our $\Omega_2^*$, even when $\mu = 0$. Such disagreement is actually to be expected in a comparison as naive as the one we have just performed. The correct way to compare with Littlejohn's results would account for the fact that in deriving his expression for the guiding center Poincar\'e-Cartan form Littlejohn applied a near-identity coordinate transformation $\mathsf{T}_{\text{gc}}$ to $(\bm{x},\bm{v})$-space. In other words Littlejohn's $(\bm{x},v_\parallel)$ are not the same as the ``Cartesian" $(\bm{x},v_\parallel)$ used by us. We will not perform such an analysis here. Instead we refer the interested reader to Ref.\,\citep{Parra_Calvo_2014} for a representative example of such a comparison, and to Theorem 2 in Ref.\,\citep{Burby_loops_2019} where a strictly more general equivalence problem is solved.
\end{example}

The content of Theorem \ref{formal_hamiltonian_inheritance} may be summarized by the statement that the formal slow manifold reduction $g_\epsilon^*$ of any fast-slow system with Hamiltonian structure possesses its own Hamiltonian structure to all orders in perturbation theory. In particular the Theorem \emph{does not say} that every slow manifold reduction of order $N$ possesses a Hamiltonian structure. A generic slow manifold reduction can ``break" the presymplectic structure for the same reasons that such a slow manifold reduction can break conservation laws. (c.f. Example \ref{energy_breakdown_example}.) We are therefore lead to ask the question: is there a way to construct slow manifold reductions of order $N$ that \emph{do} possess a Hamiltonian structure?

We define slow manifold reductions that possess Hamiltonian structure as follows.

\begin{definition}[Hamiltonian slow manifold reduction]
Let $\widetilde{g}_\epsilon$ be a slow manifold reduction of order $N$ for a fast-slow system $\epsilon\,\dot{y} = f_\epsilon(x,y)$, $\dot{x} =  g_\epsilon(x,y)$. This slow manifold reduction is \emph{Hamiltonian} if there is an $\epsilon$-dependent function $\widetilde{H}_\epsilon$ and an $\epsilon$-dependent closed $2$-form $\widetilde{\Omega}_\epsilon$ that each depend smoothly on $\epsilon$ and such that
\begin{align}
\iota_{\widetilde{g}_\epsilon}\widetilde{\Omega}_\epsilon = \mathbf{d}\widetilde{H}_\epsilon.
\end{align}
The function $\widetilde{H}_\epsilon$ is the \emph{approximate reduced Hamiltonian}. The $2$-form $\widetilde{\Omega}_\epsilon$ is the \emph{approximate reduced presymplectic form}.
\end{definition}

\noindent Our question is therefore: given a Hamiltonian fast-slow system, how may we construct Hamiltonian slow manifold reductions of order $N$?

When $N=0$ there is essentially no work to do. This is summarized in the following proposition.

\begin{proposition}[limiting reductions are Hamiltonian]\label{limiting_hamiltonian_struc}
If $\epsilon\,\dot{y} = f_\epsilon(x,y)$, $\dot{x} = g_\epsilon(x,y)$ is a Hamiltonian fast-slow system then the limiting slow manifold reduction $\dot{x} = g_0^*(x,y_0^*(x))$ is also Hamiltonian. 
\end{proposition}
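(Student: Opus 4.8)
The plan is to specialize Theorem~\ref{formal_hamiltonian_inheritance} to the order-zero truncations and verify that the resulting data $(\Omega_0^*,H_0^*)$ constitute a genuine (non-formal) presymplectic Hamiltonian structure for the limiting reduction $\dot{x}=g_0^*(x)$ with $g_0^*(x)=g_0(x,y_0^*(x))$. First I would recall from Theorem~\ref{formal_hamiltonian_inheritance} that the formal power series $\Omega_\epsilon^*$ and $H_\epsilon^*$ defined in Eqs.~\eqref{thm_slow_ham}--\eqref{thm_slow_symp} satisfy the formal Hamilton equation $\iota_{g_\epsilon^*}\Omega_\epsilon^* = \mathbf{d}H_\epsilon^*$ and that each coefficient $\Omega_k^*$ is a closed $2$-form on $X$. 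Extracting the $O(1)$ term of this equality of formal power series gives exactly $\iota_{g_0^*}\Omega_0^* = \mathbf{d}H_0^*$, and closedness of $\Omega_0^*$ is already part of the conclusion of Theorem~\ref{formal_hamiltonian_inheritance}. Since $\Omega_0^*$ and $H_0^*$ are honest smooth objects on $X$ (not merely formal series), this is precisely the statement that the limiting slow manifold reduction is Hamiltonian in the sense of the preceding definition.

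Concretely, I would write $\Omega_0^*(x)[V_x,W_x] = \Omega_0(x,y_0^*(x))[(V_x,Dy_0^*(x)[V_x]),(W_x,Dy_0^*(x)[W_x])]$ and $H_0^*(x) = H_0(x,y_0^*(x))$, i.e.\ the pullbacks of $\Omega_0$ and $H_0$ along the inclusion $I_0:x\mapsto(x,y_0^*(x))$ of the limiting slow manifold $S_0$. Closedness of $\Omega_0^*$ then follows immediately from $\mathbf{d}\,I_0^*\Omega_0 = I_0^*\,\mathbf{d}\Omega_0 = 0$, using that $\Omega_0$ is closed (the $O(1)$ part of the closed form $\Omega_\epsilon$) and that pullback commutes with $\mathbf{d}$. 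For the Hamilton equation itself, the cleanest route is the genuinely geometric one rather than power-series bookkeeping: $S_0$ \emph{is} an exact invariant manifold for the $\epsilon=0$ vector field $W_0 = (g_0, f_0/\epsilon)|_{\epsilon\to 0}$ interpreted on the fast timescale, or more simply, $g_0^*$ is $I_0$-related to the limiting dynamics, and the argument of Eqs.~\eqref{proto_reduced_ham}--\eqref{inherit_exact_eqn1} applies verbatim with $(Z,\omega,U,S)$ replaced by $(X\times Y,\Omega_0,W_0,S_0)$ to yield $\iota_{g_0^*}\Omega_0^* = \mathbf{d}H_0^*$. Thus I can invoke the already-established inheritance result for exact invariant manifolds in Hamiltonian systems without re-deriving anything.

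The one point requiring a small amount of care is the interplay between the $1/\epsilon$ in the fast component of the generator and the $\epsilon=0$ limit: the limiting slow manifold $S_0=\{y=y_0^*(x)\}$ is defined by $f_0(x,y)=0$, so on $S_0$ the fast component $f_0/\epsilon$ is $0/\epsilon$, which is regular, and the limiting vector field tangent to $S_0$ is genuinely $(g_0^*(x),Dy_0^*(x)[g_0^*(x)])$ by the $O(1)$ part of the invariance equation $f_\epsilon(x,y_\epsilon^*) = \epsilon\,Dy_\epsilon^*(x)[g_\epsilon(x,y_\epsilon^*)]$. This is the step I expect to be the main (though modest) obstacle: making precise that $S_0$ carries a well-defined limiting Hamiltonian structure inherited from $(\Omega_0,H_0)$ despite the singular scaling, so that the abstract inheritance lemma can be applied. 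Once that is granted, the proof is a one-line specialization of Theorem~\ref{formal_hamiltonian_inheritance} together with the exact-invariant-manifold inheritance argument, and I would close by simply identifying $\widetilde{\Omega}_\epsilon=\Omega_0^*$ and $\widetilde{H}_\epsilon=H_0^*$ (both $\epsilon$-independent, hence trivially smooth in $\epsilon$) as the approximate reduced presymplectic form and approximate reduced Hamiltonian.
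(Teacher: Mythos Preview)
Your proposal is correct and your primary approach---extracting the $O(1)$ component of the formal power series equality $\iota_{g_\epsilon^*}\Omega_\epsilon^* = \mathbf{d}H_\epsilon^*$ from Theorem~\ref{formal_hamiltonian_inheritance}---is exactly what the paper does. The paper's proof is even terser than yours: it simply notes that the $O(1)$ part of Eq.~\eqref{formal_hamilton_equation} reads $\iota_{g_0^*}\Omega_0^* = \mathbf{d}H_0^*$ and stops, since closedness of $\Omega_0^*$ is already part of the conclusion of Theorem~\ref{formal_hamiltonian_inheritance}. Your additional geometric route via the exact-invariant-manifold inheritance argument and your discussion of the $f_0/\epsilon$ regularity on $S_0$ are both valid but unnecessary here; the formal-series machinery has already absorbed that concern (the invariance equation was used inside the proof of Theorem~\ref{formal_hamiltonian_inheritance} precisely to replace $f_\epsilon/\epsilon$ by $Dy_\epsilon^*[g_\epsilon^*]$ before any limit is taken).
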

\begin{proof}
This is a simple corollary of Theorem \ref{formal_hamiltonian_inheritance}. Indeed, the $O(1)$-component of the formal power series equality \eqref{formal_hamilton_equation} is
\begin{align}
\iota_{g_0^*}\Omega_{0}^* = \mathbf{d}H_0^*.
\end{align}
Therefore $g_\epsilon^*$ is a Hamiltonian slow manifold reduction of order $0$.
\end{proof}

\begin{example}\label{limit_ham_example}
An immediately consequence of Proposition \ref{limiting_hamiltonian_struc} is that the limiting slow manifold reduction for the zero-drag Abraham-Lorentz equations is Hamiltonian. Recall that this slow manifold reduction is given explicitly by
\begin{align}
\dot{x}_0^* &= v_\parallel\,\bm{b}(\bm{x})\\
\dot{v}_{\parallel 0}^* & = 0.
\end{align}
According to the results obtained in Example \ref{zero_drag_formal_structure}, the limiting Hamiltonian $H_0^*$ and presymplectic form $\Omega_0^*$ for this system are given by Eqs.\,\eqref{h0_zero_drag} and \eqref{omega0_zero_drag}, respectively. Because $H_0^* = 0$ the presymplectic Hamilton's equations for this limiting reduction are
\begin{align}
0=\iota_{g_0^*}\Omega_0^* = -\zeta \bm{B}\times\dot{\bm{x}}_0^*\cdot d\bm{x},
\end{align}
which is easy to verify without recourse to Proposition \ref{limiting_hamiltonian_struc}. An equivalent way of expressing the fact that the limiting slow manifold reduction for the zero-drag equations is Hamiltonian is the following. Let $D_0\subset Q\times \mathbb{R}$ be an embedded closed $2$-dimensional disc in $(\bm{x},v_\parallel)$-space, and let $D_t$ be the evolution of $D_0$ given by flowing $D_0$ along $g_0^*$ for $t$ seconds. Then the magnetic flux threading the $\bm{x}$-space projection of $D_0$ is equal to the magnetic flux threading the $\bm{x}$-space projection of $D_t$. This statement is reminiscent of the so-called frozen-in law from ideal MHD, although our statement applies to $2$-dimensional discs in the $4$-dimensional $(\bm{x},v_\parallel)$-space, whereas the usual frozen-in law pertains to $2$-dimensional discs in the $3$-dimensional $\bm{x}$-space.

\end{example}

When $N>0$ the question is more challenging. In his review \citep{MacKay_2004} MacKay gives a technique for constructing Hamiltonian slow manifold reductions in the special case where the pullback of the fast-slow system's presymplectic form $\Omega_\epsilon$ to some slow manifold of order $N$ is actually symplectic, i.e. when the pulled-back $2$-form is non-degenerate. However MacKay does not attempt to deal with more general cases where the pulled-back $2$-form is either degenerate or nearly so. The nearly-degenerate case is relevant to, for example, guiding center theory \citep{Burby_loops_2019}, the zero-drag Abraham-Lorentz equations, and kinetic MHD \citep{Burby_Sengupta_2017_pop}. The degenerate case is relevant to systems with gauge symmetry such as the ideal two-fluid system \citep{Burby_two_fluid_2017}. The following Theorem gives a method of constructing Hamiltonian slow manifold reductions of any order that explicitly deals with degeneracy. 

\begin{theorem}[constructing Hamiltonian slow manifold reductions]\label{ham_construction}
Let $\epsilon\,\dot{y} = f_\epsilon(x,y)$, $\dot{x} = g_\epsilon(x,y)$ be a Hamiltonian fast-slow system with presymplectic form $\Omega_\epsilon$ and Hamiltonian $H_\epsilon$. Let $\widetilde{\Omega}_\epsilon$ and $\widetilde{H}_\epsilon$ be order-$M$ approximations of the reduced presymplectic form and Hamiltonian on $X$ provided by Theorem \ref{formal_hamiltonian_inheritance}, i.e. $\widetilde{\Omega}_\epsilon - \Omega_\epsilon^* = O(\epsilon^{M+1})$ and $\widetilde{H}_\epsilon - H_\epsilon^* = O(\epsilon^{M+1})$. Assume $\widetilde{\Omega}_\epsilon$ and $\widetilde{H}_\epsilon$ satisfy the following properties.
\begin{itemize}
\item[(a)] For $x\in X$ and $\epsilon >0$ there is a smooth splitting of each tangent space $T_xX= K_x\oplus C_x$ as the sum of the kernel of the approximate presymplectic form $K_x = \{W_x\in T_xX\mid \iota_{W_x}\widetilde{\Omega}_\epsilon(x) = 0\}$ and a complementary subspace $C_x$.
\item[(b)] $\widetilde{\Omega}_\epsilon$ restricted to $C_x$ is invertible onto its image with $O(\epsilon^{-d})$ inverse for some non-negative integer $d$.
\item[(c)] $\mathbf{d}\widetilde{H}_\epsilon(x)$ is contained in the image of $\widetilde{\Omega}_\epsilon(x)$.
\end{itemize}
If for each $x\in X$ we define the tangent vector $\widetilde{g}_\epsilon^\perp(x)\in T_xX$ as the unique solution of $\iota_{\widetilde{g}_\epsilon^\perp(x)}\widetilde{\Omega}_\epsilon(x) = \mathbf{d}\widetilde{H}_\epsilon(x)$ contained in $C_x$, and $\widetilde{g}_\epsilon^\prime$ is \emph{any} slow manifold reduction (not necessarily Hamiltonian) of order $ M-d$, then 
\begin{align}
\widetilde{g}_\epsilon(x) = \widetilde{g}_\epsilon^\perp(x) + \pi_K(\widetilde{g}_\epsilon^\prime(x)),
\end{align}
where $\pi_K$ is the projection onto $K_x$ relative to $C_x$, is a Hamiltonian slow manifold reduction of order $M-d$.

\end{theorem}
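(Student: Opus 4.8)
The plan is to verify two things about the candidate vector field $\widetilde{g}_\epsilon(x) = \widetilde{g}_\epsilon^\perp(x) + \pi_K(\widetilde{g}_\epsilon^\prime(x))$: first, that it satisfies Hamilton's equations $\iota_{\widetilde{g}_\epsilon}\widetilde{\Omega}_\epsilon = \mathbf{d}\widetilde{H}_\epsilon$ exactly (so that the reduction is genuinely Hamiltonian with structure $(\widetilde{\Omega}_\epsilon,\widetilde{H}_\epsilon)$), and second, that it agrees with $g_\epsilon^*$ to the claimed order, namely $\widetilde{g}_\epsilon - g_\epsilon^* = O(\epsilon^{M-d+1})$. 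The first part is essentially immediate from the construction. By property (c), $\mathbf{d}\widetilde{H}_\epsilon(x)\in\mathrm{im}\,\widetilde{\Omega}_\epsilon(x)$, and by properties (a)--(b), the restriction of $\widetilde{\Omega}_\epsilon(x)$ to the complement $C_x$ is an isomorphism onto $\mathrm{im}\,\widetilde{\Omega}_\epsilon(x)$; hence there is a unique $\widetilde{g}_\epsilon^\perp(x)\in C_x$ with $\iota_{\widetilde{g}_\epsilon^\perp(x)}\widetilde{\Omega}_\epsilon(x) = \mathbf{d}\widetilde{H}_\epsilon(x)$. Since $\pi_K(\widetilde{g}_\epsilon^\prime(x))\in K_x = \ker\widetilde{\Omega}_\epsilon(x)$, we have $\iota_{\pi_K(\widetilde{g}_\epsilon^\prime(x))}\widetilde{\Omega}_\epsilon(x) = 0$, so $\iota_{\widetilde{g}_\epsilon}\widetilde{\Omega}_\epsilon = \iota_{\widetilde{g}_\epsilon^\perp}\widetilde{\Omega}_\epsilon = \mathbf{d}\widetilde{H}_\epsilon$. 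Since $\widetilde{\Omega}_\epsilon$ is closed by hypothesis and $\widetilde{H}_\epsilon$ is smooth in $\epsilon$, this exhibits $\widetilde{g}_\epsilon$ as Hamiltonian.

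The second part --- the order estimate --- is the substantive step. The strategy is to decompose $g_\epsilon^*$ itself along the same splitting $T_xX = K_x\oplus C_x$, writing $g_\epsilon^* = \pi_C(g_\epsilon^*) + \pi_K(g_\epsilon^*)$, and to compare the two pieces separately. For the $C_x$-component: by Theorem \ref{formal_hamiltonian_inheritance}, $g_\epsilon^*$ satisfies $\iota_{g_\epsilon^*}\Omega_\epsilon^* = \mathbf{d}H_\epsilon^*$ as formal power series; contracting $\widetilde{\Omega}_\epsilon$ with $\pi_C(g_\epsilon^*)$ and using $\iota_{\pi_K(g_\epsilon^*)}\widetilde{\Omega}_\epsilon = 0$ gives $\iota_{\pi_C(g_\epsilon^*)}\widetilde{\Omega}_\epsilon = \iota_{g_\epsilon^*}\widetilde{\Omega}_\epsilon = \iota_{g_\epsilon^*}\Omega_\epsilon^* + \iota_{g_\epsilon^*}(\widetilde{\Omega}_\epsilon - \Omega_\epsilon^*) = \mathbf{d}H_\epsilon^* + O(\epsilon^{M+1}) = \mathbf{d}\widetilde{H}_\epsilon + O(\epsilon^{M+1})$, where the $O(\epsilon^{M+1})$ uses $\widetilde{\Omega}_\epsilon-\Omega_\epsilon^* = O(\epsilon^{M+1})$, $\widetilde{H}_\epsilon-H_\epsilon^* = O(\epsilon^{M+1})$, and the fact that $g_\epsilon^* = O(1)$. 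Thus $\pi_C(g_\epsilon^*)$ and $\widetilde{g}_\epsilon^\perp$ both lie in $C_x$ and, after contracting with $\widetilde{\Omega}_\epsilon$, differ by $O(\epsilon^{M+1})$; inverting $\widetilde{\Omega}_\epsilon|_{C_x}$, whose inverse is $O(\epsilon^{-d})$ by property (b), yields $\pi_C(g_\epsilon^*) - \widetilde{g}_\epsilon^\perp = O(\epsilon^{M+1-d}) = O(\epsilon^{(M-d)+1})$. For the $K_x$-component: since $\widetilde{g}_\epsilon^\prime$ is a slow manifold reduction of order $M-d$, we have $\widetilde{g}_\epsilon^\prime - g_\epsilon^* = O(\epsilon^{M-d+1})$, and applying the (smooth, $O(1)$) projection $\pi_K$ gives $\pi_K(\widetilde{g}_\epsilon^\prime) - \pi_K(g_\epsilon^*) = O(\epsilon^{M-d+1})$. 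Adding the two comparisons, $\widetilde{g}_\epsilon - g_\epsilon^* = (\widetilde{g}_\epsilon^\perp - \pi_C(g_\epsilon^*)) + (\pi_K(\widetilde{g}_\epsilon^\prime) - \pi_K(g_\epsilon^*)) = O(\epsilon^{M-d+1})$, which is exactly condition \eqref{SMR_def} for a slow manifold reduction of order $M-d$.

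The main obstacle I anticipate is bookkeeping with the formal-power-series vs.\ genuine-function distinction and making the smoothness/uniformity of the projections $\pi_K,\pi_C$ precise: property (a) asserts a \emph{smooth} splitting, so $\pi_K$ and $\pi_C$ are smooth (and $O(1)$ in $\epsilon$) bundle maps, which is what legitimizes applying them to power series termwise and preserving order estimates. One should be slightly careful that $\widetilde{g}_\epsilon^\perp$ is defined as a genuine $\epsilon$-dependent vector field (not just a formal series) via the inverse of $\widetilde{\Omega}_\epsilon|_{C_x}$, while $g_\epsilon^*$ and $\Omega_\epsilon^*$, $H_\epsilon^*$ are only formal; the comparison $\pi_C(g_\epsilon^*) - \widetilde{g}_\epsilon^\perp = O(\epsilon^{M-d+1})$ should therefore be read in the sense of formal power series, exactly as in Definition \ref{smr_defined}, and the chain of equalities above should be interpreted as identities of formal series truncated at the appropriate order. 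A secondary subtlety is that property (c) is stated for $\mathbf{d}\widetilde{H}_\epsilon$, whereas in the comparison I need $\mathbf{d}H_\epsilon^*\in\mathrm{im}\,\Omega_\epsilon^*$; but this follows from Theorem \ref{formal_hamiltonian_inheritance} itself, since $\mathbf{d}H_\epsilon^* = \iota_{g_\epsilon^*}\Omega_\epsilon^*$ is manifestly in the image, so no extra hypothesis is needed there. Once these points are handled, the proof is a short and clean assembly of the pieces.
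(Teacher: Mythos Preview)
Your proposal is correct and follows essentially the same approach as the paper: both arguments contract $\widetilde{\Omega}_\epsilon$ with $\widetilde{g}_\epsilon$ and $g_\epsilon^*$, use $\widetilde{\Omega}_\epsilon-\Omega_\epsilon^*=O(\epsilon^{M+1})$ and $\widetilde{H}_\epsilon-H_\epsilon^*=O(\epsilon^{M+1})$ together with $\iota_{g_\epsilon^*}\Omega_\epsilon^*=\mathbf{d}H_\epsilon^*$ to conclude $\iota_{\widetilde{g}_\epsilon-g_\epsilon^*}\widetilde{\Omega}_\epsilon=O(\epsilon^{M+1})$, then invert on $C_x$ at cost $\epsilon^{-d}$ and handle the $K_x$-component via $\pi_K(\widetilde{g}_\epsilon)=\pi_K(\widetilde{g}_\epsilon^\prime)$ and the assumed order of $\widetilde{g}_\epsilon^\prime$. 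The paper's version is slightly more compact (it works directly with the difference $\widetilde{g}_\epsilon-g_\epsilon^*$ rather than decomposing $g_\epsilon^*$ first), but the logic is identical, and your added remarks on the formal-series interpretation and the smoothness of the projections are apt.
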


\begin{remark}
Note that property (a) allows for the kernel of $\widetilde{\Omega}_\epsilon$ to change discontinuously at $\epsilon = 0$. This singular behavior is not uncommon in practice, and will be demonstrated in the context of the zero-drag Abraham-Lorentz equations. Also note that in finite dimensions: property (a) is implied by the weaker condition that $K_x$ depend smoothly on $x$ and $\epsilon >0$; and property (b) is automatically satisfied. Finally observe that $d>0$ may only occur when the kernel of $\widetilde{\Omega}_\epsilon$ changes discontinuously at $\epsilon = 0$.
\end{remark} 

\begin{remark}
On a general presymplectic manifold $(M,\omega)$ the equation $\iota_X\omega = \mathbf{d}H$ may not be solvable for the vector field $X$ because $\mathbf{d}H$ may not be in the image of $\omega$ over all of $M$. In these degenerate cases the Gotay-Nester constraint algorithm \citep{Gotay_1978,Gotay_1979} may be used to find a proper (possibly empty) submanifold of $M$ on which $X$ can be defined. In this Section we sidestep many of these delicate issues by defining Hamiltonian fast-slow systems to obey presymplectic Hamilton's equations that \emph{are} solvable. As a consequence the formal power series presymplectic Hamilton's equation \eqref{formal_hamilton_equation} must be solvable to all orders in perturbation theory. Nevertheless truncating the series for $\Omega_\epsilon^*$ and $H_\epsilon^*$ haphazardly may easily spoil the solvability property. Property (c) requires that this does not happen. We will not discuss in this Review truncation strategies that ensure (c) is satisfied. We remark however that the analysis of the Hamiltonian structure underlying extended MHD in Ref.\,\citep{Burby_two_fluid_2017} had to deal with this issue in the special case where the kernel of $\Omega_\epsilon^*$ was independent of $\epsilon >0$; in this case each coefficient of $\mathbf{d}H_\epsilon^*$ is necessarily in the image of $\Omega_\epsilon^*$, which simplifies matters substantially.
\end{remark}

\begin{remark}
If the kernel of the approximate presymplectic form is empty the above method for constructing Hamiltonian slow manifold reductions is a less-refined version of MacKay's \citep{MacKay_2004} Hamiltonian slow manifold reduction strategy. MacKay's method also includes a special technique for constructing the approximate reduced symplectic form and Hamiltonian that ensure the resulting slow manifold reduction contains all of the original system's equilibria that are sufficiently close to the slow manifold.
\end{remark}

\begin{proof}
Because $\widetilde{\Omega}_\epsilon$ and $\widetilde{H}_\epsilon$ are order-$M$ approximations of $\Omega_\epsilon^*$ and $H_\epsilon^*$ there must be $O(1)$ formal power series $\alpha_\epsilon$ and $\phi_\epsilon$ such that
\begin{align}
\widetilde{\Omega}_\epsilon - \Omega_\epsilon^* &= \epsilon^{M+1}\,\alpha_\epsilon\\
\widetilde{H}_\epsilon - H_\epsilon^* & = \epsilon^{M+1}\,\phi_\epsilon.
\end{align}
Therefore the vector field $\widetilde{g}_\epsilon$ satisfies 
\begin{align}
\iota_{\widetilde{g}_\epsilon}\widetilde{\Omega}_\epsilon &= \mathbf{d}\widetilde{H}_\epsilon\nonumber\\
&= \mathbf{d}H_\epsilon^* + \epsilon^{M+1}\mathbf{d}\phi_\epsilon\nonumber\\
&= \iota_{g_\epsilon^*}\Omega_\epsilon^* + \epsilon^{M+1}\,\mathbf{d}\phi_\epsilon\nonumber\\
& = \iota_{g_\epsilon^*}\widetilde{\Omega}_\epsilon + \epsilon^{M+1}\,\left(\mathbf{d}\phi_\epsilon - \iota_{g_\epsilon^*}\alpha_\epsilon\right),
\end{align}
or 
\begin{align}
\iota_{\widetilde{g}_\epsilon - g_\epsilon^*}\widetilde{\Omega}_\epsilon  = \epsilon^{M+1}\,\left(\mathbf{d}\phi_\epsilon - \iota_{g_\epsilon^*}\alpha_\epsilon\right).
\end{align}
If $\epsilon^{-d}\widetilde{P}_\epsilon(x) $ denotes the inverse of $\widetilde{\Omega}_\epsilon(x)$ restricted to $C_x$, where $\widetilde{P}_\epsilon = O(1)$ as $\epsilon\rightarrow 0$, we therefore conclude that
\begin{align}
\widetilde{g}_\epsilon - g_\epsilon^* &= \pi_K(\widetilde{g}_\epsilon - g_\epsilon^*) + \epsilon^{M-d+1}\,\widetilde{P}_\epsilon\cdot\left(\mathbf{d}\phi_\epsilon - \iota_{g_\epsilon^*}\alpha_\epsilon\right)\nonumber\\
& =  \pi_K(\widetilde{g}_\epsilon^\prime - g_\epsilon^*) + \epsilon^{M-d+1}\,\widetilde{P}_\epsilon\cdot\left(\mathbf{d}\phi_\epsilon - \iota_{g_\epsilon^*}\alpha_\epsilon\right).
\end{align}
This implies the desired result because $\widetilde{g}_\epsilon^\prime - g_\epsilon^* = O(\epsilon^{M-d +1})$ by hypothesis.
\end{proof}

\begin{example}
We may illustrate the application of Theorem \ref{ham_construction}, as well as some of the subtleties associated with the Theorem's technical hypotheses, in the context of the zero-drag Abraham-Lorentz equations.

Consider first the limiting Hamiltonian slow manifold reduction presented in Example \ref{limit_ham_example}. Here the approximate reduced presymplectic form and Hamiltonian are each of order $M=0$. Moreover the approximate reduced presymplectic form $\widetilde{\Omega}_\epsilon = \Omega_0^*$ has a $2$-dimensional $\epsilon$-independent kernel because
\begin{align}
\Omega_0^*(x)[\delta x_1,\delta x_2] = -\zeta \bm{B}(\bm{x})\cdot\delta\bm{x}_1\times\delta\bm{x}_2,
\end{align} 
which vanishes for all $\delta x_2$ when $\delta x_1 = (\delta\bm{x}_1,\delta v_{\parallel 1})$  satisfies $\delta\bm{x}_1\times\bm{B}(\bm{x}) = 0$, i.e. for all $\delta x_1$ of the form $\delta x_1 = (\delta s_1\bm{b}(\bm{x}),\delta v_{\parallel 1})$ with $(\delta s_1,\delta v_{\parallel 1})\in\mathbb{R}^2$. It follows that for each $x\in X$ we have the direct sum decomposition $T_xX = K_x\oplus C_x$ with
\begin{align}
K_x &= \{(\delta\bm{x},\delta v_\parallel)\mid \delta\bm{x}\times\bm{b}(\bm{x}) = 0\}\\
C_x &= \{(\delta \bm{x},\delta v_\parallel)\mid \delta v_\parallel = 0\text{ and }\bm{b}(\bm{x})\cdot \delta\bm{x}=0\},
\end{align}
where $K_x$ is the kernel of the $2$-form and $C_x$ is a $2$-dimensional complementary subspace. The linear mapping defined by $2$-form restricted to $C_x$, namely
\begin{align}
(\delta\bm{x}_\perp,0)\in C_x\mapsto -\zeta \bm{B}(\bm{x})\times\delta\bm{x}_\perp\cdot d\bm{x},
\end{align}
has an image comprising all $1$-forms of the form $\bm{w}_\perp\cdot d\bm{x}$ with $\bm{w}_\perp\cdot\bm{b}(\bm{x}) =0$, and an $O(1)$ inverse given by
\begin{align}
\bm{w}_\perp\cdot d\bm{x}\mapsto \frac{\zeta}{|\bm{B}|^2}\bm{B}(\bm{x})\times\bm{w}_\perp.
\end{align}
Moreover because $\widetilde{H}_\epsilon = H_0^* = 0$ the differential $\mathbf{d}\widetilde{H}_\epsilon$ is trivially in the kernel of the $2$-form. These observations confirm that $\widetilde{\Omega}_\epsilon$ and $\widetilde{H}_\epsilon$ satisfy the hypotheses (a-c) in Theorem \ref{ham_construction} with $d=0$. The Theorem therefore asserts that the sum of $\widetilde{g}^\perp_\epsilon$ and $\pi_K(\widetilde{g}_\epsilon^\prime)$ ought to comprise an order $M-d = 0$ Hamiltonian slow manifold reduction of the zero-drag equations. This is indeed the case for the following reason. Because $\iota_{\widetilde{g}_\epsilon^\perp}\widetilde{\Omega}_\epsilon = \iota_{\widetilde{g}_\epsilon^\perp}\Omega_0^* = \mathbf{d}\widetilde{H}_\epsilon =  0$ it must be true that $\widetilde{g}_\epsilon^\perp(x)\in K_x$, which by definition of $\widetilde{g}_\epsilon^\perp$ can only be satisfied if $\widetilde{g}_\epsilon^\perp=0$. Therefore if we use the limiting slow manifold reduction to define $\widetilde{g}_\epsilon^\prime(x) = (v_\parallel\,\bm{b}(\bm{x}),0)$ in Theorem \ref{ham_construction} an explicit expression for $\widetilde{g}_\epsilon = (\dot{\widetilde{\bm{x}}},\dot{\widetilde{v}}_\parallel) = \pi_K(\widetilde{g}_\epsilon^\prime)$ is given by $\dot{\widetilde{\bm{x}}} = v_\parallel\bm{b}(\bm{x})$, $\dot{\widetilde{v}}_\parallel = 0$, which agrees with the limiting slow manifold reduction precisely.

Next consider the problem of constructing a second-order slow manifold reduction of the zero-drag equations. We previously constructed a possibly-non-Hamiltonian second-order slow manifold reduction for this system in Eq.\,\eqref{2nd_order_smr_1}-\eqref{2nd_order_smr_2}. In order to develop a Hamiltonian reduction of the same order it is first necessary to identify an approximate reduced presymplectic form $\widetilde{\Omega}_\epsilon$ and reduced Hamiltonian $\widetilde{H}_\epsilon$ that are sufficiently close to $\Omega_\epsilon^*$ and $H_\epsilon^*$ to ensure that Theorem \ref{ham_construction} provides a second-order reduction. Apparently this requires choosing the order of accuracy $M$ for $\widetilde{\Omega}_\epsilon$ and $\widetilde{H}_\epsilon$ to satisfy $M\geq 2+ d$, where $d$ is the scaling exponent characterizing the inverse of the $2$-form $\widetilde{\Omega}_\epsilon$. This observation would provide a straightforward means to selecting $M$ were it not for one complicating factor: the exponent $d$ depends on $M$. Before choose $M$ we must therefore understand how $d$ varies with $M$.

When $M=0$ we just saw that $d=0$. When $M=1$ an approximate reduced presymplectic form is given by $\widetilde{\Omega}_\epsilon = \Omega_0^* + \epsilon\,\Omega_1^*$ with $\Omega_0^*$ and $\Omega_1^*$ given in Eqs.\,\eqref{omega0_zero_drag}-\eqref{omega1_zero_drag}, or
\begin{align}
\widetilde{\Omega}_\epsilon(x)[\delta x_1,\delta x_2]=& -(\zeta\,\bm{B}+ \epsilon\,v_\parallel\,\nabla\times\bm{b})\cdot\delta\bm{x}_1\times\delta\bm{x}_2\nonumber\\
& +\epsilon\, (\delta\bm{x}_1\delta v_{\parallel 2} - \delta\bm{x}_2\,\delta v_{\parallel 1})\cdot\bm{b}.
\end{align}
The equation $\iota_{\delta x}\widetilde{\Omega}_\epsilon = \bm{\omega}\cdot d\bm{x} + \alpha\,dv_\parallel$ with $\delta x = (\delta\bm{x},\delta v_\parallel)$ is therefore equivalent to the linear system
\begin{align}
-(\zeta\,\bm{B}+ \epsilon\,v_\parallel\,\nabla\times\bm{b})\times \delta\bm{x} - \epsilon\,\delta v_\parallel\,\bm{b} = \bm{\omega}&\\
\epsilon\,\delta\bm{x}\cdot \bm{b} = \alpha&,
\end{align}
which can be solved for $\delta x$ as
\begin{align}
\delta\bm{x} = &\frac{\alpha}{\epsilon}\frac{\bm{B} + \epsilon\,\zeta\,v_\parallel\,\nabla\times\bm{b}}{|\bm{B}| + \epsilon\,\zeta\,v_\parallel\bm{b}\cdot\nabla\times\bm{b}}+\zeta\,\frac{\bm{b}\times\bm{\omega}}{|\bm{B}| + \epsilon\,\zeta\,v_\parallel\bm{b}\cdot\nabla\times\bm{b}}\\
\delta v_\parallel = &-\frac{1}{\epsilon} \frac{(\bm{B} + \epsilon\,\zeta\,v_\parallel\,\nabla\times\bm{b})\cdot\bm{\omega}}{|\bm{B}| + \epsilon\,\zeta\,v_\parallel\bm{b}\cdot\nabla\times\bm{b}}.
\end{align}
This shows that $\widetilde{\Omega}_\epsilon$ is invertible everywhere except where $D = |\bm{B}| + \epsilon\,\zeta\,v_\parallel\,\bm{b}\cdot\nabla\times\bm{b} = 0$ with an $O(\epsilon^{-1})$ inverse. We therefore conclude that away from the singular set $D=0$ (which we will excise from $(\bm{x},v_\parallel)$-space without further discussion) the exponent $d=1$. Moreover if $\widetilde{\Omega}_\epsilon^\prime$ is any other closed $2$-form such that $\widetilde{\Omega}_\epsilon - \widetilde{\Omega}_\epsilon^\prime = \epsilon^2 \beta_\epsilon$, with $\beta_\epsilon = O(1)$, then 
\begin{align}
(\widetilde{\Omega}_\epsilon^\prime)^{-1} &= (\widetilde{\Omega}_\epsilon - \epsilon^2\beta_\epsilon)^{-1}\nonumber\\
& = (\widetilde{\Omega}_\epsilon)^{-1}(1 - \epsilon^2 \widetilde{\Omega}_\epsilon^{-1}\beta_\epsilon)^{-1}\nonumber\\
& = (\widetilde{\Omega}_\epsilon)^{-1}(1 + O(\epsilon))\nonumber\\
& = O(\epsilon^{-1}),
\end{align}
which shows that $d=1$ for all $M\geq 1$. We say that the exponent $d$ \emph{stablizes} at $M=1$. 

Now that we have determined $d = 0$ when $M=0$ and $d = 1$ when $M\geq 1$, we see that the inequality $M\geq 2 + d$ may be replaced with $M\geq 3$. In other words we need to choose $\widetilde{\Omega}_\epsilon$ and $\widetilde{H}_\epsilon$ with third-order accuracy in order to obtain a Hamiltonian slow manifold reduction with second-order accuracy. To that end we set $\widetilde{\Omega}_\epsilon = \Omega_0^* + \epsilon\,\Omega_1^* + \epsilon^2\,\Omega_2^* + \epsilon^3\,\Omega_3^*$ and $\widetilde{H}_\epsilon = H_0^* + \epsilon\,H_1^* + \epsilon^2\,H_2^* + \epsilon^3\,H_3$ and consider the vector field $\widetilde{g}_\epsilon$ defined by Hamilton's equations $\iota_{\widetilde{g}_\epsilon}\widetilde{\Omega}_\epsilon = \mathbf{d}\widetilde{H}_\epsilon$. Setting $\widetilde{g}_\epsilon = (\dot{\widetilde{\bm{x}}}_\epsilon,\dot{\widetilde{v}}_{\parallel\epsilon})$, we find Hamilton's equations are equivalent to 
\begin{align}
-\zeta\bm{B}^*_\epsilon\times \dot{\widetilde{\bm{x}}}_\epsilon - \epsilon\,\dot{\widetilde{v}}_{\parallel\epsilon}\bm{b}^*_\epsilon = \epsilon^3\,\nabla\frac{1}{2}|\bm{v}_c|^2&\label{zero_drag_trunc_ham1}\\
\epsilon\,\bm{b}^*_\epsilon\cdot\dot{\widetilde{\bm{x}}}_\epsilon = \epsilon\, v_\parallel + \epsilon^3 \bm{v}_c\cdot\partial_{v_\parallel}\bm{v}_c,\label{zero_drag_trunc_ham2}
\end{align}
where we have defined
\begin{align}
\bm{B}^*_\epsilon =&\bm{B} + \epsilon\,\zeta\,v_\parallel \nabla\times\bm{b} \nonumber\\
& + \epsilon^2\,\zeta\,\nabla\times[\bm{v}_c+\epsilon\bm{v}_{\perp 2}^*]\\
\bm{b}^*_\epsilon =& \bm{b} + \epsilon\,\partial_{v_\parallel}\bm{v}_c + \epsilon^2\,\partial_{v_\parallel}\bm{v}_{\perp2}^*.
\end{align}
By applying a cross product with $\bm{b}_\epsilon^*$ and a dot product with $\bm{B}_\epsilon^*$ to Eq.\,\eqref{zero_drag_trunc_ham1} these equations may be solved explicitly for $(\dot{\widetilde{\bm{x}}}_\epsilon,\dot{\widetilde{v}}_{\parallel\epsilon})$, leading to
\begin{align}
\dot{\widetilde{\bm{x}}}_\epsilon &= \frac{\bm{B}_\epsilon^*}{\bm{B}_\epsilon^*\cdot\bm{b}_\epsilon^*}\left(v_\parallel + \epsilon^2\frac{1}{2} \partial_{v_\parallel}|\bm{v}_c|^2\right) + \epsilon^3\,\zeta\frac{\bm{b}_\epsilon^*\times\nabla\frac{1}{2}|\bm{v}_c|^2}{\bm{B}_\epsilon^*\cdot\bm{b}_\epsilon^*}\\
\dot{\widetilde{v}}_{\parallel\epsilon} & = - \epsilon^2 \frac{\bm{B}_\epsilon^*\cdot\nabla\frac{1}{2}|\bm{v}_c|^2}{\bm{B}_\epsilon^*\cdot\bm{b}_\epsilon^*}.
\end{align}
We leave it as an exercise for the reader to verify that the difference between these expressions and our earlier non-Hamiltonian second-order slow manifold reduction \eqref{2nd_order_smr_1}-\eqref{2nd_order_smr_2} is $O(\epsilon^3)$. We also challenge the reader to demonstrate that were we to replace $\widetilde{\Omega}_\epsilon $ with $\Omega_0^* +\epsilon\,\Omega_1^* + \epsilon^2\,\Omega_2^*$ the resulting Hamiltonian slow manifold reduction (with the same approximate reduced Hamiltonian) would \emph{still} be second-order. This result relies on special features of the zero-drag system that are not shared by generic Hamiltonian fast-slow systems, and is therefore undetectable by Theorem \ref{ham_construction}.

\end{example}


\subsection{Inherited Symmetries}
Now that we have established a method for constructing Hamiltonian slow manifold reductions of any order we would like to explain some of the practical benefits of doing so. Perhaps the deepest benefit is the fact that Hamiltonian slow manifold reductions automatically possess an absolute integral invariant. If $D_0\subset X$ is any closed $2$-dimensional surface embedded in $x$-space and $D_t$ is the $t$-second evolution of that surface defined by flowing along the Hamiltonian slow manifold reduction $\widetilde{g}_\epsilon$ then 
\begin{align}
\int_{D_0}\widetilde{\Omega}_\epsilon = \int_{D_t}\widetilde{\Omega}_\epsilon.\label{gen_frozen_in}
\end{align}
We highlighted a particular instance of this ``generalized frozen-in law" in Example \ref{limit_ham_example}. However many of the consequences of Eq.\,\eqref{gen_frozen_in} are understood today only in very abstract terms. See for instance Ref.\,\citep{Gosson_2009} for a discussion of so-called symplectic capacities, which may be viewed as profound, although abstract generalizations of Liouville's theorem that follow from Eq.\,\eqref{gen_frozen_in}. One consequence of Eq.\,\eqref{gen_frozen_in} that \emph{is} accessible concerns Noether's theorem on the relationship between symmetries and conservation laws. As we will show, because Hamiltonian slow manifold reductions obey Eq.\,\eqref{gen_frozen_in} they are automatically systems for which Noether's theorem applies. Thus one practical benefit of using Hamiltonian slow manifold reductions over non-Hamiltonian slow manifold reductions (when Hamiltonian structure is available) is that exact conservation laws can be built into the reduction in a systematic manner, thereby helping to circumvent the difficulty highlighted in Example \ref{energy_breakdown_example}.

Noether's theorem guarantees that (almost) every symmetry of a Hamiltonian system implies the existence of a conservation law. Therefore within the realm of Hamiltonian systems the task of ensuring that a given system obeys a conservation law may often be recast as the task of ensuring that the system possesses a certain symmetry. In particular the task of constructing slow manifold reductions that inherit conservation laws from their Hamiltonian parent model may often be recast as the problem of finding Hamiltonian slow manifold reductions that inherit the parent model's symmetries. While not obviously the case, this reformulation is actually useful because it leads to a generally applicable and more-or-less complete resolution of the first-integral inheritance problem (c.f. Example \ref{energy_breakdown_example}) for Hamiltonian fast-slow systems. We are not aware of any similarly powerful approach to resolving this issue that \emph{does not} make use of Noether's theorem. For instance while one way to ensure that a slow manifold reduction inherits conservation laws from its parent model is to first express the parent model as a skew-gradient system (see Ref.\,\citep{McLachlan_1999} for a discussion of how general systems with first integrals can be written as skew-gradient system), the skew-symmetric tensor that must be identified for this purpose often has bothersome singularities that render the technique unwieldy; this drawback is absent in the approach based on Noether's theorem that we are about to describe. (We remark however that the skew-gradient technique applies to \emph{any} system with first integrals, not just the Hamiltonian variety.)

The rest of this Section will be devoted to describing in detail an approach to constructing Hamiltonian slow manifold reductions that inherit the symmetries, and therefore the Noether conservation laws from the underlying fast-slow system. First we will describe the interplay between the symmetry group for a (possibly non-Hamiltonian) fast-slow system and that system's formal slow manifold. Then we will exploit this interplay to prove that a certain class of Hamiltonian slow manifold reductions automatically inherit symmetries from their parent model.

We will limit our discussion to fast-slow systems that admit $\epsilon$-independent symmetries. We will refer to such systems as fast-slow systems with symmetry.

\begin{definition}[fast-slow systems with symmetry]
A \emph{fast-slow system with symmetry} is a fast slow system $\dot{x} = g_\epsilon(x,y)$, $\epsilon\,\dot{y} = f_\epsilon(x,y)$ whose flow commutes with an $\epsilon$-independent action of some connected Lie group $G$. In other words if $\mathcal{F}_{t}^\epsilon:X\times Y\rightarrow X\times Y$ is the time-$t$ flow of the fast-slow system and $\Phi_g:X\times Y\rightarrow X\times Y$ is a left $\epsilon$-independent $G$-action on $X\times Y$ then $\mathcal{F}_t^\epsilon\circ \Phi_g = \Phi_g\circ \mathcal{F}_t^\epsilon$ for each $g\in G$ and $\epsilon> 0$.
\end{definition}

The first remarkable feature of fast-slow systems with symmetry is that their limiting slow manifolds are automatically $G$-invariant, as show in the following proposition.

\begin{proposition}[limiting slow manifolds inherit symmetry]
The limiting slow manifold $S_0= \{(x,y)\in X\times Y\mid y = y_0^*(x)\}$ associated with a fast-slow system with symmetry is $G$-invariant, i.e. $\Phi_g(S_0) = S_0$ for each $g\in G$.
\end{proposition}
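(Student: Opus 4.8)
The plan is to show that the limiting slow manifold, being characterized by the equation $f_0(x,y)=0$ together with the non-degeneracy condition \eqref{FS_condition}, is mapped to itself by each $\Phi_g$ because $f_0$ itself transforms equivariantly. First I would extract from the symmetry hypothesis $\mathcal{F}_t^\epsilon\circ\Phi_g = \Phi_g\circ\mathcal{F}_t^\epsilon$ an infinitesimal statement: differentiating at $t=0$ shows that the generator $W_\epsilon(x,y) = (g_\epsilon(x,y),f_\epsilon(x,y)/\epsilon)$ is $\Phi_g$-related to itself, i.e. $D\Phi_g(x,y)[W_\epsilon(x,y)] = W_\epsilon(\Phi_g(x,y))$ for all $(x,y)$ and all $\epsilon>0$. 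Since $\Phi_g$ is $\epsilon$-independent, I can multiply the $y$-component by $\epsilon$ and pass to the limit $\epsilon\to 0$; writing $\Phi_g = (\Phi_g^x,\Phi_g^y)$ this yields in particular $f_0(\Phi_g(x,y)) = D_x\Phi_g^y(x,y)[g_0(x,y)]\cdot\epsilon + D_y\Phi_g^y(x,y)[f_0(x,y)]\big|_{\epsilon=0}$, so that $f_0(\Phi_g(x,y)) = D_y\Phi_g^y(x,y)[f_0(x,y)]$. The key consequence is that $f_0(x,y)=0$ implies $f_0(\Phi_g(x,y))=0$.

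Next I would argue that this, together with uniqueness of the slaving function, forces $\Phi_g(S_0)=S_0$. Take a point $s=(x,y_0^*(x))\in S_0$, so $f_0(s)=0$. By the previous paragraph $f_0(\Phi_g(s))=0$, i.e. $\Phi_g(s)$ lies in the zero set of $f_0$. Writing $\Phi_g(s) = (x',y')$ with $x' = \Phi_g^x(x,y_0^*(x))$ and $y' = \Phi_g^y(x,y_0^*(x))$, the condition $f_0(x',y')=0$ together with the standing assumption (stated just after Definition \ref{FS_def}) that $f_0(x',y)=0$ has the \emph{unique} global solution $y = y_0^*(x')$ forces $y' = y_0^*(x')$. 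Hence $\Phi_g(s) = (x',y_0^*(x'))\in S_0$, which proves $\Phi_g(S_0)\subseteq S_0$. Applying the same argument to $g^{-1}$ gives $\Phi_{g^{-1}}(S_0)\subseteq S_0$, i.e. $S_0\subseteq \Phi_g(S_0)$, so $\Phi_g(S_0)=S_0$ as claimed.

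One point deserving care is the passage to the $\epsilon\to 0$ limit in the equivariance identity: the generator $W_\epsilon$ has a $1/\epsilon$ in its fast component, so I should work with the \emph{scaled} vector field $\overline{W}_\epsilon = (\epsilon\,g_\epsilon, f_\epsilon)$, note that $\mathcal{F}_t^\epsilon\circ\Phi_g=\Phi_g\circ\mathcal{F}_t^\epsilon$ is equivalent to equivariance of $\overline{W}_\epsilon$ (reparametrizing time does not affect the commutation with the $\epsilon$-independent $\Phi_g$), and only then set $\epsilon=0$. With that caveat the argument is routine. The only genuine obstacle is bookkeeping around the standing global-uniqueness hypothesis for $f_0=0$: without it one would only get that $\Phi_g$ permutes the branches of the solution set, and $G$-connectedness would then be needed to conclude that each branch is individually preserved. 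Since the excerpt has already assumed a unique global branch, I expect no real difficulty, but I would flag in the proof exactly where that hypothesis is used.
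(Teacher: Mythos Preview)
Your proposal is correct and follows essentially the same route as the paper. The paper multiplies the generator $W_\epsilon$ by $\epsilon$ to obtain the smooth-in-$\epsilon$ vector field $V_\epsilon=(\epsilon g_\epsilon,f_\epsilon)$, passes the equivariance $\Phi_g^*V_\epsilon=V_\epsilon$ to $\epsilon=0$, and then observes that the fixed-point set of $V_0=(0,f_0)$ is therefore $G$-invariant and coincides with $S_0$; your argument is the same computation spelled out component-wise, with the additional (and appropriate) explicit appeal to the global uniqueness assumption for $f_0=0$ to identify the zero set with the graph of $y_0^*$.
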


\begin{proof}
Let $W_\epsilon = (g_\epsilon,f_\epsilon/\epsilon)$ be the fast-slow system's infinitesimal generator. Because the fast-slow system's flow commutes with the $G$-action it must be true that $\Phi_g^*W_\epsilon = W_\epsilon$ for each $\epsilon > 0$. Therefore $V_\epsilon = \epsilon W_\epsilon = (\epsilon g_\epsilon,f_\epsilon)$ also satisfies $\Phi_g^*V_\epsilon = V_\epsilon$ for each $\epsilon>0$. Since $V_\epsilon$ is a smooth function of $\epsilon$ in a neighborhood of $0$ this means that $\Phi_g^*V_0 = V_0$, where $V_0 = (0,f_0)$, and that the $G$-action commutes with the flow of $V_0$. An immediate consequence is that $V_0$'s fixed point set $C_0 = \{(x,y)\in X\times Y\mid V_0(x,y) = 0\}$ is $G$-invariant. But $(x,y)\in C_0$ if and only if $f_0(x,y) = 0$, i.e. $y = y_0^*(x)$. Therefore $C_0 = S_0$ is $G$-invariant. 
\end{proof}

\begin{remark}
While we always assume that $f_0(x,y) =0$ has the unique solution $y = y_0^*(x)$ this proposition is also valid in more general fast-slow systems provided $S_0$ is redefined as the zero level set of $f_0$. 
\end{remark}

\begin{corollary}\label{limit_induced_cor}
Given any fast-slow system with symmetry there is a left $\epsilon$-independent $G$-action $\varphi^*_{0,g}:X\rightarrow X$on $X$ defined by the formula $\Phi_g(x,y_0^*(x)) =(\varphi^*_{0,g}(x),y_0^*(\varphi^*_{0,g}(x))) $.
\end{corollary}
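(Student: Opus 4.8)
The plan is to show that $\varphi^*_{0,g}$ is well-defined as a map $X\to X$ and that it constitutes a left $G$-action, using the $G$-invariance of the limiting slow manifold $S_0$ established in the preceding proposition. First I would unpack the defining relation: for $x\in X$ the point $(x,y_0^*(x))$ lies in $S_0$, so by $G$-invariance $\Phi_g(x,y_0^*(x))\in S_0$; since every element of $S_0$ has the form $(x',y_0^*(x'))$ for a \emph{unique} $x'$ (because $S_0$ is the graph of the single-valued function $y_0^*$), we may unambiguously define $\varphi^*_{0,g}(x)$ to be that $x'$. Concretely, if $\mathrm{pr}_X:X\times Y\to X$ is the projection onto the first factor, then $\varphi^*_{0,g}=\mathrm{pr}_X\circ\Phi_g\circ\iota_0$, where $\iota_0:X\to X\times Y$, $\iota_0(x)=(x,y_0^*(x))$, is the inclusion of the limiting slow manifold. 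This exhibits $\varphi^*_{0,g}$ as a composition of smooth maps, hence smooth, and it is $\epsilon$-independent because both $\Phi_g$ and $y_0^*$ are.

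Next I would verify the group-action axioms. For the identity: $\Phi_e=\mathrm{id}$, so $\varphi^*_{0,e}(x)=\mathrm{pr}_X(x,y_0^*(x))=x$, i.e.\ $\varphi^*_{0,e}=\mathrm{id}_X$. For composition: apply $\Phi_{gh}=\Phi_g\circ\Phi_h$ to $(x,y_0^*(x))$ and use the defining relation twice. Writing $x_1=\varphi^*_{0,h}(x)$ we have $\Phi_h(x,y_0^*(x))=(x_1,y_0^*(x_1))$ by definition of $\varphi^*_{0,h}$; then $\Phi_g(x_1,y_0^*(x_1))=(\varphi^*_{0,g}(x_1),y_0^*(\varphi^*_{0,g}(x_1)))$ by definition of $\varphi^*_{0,g}$. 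Hence $\Phi_{gh}(x,y_0^*(x))=(\varphi^*_{0,g}(\varphi^*_{0,h}(x)),y_0^*(\varphi^*_{0,g}(\varphi^*_{0,h}(x))))$, and comparing with the defining relation for $\varphi^*_{0,gh}$ and invoking the uniqueness of the first component of a point of $S_0$ gives $\varphi^*_{0,gh}=\varphi^*_{0,g}\circ\varphi^*_{0,h}$. This is exactly the left-action composition law. Smooth dependence on $g$ follows since $(g,x)\mapsto \Phi_g(x,y_0^*(x))$ is smooth jointly in $(g,x)$ and $\mathrm{pr}_X$ is smooth; invertibility of each $\varphi^*_{0,g}$ with smooth inverse $\varphi^*_{0,g^{-1}}$ is then immediate from the composition law.

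I do not anticipate a serious obstacle here; this corollary is essentially a bookkeeping consequence of the proposition. The one point that deserves care is the appeal to \emph{uniqueness} of the representation of a point of $S_0$ as $(x',y_0^*(x'))$ --- this is what makes $\varphi^*_{0,g}$ single-valued and what lets us "read off" the first component in the composition step. This uniqueness holds because $y_0^*$ is a globally single-valued function (a standing assumption in this Review, as noted after Definition~\ref{invariant_manifold_def} and in the discussion of fast-slow systems), so $\mathrm{pr}_X\big|_{S_0}:S_0\to X$ is a bijection. If one wished to treat the multi-branched case one would instead work locally or pass to the abstract manifold $S_0$ with its projection-induced smooth structure, but within the scope of the excerpt the clean global argument above suffices. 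In summary, the proof is: define $\varphi^*_{0,g}=\mathrm{pr}_X\circ\Phi_g\circ\iota_0$, observe it lands in $X$ by $G$-invariance of $S_0$, check $\varphi^*_{0,e}=\mathrm{id}$ and $\varphi^*_{0,gh}=\varphi^*_{0,g}\circ\varphi^*_{0,h}$ using uniqueness of the graph representation, and note smoothness and $\epsilon$-independence are inherited from $\Phi_g$ and $y_0^*$.
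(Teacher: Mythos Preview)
Your proposal is correct and takes essentially the same approach the paper implicitly relies on: the paper states this corollary without proof, treating it as an immediate consequence of the preceding proposition that $S_0$ is $G$-invariant, and your argument supplies exactly the routine verification (well-definedness via the graph representation, the identity and composition axioms, smoothness and $\epsilon$-independence) that the paper leaves to the reader.
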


More generally the formal slow manifold associated with a fast-slow system with symmetry is also $G$-invariant. The following proof of this result makes use of two basic features of fast-slow systems with symmetry: (1) $G$ naturally acts on the space of invariant manifolds, (2) formal power series solutions of the invariance equation are unique.

\begin{theorem}[formal slow manifolds inherit symmetry]
Let $y_\epsilon^*$ be the formal slow manifold associated with a fast-slow system with symmetry, and $\Phi_g(x,y) = (\varphi_g(x,y),\psi_g(x,y))$ the associated $G$-action. The formal mapping $x\mapsto \varphi_{\epsilon,g}^*(x) = \varphi_g(x,y_\epsilon^*(x))$ satisfies the identities $\varphi_{\epsilon,e}^*(x) = x$, $\varphi_{\epsilon,g_1g_2}^*(x) = \varphi_{\epsilon,g_1}^*(\varphi_{\epsilon,g_2}^*(x))$ for all $g_1,g_2\in G$ and $e\in G$ the identity element. In other words $\varphi_{\epsilon,g}^*$ defines a formal $G$-action on $X$.
In addition we have the equality of formal power series 
\begin{align}
y_{\epsilon}^*(x) = \psi_g([\varphi_{\epsilon,g}^*]^{-1}(x),y_\epsilon^*([\varphi_{\epsilon,g}^*]^{-1}(x)))\label{y_g},
\end{align} 
for each $g\in G$.

\end{theorem}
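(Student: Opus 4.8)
The plan is to establish the claimed symmetry of the formal slow manifold by combining two facts: first, that the parent fast-slow system's invariance equation is itself covariant under the $G$-action (because the system's flow commutes with $G$), so that applying a group element to a solution of the invariance equation produces another solution; and second, the uniqueness of formal slow manifolds proven in Theorem \ref{SM_eau}. Concretely, I would first observe that since $\Phi_g^*W_\epsilon = W_\epsilon$ for each $g$ and each $\epsilon>0$, the image $\Phi_g(S)$ of any invariant manifold $S$ is again invariant; pushing this through the analytic picture of Proposition \ref{PDE_char}, if the graph of a function $h_\epsilon$ solves the invariance equation then so does the function whose graph is $\Phi_g(\mathrm{graph}\,h_\epsilon)$.

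The main technical point is to write down what ``the function whose graph is $\Phi_g(\mathrm{graph}\,h_\epsilon)$'' actually is as a formal power series, and to check it is well-defined. Writing $\Phi_g = (\varphi_g,\psi_g)$, the image of the graph of $y_\epsilon^*$ under $\Phi_g$ consists of the points $(\varphi_g(x,y_\epsilon^*(x)),\psi_g(x,y_\epsilon^*(x)))$ as $x$ ranges over $X$. Reparameterizing by $\bar x = \varphi_g(x,y_\epsilon^*(x)) \equiv \varphi_{\epsilon,g}^*(x)$, this is the graph of the function $\bar x \mapsto \psi_g([\varphi_{\epsilon,g}^*]^{-1}(\bar x), y_\epsilon^*([\varphi_{\epsilon,g}^*]^{-1}(\bar x)))$, which is exactly the right-hand side of \eqref{y_g}. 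So I first need to verify that $\varphi_{\epsilon,g}^*$ is formally invertible: at $\epsilon = 0$ this is Corollary \ref{limit_induced_cor} (the map $\varphi_{0,g}^*$ is a genuine diffeomorphism because it is a $G$-action on $X$), and since $\varphi_{\epsilon,g}^* = \varphi_{0,g}^* + O(\epsilon)$ with $\varphi_{0,g}^*$ invertible, the formal inverse exists to all orders by the usual Neumann-series argument for formal power series (the same move used repeatedly in the excerpt, e.g. in the proof of Theorem \ref{zdp}). The group-action identities $\varphi_{\epsilon,e}^* = \mathrm{id}$ and $\varphi_{\epsilon,g_1g_2}^* = \varphi_{\epsilon,g_1}^*\circ\varphi_{\epsilon,g_2}^*$ then follow because the true action $\Phi_g$ satisfies $\Phi_e = \mathrm{id}$ and $\Phi_{g_1g_2} = \Phi_{g_1}\circ\Phi_{g_2}$, restricted to the graph of $y_\epsilon^*$ — here one uses that $\Phi_{g_2}$ maps the graph of $y_\epsilon^*$ to the graph of the transformed function and then $\Phi_{g_1}$ acts on that, so the $x$-components compose correctly; this is a bookkeeping computation using the formal chain rule \eqref{formal_chain_rule}.

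The final and decisive step is the uniqueness argument for \eqref{y_g}. Define $\widetilde y_{\epsilon}(x) = \psi_g([\varphi_{\epsilon,g}^*]^{-1}(x),y_\epsilon^*([\varphi_{\epsilon,g}^*]^{-1}(x)))$ as a formal power series. I claim $\widetilde y_\epsilon$ solves the invariance equation \eqref{scaled_invariance_eq_def}. This is because its graph equals $\Phi_g$ applied to the graph of $y_\epsilon^*$, and $y_\epsilon^*$ solves the invariance equation, and $\Phi_g$ carries (formal) invariant manifolds of $W_\epsilon$ to (formal) invariant manifolds of $\Phi_g^*W_\epsilon = W_\epsilon$ — the covariance argument above, now carried out order-by-order in $\epsilon$. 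One also checks that $\widetilde y_\epsilon$ has leading term $y_0^*$: at $\epsilon = 0$, $\widetilde y_0(x) = \psi_g([\varphi_{0,g}^*]^{-1}(x), y_0^*([\varphi_{0,g}^*]^{-1}(x)))$, which equals $y_0^*(x)$ precisely by the defining relation of Corollary \ref{limit_induced_cor} (namely $\Phi_g(x',y_0^*(x')) = (\varphi_{0,g}^*(x'), y_0^*(\varphi_{0,g}^*(x')))$ with $x' = [\varphi_{0,g}^*]^{-1}(x)$). So $\widetilde y_\epsilon$ is a formal power series solution of the invariance equation with the correct leading coefficient; by the uniqueness half of Theorem \ref{SM_eau} it must coincide with $y_\epsilon^*$, which is exactly \eqref{y_g}.

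The step I expect to be the main obstacle is making the phrase ``$\Phi_g$ carries formal invariant manifolds to formal invariant manifolds'' fully rigorous at the level of formal power series, since $\Phi_g$ is a genuine smooth map but $y_\epsilon^*$ is only a formal object — one must argue that the pushforward of a graph under a fixed diffeomorphism, reparameterized through the formally-invertible $\varphi_{\epsilon,g}^*$, again satisfies the invariance equation term-by-term. The cleanest way to handle this is probably to introduce, for each integer $N$, the genuine smooth function $\widetilde y_\epsilon^{*N}$ obtained by truncating $y_\epsilon^*$ at order $N$, note that its graph is carried by the genuine diffeomorphism $\Phi_g$ to a genuine submanifold which is invariant up to $O(\epsilon^{N+1})$ residual (since $\widetilde y_\epsilon^{*N}$ is invariant to that order and $\Phi_g^*W_\epsilon = W_\epsilon$ exactly), extract the corresponding truncated transformed function, and then let $N\to\infty$ to recover the formal identity. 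Because $N$ is arbitrary this yields \eqref{y_g} as an equality of formal power series, and the same truncation-and-limit device handles the group-action identities for $\varphi_{\epsilon,g}^*$.
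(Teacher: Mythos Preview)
Your proposal is correct and follows essentially the same route as the paper: establish formal invertibility of $\varphi_{\epsilon,g}^*$ from Corollary~\ref{limit_induced_cor}, show the transformed slaving function satisfies the invariance equation (the paper does this by a direct formal computation from the identity $D\Phi_g(x,y_\epsilon^*)[W_\epsilon(x,y_\epsilon^*)] = W_\epsilon(\Phi_g(x,y_\epsilon^*))$ rather than your truncation-and-limit device, but the content is the same), and conclude by the uniqueness half of Theorem~\ref{SM_eau}. The paper derives the group-action identities only \emph{after} establishing $y_{\epsilon,g}^* = y_\epsilon^*$, which makes the composition argument cleaner than your sketch suggests, since then $\Phi_{g_2}(x,y_\epsilon^*) = (\varphi_{\epsilon,g_2}^*(x), y_\epsilon^*(\varphi_{\epsilon,g_2}^*(x)))$ already lands back on the formal slow manifold.
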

\begin{remark}
The idea behind this Theorem may be understood as follows. Suppose that $h:X\rightarrow Y$ is some smooth function such that the graph $H = \{(x,y)\in X\times Y\mid y = h(x)\}$ is $G$-invariant. Then $\Phi_g:X\times Y\rightarrow X\times Y$ induces a $G$-action on $X$, $\varphi_g^*:X\rightarrow X$, given by $\varphi_g^*(x) = \varphi_g(x,h(x))$.
Moreover If $(x,h(x))\in H$ is a point in $H$ then $\Phi_g(x,h(x)) = (\varphi_g(x,h(x)),\psi_g(x,h(x)))= (\overline{x},\overline{y})$ must be another point in $H$. This can only be the case if $\overline{y} = h(\overline{x})$, or $\psi_g(x,h(x)) = h(\varphi_g(x,h(x)))$. This last relationship may also be written $h(\overline{x}) = \psi_g([\varphi_g^*]^{-1}(\overline{x}),h([\varphi_g^*]^{-1}(\overline{x})))$ because $\varphi_g^*$ must be invertible for each $g\in G$. These observations show that the Theorem saying roughly that the ``graph" of $y_\epsilon^*$ is $G$-invariant. The precise statement of the Theorem cannot be this simple, however, because the ``graph" of $y_\epsilon^*$ does not really exist.

\end{remark}

\begin{proof}
The proof is based on the following heuristic argument that can't be taken literally because the ``graph" of $y_\epsilon^*$ is technically ill-defined. Let $\Gamma$ be an invariant manifold. Because the fast-slow system's flow and the $G$-action commute the image of $\Gamma$ under $\Phi_g$ must be another invariant manifold for each $g\in G$. Suppose now that $\Gamma$ happens to be given as the graph of some function $h_\epsilon(x)\in Y$ that is smooth in $(x,\epsilon)$ and that $\Phi_g(\Gamma)$ is also a graph for each $g\in G$. Let $h_{\epsilon,g}(x)$ be the graphing function; it is necessarily smooth in $(x,\epsilon)$. Because $\Phi_g(\Gamma)$ is an invariant manifold the function $h_{\epsilon,g}(x)$ must satisfy the invariance equation for each $g\in G$. In particular the formal power series expansion of $h_{\epsilon,g}(x)$ must be a formal solution of the invariance equation, i.e. $h_{\epsilon,g}$ expanded in its power series must be a formal slow manifold. But we know that the formal slow manifold is unique, which implies $y_\epsilon^* = h_{\epsilon,g}$ as formal power series for each $g\in G$. The Theorem may therefore be proved by setting $h_\epsilon = y_\epsilon^*$ in some formal sense.

To make the preceding argument precise we begin by observing that the pushforward of $W_\epsilon = (g_\epsilon,f_\epsilon/\epsilon)$ along $\Phi_g$ must be equal to $W_\epsilon$ for any fast-slow system with symmetry. In symbols we have 
\begin{align}
D\Phi_g(x,y) [W_\epsilon(x,y)] = W_\epsilon(\Phi_g(x,y)),
\end{align}
for each $(x,y)\in X\times Y$, $g\in G$, and $\epsilon\in \mathbb{R}$. Upon making the substitution $y = y_\epsilon^*(x)$ this establishes the following pair formal power series identities,
\begin{align}
D\varphi_g(x,y_\epsilon^*)[(g_\epsilon^*(x),Dy_\epsilon^*(x)[g_\epsilon^*(x)])] &= g_\epsilon(\Phi_g(x,y_\epsilon^*))\label{inv_id_one}\\
D\psi_g(x,y_\epsilon^*)[(g_\epsilon^*(x),Dy_\epsilon^*(x)[g_\epsilon^*(x)])] & = \frac{1}{\epsilon}f_\epsilon(\Phi_g(x,y_\epsilon^*)).\label{inv_id_two}
\end{align}
Next we observe that the formal invertibility of $x\mapsto\varphi_{\epsilon,g}^*(x)$ follows from the fact that when $\epsilon = 0$ the map is $x\mapsto \varphi_{0,g}^*(x)$, which is a $G$-action by Corollary \ref{limit_induced_cor}.
Having established the invertibility of  $x\mapsto\varphi_{\epsilon,g}^*(x)$ we will now use the identities \eqref{inv_id_one}-\eqref{inv_id_two} to verify directly the the formal power series $y_{\epsilon,g}^*(x) = \psi_g([\varphi_{\epsilon,g}^*]^{-1}(x),y_\epsilon^*([\varphi_{\epsilon,g}^*]^{-1}(x)))\label{y_g}$ satisfies the invariance equation for each $g\in G$. By the definition of $y_{\epsilon,g}^*$,
\begin{align}
y_{\epsilon,g}^*(\overline{x}) = \psi_g(x,y_\epsilon^*(x)),\label{four_ten}
\end{align} 
where $\overline{x} = \varphi_{\epsilon,g}^*(x)$. Implicitly differentiating this equation in $x$ along the direction $g_\epsilon^*(x)$ then gives
\begin{align}
Dy_{\epsilon,g}^*(\overline{x})[D\varphi_{\epsilon,g}^*(x)[g_\epsilon^*(x)]] = D\psi_g(x,y_\epsilon^*)[(g_\epsilon^*(x),Dy_\epsilon^*(x)[g_\epsilon^*(x)])].\label{four_eleven}
\end{align}
By the chain rule and identity \eqref{inv_id_one} $D\varphi_{\epsilon,g}^*(x)[g_\epsilon^*(x)] = D\varphi_g(x,y_\epsilon^*)[(g_\epsilon^*(x),Dy_\epsilon^*(x)[g_\epsilon^*(x)])] = g_\epsilon(\Phi_g(x,y_\epsilon^*))$. Identity \eqref{inv_id_two} in conjunction with Eq.\,\eqref{four_eleven} therefore imply
\begin{align}
Dy_{\epsilon,g}^*(\overline{x})[g_\epsilon(\Phi_g(x,y_\epsilon^*))] = \frac{1}{\epsilon }f_\epsilon (\Phi_g(x,y_\epsilon^*)).\label{four_twelve}
\end{align}
But $\Phi_g(x,y_\epsilon^*) = (\varphi_{\epsilon,g}^*(x),\psi_g(x,y_\epsilon^*)) = (\varphi_{\epsilon,g}^*(x),y_{\epsilon,g}^*(\overline{x})) $ by Eq.\,\eqref{four_ten}, whence Eq.\,\eqref{four_twelve} becomes
\begin{align}
\epsilon\,Dy_{\epsilon,g}^*(\overline{x})[g_\epsilon(\overline{x},y_{\epsilon,g}^*(\overline{x}))] = f_\epsilon(\overline{x},y_{\epsilon,g}^*(\overline{x})),
\end{align}
which is the invariance equation, as desired.

Because the formal power series $y_{\epsilon,g}^*$ satisfies the invariance equation for each $g\in G$ and slow manifolds are unique it follows that $y_{\epsilon,g}^* = y_\epsilon^*$ independent of $g$. This establishes the $G$-invariance of the formal slow manifold. To see now that $\varphi_{\epsilon,g}^*(x)$ defines a formal $G$-action on $X$ we note first that because $\Phi_g$ is a $G$-action we have $\Phi_{g_1g_2}(x,y) = \Phi_{g_1}(\Phi_{g_2}(x,y))$ for all $(x,y)\in X\times Y$ and $g_1,g_2\in G$. Substituting $y = y_\epsilon^*(x)$ into this identity therefore implies the formal power series identity $\Phi_{g_1g_2}(x,y_\epsilon^*) = \Phi_{g_1}(\Phi_{g_2}(x,y_\epsilon^*))$. But because $\Phi_g(x,y_\epsilon^*) = (\varphi_{\epsilon,g}^*(x),\psi_g(x,y_\epsilon^*)) = (\varphi_{\epsilon,g}^*(x),y_{\epsilon,g}^*(\varphi_{\epsilon,g}^*(x))) $, as we have already mentioned, we therefore have 
\begin{align}
\Phi_{g_1g_2}(x,y_\epsilon^*) &= (\varphi_{\epsilon,g_1g_2}^*(x),y_{\epsilon,g}^*(\varphi_{\epsilon,g_1g_2}^*(x))) \nonumber\\
& = (\varphi_{\epsilon,g_1g_2}^*(x),y_{\epsilon}^*(\varphi_{\epsilon,g_1g_2}^*(x))),
\end{align}
and 
\begin{align}
 \Phi_{g_1}(\Phi_{g_2}(x,y_\epsilon^*)) &= \Phi_{g_1} (\varphi_{\epsilon,g_2}^*(x),y_{\epsilon,g}^*(\varphi_{\epsilon,g_2}^*(x))) \nonumber\\
 & =  \Phi_{g_1} (\varphi_{\epsilon,g_2}^*(x),y_{\epsilon}^*(\varphi_{\epsilon,g_2}^*(x))) \nonumber\\
 & = (\varphi_{\epsilon,g_1}^*(\varphi_{\epsilon,g_2}^*(x)),y_\epsilon^*(\varphi_{\epsilon,g_1}^*(\varphi_{\epsilon,g_2}^*(x)))),
\end{align}
which gives the desired result.

\end{proof}


The theorem shows that there must be a formal invariant manifold in the quotient space $X\times Y/ G$. Therefore a good way to construct $G$-invariant slow manifolds of any desired order is to (1) construct a truncation of the corresponding object in the quotient space, and then (2) define the slow manifold in the unreduced space as the preimage of truncated invariant manifold in the quotient along the quotient projection map. Hamiltonian slow manifold reductions constructed by pulling back the symplectic form and Hamiltonian to the resulting $G$-invariant slow manifold will then automatically inherit all Noether invariants because these pulled-back objects will be invariant with respect to the $G$-action restricted to the slow manifold.


\section{Application: Quasineutral Kinetic Plasmas\label{QN_application}}
The Vlasov-Maxwell (VM) system of equations from the kinetic theory of plasmas is an example of a degenerate fast-slow dynamical system. It is infinite dimensional but nevertheless the formal techniques presented above can be applied to obtain slow-manifold approximations in the quasi-neutral limit. (See Ref.\,\citep{Tronci_2015} for a variational formulation of the collisionless limiting neutral model.) The VM equations are given by
\begin{align}
\partial_tf_\sigma&=\sum_{\bar{\sigma}}C_\sigma\left(f_\sigma,f_{\bar{\sigma}}\right)-\nabla_x\cdot(\bm{v}f_\sigma)\\
\,&-\partial_{\bm{v}}\cdot\left(\frac{e_\sigma}{m_\sigma}\left(\bm{E}+\bm{v}\times \bm{B}\right)f_\sigma\right)\nonumber\\
\,&\epsilon_0\partial_t\bE+\sum_{\sigma}e_\sigma\int \bv f_\sigma\,d^3\bv=\mu^{-1}_0\nabla\times \bB\\
\,&\partial_t \bB=-\nabla\times \bE\\
\,&\epsilon_0\nabla\cdot \bE=\sum_{\sigma}e_\sigma\int f_\sigma\,d^3\bv
\end{align}
The first of these equations is known as the Vlasov equation. Here $f_\sigma(x,\bv)$ is the distribution of particle species and $C(f_\sigma,f_{\bar{\sigma}})$ is a collision term between particles. A derivation of the VM system is given in chapter 22 of \citep{Goldston_book_1995}. Our goal in this section will be to find a first-order asymptotic expansion for the electric field $\bE$ as a slow manifold. We make the following substitution that $f_\sigma(x,\bv)=n_\sigma(x)\rho_\sigma(x,\bw)$ where
\begin{align}
\bw=\bv-\bu_\sigma(x)\\
\int \rho_\sigma\,d^3\bw=1\\
\int \bw\rho_\sigma\,d^3\bw=0\\
n_\sigma=\int f_\sigma\,d^3\bv
\end{align}
One can quickly deduce that
\begin{align}
n_\sigma(x)\bu_\sigma(x)=\int \bv f_\sigma\,d^3\bv\label{num_Mvel}
\end{align}
The $\bu_\sigma(x)$ is a measure of mean velocity for particle species $\sigma$. The $n_\sigma(x)$ is a measure of particle number of species $\sigma$ and $\rho_\sigma(x,\bw)$ is a corresponding density function over velocity for each value of position $x$ in space. We also note here that the collision terms satisfy some conservation laws given by
\[
\int \sum_{\bar{\sigma}}C_\sigma(f_\sigma,f_{\bar{\sigma}})\,d^3\bv=0
\]
\begin{center}
Conservation of particle number
\end{center}
\[
\sum_{\sigma}\int m_\sigma \bm{v}\sum_{\bar{\sigma}}C_\sigma(f_\sigma,f_{\bar{\sigma}})\,d^3\bv=0
\]
\begin{center}
Conservation of momentum
\end{center}
\[
\sum_{\sigma}\int \frac{1}{2} m_\sigma |\bm{v}|^2\sum_{\bar{\sigma}}C_\sigma(f_\sigma,f_{\bar{\sigma}})\,d^3\bv=0
\]
\begin{center}
Conservation of energy
\end{center}
Now that we've replaced the distribution function $f_\sigma$ with $\rho_\sigma, \bu_\sigma$ and $n_\sigma$ we'll need to derive their evolution equations to replace Vlasov equation in the VM system. The simplest of these evolutions will be for the particle number $n_\sigma$, and indeed it goes as:
\begin{align}
\partial_t n_\sigma&=\int\partial_t f_\sigma\,d^3\bv\nonumber\\
\,&=\int \sum_{\bar{\sigma}}C_\sigma(f_\sigma,f_{\bar{\sigma}})\,d^3\bv-\nabla_x\cdot\int \bv f_\sigma\,d^3\bv\nonumber\\
\,&\phantom{=}-\int \frac{e_\sigma}{m_\sigma}(\bE+\bv\times \bB)\cdot\partial_{\bv} f_\sigma\,d^3\bv\nonumber\\
\,&=-\nabla_x\cdot\int \bv f_\sigma\,d^3\bv\nonumber\\
\,&+\int f_\sigma\partial_{\bv} \cdot\left(\frac{e_\sigma}{m_\sigma}(\bE+\bv\times\bB)\right)\,d^3\bv\nonumber\\
\partial_t n_\sigma&=-\nabla_x\cdot(n_\sigma\bu_\sigma)
\end{align}
We note here that integration by parts will be employed frequently and that when applied to the Lorentz force term, the $\bv$ divergence will return zero thanks to the cross-product and the fact that $\bE$ and $\bB$ have no $\bv$ dependence. Next, starting from the time derivative of equation (\ref{num_Mvel}) we can arrive at 
\begin{align}
\partial_t(\bu_\sigma)&=\frac{1}{n_\sigma}\int \bw\sum_{\bar{\sigma}}C_\sigma(n_\sigma\rho_\sigma,n_{\bar{\sigma}}\rho_{\bar{\sigma}})d^3\bw\\
\,&\phantom{=}-\bu_\sigma\cdot\nabla\bu_\sigma+\frac{e_\sigma}{m_\sigma}(\bE+\bu_\sigma\times \bB)\nonumber\\
\,&\phantom{=}-\frac{1}{n_\sigma}\nabla_x\cdot\bbP_\sigma\nonumber 
\end{align}
Here $\bbP_\sigma=\int n_\sigma\rho_\sigma\bw\bw\,d^3\bw$ is a pressure term. Finally we need the evolution of the density $\rho_\sigma$. This results in 
\begin{align}
\partial_t\rho_\sigma&=\frac{1}{n_{\sigma}}\left(\int \bw\sum_{\bar{\sigma}}C_{\sigma\bar{\sigma}}d^3\bw\right)\cdot\partial_{\bw}\rho_{\sigma}\\
\,&+\frac{1}{n_\sigma}\sum_{\bar{\sigma}}C_{\sigma\bar{\sigma}}+\frac{1}{n_\sigma}\nabla\cdot(n_\sigma\rho_\sigma\bu_\sigma)+\frac{1}{n_\sigma}\nabla\cdot(\bw n_\sigma\rho_\sigma)\nonumber\\
\,&-\partial_{\bw}\cdot(\bw\cdot\nabla\bu_\sigma n_\sigma\rho_\sigma)-\frac{1}{n_\sigma}(\nabla_x\cdot\bbP_\sigma)\cdot\partial_{\bw}\rho_\sigma\nonumber\\
\,&-\frac{e_\sigma}{m_\sigma}(\bw\times \bB)\cdot\partial_{\bw}\rho_\sigma\nonumber
\end{align}
We now have all the needed evolution equations. We'll now choose to reorganize VM and the above evolution equations into a nondimensional form that will be tied to our choice of small parameter. 
\begin{align*}
\begin{split}
t&=T\bar{t}\\
\bE(x)&=E_0\bar{\bE}\left(\frac{x}{L}\right)\\
\bB(x)&=B_0\bar{\bB}\left(\frac{x}{L}\right)
\end{split}
\quad
\begin{split}
n_\sigma(x)&=n_{0}\bar{n}_\sigma\left(\frac{x}{L}\right)\\
\rho_\sigma(x,\bw)&=\frac{1}{V^3_{th\sigma}}\bar{\rho}_\sigma\left(\frac{x}{L},\frac{\bw}{V_{th\sigma}}\right)\\
\bu_\sigma(x)&=u_{\sigma 0}\bar{\bu}_\sigma\left(\frac{x}{L}\right)
\end{split}
\end{align*}
Here $L$ and $T$ are the characteristic length and time-scales respectively and $V_{th\sigma}$ is the thermal velocity of species $\sigma$. We'll now scale these quantities further with respect to a small quantity $\epsilon$ that is proportional to the permittivity of free space $\epsilon_0$. 
\begin{align}
\ds L&=Tc\sqrt{\epsilon} &\phantom{BIG} \ds B_0&=\frac{m_i}{q_i}\frac{1}{T}\\
\ds u_0&=c\sqrt{\epsilon}  & \phantom{BIG} \ds E_0&=\frac{m_i}{q_i}\frac{1}{T}c\sqrt{\epsilon}\\
\ds V_{th}&=c\sqrt{\epsilon}  & \phantom{BIG} V_{th}&=V_{thi}=V_{the}\sqrt{m_r}\\
\ds n_0&=\frac{\epsilon_0m_i}{q_i^2}\frac{1}{\epsilon T^2} & \phantom{BIG} q_e&=-q_i
\end{align}
Note that the speeds in this scaling are all non-relativistic. If collisions are neglected this scaling recovers the neutral model in \cite{Tronci_2015} in a non-dimensional form so that the zeroth order approximation to the slow manifold corresponds to this neutral limit. The first order term in the approximation will recover what one could truly call a quasi-neutral approximation. In this scaling the Maxwell equations in VM become 
\begin{align}
\epsilon\nabla\cdot\bar{\bE}&=\bar{n}_i-\bar{n}_e\\
\epsilon\partial_t\bar{\bE}&=\nabla\times\bar{\bB}+\bar{n}_e\bar{\bu}_e-\bar{n}_i\bar{\bu}_i\label{Ampere}\\
\partial_t\bar{\bB}&=-\nabla\times\bar{\bE}\label{Faraday}
\end{align}
The evolution equations resulting from the Vlasov equation become
\begin{align}
\partial_t\bar{n}_\sigma&=-\nabla\cdot(\bar{n}_\sigma\bar{\bu}_\sigma)\\
\partial_t\bar{\bu}_e&=\frac{T\nu}{\sqrt{m_r}}\bar{\bF}_e-\frac{1}{\bar{n}_e}\nabla\cdot\bar{\bbP}_e-\bar{\bu}_e\cdot\nabla\bar{\bu}_e\\
\,&-\frac{1}{m_r}\bar{\bE}-\frac{1}{m_r}\bar{\bu}_e\times \bar{\bB}\nonumber\\
\partial_t\bar{\bu}_i&=T\nu\bar{\bF}_i-\frac{1}{\bar{n}_i}\nabla\cdot\bar{\bbP}_i-\bar{\bu}_i\cdot\nabla\bar{\bu}_i\\
\,&+\bar{\bE}+\bar{\bu}_i\times \bar{\bB}\nonumber\\
\partial_t\bar{\rho}_e&=T\nu\sum_{\bar{\sigma}}\bar{C}_{e\bar{\sigma}}+T\nu\bar{\bF}_i\cdot\partial_{\bw}\bar{\rho}_{e}\\
\,&-\frac{1}{m_r}(\bw\times \bar{\bB})\cdot\partial_{\bw}\bar{\rho}_e-\frac{1}{\sqrt{m_r}\bar{n}_e}\nabla\cdot\bar{\bbP}_e\cdot\partial_{\bw}\bar{\rho}_e\nonumber\\
\,&+\nabla\cdot\left(\bar{n}_e\bar{\rho}_e\bar{\bu}_e+\frac{ \bar{n}_e\bar{\rho}_e}{\sqrt{m_r}}\bw\right)-\partial_{\bw}\cdot(\bw\cdot\nabla\bar{n}_e\bar{\rho}_e\bar{\bu}_e)\nonumber\\
\partial_t\bar{\rho}_i&=T\nu\sum_{\bar{\sigma}}\bar{C}_{i\bar{\sigma}}+T\nu\bar{\bF}_e\cdot\partial_{\bw}\bar{\rho}_{i}\label{rhoievol}\\
\,&-(\bw\times \bar{\bB})\cdot\partial_{\bw}\bar{\rho}_i-\frac{1}{\bar{n}_i}\nabla\cdot\bar{\bbP}_i\cdot\partial_{\bw}\bar{\rho}_i\nonumber\\
\,&+\nabla\cdot\left(\bar{n}_i\bar{\rho}_i\bar{\bu}_i+ \bar{n}_i\bar{\rho}_i\bw\right)-\partial_{\bw}\cdot(\bw\cdot\nabla\bar{n}_i\bar{\rho}_i\bar{\bu}_i)\nonumber
\end{align}
Where $\bar{\bF}_\sigma=\frac{1}{\bar{n}_\sigma}\int\bw\sum_{\bar{\sigma}}\bar{C}_{\sigma,\bar{\sigma}}\,d^3\bw$ and $\nu$ is the collision frequency. In the following we'll drop the overbars notation in the equations.  

At this point we are ready to begin finding the slow manifold approximation $\bE^*(\bB,n_i,\bu_i,\rho_i,\rho_e)$ from the above infinite-dimensional dynamical system. The scaled Amp\'ere's law is where we will need to start. If we allow $\epsilon\to0$ we can easily see that there is no natural way to solve for $\bE^*$ in terms of the other dynamical variables. Hence this system is a degenerate fast-slow system. To prove this we'll put the VM system into a form amenable to definition 4. If $\bz=(\bE,\bB,n_i,\bu_i,\bu_e,\rho_i,\rho_e)$ where $n_e=n_i-\epsilon\nabla\cdot\bE$ then
\[
\epsilon\dot{\bz}=\bU_\epsilon(\bz)
\]
where $\bU_\epsilon(\bz)$ is given by the right hand side of \eqref{Ampere} and the right hand sides of \eqref{Faraday}-\eqref{rhoievol} multiplied by $\epsilon$. Hence $\bU_0(\bz)$ as a map between Banach spaces is defined as
\[
(\bE,\bB,n_i,\bu_i,\bu_e,\rho_i,\rho_e)\mapsto(\nabla\times\bB+n_i\bu_e-n_i\bu_i,\bm{0})
\]
To satisfy the definition of fast-slow dynamical systems we need to show that the Fr\'echet derivative $D\bU_0(\bz)$ has nontrivial overlap between its image and kernel when evaluated at a $\bz\neq\bm{0}$ such that $\bU_0(\bz)=\bm{0}$. Indeed, such a $\bz$ is given by 
\[
\bz=\left(\bE,\bB,n_i,\bu_i,\bu_i-\frac{1}{n_i}\nabla\times\bB,\rho_i,\rho_e\right)
\]
The Fr\'echet derivative of $\bU_0(\bz)$ at the above $\bz$ value acting on $\delta\bz$ is
\[
D\bU_0(\bz)[\delta\bz]=\left(\left(1-\frac{1}{n_i}\right)\nabla\times\delta\bB+\frac{\delta n_i}{n_i^2}\nabla\times\bB,\bm{0}\right)
\]
One can now easily see that if $\delta\bz=(\delta\bE,\bm{0})$ for any value of $\delta\bE$ then $D\bU_0(\bz)[\delta\bz]=\bm{0}$ and hence such a  $\delta\bz\in\text{Ker }D\bU_0(\bz)$. Interestingly, by the above formula for the Fr\'echet derivative, any $\delta\bz$ acted on by $D\bU_0(\bz)$ is sent to an element of the form $(\delta\bE,\bm{0})$ and hence $\text{Im } D\bU_0(\bz)\subset\text{Ker }D\bU_0(\bz)$. Therefore the definition of fast-slow system is satisfied.   
In particular, it is of differential index two. Hence upon taking a second time derivative of Amp\'ere's law we can recover a way to solve for the zeroth order term in the slow manifold approximation. This second time derivative is
\begin{align}
\epsilon\partial_{tt}\bE&=\frac{n_e}{\sqrt{m_r}}T\nu\bF_e-\nabla\cdot\bbP_e-\bu_e\nabla\cdot(n_e\bu_e)\label{Att}\\
\,&-\frac{n_e}{m_r}\bE-\frac{n_e}{m_r}\bu_e\times\bB\nonumber\\
\,&-n_iT\nu\bF_i+\nabla\cdot\bbP_i+\bu_i\nabla\cdot(n_i\bu_i)\nonumber\\
\,&-n_i\bE-n_i\bu_i\times\bB\nonumber\\
\,&-\nabla\times(\nabla\times\bE)\nonumber
\end{align}
However, since we are working in the infinite dimensional setting we have that our dynamical quantities are functionals. Hence we need to understand the time derivative via chain rule with Fr\'echet derivatives. Now we can replace and expand $\bE=\bE^*=\bE^*_0+\epsilon\bE^*_1+\dots$ and take $\epsilon\to0$ to obtain 
\begin{align}
n_e&=n_i
\end{align}
\begin{align}
\bu_e&=\bu_i-\frac{1}{n_i}\nabla\times\bB
\end{align}
One then uses these in the second derivative of Amp\'ere's law to solve for the 0th order term in $\bE^*$. Doing so gives
\begin{align}
\bE^*_0&=\calD^{-1}[\frac{n_i}{\sqrt{m_r}}T\nu\bF_e-\nabla\cdot\bbP_e+2(\bu_i\cdot\nabla)(\nabla\times\bB)\\
\,&-n_iT\nu\bF_i+\nabla\cdot\bbP_i-n_iM_r\bu_i\times\bB\nonumber\\
\,&-((\nabla\times\bB)\cdot\nabla)\left(\frac{1}{n_i}\nabla\times\bB\right)+\left(\frac{1}{m_r}\nabla\times\bB\right)\times\bB]\nonumber
\end{align}
Where $\calD:=\nabla\times(\nabla\times)+n_iM_r$ and $M_r=1+\frac{1}{m_r}$. Furthermore, let $\bS_0^*$ be such that $\bE_0^*=\calD^{-1}\bS_0^*$. In order to compute $\bE^*_1$ we will need to return to (\ref{Att}) and replace $n_e$ in terms of $n_i$ from Gauss' law and $\bu_e$ in terms of $\bu_i$ from Amp\'ere's law. In solving for $\bu_e$ we arrive at the equation
\begin{align}
\bu_e&=\frac{1}{n_i-\epsilon\nabla\cdot\bE^*}\left(\epsilon\partial_t\bE^*-\nabla\times\bB+n_i\bu_i\right)   
\end{align}
To handle the division by $\epsilon$ we will use a geometric series approximation so that
\begin{align}
\bu_e&=\frac{1}{n_i}\left(1+\epsilon\frac{\nabla\cdot\bE^*}{n_i}+O(\epsilon^2)\right)\left(\epsilon\partial_t\bE^*-\nabla\times\bB+n_i\bu_i\right)
\end{align}
Generally this gives the useful equation
\begin{align}
\frac{1}{n_e}&=\frac{1}{n_i}\left(1+\epsilon\frac{\nabla\cdot\bE^*}{n_i}+O(\epsilon^2)\right)
\end{align}
which is of use below. We note here that the order $\epsilon$ terms when expanding $\bE^*$ in the above expression for $\bu_e$ are given by
\begin{align}
\frac{d}{d\epsilon}\big|_{\epsilon=0}\bu_e&=\frac{1}{n_i}\partial_t\bE^*_0+(n_i\bu_i-\nabla\times\bB)\frac{\nabla\cdot\bE^*_0}{n_i^2}
\end{align}
Furthermore, we'll assume from here on that the collision frequency of the plasma is slow, in particular that $T\nu=\epsilon^2$. Now expanding $\bE^*$ in terms of $\epsilon$ in (\ref{Att}) and putting $n_e,\bu_e$ in terms of $n_i,\bu_i$ and collecting the first order $\epsilon$ terms we have
\begin{align}
\partial_{tt}\bE_0^*&=\nabla\cdot\left(\nabla\cdot\bE^*_0\frac{\bbP_e}{n_i}\right)-\calD\bE^*_1\nonumber\\
\,&-\frac{1}{n_i}(n_i\bu_i-\nabla\times\bB)\nabla\cdot\partial_t\bE^*_0\nonumber\\
\,&-\nabla\cdot(n_i\bu_i)\left(\frac{1}{n_i}\partial_t\bE^*_0+(n_i\bu_i-\nabla\times\bB)\frac{\nabla\cdot\bE^*_0}{n_i^2}\right)\nonumber\\
\,&-\frac{\bE^*_0}{m_r}\nabla\cdot\bE^*_0-\frac{1}{m_r}(\partial_t\bE^*_0)\times\bB\nonumber
\end{align}
Thus we have that 
\begin{align}
\bE^*_1&=\calD^{-1}[\nabla\cdot\left(\nabla\cdot\bE^*_0\frac{\bbP_e}{n_i}\right)-\partial^2_{tt}\bE^*_0\\
\,&-\frac{1}{n_i}(n_i\bu_i-\nabla\times\bB)\nabla\cdot\partial_t\bE^*_0\nonumber\\
\,&-\nabla\cdot(n_i\bu_i)\left(\frac{1}{n_i}\partial_t\bE^*_0+(n_u\bu_i-\nabla\times\bB)\frac{\nabla\cdot\bE^*_0}{n_i^2}\right)\nonumber\\
\,&-\frac{\bE^*_0}{m_r}\nabla\cdot\bE^*_0-\frac{1}{m_r}(\partial_t\bE^*_0)\times\bB]\nonumber    
\end{align}
Hence we need only calculate derivatives of $\bE^*_0$. However, for the time derivatives, thanks to the chain rule, we need to carefully compute Fr\'echet derivatives. Indeed
\begin{align}
\partial^2_{tt}\bE^*_0&=D^2\bE_0^*(\bZ_0)[\dot{\bZ}_0,\dot{\bZ}_0]+D\bE^*_0(\bZ_0)[D_{\bZ}\dot{\bZ}_0(\bE^*_0,\bZ_0)[\dot{\bZ}_0]\\
\,&+D_{\bE}\dot{\bZ}_0(\bE^*_0,\bZ_0)[D\bE^*_0(\bZ_0)[\dot{\bZ_0}]]]\nonumber
\end{align}
with $\bZ=(\bB,n_i,\bu_i,\rho_i,\rho_e)$. We'll start with the last term on the right hand side which first requires computing $D\bE^*_0(\bZ_0)[\bZ_0]$. 
\begin{align}
D\bE^*_0(\bZ_0)[\dot{\bZ}_0]&=D(\calD^{-1})(\bZ_0)[\dot{\bZ}_0]\bS^*_0+\calD^{-1}D\bS^*_0(\bZ_0)[\dot{\bZ}_0]
\end{align}
Now to compute $D\bS^*_0(\bZ_0)[\dot{\bZ}_0]$ one will need the following quantities (note the abuse of notation for the Fr\'echet derivatives)
\begin{align}
D(\nabla\cdot\bbP_\sigma)&=\nabla\cdot\int(\rho_\sigma\dot{n}_i+n_i\dot{\rho}_\sigma)\bw\bw\,d^3\bw\\
D(2(\bu_i\cdot\nabla)(\nabla\times\bB))&=2(\dot{\bu}_i\cdot\nabla)(\nabla\times\bB)-2(\bu_i\cdot\nabla)(\nabla\times(\nabla\times \bE^*_0))\\
D\left((\nabla\times\bB\cdot\nabla)\left(\frac{1}{n_i}\nabla\times\bB\right)\right)&=-(\nabla\times(\nabla\times\bE^*_0)\cdot\nabla)\left(\frac{1}{n_i}\nabla\times\bB\right)\nonumber\\
\,&-(\nabla\times\bB\cdot\nabla)\left(\frac{\dot{n}_i}{n^2_i}\nabla\times\bB\right)\nonumber\\
\,&-(\nabla\times\bB\cdot\nabla)\left(\frac{1}{n_i}\nabla\times(\nabla\times\bE^*_0)\right)\\
D(M_rn_i\bu_i\times\bB)&=M_r\dot{n}_i\bu_i\times\bB+M_r\dot{\bu}_i\times\bB-M_rn_i\bu_i\times(\nabla\times\bE^*_0)\\
D\left((\frac{1}{m_r}(\nabla\times\bB)\times\bB\right)&=-\frac{1}{m_r}(\nabla\times(\nabla\times\bE^*_0))\times\bB-\frac{1}{m_r}(\nabla\times\bB)\times(\nabla\times\bE^*_0)\label{FrEQ}
\end{align}
Furthermore we also have that
\begin{align}
D(\calD^{-1})(\bZ)[\dot{\bZ}_0]&=-\calD^{-1}(M_r\dot{n}_iId)\calD^{-1}
\end{align}
At this point the full expression for the needed Fr\'echet derivatives are much too unwieldy to write out all at once. Hence we will only list expressions that are needed to complete the term $\partial^2_{tt}\bE^*_0$. In order to compute
\[
D_{\bE}\dot{\bZ}_0(\bE^*_0,\bZ_0)[D\bE^*_0(\bZ_0)[\dot{\bZ_0}]]
\]
we differentiate the evolution equations for $\bZ$ with respect to the electric field and evaluate at $D\bE^*_0(\bZ_0)[\dot{\bZ_0}]$ for the electric field value. This results in 
\begin{align}
D_{\bE}\dot{\bZ}_0(\bE^*_0,\bZ_0)[D\bE^*_0(\bZ_0)[\dot{\bZ_0}]]&=(-\nabla\times (D\bE^*_0(\bZ_0)[\dot{\bZ_0}]),\bm{0},D\bE^*_0(\bZ_0)[\dot{\bZ_0}],\bm{0},\bm{0})
\end{align}
Now we will find $D_{\bZ}\dot{\bZ}_0(\bE^*_0,\bZ_0)[\dot{\bZ}_0]$. Indeed the RHS of the evolution equations \eqref{Faraday}-\eqref{rhoievol}, evaluated at $(\bE^*_0,\bZ_0)$, need to be each differentiated with respect to time. Thus notating the components of $D_{\bZ}\dot{\bZ}_0(\bE^*_0,\bZ_0)[\dot{\bZ}_0]$ as $\bD_i$ we have
\begin{align}
\bD_1&=\bm{0}\\
\bD_2&=-\nabla\cdot(\dot{n}_i\bu_i+n_i\dot{\bu}_i)\\
\bD_3&=\frac{\dot{n}_i}{n_i^2}\nabla\cdot\dot{\bbP}-\dot{\bu}_i\cdot\nabla\bu_i-\bu_i\cdot\nabla\dot{\bu}_i+\bE^*_0+\dot{\bu}_i\times\bB+\bu_i\times\dot{\bB}\\
\bD_4&=(\bw\times\dot{\bB})\cdot\partial_{\bw}\rho_i+(\bw\times{\bB})\cdot\partial_{\bw}\dot{\rho}_i+\frac{\dot{n_i}}{n_i^2}\nabla\cdot\bbP_i\cdot\partial_{\bw}\rho_i\\
\,&-\frac{1}{n_i}\nabla\cdot\dot{\bbP}_i\cdot\partial_{\bw}\rho_i-\frac{1}{n_i}\nabla\cdot{\bbP}_i\cdot\partial_{\bw}\dot{\rho_i}\\
\,&+\nabla\cdot\left((n_i\rho_i\bu_i)_t+(n_i\rho_i)_t\bw\right)-\partial_{\bw}\cdot(\bw\cdot\nabla n_i\rho_i\bu_i)_t\\
\bD_5&=\frac{1}{m_r}(\bw\times\dot{\bB})\cdot\partial_{\bw}\rho_e+\frac{1}{m_r}(\bw\times{\bB})\cdot\partial_{\bw}\dot{\rho}_e+\frac{\dot{n_i}}{\sqrt{m_r}n_i^2}\nabla\cdot\bbP_e\cdot\partial_{\bw}\rho_e\\
\,&-\frac{1}{\sqrt{m_r}n_i}\nabla\cdot\dot{\bbP}_e\cdot\partial_{\bw}\rho_e-\frac{1}{\sqrt{m_r}n_i}\nabla\cdot{\bbP}_e\cdot\partial_{\bw}\dot{\rho_e}\\
\,&+\nabla\cdot\left((n_i\rho_e\bu_e)_t+\frac{1}{\sqrt{m_r}}(n_i\rho_e)_t\bw\right)-\partial_{\bw}\cdot(\bw\cdot\nabla n_i\rho_e\bu_e)_t
\end{align} 
Thus all the pieces are in place to compute the whole second term of the expression for $\partial^2_{tt}\bE^*_0$. Finally we need to compute the second order Fr\'echet derivative. This gives
\[
D^2\bE^*_0(\bZ_0)[\dot{\bZ}_0,\dot{\bZ}_0]=D^2(\calD^{-1})\bS^*_0+D(\calD^{-1})D\bS^*_0+D(\calD^{-1})D\bS^*_0+\calD^{-1}D^2\bS^*_0
\]
Once again this expression is particularly cumbersome and the computation will be left at this stage. It is now straightforward, albeit tedious, to completely find an expression for $\partial^2_{tt}\bE^*_0$ and hence for $\bE^*_1$. 

The size of the expression for the first order correction to the Quasi-Neutral limit is of the VM equations is rather large and requires inverting elliptic operators. Numerically however the first order term should tractable. We note also that the collision operators play no role in the first order approximation due to our assumption of slow collision frequency. In order to see the collision operators play a role one would have to implement a second order correction. Such calculations are likely unreasonable to be done by hand and it is suggested any attempts at such should implement a computer algebra program. 

\section{Acknowledgement}
Research presented in this Review was supported by the Los Alamos National Laboratory
LDRD program under project number 20180756PRD4. This research was supported in part by an appointment with the National Science Foundation (NSF) Mathematical Sciences Graduate Internship (MSGI) Program sponsored by the NSF Division of Mathematical Sciences. This program is administered by the Oak Ridge Institute for Science and Education (ORISE) through an interagency agreement between the U. S. Department of Energy (DOE) and NSF. ORISE is managed for DOE by ORAU. All opinions expressed in this paper are the author's and do not necessarily reflect the policies and views of NSF, ORAU/ORISE, or DOE.


\bibliography{cumulative_bib_file.bib}


\providecommand{\noopsort}[1]{}\providecommand{\singleletter}[1]{#1}%
%


\end{document}